\newtheorem{theorem}{Theorem}[section]
\newtheorem*{theorem*}{Theorem}
\newtheorem{lemma}[theorem]{Lemma}
\newtheorem{definition}[theorem]{Definition}
\newtheorem{corollary}[theorem]{Corollary}
\newtheorem{remark}[theorem]{Remark}
\newtheorem{observation}[theorem]{Observation}
\newtheorem{problem}[theorem]{Problem}
\newtheorem{conjecture}[theorem]{Conjecture}
\title{Algorithmic Applications of Hypergraph and Partition Containers}
\author{Or Zamir\\Princeton University}
\date{}
\begin{document}

\maketitle

\begin{abstract}
We present a general method to convert algorithms into faster algorithms for  \emph{almost-regular} input instances.
Informally, an almost-regular input is an input in which the maximum degree is larger than the average degree by at most a constant factor.
This family of inputs vastly generalizes several families of inputs for which we commonly have improved algorithms, including bounded-degree inputs and random inputs. It also generalizes families of inputs for which we don't usually have faster algorithms, including regular-inputs of arbitrarily  high degree and very dense inputs.
We apply our method to achieve breakthroughs in exact algorithms for several central NP-Complete problems including~$k$-SAT, Graph Coloring, and Maximum Independent Set.

Our main tool is the first algorithmic application of the relatively new Hypergraph Container Method (Saxton and Thomason~\cite{saxton2015hypergraph}, Balogh, Morris and Samotij \cite{balogh2015independent}). 
This recent breakthrough, which generalizes an earlier version for graphs (Kleitman and Winston~\cite{kleitman1982number}, Sapozhenko~\cite{sapozhenko2001number}), has been used extensively in recent years in extremal combinatorics.
An important component of our work is the generalization of (hyper-)graph containers to~\emph{Partition Containers}.
\end{abstract}

\thispagestyle{empty}
\clearpage
\setcounter{page}{1}

\tableofcontents
\clearpage

\section{Introduction}
For many problems, we can design faster algorithms if the inputs are of some restricted form.
Examples of common families of inputs for which significantly better algorithms are known include sparse or bounded-degree inputs, and random inputs.
In this paper we present a general approach to devise improved algorithms for the much broader family of \emph{almost-regular} input instances. 
For example, we say that a graph is \emph{almost-regular} if its maximum degree is at most~$C$ times larger than its average degree, for some fixed constant~$C$.
This includes bounded-degree and random inputs, but also regular-graphs of an arbitrarily high degree and \emph{all} very dense inputs; If a graph on~$n$ vertices contains at least~$\varepsilon n^2$ edges, then it is almost-regular with~$C=\frac{1}{2\varepsilon}$.
Using our approach, we prove that many problems are inherently easier to solve for very general families of inputs. We cover several problems for which similar results were not known before for any interesting family of inputs, or were known only for much more restricted inputs.

Our first main application is a resolution of a major open problem about Graph Coloring algorithms, for almost-regular graphs.
The chromatic number of a graph can be computed in~$O^*\left(2^n\right)$ time (Bj\"{o}rklund, Husfeldt and Koivisto~\cite{bjorklund2009set}). 
For~$k\leq 6$ it is known that~$k$-coloring can be solved in~$O\left(\left(2-\varepsilon\right)^n\right)$ time for some~$\varepsilon>0$ (Biegel and Eppstein~\cite{beigel20053}, Fomin, Gaspers and Saurabh~\cite{fomin2007improved}, Zamir~\cite{DBLP:conf/icalp/Zamir21}). 
For larger values of~$k$ improvements were only known for sparse graphs (Zamir~\cite{DBLP:conf/icalp/Zamir21}).
We solve~$k$-coloring in~$O\left(\left(2-\varepsilon\right)^n\right)$ time for graphs in which the maximum degree is at most~$C$ times the average degree, where~$\varepsilon = \varepsilon_{k,C}>0$ for \emph{every}~$k,C$.
This includes, for example, regular graphs \emph{of any degree} (with~$C=1$), and graphs with at least~$\varepsilon n^2$ edges for any fixed constant~$\varepsilon$ (with~$C=\frac{1}{2\varepsilon}$).

Our second main application is the first improved $k$-SAT algorithm for \emph{dense} formulas.
The celebrated sparsification lemma (Impagliazzo, Paturi and Zane~\cite{impagliazzo2001problems}) states that for every~$k,\varepsilon$ any~$k$-SAT formula on~$n$ variables can be reduced to a disjunction of~$2^{\varepsilon n}$ $k$-SAT formulas on~$n$ variables and at most~$C_{k,\varepsilon} \cdot n$ clauses. In particular, sparse~$k$-SAT formulas are at least as hard to solve as general~$k$-SAT formulas.
%Improved algorithms are known for very sparse formulas (cite).
We show a complementing statement; For every~$k,b$ there exists a~$C$ such that if~$k$-SAT can be solved in~$b^n$ time, then~$k$-SAT on formulas with at least~$C\cdot n$ clauses that are \emph{well-spread} can be solved in~$\left(b-\varepsilon\right)^n$ time.

A few of our results use the algorithms for the unrestricted case in a~\emph{black-box} manner.
For example, the second result above states that if~$k$-SAT can be solved in~$c^n$ time, then~$k$-SAT can be solved in~$(c-\varepsilon)^n$ time for dense inputs, \emph{regardless of what~$c>1$ is}. 
We also demonstrate a white-box use of our approach for the graph coloring problem.

At the heart of our approach lies the celebrated Hypergraph Container Method.
Relatively recently, Saxton and Thomason~\cite{saxton2015hypergraph} and independently Balogh, Morris and Samotij~\cite{balogh2015independent}, developed a powerful method to characterize the structure of independent sets in~\emph{nice} hypergraphs.
This result led to many interesting consequences including several tight counting results and random sparse analogs of classical results in extremal combinatorics. Many of these applications can be found in~\cite{saxton2015hypergraph, balogh2015independent, balogh2018method}.
The basic approach for graphs (instead of hypergraphs) was discovered earlier by several researchers, most notably by Sapozhenko~\cite{sapozhenko2001number} and by Kleitman and Winston~\cite{kleitman1982number}.
In essence, they show that given a hypergraph that satisfies some natural regularity-type conditions, every independent set of it must be fully contained in one of a small number of \emph{somewhat-small} sets (called \emph{containers}).
We next give a very informal statement of the hypergraph container theorem.
\begin{theorem*}[Very informal]
    Let~$\mathcal{H}$ be a hypergraph satisfying certain conditions.
    Then, there exists a collection~$\mathcal{C}$ of subsets of vertices~$C_1,\ldots,C_r\subseteq V(\mathcal{H})$ for which the following hold.
    \begin{itemize}
        \item The number of containers is small, that is~$r=2^{o(|V(\mathcal{H})|)}$.
        \item Each container is small, that is~$|C_i|\leq (1-\varepsilon)|V(\mathcal{H})|$ for every~$i\in [r]$ and some constant~$\varepsilon>0$.
        \item Every independent set~$I\in \mathcal{I}(\mathcal{H})$ in~$\mathcal{H}$ is fully contained in some container, that is~$I\subseteq C_i$ for some~$i\in [r]$.
    \end{itemize}
\end{theorem*}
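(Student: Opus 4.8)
The plan is to prove this via the \emph{greedy container algorithm} of Kleitman--Winston and Sapozhenko, extended to hypergraphs. Fix once and for all a linear order on $V(\mathcal{H})$. Given an independent set $I$, run an algorithm that maintains an ``available set'' $A$ (initially all of $V(\mathcal{H})$) and a ``fingerprint'' $S\subseteq I$ (initially empty). At each step, let $v$ be the vertex of $A$ of maximum degree in the induced hypergraph $\mathcal{H}[A]$, breaking ties by the fixed order. If $v\in I$, add $v$ to $S$ and delete from $A$ the vertex $v$ together with all vertices of $A$ whose removal is forced by the vertices selected so far (in the graph case this is exactly $N(v)$; in the hypergraph case it is the set of vertices $u$ such that some edge is contained in $S\cup\{u\}$ after adding $v$). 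If $v\notin I$, just delete $v$ from $A$. Stop once the maximum degree in $\mathcal{H}[A]$ drops below a carefully chosen threshold, and output the container $C(I):=A\cup S$.

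Three things must then be checked. \emph{Containment} is built in: at every step we delete from $A$ only vertices outside $I$, vertices of $I$ that we simultaneously record in $S$, or vertices that provably cannot lie in $I$ given the selected vertices, so $I\subseteq A\cup S=C(I)$ is an invariant. \emph{Reconstruction} is the crucial point: I claim $C(I)$ depends on $I$ only through $S$. Simulate the algorithm knowing only $\mathcal{H}$, the order, and the set $S$. By induction the current $A$ is known, hence so is the max-degree vertex $v$; if $v\in S$ we know $v\in I$ and take the ``select'' branch, and if $v\notin S$ then $v\notin I$ — for had $v$ been in $I$, the real run would have added it to $S$ at precisely this step, and since each vertex is processed at most once (it leaves $A$ immediately), $v\in S$ is equivalent to $v$ being this step's selection. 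Thus the whole run, and in particular $C(I)$, is a function of $S$ alone, so the number of distinct containers is at most the number of admissible fingerprints, $\sum_{j\le q}\binom{n}{j}$, where $n=|V(\mathcal{H})|$ and $q$ bounds $|S|$; choosing the stopping threshold so that $q=o(n)$ gives $r=2^{o(n)}$.

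It remains to bound the container size, $|A|\le(1-\varepsilon)n$ at termination, and this is exactly where the hypothesised regularity conditions (uniform bounds on the $\ell$-degrees $\Delta_\ell(\mathcal{H})$) enter. The mechanism is: while $A$ is large the degree/co-degree hypotheses guarantee that $\mathcal{H}[A]$ still contains a vertex of large degree, so the algorithm cannot have halted, while every ``select'' step evicts many vertices from $A$ at once (a high-degree vertex drags its whole forced neighbourhood out); hence only few select steps occur, which simultaneously keeps $|S|\le q=o(n)$ and drives $|A|$ below $(1-\varepsilon)n$. I expect the main obstacle to be making this quantitative bookkeeping go through in the genuine hypergraph setting: a single pass typically only certifies that the \emph{number of edges} spanned by $A$ has collapsed rather than that $|A|$ itself is small, so one must either iterate the procedure on successively sparser induced hypergraphs, or run a weighted variant where the ``degree'' is measured against an auxiliary probability distribution, all the while checking that the $\Delta_\ell$-type bounds stay strong enough for the ``large $A$ $\Rightarrow$ large max-degree'' implication to keep firing. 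Packaging these iterations so that the container bound $(1-\varepsilon)n$ and the fingerprint bound $q=o(n)$ hold \emph{at the same time} is precisely the technical content separating the hypergraph container theorem from its classical graph ancestor.
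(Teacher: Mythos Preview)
Your sketch is correct and is the classical Kleitman--Winston ``max-degree'' algorithm. The paper, however, uses a different (and slightly simpler) fingerprinting procedure for the graph case: it iterates only over the vertices of $I$ in a fixed global order, adding $v$ to the fingerprint $F$ whenever $|N(v)\setminus N(F)|\ge \varepsilon d$, and then defines the container directly as $g(F)=F\cup B(F)$ where $B(F)$ is the set of vertices with at least $(1-\varepsilon)d$ neighbours in $N(F)$. No ``available set'' $A$ is maintained and no branching on $v\in I$ versus $v\notin I$ occurs; the container bound $|g(F)|\le\bigl(\tfrac{1}{2-\varepsilon}+q\bigr)n$ is then obtained by a two-line double count on $N(F)$ rather than by your halting-threshold argument. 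The trade-off is that your approach makes the reconstruction property (``$C(I)$ is a function of $S$ alone'') more transparent, while the paper's version makes the fingerprint bound $|F|\le n/(\varepsilon d)$ and the container-size bound immediate since $N(F)$ grows by $\varepsilon d$ at every step.

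For the genuine hypergraph statement the paper does not give its own proof at all: it quotes the Balogh--Morris--Samotij theorem (their Proposition~3.1 and Theorem~2.2) as a black box. Your remarks about needing to iterate the one-shot lemma to pass from ``few edges in $\mathcal{H}[A]$'' to ``$|A|\le(1-\varepsilon)n$'' are exactly right and match how the cited papers proceed, but that argument lives in \cite{balogh2015independent}, not here.
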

The container lemma for graphs is precisely presented and proved in Section~\ref{sec:graphconts}, and the one for hypergraphs is precisely presented in Section~\ref{subsec:hyperconts}.

Our main contribution is showing how the hypergraph container method can be used algorithmically.
The simplistic high-level idea would be to algorithmically generate the set of containers~$\mathcal{C}$ for an appropriate (hyper)graph, and then enumerate over all containers while solving the smaller sub-problem we get by restricting the problem to a specific container.
For example, to find the Maximum Independent Set in a graph it is enough to find the Maximum Independent Set in the induced subgraph on each container, and those are much smaller graphs.
An independent set in an induced sub-graph is also an independent set in the entire graph, and the maximum independent set is fully contained in some~$C_i$. Thus, we find the correct maximum independent set. To find the maximum independent set in each container, we can use any maximum independent set algorithm in a black-box manner.

It is sometimes necessary to consider several independent sets simultaneously and thus we cannot restrict ourselves to a single container. For example, this is the case in the graph coloring problem.
We thus generalize the (hyper-)graph container method to something we call \emph{partition containers}, those characterize the structure of several independent sets at the same time.
We further discuss this generalization in Section~\ref{subsec:parts} after introducing our applications.

The concrete applications we present in this paper are all for exponential-time worst-case algorithms for NP-complete problems.
There is a substantial body of work on this type of problems that was developed extensively in the last several decades. See for example the survey of Woeginger~\cite{woeginger2003exact}.
In fact, better-than-enumeration algorithms for such problems appeared at least a decade before the definition of NP (e.g., the Held-Karp algorithm for TSP~\cite{held1962dynamic}).
We demonstrate our approach for several of the most fundamental NP-hard problems: Maximum Independent Set, Graph Coloring, and Satisfiability.

Classically, exact algorithms for NP-Complete problems gained interest for two main reasons. 
First, solutions for those were sometimes required in practice.
Second, the theoretical understanding of the running time of NP-Complete problems is very lacking. 
In fact, we can't even prove that SAT requires super-linear time! 
Thus, we should further our understanding of the landscape of hardness within NP.
More recently, new reasons came up to focus on the exponential running times of these algorithms.
It was discovered that conjectures about the exact running times of NP-complete problems imply that many problems in P cannot be solved polynomially faster than our current solutions (see for example the survey of Vassilevska-Williams on fine-grained complexity \cite{williams2018some}).
The most popular such conjecture is about the running times required to solve~$k$-SAT (SETH \cite{impagliazzo2001complexity}).
Hundreds of results conditioned on these conjectures were published in the last few years.
Hence, focus on exact algorithms for NP-complete problems is completely essential to either further base or disprove these popular conjectures.

\subsection{Almost-regular Maximum Independent Set}\label{subsec:ind}
As a simple example of our approach we discuss algorithms for the unweighted and weighted Maximum Independent Set (MIS) problems.
The first nontrivial algorithm for MIS, running in time~$O^\star(2^{n/3})$, dates back to 1977 by Tarjan and Trojanowski~\cite{tarjan1977finding}.
We use the notation~$O^*(\cdot)$ to hide polynomial factors.
Algorithms for MIS were then improved many times~\cite{jian19862, robson1986algorithms, fomin2009measure, kneis2009fine, bourgeois2012fast, xiao2017exact}.
A lot of attention was also spent on MIS algorithms for bounded-degree graphs~(e.g., \cite{furer2006faster, razgon2009faster, xiao2010simple}).
In fact, many of the general MIS algorithms directly use these bounded-degree case algorithms.
The current state-of-the-art bound is~$1.1996^n$ due to Xiao and Nagamochi~\cite{xiao2017exact}.

In Section~\ref{sec:graphconts}, we use this problem to demonstrate the algorithmic power of containers. Using containers we get faster algorithms for \textbf{large-degree} regular or almost-regular graphs, using the general-case algorithm in a black-box manner.
\begin{theorem*}[\ref{thm:MISreglarge}]
    Given an algorithm that solves MIS in~$O(c^n)$ time, we can solve MIS in~$d$-regular graphs in~$O\left({\sqrt{c}}^{\left(1+o_d(1)\right)n}\right)$ time.
\end{theorem*}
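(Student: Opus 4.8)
The plan is to invoke the graph container lemma of Section~\ref{sec:graphconts} on $G$ with a carefully chosen degree threshold, and then to enumerate the resulting containers, running the given MIS algorithm as a black box on each one. Recall that the Kleitman--Winston-type container lemma, applied with a threshold parameter $\delta$, produces a family $\mathcal{C}$ of subsets of $V(G)$, indexed by ``fingerprints'' of size at most $n/\delta$ (so $|\mathcal{C}|\le\sum_{i\le n/\delta}\binom{n}{i}$), such that every independent set of $G$ is contained in some $C\in\mathcal{C}$ and every $C\in\mathcal{C}$ induces a subgraph $G[C]$ of maximum degree strictly less than $\delta$; moreover each container is produced in $\mathrm{poly}(n)$ time from its fingerprint, so the whole family is enumerated in time $|\mathcal{C}|\cdot\mathrm{poly}(n)$.

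The key point specific to the $d$-regular case is that forcing $G[C]$ to have small maximum degree forces $C$ itself to be small --- morally, $C$ is almost one side of a bipartition. Set $\delta=\varepsilon d$, where $\varepsilon=\varepsilon(d)\to 0$ is a parameter to be fixed, and let $C\in\mathcal{C}$. Each $v\in C$ has all $d$ of its $G$-neighbours, but fewer than $\varepsilon d$ of them inside $C$, hence more than $(1-\varepsilon)d$ of them outside $C$. Double counting the edges between $C$ and $V(G)\setminus C$,
\[
(1-\varepsilon)\,d\,|C|\;<\;e\bigl(C,\,V(G)\setminus C\bigr)\;\le\;d\,\bigl(n-|C|\bigr),
\]
so $|C|<\frac{n}{2-\varepsilon}=\bigl(\tfrac12+O(\varepsilon)\bigr)n$. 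Taking e.g.\ $\varepsilon(d)=d^{-1/2}$, i.e.\ $\delta=\sqrt d$, every container then has at most $\bigl(\tfrac12+o_d(1)\bigr)n$ vertices, while $|\mathcal{C}|\le\sum_{i\le n/\sqrt d}\binom{n}{i}\le(e\sqrt d)^{n/\sqrt d}=2^{O(n\log d/\sqrt d)}=2^{o_d(1)n}$.

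The algorithm is now immediate: generate $\mathcal{C}$; for each $C\in\mathcal{C}$ run the assumed $O(c^m)$-time MIS algorithm on the $m$-vertex graph $G[C]$, where $m\le\bigl(\tfrac12+o_d(1)\bigr)n$, obtaining a maximum independent set of $G[C]$; finally output the largest of these sets over all containers. For correctness, every set returned is independent in $G$ (being independent in an induced subgraph), while a maximum independent set $I^\star$ of $G$ is itself an independent set, hence $I^\star\subseteq C$ for some $C\in\mathcal{C}$, and a maximum independent set of $G[C]$ then has size at least $|I^\star|$ and is independent in $G$, so its size equals $|I^\star|$. The running time is
\[
|\mathcal{C}|\cdot\Bigl(c^{(\frac12+o_d(1))n}+\mathrm{poly}(n)\Bigr)\;=\;2^{o_d(1)n}\cdot c^{(\frac12+o_d(1))n}\;=\;\sqrt c^{\,(1+o_d(1))n},
\]
using that $c>1$ is a constant to absorb all subexponential factors into the $o_d(1)$ in the exponent.

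Essentially all the heavy lifting is done by the container lemma, which Section~\ref{sec:graphconts} supplies; what is genuinely new here is only the parameter balancing --- verifying that a single threshold $\delta=\varepsilon(d)\,d$ makes the number of containers $2^{o_d(1)n}$ \emph{and} shrinks each container to $\bigl(\tfrac12+o_d(1)\bigr)n$ vertices. Consequently I expect no serious obstacle. The two things to be careful about are that the container enumeration is truly polynomial time per container (so that the $2^{o_d(1)n}$ overhead is real and not hiding a superpolynomial cost), and that the black-box algorithm is invoked only on graphs of the promised size and with no recursion, so the stated running time holds verbatim.
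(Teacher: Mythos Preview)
Your proposal is correct and follows essentially the same approach as the paper: enumerate the graph containers and run the black-box MIS algorithm on each induced subgraph. The only cosmetic difference is that the paper invokes its packaged container statement (Theorem~\ref{thm:graphcontsnice}) for a fixed $\varepsilon'$ and then chooses $d_0$ large enough, whereas you let $\varepsilon=\varepsilon(d)=d^{-1/2}$ vary with $d$ and re-derive the size bound $|C|<n/(2-\varepsilon)$ directly from the bounded-degree property of the containers; both routes yield the same $\sqrt{c}^{\,(1+o_d(1))n}$ running time.
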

\begin{theorem*}[\ref{thm:MISalmostreglarge}]
    For any~$C>1$ there exists~$\varepsilon=\varepsilon_C>0$ such the the following holds.
    Given an algorithm that solves MIS in~$O(c^n)$ time, we can solve MIS in graphs with average degree~$d$ and maximum degree bounded by~$Cd$ in~$O\left(c^{\left(1-\varepsilon+o_d(1)\right)n}\right)$ time.
\end{theorem*}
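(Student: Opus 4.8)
The plan is to carry out the black-box container argument sketched in the introduction; Theorem~\ref{thm:MISreglarge} is precisely the special case $C=1$ (which forces $d$-regularity, with $\varepsilon_1=\frac12$), so I would prove the two together. The first step is to apply the graph container lemma of Section~\ref{sec:graphconts} to the input graph $G$. The hypothesis that makes this possible is exactly the assumption of the theorem, namely that $G$ is \emph{almost-regular}: $\Delta(G)\le C\cdot d$, where $d$ is the average degree. The lemma then produces a family $\mathcal{C}=\{C_1,\dots,C_r\}$ of containers with (i) every independent set of $G$ contained in some $C_i$; (ii) few containers, $r\le 2^{o_d(1)\cdot n}$, the exponent tending to $0$ as $d\to\infty$; and (iii) every container \emph{somewhat small}, $|C_i|\le(1-\varepsilon_C+o_d(1))\,n$ for a constant $\varepsilon_C=\varepsilon_C(C)>0$. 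One expects $\varepsilon_C$ to be essentially $\frac{1}{2C}$, which is forced already by an elementary argument: the complement $V(G)\setminus I$ of an independent set $I$ is a vertex cover, so $\sum_{v\in V(G)\setminus I}\deg(v)\ge|E(G)|=\frac{dn}{2}$, and since every degree is at most $Cd$ this gives $|V(G)\setminus I|\ge\frac{n}{2C}$, i.e.\ $|I|\le(1-\frac{1}{2C})n$; this is consistent with the $C=1$ bound $n/2$ behind Theorem~\ref{thm:MISreglarge}, and the container lemma recovers such a bound for the containers up to the $o_d(n)$ additive error.

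Given $\mathcal{C}$, the algorithm is immediate: for each $i\in[r]$ run the assumed $O(c^n)$-time MIS algorithm on the induced subgraph $G[C_i]$, and output the largest independent set found. Correctness is the standard container argument: any independent set of $G[C_i]$ is an independent set of $G$, while a maximum independent set $I^\star$ of $G$ lies inside some $C_j$ and is therefore returned when $C_j$ is processed. For the running time, each of the $r\le 2^{o_d(1)n}$ calls costs $O^*(c^{|C_i|})=O^*(c^{(1-\varepsilon_C+o_d(1))n})$; multiplying by $r$ and writing $2^{o_d(1)n}=\big(2^{o_d(1)}\big)^n$, this subexponential factor (together with the polynomial overhead) is absorbed into the base-$c$ exponent, leaving the claimed bound $O\big(c^{(1-\varepsilon_C+o_d(1))n}\big)$.

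The real content — and the only genuine obstacle — is the container lemma itself with these parameters, which is exactly what Section~\ref{sec:graphconts} establishes; when invoking it here the delicate point is that the single almost-regularity hypothesis $\Delta(G)\le Cd$ must control \emph{both} the number of containers and their size simultaneously, possibly after a mild degree-regularization of $G$ if the lemma as stated requires one. A minor caveat is that the statement is only informative when $d=\omega(1)$, since the additive $o_d(1)$ in the exponent is unconstrained for bounded $d$; in the regime $d=O_C(1)$ the graph has bounded degree $\le Cd=O_C(1)$ and the bound follows from existing bounded-degree MIS algorithms, so I would dispose of that case at the outset.
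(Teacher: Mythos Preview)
Your proposal is correct and follows essentially the same route as the paper: apply the almost-regular container lemma (Theorem~\ref{thm:graphcontsnicealmost}) to obtain $r(d)^n$ containers of size at most $(1-\varepsilon')n$, run the black-box MIS algorithm on each $G[C_i]$, and return the best; the paper in fact states the proof of Theorem~\ref{thm:MISalmostreglarge} in a single line by saying ``use Theorem~\ref{thm:graphcontsnicealmost} instead of Theorem~\ref{thm:graphcontsnice}'' in the proof of Theorem~\ref{thm:MISreglarge}. Two small remarks: Theorem~\ref{thm:graphcontsnicealmost} already applies directly to almost-regular graphs, so no ``mild degree-regularization'' is needed, and the container size there is bounded by $(1-\varepsilon')n$ with $\varepsilon'$ depending only on $C$ (no $o_d(1)$ in the size bound --- the $o_d(1)$ in the final exponent comes solely from the number $r(d)^n$ of containers).
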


Our general ``recipe" then is as follows. Using containers, we get better algorithms for almost-regular instances when the degree is large enough. Otherwise, all degrees are small and we can use different algorithms that are faster in the bounded-degree case.
Together, the combination of using containers and the existence of an improved algorithm for instances with bounded-degrees, leads to an improved algorithm for almost-regular instances without additional assumptions.

\subsection{Almost-regular Graph Coloring}\label{subsec:color}
The problem of $k$-coloring a graph, or determining the \emph{chromatic number} of a graph (i.e., finding the smallest $k$ for which the graph is $k$-colorable) is another one of the most well studied NP-complete problems.
Computing the chromatic number is listed as one of the first NP-complete problems in Karp's paper from 1972~\cite{karp1972reducibility}.
In a similar fashion to $k$-SAT, the problem of $2$-coloring is polynomial, yet $k$-coloring is NP-complete for every $k\geq 3$ (proven independently by Lov{\'a}sz \cite{lovasz1973coverings} and Stockmeyer \cite{stockmeyer1973planar}).

The trivial algorithm solving $k$-coloring by enumerating over all possible colorings takes $O^*(k^n)$ time.
Thus, it is not even immediately clear that computing the chromatic number of a graph can be done in $O^*(c^n)$ time for a constant $c$ independent of $k$.
Nevertheless, in 1976 Lawler \cite{lawler1976note} noted a simple dynamic programming algorithm that computes the chromatic number in $O^*(3^n)$ time. 
More sophisticated algorithms were introduced~\cite{moon1965cliques,paull1959minimizing,eppstein2001small,byskov2004enumerating}, until finally an algorithm computing the chromatic number in $O^*(2^n)$ time was devised by Bj{\"o}rklund, Husfeldt and Koivisto in 2009 \cite{bjorklund2009set}. This settled an open problem of Woeginger \cite{woeginger2003exact}.

For very small values of~$k$, better algorithms are known for the $k$-coloring problem.
Schiermeyer \cite{schiermeyer1993deciding} showed that $3$-coloring can be solved in $O^*(1.415^n)$ time.
Since then, these were improved several times; Currently the best known running times for~$3$-coloring and~$4$-coloring respectively are $O^*(1.3289^n)$ and $O^*(1.7272^n)$~\cite{schiermeyer1993deciding,beigel20053,fomin2007improved}.
For~$5$-coloring and~$6$-coloring, there are also algorithms running in time~$(2-\varepsilon)^n$ for some~$\varepsilon>0$~\cite{DBLP:conf/icalp/Zamir21}.
In contrast to $k$-SAT, for every $k>6$ the best known running time for $k$-coloring is $O^*(2^n)$, the same as computing the chromatic number. 
Thus, a fundamental open problem is whether $k$-coloring be solved exponentially faster than~$2^n$, for every $k$?

For sparse graphs, the problem was recently resolved.
\begin{theorem*}[\cite{DBLP:conf/icalp/Zamir21}]
For any~$\Delta,\alpha > 0$ there exists~$\varepsilon>0$ such that computing the chromatic number of a graph with at least~$\alpha n$ vertices of degree at most~$\Delta$ takes~$O\left(\left(2-\varepsilon\right)^n\right)$ time. Note that all sparse graphs satisfy this condition.
\end{theorem*}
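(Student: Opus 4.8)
The plan is to reduce to the Björklund--Husfeldt--Koivisto $O^*(2^n)$ chromatic-number algorithm while using the $\alpha n$ low-degree vertices to delete a constant fraction of the vertex set before invoking it. Concretely, I would first pass from the given set $L$ of $\alpha n$ vertices of degree $\le \Delta$ to an independent set $I \subseteq L$ with $|I| \ge \beta n$ for $\beta := \alpha/(\Delta+1)$: since $G[L]$ has maximum degree $\le \Delta$ it is properly $(\Delta+1)$-colourable, and one colour class has the required size; note that every $v \in I$ has all of its at most $\Delta$ neighbours outside $I$.

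The computation of $\chi(G)$ then splits according to the number of colours. For $k \ge \Delta+1$ I claim $\chi(G) \le k$ if and only if $\chi(G-I) \le k$: the forward implication is trivial, and conversely any proper $k$-colouring of $G-I$ extends to $I$ greedily, because each $v \in I$ sees at most $\Delta < k$ colours among its neighbours (all of which lie in $V \setminus I$) and distinct vertices of $I$ never conflict. Hence I run the Björklund--Husfeldt--Koivisto algorithm on $G-I$, which has only $(1-\beta)n$ vertices, in time $O^*(2^{(1-\beta)n}) = O^*((2-\varepsilon_1)^n)$; if the returned value is $\ge \Delta+1$ it already equals $\chi(G)$, and in all cases it certifies $\chi(G) \le \Delta+1$ (extend using colour $\Delta+1$). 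So the whole difficulty is concentrated in the $O(1)$ remaining decision problems ``is $\chi(G) \le k$?'' for $k \le \Delta$.

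For those, I would reduce $k$-colourability of $G$ to a constrained colouring problem on the smaller graph $G-I$: a proper $k$-colouring of $G$, restricted to $V \setminus I$, is precisely a proper $k$-colouring $\varphi$ of $G-I$ in which no $v \in I$ has $\varphi(N_G(v)) = [k]$ --- and vertices $v \in I$ with $\deg_G(v) < k$ impose no constraint and may be discarded. Counting such $\varphi$ can be organised by inclusion--exclusion over the set $W \subseteq I$ of ``blocked'' vertices together with a second inclusion--exclusion over the missing vertex/colour incidences, each term of which is a list-$k$-colouring count on $G-I$ computable in $O^*(2^{(1-\beta)n})$; for the small values $k \le 6$ one may instead invoke the known $O^*((2-\varepsilon)^n)$ $k$-colouring algorithms directly.

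The main obstacle is exactly this last step: the naive two-layer inclusion--exclusion has $2^{|I|} \cdot 2^{|V \setminus I|} = 2^n$ terms and saves nothing on its own, so the exponential gain has to be extracted from the fact that each constraint set $N_G(v)$ has bounded size $\le \Delta$ --- only $O(1)$ bits of interaction cross between each $v \in I$ and $V \setminus I$. I expect one has to aggregate the $I$-side of the sum into a computation of size $2^{(\beta-\delta)n}$ (a meet-in-the-middle or small-cut argument over the $\le\Delta$-element neighbourhoods), or to recurse on $\Delta$, in order to beat $2^{(1-\beta)n} \cdot 2^{\beta n}$; getting this to work uniformly over all $k \le \Delta$ is where essentially all the work lies, whereas the extraction of $I$ and the many-colours regime are routine. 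Combining the two regimes gives $\chi(G)$ in $O^*((2-\varepsilon)^n)$ with $\varepsilon = \varepsilon_{\Delta,\alpha} > 0$.
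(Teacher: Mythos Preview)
This theorem is not proved in the present paper; it is quoted as a prior result of~\cite{DBLP:conf/icalp/Zamir21} and used only as a black box (via Corollary~\ref{cor:boundeddeg}). There is therefore no proof in this paper to compare your proposal against.

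On its own merits, your proposal has a genuine gap, and you identify it yourself. The regime $k \ge \Delta+1$ is handled correctly and cleanly: the independent set $I \subseteq L$ of size $\beta n$ exists, the greedy extension argument shows $\chi(G)\le k \iff \chi(G-I)\le k$ for such $k$, and running Bj\"orklund--Husfeldt--Koivisto on $G-I$ costs $O^*(2^{(1-\beta)n})$. The problem is the finitely many remaining decision problems ``$\chi(G)\le k$?'' for $k\le \Delta$. Invoking the known $(2-\varepsilon)^n$ algorithms for $k\le 6$ is fine, but for $7\le k\le \Delta$ (which is non-empty whenever $\Delta\ge 7$) you only sketch a two-layer inclusion--exclusion and then concede that it ``has $2^n$ terms and saves nothing on its own'' and that ``getting this to work uniformly over all $k\le \Delta$ is where essentially all the work lies.'' That \emph{is} the theorem: the non-trivial content of~\cite{DBLP:conf/icalp/Zamir21} is precisely an algorithm beating $2^n$ for every fixed $k$ on graphs with many low-degree vertices. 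So the proposal is incomplete at exactly the point that carries the result.

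Your heuristic that each $v\in I$ interacts with $V\setminus I$ through only $O(1)$ bits is the right intuition, but turning it into an exponential saving requires organising the inclusion--exclusion sum $F(G)$ so that those bounded neighbourhoods let you aggregate exponentially many terms at once, rather than first deleting $I$ and then trying to reinstate the constraints. As written, the proposal does not supply that mechanism.
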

As mentioned in Section~\ref{subsec:ind}, an improved algorithm for the bounded-degree case is crucial for our approach to produce results that hold for all possible degrees.

In this paper, we resolve the above open problem affirmatively for almost-regular graphs.
\begin{theorem*}[\ref{thm:maincoloring}]
    For every~$C,k$ there exists~$\varepsilon>0$ such that we can solve~$k$-coloring for graphs in which the maximum degree is at most~$C$ times the average degree in~$O\left(\left(2-\varepsilon\right)^n\right)$ time, where~$n$ is the number of vertices in the graph.
\end{theorem*}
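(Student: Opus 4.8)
The plan is to follow the ``recipe'' outlined in the introduction: use partition containers to handle the large-degree regime, and fall back on the known sparse-graph result (Theorem from~\cite{DBLP:conf/icalp/Zamir21}) in the small-degree regime. First I would fix $C,k$ and set a threshold degree $d_0 = d_0(C,k)$ to be chosen later. Given an input graph $G$ on $n$ vertices with average degree $d$ and maximum degree at most $Cd$, there are two cases. If $d \le d_0$, then \emph{every} vertex has degree at most $Cd_0$, so $G$ is a bounded-degree graph and the sparse-graph theorem already gives an $O((2-\varepsilon')^n)$ algorithm for some $\varepsilon' = \varepsilon'(Cd_0, 1) > 0$. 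The whole game is therefore the large-degree case $d > d_0$, where $d_0$ is large; here we must beat $2^n$ using containers.

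For the large-degree case, the key point is that a proper $k$-coloring is an ordered partition of $V(G)$ into $k$ independent sets, so it is exactly the kind of object that ``partition containers'' are designed to capture. The plan is: (i) verify that $G$, being almost-regular with large degree, satisfies the co-degree/regularity hypotheses needed to invoke the (graph) container lemma of Section~\ref{sec:graphconts} --- almost-regularity should give the required upper bounds on codegrees relative to the average degree, with the constants depending only on $C$; (ii) apply the partition-container generalization (Section~\ref{subsec:parts}) to produce a family $\mathcal{P}$ of \emph{partition containers} --- each a tuple $(C_1,\dots,C_k)$ of vertex subsets with $|C_j| \le (1-\delta)n$ --- such that $|\mathcal{P}| = 2^{o(n)}$ and every proper $k$-coloring $(V_1,\dots,V_k)$ refines some container, i.e.\ $V_j \subseteq C_{\pi(j)}$ for some permutation $\pi$; (iii) for each partition container, decide whether $G$ admits a proper $k$-coloring respecting it. Step (iii) is where we extract the savings: since each $C_j$ has size at most $(1-\delta)n$, the subproblem of assigning each vertex to one of the colors whose container contains it is a constraint-satisfaction / list-coloring instance that is genuinely smaller, and one can solve it in time $2^{(1-\varepsilon'')n}$ --- for instance by branching on the $\Omega(\delta n)$ vertices that lie in few containers, or by reducing to a smaller chromatic-number computation, or via a direct list-coloring DP. Multiplying the $2^{o(n)}$ container count by the $2^{(1-\varepsilon'')n}$ per-container cost still gives $(2-\varepsilon)^n$ overall.

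The main obstacle I expect is step (iii): unlike Maximum Independent Set, where restricting to one small container immediately shrinks the ground set, here a proper coloring must use \emph{all} $k$ containers simultaneously and a vertex may lie in all of them, so naively the search space is still $k^n$. The technical heart is to argue that the list-coloring instance induced by a partition container is easier than worst case --- this is presumably why the paper isolates it as a ``white-box'' application rather than a pure black-box reduction, and why a separate small-degree algorithm is needed (to handle exactly the vertices whose lists are large). A secondary subtlety is bookkeeping the dependence of the container parameters $\delta, o(n)$ on $C$ and $k$ and checking it is compatible with the $\varepsilon'$ coming from the sparse case, so that choosing $d_0$ large enough makes both branches run in $O((2-\varepsilon)^n)$ time for a single $\varepsilon = \varepsilon(C,k) > 0$; this is routine once the two ingredients are in place. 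Finally, one should confirm the almost-regularity hypothesis is actually \emph{preserved} or at least that the container machinery only needs it for the original graph, not for the residual instances --- if the container step can break regularity, we need the per-container solver to be robust to that, which the list-coloring formulation handles.
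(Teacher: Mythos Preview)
Your high-level two-regime split (small degree via the bounded-degree theorem, large degree via containers) is correct and matches the paper. The real gap is in step~(iii), and it is not a minor bookkeeping issue but the heart of the argument.

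First, your description of ``partition containers'' is not what the paper proves. You treat them as $k$-tuples $(C_1,\dots,C_k)$ with each $|C_j|\le(1-\delta)n$, one container per color class --- but that is exactly what the \emph{ordinary} container lemma already gives (apply it once and enumerate $k$-tuples). The paper's partition container theorem (Theorem~\ref{thm:partcontsalmost}) says something stronger and structurally different: any $k$ independent sets can be split into \emph{two} groups $A\cupdot B=[k]$ such that $\bigcup_{j\in A}I_j$ and $\bigcup_{j\in B}I_j$ each lie in a single container of size $\le(1-\varepsilon)n$. In your language, the container tuple always admits a $(2,1-\varepsilon)$-refinement. This two-part structure is the entire point.

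Second, and relatedly, your step~(iii) does not work as stated, and the paper explains why. Given only $k$ containers of size $(1-\delta)n$, solving the resulting list-coloring instance is equivalent (via inclusion--exclusion, Section~\ref{subsec:colwithconts}) to a $\left(k,\tfrac12+\varepsilon\right)$-Extensions-Sum problem, and Lemma~\ref{extsum_lb} shows this is conditionally as hard as $2^{(1-o(1))|X|}$ under the hyperclique conjecture for $k\ge 4$. Your suggested fixes do not escape this: there is no guarantee that $\Omega(\delta n)$ vertices lie in few containers (each container misses only $\delta n$ vertices, so a typical vertex can lie in all $k$ of them), and there is no ``smaller chromatic-number computation'' since the ground set is still all of $V(G)$. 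The paper's actual route is: reduce to $\overline{F}(G,C'_1,\dots,C'_k)$, rewrite this as Extensions-Sum, and then use the guaranteed $(2,1-\varepsilon)$-refinement together with Lemma~\ref{extsum2} to evaluate it in $O(2^{(1-\varepsilon)n})$ time. Without the refinement, the per-container step is (conditionally) impossible; with it, it is a one-line application of the $k=2$ Extensions-Sum algorithm.
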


For example, by setting~$C=\frac{1}{2\varepsilon}$ this implies faster algorithms for all dense graphs.
\begin{corollary}\label{cor:densecol}
    For every~$C,\varepsilon>0$ there exists~$\delta>0$ such that we can solve~$k$-coloring for graphs with~$n$ vertices and at least~$\varepsilon n^2$ edges in~$O\left(\left(2-\delta\right)^n\right)$ time.    
\end{corollary}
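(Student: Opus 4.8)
The plan is to obtain Corollary~\ref{cor:densecol} as an immediate consequence of Theorem~\ref{thm:maincoloring}, by formalizing the remark made in the introduction that an edge-dense graph is automatically almost-regular with constant $\tfrac{1}{2\varepsilon}$. First I would dispose of the degenerate range of parameters: a graph on $n$ vertices has at most $\binom{n}{2}<n^2/2$ edges, so if $\varepsilon\geq 1/2$ there is no graph on $n\geq 1$ vertices with at least $\varepsilon n^2$ edges and the statement is vacuous (any $\delta$ works). Hence I may assume $\varepsilon<1/2$ and set $C_0:=\tfrac{1}{2\varepsilon}>1$. (The parameter $C$ appearing in the statement of the corollary plays no role in the conclusion and may simply be ignored.)

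The core observation is then a one-line degree estimate. Let $G$ be a graph on $n$ vertices with $|E(G)|\geq \varepsilon n^2$. Its average degree is $d=\tfrac{2|E(G)|}{n}\geq 2\varepsilon n$, while its maximum degree is trivially at most $n-1<n\leq \tfrac{1}{2\varepsilon}\,d=C_0\,d$. Therefore $G$ satisfies exactly the hypothesis of Theorem~\ref{thm:maincoloring} with the constant $C_0$: its maximum degree is at most $C_0$ times its average degree. Note that this uses edge-density only in the favorable direction — more edges forces the average degree up, which is precisely what shrinks the max-to-average ratio — so no care is needed beyond writing the inequality the right way around.

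Finally I would invoke Theorem~\ref{thm:maincoloring} with the pair $(C_0,k)$ to obtain a constant $\varepsilon'=\varepsilon'_{C_0,k}>0$ for which $k$-coloring can be solved in $O\!\left((2-\varepsilon')^n\right)$ time on every graph whose maximum degree is at most $C_0$ times its average degree; setting $\delta:=\varepsilon'$ and running that algorithm on our dense graph $G$ gives the claimed bound. There is no real obstacle here — all of the difficulty is packed into Theorem~\ref{thm:maincoloring} — and the only points requiring a moment's attention are the trivial corner cases ($\varepsilon\geq 1/2$, and the finitely many small $n$, both absorbed into the implicit constant in $O(\cdot)$).
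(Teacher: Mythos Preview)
Your proposal is correct and matches the paper's own approach: the paper does not give a separate proof of Corollary~\ref{cor:densecol} but simply remarks, immediately before stating it, that setting~$C=\tfrac{1}{2\varepsilon}$ in Theorem~\ref{thm:maincoloring} yields the claim, which is exactly the degree estimate you spell out. Your observation that the parameter~$C$ in the corollary's quantifier is superfluous (likely a typo for~$k$) is also correct.
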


\subsection{Dense $k$-SAT}\label{subsec:ksat}
Satisfiability of Boolean formulas (usually known as SAT), is a central problem in computer science. Given a Boolean formula, the task is to decide whether there is an assignment of Boolean values to the variables of the formula under which the formula evaluates to the positive value.
If the input formula is guaranteed to be given in a standard form called $k$-CNF\footnote{The Boolean formula is commonly given in Conjunctive Normal Form (CNF), i.e., as a conjunction of disjunctions of literals. Each disjunction is called a \emph{clause}. A \emph{literal} is a variable or its negation.
A formula in which each clause contains at most $k$ literals is a $k$-CNF formula.}, then the problem of deciding whether it is satisfiable is called $k$-SAT.

Cook \cite{Cook71} and Karp \cite{karp1972reducibility} showed that $3$-SAT is NP-complete. In contrast, $2$-SAT can be solved in linear time. Many problems were shown to be NP-complete using reductions from $3$-SAT, which is thus viewed as one of the canonical NP-complete problems.

The running time of the trivial algorithm which enumerates over all possible assignments is $O^*(2^n)$. %, for some $k$.
The best known algorithms for solving $k$-SAT have exponential running times in~$n$.
%Therefore, we consider here $k$-SAT algorithms with exponential running times.
Let $c_k \in [1,2]$ be the smallest constant for which $k$-SAT can be solved in $(c_k+o(1))^n$ time.
Much effort has been put into obtaining improved upper bounds on $c_k$, especially for $3$-SAT, which has become a benchmark problem for exponential time algorithms.

A famous conjecture, called the \emph{Exponential Time Hypothesis} (ETH, see \cite{impagliazzo2001complexity}), is that $3$-SAT cannot be solved in sub-exponential time, i.e., that $c_3>1$.
A stronger conjecture that we mentioned before, known as the \emph{Strong Exponential Time Hypothesis} (SETH) is that $\lim_{k\rightarrow \infty} c_k = 2$. Both conjectures are yet to be proved or disproved and are widely used as the basis for conditional lower bounds.

The first non-trivial upper bound on~$c_k$, for any $k\ge 3$, was obtained in 1985 by Monien and Speckenmeyer \cite{monien1985solving}. They gave a deterministic algorithm showing that~$c_k<2$ for every~$k$. 
A long list of improvements followed~\cite{Rodosek96,PPZ99,Schoning02,Hertli14a} until the publication of the celebrated algorithm of Paturi, Pudlak, Saks and Zane (commonly refered to as the PPSZ algorithm) in~2001~\cite{paturi2005improved}. 
They presented a simple and elegant algorithm (with a highly non-trivial analysis that was later simplified). Their result stood as the state-of-the-art for a couple of decades, and was recently improved by~\cite{hansen2019faster} and then by~\cite{scheder2022ppsz}.

In a seminal paper, Impagliazzo, Paturi and Zane~\cite{impagliazzo2001problems} showed that the general~$k$-SAT problem can be reduced to the problem of~$k$-SAT on sparse formulas.
\begin{theorem*}[The sparsification lemma~\cite{impagliazzo2001problems}]
    For every~$k,\varepsilon>0$ any~$k$-SAT formula on~$n$ variables can be written as a disjunction of~$2^{\varepsilon n}$ $k$-SAT formulas on~$n$ variables and at most~$C_{k,\varepsilon} \cdot n$ clauses.
\end{theorem*}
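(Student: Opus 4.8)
The plan is to prove the sparsification lemma by a bounded recursive decomposition of $F$ that repeatedly branches on a \emph{frequently occurring sub-clause}. Fix a hierarchy of thresholds $t_1,\dots,t_{k-1}$, depending only on $k$ and $\varepsilon$, to be tuned at the end. Working always with a formula that has no width-$1$ clause (unit-propagate whenever one appears), call a set $B$ of literals with $1\le|B|\le k-1$ a \emph{core} if $B$ is contained in at least $t_{|B|}$ clauses that are proper supersets of $B$; such a clause, written $C=B\cup P$ with $P\neq\emptyset$, contributes the \emph{petal} $P$. The procedure $\mathrm{Sparsify}(F)$ is: if $F$ has no core, return $\{F\}$ (a leaf); otherwise pick a core $B$ of minimum size $i$, let $C_1,\dots,C_m$ ($m\ge t_i$) be the clauses containing $B$, and return $\mathrm{Sparsify}(F^{+})\cup\mathrm{Sparsify}(F^{-})$ where
\[
  F^{+}=\bigl(F\setminus\{C_1,\dots,C_m\}\bigr)\cup\{B\},
  \qquad
  F^{-}=F\big|_{B\mapsto 0}.
\]
In $F^{+}$ we record that some literal of $B$ is true, so each $C_j\supseteq B$ is satisfied and can be dropped while the shorter clause $B$ (of width $i\le k-1$) is added; in $F^{-}$ we record that all literals of $B$ are false, which assigns the $i$ variables of $B$ and collapses each $C_j$ to its petal. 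Each returned leaf, together with the unit clauses recording the partial assignment $\rho$ produced by the $F^{-}$-steps along its path, is a $k$-CNF on the original $n$ variables.

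Correctness and leaf-sparsity are routine. Every assignment lies in exactly one of the complementary events ``some literal of $B$ true'' and ``all literals of $B$ false'', so $\mathrm{sat}(F^{+})$ and $\mathrm{sat}(F^{-})$ (the latter padded by the unit clauses $B\mapsto 0$) partition $\mathrm{sat}(F)$; conversely each child only logically strengthens $F$ (in $F^{+}$ the added clause $B$ implies every removed $C_j\supseteq B$). By induction the solution sets of the leaves partition $\mathrm{sat}(F)$, so the disjunction of the leaves is equisatisfiable with $F$. For sparsity, if a formula $G$ with no width-$1$ clause has no core then no literal occurs in $t_1$ or more clauses, so $2\,|\mathrm{clauses}(G)|\le\sum_w w\cdot(\#\text{width-}w\text{ clauses})=\sum_{\ell}\#\{\text{clauses}\ni\ell\}< 2n\,t_1$; hence every leaf has fewer than $n\,t_1+n$ clauses, and $C_{k,\varepsilon}:=t_1+1$ works.

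The real work — and the main obstacle — is bounding the number of leaves by $2^{\varepsilon n}$, which is exactly what forces the threshold hierarchy and forces $C_{k,\varepsilon}\to\infty$ as $\varepsilon\to 0$. The recursion tree is binary, so it suffices to control, along any root-to-leaf path, how often each of the two branch types occurs. Each $F^{-}$-step assigns at least one variable and no step re-creates a variable, so $F^{-}$-steps number at most $n$ on any path. An $F^{+}$-step on a size-$i$ core deletes at least $t_i-1$ clauses on net; moreover, since we always branch on a \emph{minimum}-size core, an $F^{+}$-step on a core of size $i\ge 2$ is reached only from a formula with no size-$1$ core and hence, by the incidence count above, with fewer than $n\,t_1$ clauses. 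One now sets up a potential $\Phi(F)$ — a width-weighted clause count plus a variable-count term, with the elimination organized by \emph{decreasing} clause width so as to keep $\Phi=O(n)$ along the recursion rather than the naive $\mathrm{poly}(n)$ — such that $\Phi$ strictly drops at every branch by an amount that can be made large by enlarging the thresholds. This caps the number of $F^{+}$-steps on any path at $\delta n$ with $\delta=\delta(k,\varepsilon)\to 0$ as the thresholds grow. The number of leaves is then at most the number of $\{+,-\}$-strings with at most $\delta n$ plus-symbols and at most $n$ minus-symbols, which is $\mathrm{poly}(n)\cdot\binom{(1+\delta)n}{\delta n}=2^{O(\delta\log(1/\delta)\,n)}\le 2^{\varepsilon n}$ once $\delta$ is small enough relative to $\varepsilon$. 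The hard part is doing all of this simultaneously: choosing the ordering and sizes of $t_1,\dots,t_{k-1}$ and the potential $\Phi$ so that leaves stay genuinely sparse, \emph{every} branch — including the ones that do not touch a variable — makes a priori $\Omega(1)$ progress in a quantity that is only $O(n)$ large, and the resulting very unbalanced binary tree still has only $2^{\varepsilon n}$ leaves. Finally, each recursion node costs $\mathrm{poly}(n)$ (scan all $\le(k-1)$-subsets of literals appearing in clauses to find a minimum core), so the whole disjunction is produced in $2^{\varepsilon n}\,\mathrm{poly}(n)$ time.
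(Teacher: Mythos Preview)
The paper does not prove this statement; it is quoted from Impagliazzo, Paturi, and Zane as background, and the paper's own contribution is a \emph{complementary} result about dense $k$-SAT. So there is no proof in the paper to compare against.

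Your plan does follow the original IPZ argument: branch on a frequently occurring sub-clause (their ``flower''/sunflower structure), with one child asserting the core is satisfied and the other asserting it is falsified; fix a threshold hierarchy $t_1,\dots,t_{k-1}$; argue that core-free leaves are sparse while the recursion tree has at most $2^{\varepsilon n}$ leaves. Your correctness and leaf-sparsity paragraphs are fine.

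The gap is exactly where you say it is, and it is a real one. You assert the existence of a potential $\Phi$ that is $O(n)$ along every path and drops by $\Omega(1)$ at every $F^{+}$ step, but you do not exhibit it or verify the drop. This is not routine: an $F^{+}$ step on a size-$i$ core removes $m\ge t_i$ clauses of width $>i$ but \emph{adds} a new width-$i$ clause, and an $F^{-}$ step replaces each $C_j$ by a strictly shorter petal, so a naive clause count (even width-weighted) can go up, and new cores of smaller size can be created. In IPZ this is handled by processing widths in phases from $k$ down to $2$ and choosing the thresholds so that $t_i$ is enormously larger than $t_{i+1}$ (an iterated exponential in $k$ and $1/\varepsilon$); the charging is then to the number of width-$i$ clauses ever created along the path, which one shows is $O_{k,\varepsilon}(n)$ by induction on $i$. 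Your parenthetical ``elimination organized by decreasing clause width so as to keep $\Phi=O(n)$'' is the right instinct, but until you actually write down $\Phi$ (or the per-width charging) and check both branch types against it, the $2^{\varepsilon n}$ leaf bound is unproved and the plan is not yet a proof.
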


Thus, every formula is at most as hard to solve as sparse formulas.
In this paper, we prove a surprising complementing statement: formulas with many clauses \emph{that are well-spread} are in fact exponentially-easier to solve than sparse (or general) formulas.
The definition of \emph{well-spread} is a bit cumbersome, so we first state the theorem informally.
\begin{theorem*}[Informal version of Theorem~\ref{thm:ksatfinal}]
    For every~$k,c$ there exists~$D_0,\varepsilon>0$ such that the following holds.
    If we have an algorithm for~$k$-SAT running in~$O(c^n)$ time, then we can solve~$k$-SAT for formulas containing a subset of at least~$Dn$ \emph{well-spread} clauses, for any~$D\geq D_0$, in~$O\left(\left(c-\varepsilon\right)^n\right)$ time. 
\end{theorem*}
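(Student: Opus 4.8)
The plan is to use the Hypergraph Container Method to reduce $k$-SAT on a dense well-spread formula to $2^{o(n)}$ instances of $k$-SAT, each on a constant fraction fewer variables, and then invoke the assumed $O(c^n)$-time algorithm as a black box on each instance. Let $\varphi$ be the input formula on variables $x_1,\dots,x_n$ and let $\Phi$ be a subset of at least $Dn$ of its clauses that is well-spread. First I would clean $\Phi$ (delete tautological clauses and repeated literals) and form a hypergraph $\mathcal{H}$ on the $2n$ literals $\{x_i,\bar x_i:i\in[n]\}$, placing, for each clause $C=(\ell_1\vee\cdots\vee\ell_k)\in\Phi$, the hyperedge $e_C=\{\bar\ell_1,\dots,\bar\ell_k\}$ (the literals whose truth falsifies $C$). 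If $L(\alpha)$ denotes the set of $n$ literals made true by an assignment $\alpha$, then $\alpha$ satisfies $\Phi$ iff $L(\alpha)$ contains no $e_C$, i.e.\ iff $L(\alpha)$ is an independent set of $\mathcal{H}$; in particular every satisfying assignment of $\varphi$ gives an independent set of $\mathcal{H}$ of size exactly $n$ that picks one literal per variable.

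\textbf{Applying containers.} The point of the ``well-spread'' hypothesis is precisely to make $\mathcal{H}$ satisfy the codegree conditions $\Delta_j(\mathcal{H})$ required by the hypergraph container theorem of Section~\ref{subsec:hyperconts}: with at least $Dn$ hyperedges on $2n$ vertices and controlled codegrees, the theorem yields a family $\mathcal{C}=\{C_1,\dots,C_r\}$ with $r=2^{\delta(D)\,n}$ where $\delta(D)\to 0$ as $D\to\infty$, with $|C_i|\le(1-\varepsilon_0)\cdot 2n$ for a constant $\varepsilon_0=\varepsilon_0(k)>0$, such that every independent set of $\mathcal{H}$ — hence every $L(\alpha)$ coming from a satisfying $\alpha$ — lies in some $C_i$. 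I would use the constructive form of the lemma (as developed for the algorithmic applications in this paper): each container is determined by an $o(n)$-size fingerprint computable in polynomial time, so $\mathcal{C}$ is generated in $2^{\delta(D)n}\cdot\mathrm{poly}(n)$ time.

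\textbf{The algorithm and its analysis.} For each container $C_i$: for every variable $x_j$, if neither $x_j$ nor $\bar x_j$ is in $C_i$ then no set selecting one literal per variable is contained in $C_i$, so discard $C_i$; if exactly one of $x_j,\bar x_j$ is in $C_i$, the value of $x_j$ is \emph{forced}; if both are in $C_i$, $x_j$ is \emph{free}. If $f$ is the number of free variables then $|C_i|=n+f$, so $f\le(1-2\varepsilon_0)n$. Substitute the forced values into the full formula $\varphi$ and simplify (drop satisfied clauses, shorten the rest; if an empty clause appears, discard $C_i$), obtaining a $k$-CNF $\varphi_i$ on at most $(1-2\varepsilon_0)n$ variables, and run the assumed algorithm on $\varphi_i$; report satisfiable iff some $\varphi_i$ is. Correctness: a satisfying $\alpha$ for $\varphi$ gives $L(\alpha)\subseteq C_i$ for some $i$; since $L(\alpha)$ selects one literal per variable, $C_i$ is not discarded, the forced values agree with $\alpha$, and $\alpha$ restricted to the free variables satisfies $\varphi_i$; conversely any satisfying $\varphi_i$ extends to a satisfying $\alpha$ for $\varphi$. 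The running time is $2^{\delta(D)n}\cdot O\!\left(c^{(1-2\varepsilon_0)n}\right)\cdot\mathrm{poly}(n)$. Since $c>1$ we have $c^{1-2\varepsilon_0}=c-\eta$ for some $\eta=\eta(c,\varepsilon_0)>0$, and picking $D_0=D_0(k,c)$ large enough that $\delta(D)$ is small for all $D\ge D_0$ makes $2^{\delta(D)}\cdot c^{1-2\varepsilon_0}\le c-\varepsilon$ for a constant $\varepsilon>0$, giving the claimed $O\!\left((c-\varepsilon)^n\right)$ bound.

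\textbf{Main obstacle.} The forced/free bookkeeping and the final arithmetic are routine. The real work is twofold: (i) formulating ``well-spread'' so that it \emph{exactly} delivers the codegree hypotheses the hypergraph container theorem needs on this particular $\mathcal{H}$ — this is what forces the definition to be somewhat cumbersome; and (ii) tracking the dependence of $\varepsilon_0$ and of the container count $2^{\delta(D)n}$ on $k$ and on the density $D$, so that the trade-off $2^{\delta(D)}\cdot c^{1-2\varepsilon_0}<c$ can actually be achieved for $D\ge D_0(k,c)$. A smaller point that must be checked is that it is legitimate to build $\mathcal{H}$ from only the well-spread sub-formula $\Phi$ while still solving the full formula $\varphi$ on each container — which it is, since any satisfying assignment of $\varphi$ in particular satisfies $\Phi$.
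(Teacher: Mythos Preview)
Your proposal is correct and follows essentially the same route as the paper: build the $k$-uniform hypergraph on the $2n$ literals from the well-spread sub-formula, observe that satisfying assignments of the \emph{full} formula yield independent sets there, apply the hypergraph container lemma to get $2^{o_D(n)}$ containers each missing a $\delta$-fraction of the literals, translate each container into a partial assignment (your forced/free bookkeeping is exactly the paper's $\varphi[V']$ construction and Lemma~\ref{lem:subsize}), and run the black-box $k$-SAT algorithm on the residual formula with at most $(1-2\delta)n$ variables. The one small imprecision is that your $\varepsilon_0$ should depend not only on $k$ but also on the constants $C,\varepsilon$ hidden in ``well-spread'' (the paper makes this explicit via the $(D,C,\varepsilon)$-structure of Definition~\ref{def:structure} and the choice $p=D^{-\varepsilon/r}$ in Lemma~\ref{lem:structureconts}); otherwise your outline matches the paper's proof of Theorem~\ref{thm:ksatfinal} step for step.
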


The exact definition of \emph{well-spread} appears in Section~\ref{subsec:ksatalg}.
In simple words, a collections of~$Dn$ clauses is called well-spread if for some fixed-constants~$C,\varepsilon>0$ the following two conditions hold:
\begin{enumerate}
    \item Every literal appears in at most~$CD$ clauses.
    \item Every pair of literals appear together in at most~$CD^{1-\varepsilon}$ clauses.
\end{enumerate}
In Section~\ref{subsec:necess} the necessity of these conditions is extensively discussed.
In essence, the first condition prevents having many clauses that are fully contained in a negligible subset of the variables, and the second condition prevents a single clause from appearing with high multiplicity.

In random (and not sparse)~$k$-SAT formulas these conditions hold.
They also hold in every $k$-SAT formula that is very dense. If a formula contains at least~$\varepsilon n^k$ different clauses, then the average and maximum degree are both~$\Theta(n^{k-1})$, but every pair of literals can appear in at most~$\Theta(n^{k-2})$ clauses together.

\begin{corollary}\label{cor:densesat}
    For every~$c,k,\varepsilon>0$ there exists~$\delta>0$ such that the following holds.
    If we can solve~$k$-SAT on formulas with~$n$ variables in~$O(c^n)$ time, then we can solve~$k$-SAT on formulas with~$n$ variables and at least~$\varepsilon n^k$ clauses in~$O\left(\left(c-\delta\right)^n\right)$ time.    
\end{corollary}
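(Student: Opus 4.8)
\textbf{Proof proposal for Corollary~\ref{cor:densesat}.}
The plan is to deduce the corollary from Theorem~\ref{thm:ksatfinal} (the well-spread version) by showing that a dense formula is \emph{itself} a well-spread family of clauses. Fix $k$, $c$ and the density constant $\varepsilon>0$, and let $\varphi$ be a $k$-SAT formula on $n$ variables with $m\geq\varepsilon n^k$ distinct clauses. Set $D:=m/n\geq\varepsilon n^{k-1}$, so that $\varphi$ consists of exactly $Dn$ clauses. I will choose well-spread parameters $C_0=C_0(k,\varepsilon)$ and $\eta:=1/(k-1)$, verify that the clause set of $\varphi$ is well-spread with these constants, and then feed it to Theorem~\ref{thm:ksatfinal}.

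The verification rests on two crude counting bounds in the clause hypergraph of $\varphi$. A fixed literal $\ell$ lies in at most $\sum_{j=0}^{k-1}\binom{2n-2}{j}\leq k\,(2n)^{k-1}$ distinct $k$-clauses (choose the remaining literals among the $2n-2$ literals compatible with $\ell$), and a fixed pair of literals on distinct variables lies in at most $k\,(2n)^{k-2}$ distinct $k$-clauses. Since $D\geq\varepsilon n^{k-1}$, the first bound is at most $C_0 D$ as soon as $C_0\geq k\,2^{k-1}/\varepsilon$, which gives the first well-spread condition. For the second condition, the point is that with $\eta=1/(k-1)$ we have $D^{1-\eta}=D^{(k-2)/(k-1)}\geq(\varepsilon n^{k-1})^{(k-2)/(k-1)}=\varepsilon^{(k-2)/(k-1)}\,n^{k-2}$, so the co-degree bound $k\,(2n)^{k-2}$ is at most $C_0 D^{1-\eta}$ provided $C_0$ is a large enough constant depending only on $k$ and $\varepsilon$. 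Hence the clause set of $\varphi$ is well-spread with constants $(C_0,\eta)$.

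Now apply Theorem~\ref{thm:ksatfinal} with parameters $k$, $c$ and the well-spread constants $(C_0,\eta)$: it yields $D_0,\delta>0$ such that, given the assumed $O(c^n)$-time algorithm for $k$-SAT, any formula containing a well-spread subset of at least $Dn$ clauses, for every $D\geq D_0$, can be solved in $O((c-\delta)^n)$ time. For $n$ large enough that $\varepsilon n^{k-1}\geq D_0$ we have $D=m/n\geq D_0$, so $\varphi$ falls under the theorem and is solved in $O((c-\delta)^n)$ time; the finitely many remaining values of $n$ are handled by brute force in $O(1)$ time. (If the speedup in Theorem~\ref{thm:ksatfinal} depends on $D$, we only gain from $D$ growing with $n$, and may take its value at $D=D_0$ as a uniform positive lower bound.) This is exactly the statement of the corollary.

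The only genuine content is the second well-spread condition, and it is there that density is used in an essentially tight way: a dense $k$-uniform clause set has maximum and average degree $\Theta(n^{k-1})$ but co-degree only $\Theta(n^{k-2})$, so we can afford an exponent loss of up to $\eta=1/(k-1)$ and no more. Everything else — the first condition, the invocation of the theorem, and dispatching small $n$ — is routine, so I do not anticipate a real obstacle; the substance lies entirely inside Theorem~\ref{thm:ksatfinal}.
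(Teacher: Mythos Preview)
Your proposal is correct and follows exactly the approach the paper sketches just before stating the corollary: a formula with $\varepsilon n^k$ distinct clauses has average and maximum literal-degree $\Theta(n^{k-1})$ while every pair of literals co-occurs in only $\Theta(n^{k-2})$ clauses, so the clause set is itself a $(D,C_0,\eta)$-structure and Theorem~\ref{thm:ksatfinal} applies directly. The paper gives no more detail than that one sentence, and your write-up is a faithful formalization of it (modulo harmless constant factors such as $|V(\mathcal{H}_\varphi)|=2n$ versus $n$, and the edge case $k=2$ where one should take $\eta<1$ rather than $\eta=1/(k-1)=1$).
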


Both Corollary~\ref{cor:densecol} and Corollary~\ref{cor:densesat} give a satisfying formulation of the natural intuition that constraint satisfaction problems should become easier if the number of constraints is very large.

\subsection{Partition Containers and Independent Sets in Almost-regular Hypergraphs}\label{subsec:parts}

One of the technical contributions of this paper is the generalization of (hyper-)graph containers to \emph{partition containers}.
For simplicity, we discuss those for graphs and not for hypergraphs of higher uniformity.

The standard graph container lemma can be phrased as follows.
\begin{theorem*}[\ref{thm:graphcontsnicealmost}]
    For every~$C > 0$ there exist~$d_0\in\mathbb{N},\;\varepsilon>0$ and a function~$r:\mathbb{N}\rightarrow [1,2]$ with~$\lim_{d\rightarrow \infty} r(d) = 1$, such that the following holds.
    Let~$G$ be a graph with~$n$ vertices, average degree~$d\geq d_0$, and maximum degree at most~$Cd$.
    There exists a collection~$\mathcal{C}$ of subsets~$C_1,C_2,\ldots,C_r \subset V(G)$ such that:
    \begin{itemize}
        \item $r \leq r(d)^n$.
        \item For every~$i$,~$|C_i| < \left(1-\varepsilon\right)n$.
        \item For every $I\in \mathcal{I}(G)$ independent sets in~$G$, there exists~$i\in [r]$ such that~$I\subseteq C_i$.
    \end{itemize}
\end{theorem*}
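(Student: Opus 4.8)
The plan is to carry out the classical Kleitman--Winston \emph{graph container algorithm} (as used in Sapozhenko~\cite{sapozhenko2001number} and Section~\ref{subsec:hyperconts}), arranged so that almost-regularity is invoked only to bound the size of the output container. Fix once and for all an arbitrary total order on $V(G)$ and set the threshold $\tau := d/2$. Given an independent set $I\in\mathcal{I}(G)$, run the following deterministic process. Maintain an \emph{available set} $A$, initialised to $V(G)$, and a \emph{fingerprint} $S$, initialised to $\emptyset$. While $G[A]$ contains a vertex of degree at least $\tau$: let $v$ be the highest-degree vertex of $G[A]$, with ties broken by the fixed order; if $v\in I$, add $v$ to $S$ and delete from $A$ both $v$ and all of its neighbours in $G[A]$; if $v\notin I$, delete only $v$ from $A$. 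When the loop halts, let $A^{*}$ be the resulting available set and output the pair $(S,C)$ with $C:=S\cup A^{*}$. The collection $\mathcal{C}$ is the set of all containers $C$ arising this way as $I$ ranges over $\mathcal{I}(G)$.

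First I would establish the two standard structural facts. \emph{(i) $I\subseteq C$:} a vertex $u\in I$ is never deleted as a neighbour of a selected vertex, since a selected vertex lies in $S\subseteq I$ and an edge between two members of $I$ is impossible; and if $u$ is ever examined it is placed in $S$; hence $u\in S\cup A^{*}=C$. \emph{(ii) $C$ is a function of $S$ alone} (given $G$ and the order): the examined vertex at each step is determined by the current set $A$, and the answer to ``is it in $I$?'' is exactly ``is it in $S$?'', because an examined vertex is added to $S$ precisely when it lies in $I$ (and is removed from $A$ immediately, so it is examined at most once). Replaying the loop using this rule reconstructs the entire sequence of states, hence $A^{*}$ and hence $C$. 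Consequently $|\mathcal{C}|$ is at most the number of distinct fingerprints that can appear.

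Next I would supply the quantitative estimates. Every selection deletes at least $\tau$ vertices from $A$ (the selected vertex has degree $\ge\tau$ in $G[A]$ at that moment), so $|S|\le n/\tau=2n/d$; therefore $|\mathcal{C}|\le\sum_{j\le 2n/d}\binom{n}{j}\le (ed/2)^{2n/d}$ for $d\ge 2$, and since $(ed/2)^{2/d}\to 1$ as $d\to\infty$ we may take $r(d):=\min\{2,(ed/2)^{2/d}\}$ for $d\ge d_0$ and $r(d):=2$ for $d<d_0$, which gives a function into $[1,2]$ with $\lim_{d\to\infty}r(d)=1$ and $|\mathcal{C}|\le r(d)^{n}$. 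For the container size, observe that at termination every vertex of $A^{*}$ has degree below $\tau$ in $G[A^{*}]$, so (assuming $A^{*}\ne\emptyset$, the other case being trivial) $e(G[A^{*}])<\tau n/2=dn/4$, whereas $e(G)=dn/2$; hence more than $dn/4$ edges of $G$ meet $R:=V(G)\setminus A^{*}$, and since each such edge is counted in $\sum_{v\in R}\deg_G(v)\le |R|\cdot Cd$ we obtain $|R|>n/(4C)$, i.e.\ $|A^{*}|<\bigl(1-\tfrac{1}{4C}\bigr)n$. Combining, $|C|\le|A^{*}|+|S|<\bigl(1-\tfrac{1}{4C}\bigr)n+\tfrac{2n}{d}$, which is strictly below $(1-\varepsilon)n$ once we set $\varepsilon:=\tfrac{1}{8C}$ and $d_0:=\lceil 16C\rceil$.

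The only genuinely content-bearing step is the container-size bound: it is exactly the almost-regularity hypothesis $\Delta(G)\le Cd$, together with the largeness of the average degree $d$, that converts the purely qualitative output ``$G[A^{*}]$ has maximum degree below $\tau$'' into the quantitative ``$A^{*}$ misses a constant fraction of $V(G)$'' via double counting of edges. Everything else is the standard container machinery: the algorithm itself, the inclusion $I\subseteq C$, and especially the reconstruction of $C$ from $S$, whose careful write-up amounts to checking that the examined vertex and the halting condition are functions of the current state $A$ only. The bookkeeping that $\tau=d/2$ makes $|S|=O(n/d)=o_d(n)$, and hence $r(d)\to 1$, is routine.
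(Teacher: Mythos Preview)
Your argument is correct. The only thing worth noting is that it follows a somewhat different route from the paper. For the regular case (Theorem~\ref{thm:graphconts}) the paper processes the vertices of~$I$ in a fixed order and adds each to the fingerprint if it contributes~$\ge\varepsilon d$ \emph{new} neighbours to~$N(F)$, with the container defined as~$F\cup B(F)$; for the almost-regular statement itself the paper does not give a self-contained proof at all, but simply invokes the general hypergraph container theorem (Theorem~\ref{thm:hypercontssparse}) with~$r=2$. You instead run the Kleitman--Winston \emph{max-degree} process directly on~$G$ and extract the size bound by an edge-count: since~$G[A^{*}]$ has maximum degree below~$d/2$, at least~$dn/4$ edges touch~$V(G)\setminus A^{*}$, and the maximum-degree bound~$Cd$ turns this into a lower bound on~$|V(G)\setminus A^{*}|$. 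This is more elementary and fully self-contained for the stated version, at the cost of not immediately yielding the sharper~$\tfrac{1}{2}+\varepsilon$ container size in the regular case or plugging into the hypergraph framework the paper later needs; but for the statement as given, your approach is arguably cleaner.
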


In particular, we are guaranteed that each individual independent set~$I\in\mathcal{I}(G)$ is fully contained in one of the containers, but we are not guaranteed anything about containing two or more independent sets.
This could be inherent: If the graph~$G$ is bipartite, then there are two independent sets whose union covers the entire graph, and thus at least two containers are needed to cover them.
We show that surprisingly, this is essentially the worst that can happen.
We modify the containers lemma and prove the following.

\begin{theorem*}[\ref{thm:partcontsalmost}]
    For every~$k,C$ there exist~$\varepsilon>0,\; d_0\in \mathbb{N}$ and a function~$r:\mathbb{N}\rightarrow [1,2]$ with~$\lim_{d\rightarrow \infty} r(d) = 1$, such that the following holds.
    Let~$G$ be a graph with~$n$ vertices, average degree~$d\geq d_0$, and maximum degree at most~$Cd$.
    There exists a collection~$\mathcal{C}$ of subsets~$C_1,C_2,\ldots,C_r \subset V(G)$ such that:
    \begin{itemize}
        \item $r \leq r(d)^n$.
        \item For every~$i$,~$|C_i| < (1-\varepsilon)n$.
        \item For every collection~$I_1,\ldots,I_k \in \mathcal{I}(G)$ of~$k$ independent sets in~$G$, there exists a partition~$[k]=A\cupdot B$ and indices~$a,b\in[r]$ such that~$\bigcup_{j\in A} I_j \subseteq C_a$ and~$\bigcup_{j\in B} I_j \subseteq C_b$.
    \end{itemize}
\end{theorem*}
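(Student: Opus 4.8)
The plan is to run a modified version of the Kleitman--Winston fingerprint algorithm behind Theorem~\ref{thm:graphcontsnicealmost}, now keeping track of \emph{two} containers and a bipartition of the index set $[k]$ at once. Concretely, the family $\mathcal{C}$ will be the collection of sets output by this algorithm over all possible runs; each output set is determined by a $2$-partition of $[k]$ together with a single fingerprint $F\subseteq V(G)$ of size $q=o_d(n)$, so that
\[
  |\mathcal{C}|\;\le\;2^{k}\binom{n}{\le q}\;\le\;r(d)^{n}
\]
for a suitable $r(d)\to 1$, and the ``few containers'' clause is automatic once $q$ is controlled. The entire difficulty is therefore (a) guaranteeing every output set has size below $(1-\varepsilon)n$ with the same small fingerprint bound, and (b) arranging that for any $I_1,\dots,I_k$ the two sets $C_a,C_b$ the algorithm produces on that input, together with the partition it produces, satisfy the covering statement.

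I would first record two easy ingredients. The edge-counting argument already noted in the introduction gives that every independent set of $G$ has at most $(1-\tfrac{1}{2C})n$ vertices; greedily bin-packing the classes $I_1,\dots,I_k$ then produces a partition $[k]=A^{\star}\cupdot B^{\star}$ with both $\bigl|\bigcup_{j\in A^{\star}}I_j\bigr|$ and $\bigl|\bigcup_{j\in B^{\star}}I_j\bigr|$ at most $(1-\tfrac{1}{4C})n$, so a legal \emph{balanced} target partition always exists; the only remaining job is to equip each side's union with a succinct container. Second, exactly as in the proof of Theorem~\ref{thm:graphcontsnicealmost}, as long as a set $\mathcal{A}\subseteq V$ has at least $(1-\varepsilon)n$ vertices (for a suitably small $\varepsilon=\varepsilon_C$) the graph $G[\mathcal{A}]$ has a vertex of degree $\ge\delta d$; this is what will power the progress estimates.

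The algorithm then runs two coupled greedy-by-degree sweeps, one per side. Having first computed a target partition $A^{\star},B^{\star}$, side $a$ maintains an available set $\mathcal{A}_a$ (initially $V$) and a fingerprint $F_a$, and repeatedly takes the maximum-degree vertex $v$ of $G[\mathcal{A}_a]$: it deletes from $\mathcal{A}_a$ the vertex $v$ and every neighbor of $v$ whose class lies in $B^{\star}$, and it additionally places $v$ into $F_a$ precisely when $v$'s own class lies in $A^{\star}$; side $b$ does the symmetric thing with $A^{\star},B^{\star}$ swapped. It outputs $C_a:=F_a\cup\mathcal{A}_a^{\mathrm{final}}$ and $C_b:=F_b\cup\mathcal{A}_b^{\mathrm{final}}$. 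The covering property is built in: a vertex of $\bigcup_{j\in A^{\star}}I_j$ is removed from $\mathcal{A}_a$ only if it is explicitly placed in $F_a$ (neighbors removed on side $a$ all have a $B^{\star}$-class), so $\bigcup_{j\in A^{\star}}I_j\subseteq C_a$, and likewise on side $b$; and $C_a$ (resp.\ $C_b$) is recoverable from $(A^{\star},B^{\star},F_a)$ (resp.\ $F_b$) by re-running the sweep with the rule ``select $v$ iff $v\in F_a$'', which is faithful because a maximum-degree available vertex that should be selected always is, just as in the standard lemma. Counting partitions and fingerprints then gives the bound on $|\mathcal{C}|$.

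The crux — and the step I expect to fight with — is the size bound, equivalently $|F_a|,|F_b|\le q=o_d(n)$, which amounts to showing each sweep makes $\Omega(d)$ progress per vertex it adds to its fingerprint. Here the genuinely new phenomenon appears: when side $a$ selects $v\in\bigcup_{j\in A^{\star}}I_j$ it is \emph{not} allowed to delete $v$'s neighbors that also lie in $\bigcup_{j\in A^{\star}}I_j$ (they must remain in $C_a$), so if $G[\bigcup_{A^{\star}}I_j]$ is locally dense around $v$ we pay a fingerprint token for almost no progress, and a priori there may be linearly many such vertices. Resolving this is the technical heart of the proof: one natural route is to pick $A^{\star},B^{\star}$ not merely balanced but also so that the weighted ``quotient'' graph on $[k]$ (with weight $|E_G(I_i,I_j)|$ on $\{i,j\}$) is cut near-optimally, making both $G[\bigcup_{A^{\star}}I_j]$ and $G[\bigcup_{B^{\star}}I_j]$ as sparse as a max-cut permits, and then arguing that whatever density survives on the heavier side already forces that side's union to be small enough by the edge-counting bound; an alternative is to let the partition \emph{react} during the sweep, moving a class from one side to the other whenever a selected vertex has too many same-side neighbors, using that only $k$ such moves are needed and sequencing the two sweeps so that each move is free for the other side. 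Beyond this, the remaining work is routine: pinning down $\varepsilon,\delta$ as functions of $C$ and verifying $q=O_k(n/d)$, checking the reconstruction, and lifting from the coloring case (each vertex in exactly one $I_j$) to arbitrary $k$-tuples by working with the index set $\{j:v\in I_j\}$ of each vertex in place of a single class.
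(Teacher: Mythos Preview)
Your approach diverges substantially from the paper's, and it carries a genuine gap that is separate from the progress issue you already flag.

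The paper does not modify the fingerprint algorithm at all. It applies the \emph{standard} container lemma (Theorem~\ref{thm:graphcontsnicealmost}) once to each $I_j$ separately, obtaining containers $C'_1,\dots,C'_k$ that are not only small but also sparse, $|E(G[C'_j])|<\varepsilon dn$. The partition is then found \emph{after the fact}, purely from the containers: since each $C'_j$ spans few edges, $E'=E(G)\setminus\bigcup_j E(G[C'_j])$ has $\ge(\tfrac12-k\varepsilon)dn$ edges, and by Vizing there is a matching $M\subseteq E'$ of size $\ge|E'|/(Cd+1)$ no edge of which lies inside any $C'_j$. A pigeonhole over $\{0,1\}^k$ on the edges of $M$ (Lemma~\ref{lem:refinematch}) yields a partition $[k]=A\cupdot B$ with both $\bigcup_{j\in A}C'_j$ and $\bigcup_{j\in B}C'_j$ missing at least $2^{-k}|M|$ vertices. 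The family $\mathcal{C}$ is then just all unions of at most $k$ standard containers of size $\le(1-\varepsilon'')n$. No coupled sweeps, no dynamic partition; the progress and reconstruction issues never arise because everything is reduced to the single-independent-set case.

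In your scheme, by contrast, the reconstruction claim is where things actually break, not merely the progress bound. You assert $C_a$ is recoverable from $(A^\star,B^\star,F_a)$ by re-running the sweep, ``just as in the standard lemma.'' But in the standard lemma the action at each step is determined by $G$ together with the single bit ``is the current max-degree vertex in $F$?''; in your sweep the deletion rule removes from $\mathcal{A}_a$ every neighbor of $v$ \emph{whose class lies in $B^\star$}, and that predicate depends on the sets $I_1,\dots,I_k$ themselves, which are not encoded in $(A^\star,B^\star,F_a)$. Two different tuples $(I_1,\dots,I_k)$ sharing the same $(A^\star,B^\star,F_a)$ will in general delete different neighbor sets at the very first step and hence produce different $\mathcal{A}_a^{\mathrm{final}}$, so $g$ is not a function of the fingerprint and the container count blows up. If you make the deletion rule class-oblivious (delete all of $N(v)$) you recover reconstructibility but lose covering, since $\bigcup_{j\in A^\star}I_j$ is not itself independent and neighbors of $v$ may lie in it; if you delete nothing you lose progress entirely. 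This tension is precisely what the paper sidesteps by keeping the fingerprinting at the level of single $I_j$'s (where neighborhoods are safely deletable) and extracting the $2$-partition from the sparsity of the resulting containers rather than from the $I_j$'s. Your max-cut and reactive-partition sketches address only the progress side of this tension and leave the reconstruction side untouched; even on the progress side they are not carried to a conclusion.
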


Essentially, we still get a small collection of somewhat-small containers, but now we are guaranteed that every collection of~$k$ independent sets can be partitioned into two parts, such that each part is all fully contained in a single container.
This characterizes the structure of collections of independent sets in almost-regular (hyper-)graphs.

\subsection{Implications on Average-Case Hardness}
Several notions of ``hard-on-average" problems were extensively discussed in the literature (see for example the survey of Bogdanov and Trevisan on average-case complexity~\cite{bogdanov2006average}).
While in the theory of NP-completeness the difficulty of problems is measured with respect to~\emph{worst-case} instances, it is also natural to consider the~\emph{average} hardness of problems with respect to~\emph{natural} distributions of inputs.
For some natural problems and distributions, this makes a huge difference. For example, finding a Hamiltonian path in a graph is NP-complete, yet it can be detected in expected linear time in an Erd\"{o}s-Re\'{y}ni random graph~\cite{gurevich1987expected, thomason1989simple}.
On the other hand, proving that certain problems remain ``hard-on-average" is tightly related to a central open problem in cryptography.
The question of basing cryptographic primitives on complexity-theoretical assumptions, the likes of~$P\neq NP$ or~$NP\nsubseteq BPP$, dates back to Diffie and Hellman~\cite{diffie1976new}.
The security of most suggested complexity-theory-based cryptographic primitives assumes average-case hardness of NP~\cite{impagliazzo1989one}.
It remains a major open problem to base the existence of hard-on-average problems in NP on worst-case assumptions (e.g.,~$P\neq NP$).

In our work, we show that many central NP-Complete problems are \emph{inherently} easier for input instances that are \emph{somewhat-regular}. Many natural input distributions tend to be symmetrical, and hence usually produce such inputs.
This implies that for many natural problems and input distributions, the average-case complexity \emph{must} be exponentially smaller than the worst-case complexity.

\subsection{Organization of the Paper}
In Section~\ref{sec:graphconts} we present and prove the container lemma for graphs, and apply it to  the Maximum Independent Set problem, as a leading example.
In Section~\ref{sec:col} we prove our graph coloring results, using the partition containers that are constructed later in Section~\ref{sec:partcont}.
In Section~\ref{sec:sat} we oresent the general hypergraph container method and then prove our $k$-SAT related results.
Finally, in Section~\ref{sec:conc} we conclude and present open problems.

\section{Graph Containers}\label{sec:graphconts}
\subsection{Graph Container Lemma}\label{subsec:containerslemma}
In this section we present the basic approach to container lemmas.
We only discuss regular graphs, in comparison to \emph{almost}-regular \emph{hypergraphs} to which the general approach applies.
This basic approach was discovered by several researchers, most notably by Sapozhenko~\cite{sapozhenko2001number}. 
The proof we present in this section is adapted from Alon and Spencer~\cite{alon2016probabilistic} and from~\cite{balogh2018method}.

Let~$G$ be a~$d$-regular graph with vertex set~$V$ of size~$n$.
We are going to find a collection of small sets of vertices~$S\subset {V \choose {\leq qn}}$, where~$q=o_d(1)$ is a small constant, and two functions~$f:\mathcal{I}(G)\rightarrow S$ and~$g:S\rightarrow P(V)$.
The \emph{fingerprint} function~$f$ gets an independent set~$I$ in~$G$, and returns a set~$f(I)\in S$ which is a small subset of it~$f(I)\subseteq I$. This subset is called \emph{a fingerprint} of~$I$.
The \emph{container} function~$g$ gets a fingerprint~$F\in S$ and returns a larger subset of vertices~$g(F)\subseteq V$.
We are guaranteed that~$I\subseteq g(f(I))$. 
That is, the container that corresponds to a fingerprint of~$I$, must fully contain~$I$.
Crucially, the container function~$g$ depends only on the small fingerprint and not on the original independent set~$I$.
We require each container to be small.
The functions~$f,g$ are both efficiently computable and thus we can enumerate over all containers by enumerating over~${V \choose {\leq qn}}$ and applying~$g$.
This high-level approach is the same approach used in~\cite{balogh2018method} for the general hypegraph container lemma.

\begin{theorem}[\cite{sapozhenko2001number,alon2016probabilistic,balogh2018method}]\label{thm:graphconts}
    Let~$G$ be a~$d$-regular graph with vertex set~$V$ of size~$n$, and~$\varepsilon>0$.
    For~$q=\frac{1}{\varepsilon d}$ and~$S\subset {V \choose {\leq qn}}$, there exist $f:\mathcal{I}(G)\rightarrow S$ and~$g:S\rightarrow P(V)$ such that for any independent set~$I\in \mathcal{I}(G)$,~$f(I)\subseteq I \subseteq g(f(I))$, and for any~$F\in S$,~$|g(F)|\leq \left(\frac{1}{2-\varepsilon}+q\right)n$.
\end{theorem}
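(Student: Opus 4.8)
The plan is to build the fingerprint and container functions via the classical Kleitman--Winston/Sapozhenko greedy algorithm. Fix an arbitrary linear order on $V$. Given an independent set $I$, I would process vertices greedily: maintain an ``available'' set $A$ (initially $A=V$) and a partially built fingerprint $F$ (initially empty). At each step, let $v$ be the \emph{maximum-degree vertex of $A$ into $A$}, breaking ties by the fixed order — note this choice depends only on $A$, not on $I$. If $v\in I$, add $v$ to $F$, remove $v$ and all of $N(v)$ from $A$ (they cannot be in $I$ once $v\in I$), and continue. If $v\notin I$, simply remove $v$ from $A$ and continue. Stop once $|A|$ has shrunk to at most $\frac{n}{2-\varepsilon}$, or when $A$ is exhausted. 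Define $f(I)=F$. The container $g(F)$ is obtained by \emph{replaying} the same deterministic process using only the information in $F$: we know which vertices were ``selected'' (the elements of $F$, processed in the order they were added) and we can reconstruct $A$ at every stage, because at a non-selection step we just delete the current max-degree vertex, and at a selection step we delete a vertex of $F$ together with its neighborhood. At the end we set $g(F) = F \cup A_{\text{final}}$, the union of the fingerprint with whatever remained available. By construction $F\subseteq I$, and every vertex of $I$ is either placed into $F$ or survives in $A_{\text{final}}$ (it is never deleted, since it is never a non-selected max-degree vertex that gets discarded while in $I$ — wait, it \emph{can} be discarded at a non-selection step only if it is not in $I$; so indeed $I\setminus F\subseteq A_{\text{final}}$), hence $I\subseteq g(f(I))$.

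Next I would bound the container size. Since we terminate as soon as $|A|\le \frac{n}{2-\varepsilon}$, and each step removes at least one vertex from $A$, the set $A_{\text{final}}$ has size at most $\frac{n}{2-\varepsilon}$ (if we stopped by the size condition) or $0$ (if $A$ emptied). So $|g(F)| \le |F| + \frac{n}{2-\varepsilon} \le qn + \frac{n}{2-\varepsilon} = \left(\frac{1}{2-\varepsilon}+q\right)n$, using $|F|\le qn$. It remains to prove $|F|\le qn$, i.e. that the greedy selection process picks at most $\frac{n}{\varepsilon d}$ vertices before $A$ drops below the threshold. This is the heart of the argument: each time we \emph{select} a vertex $v$ (i.e. $v\in I$), we remove $v$ and all of $N(v)\cap A$ from $A$. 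The key claim is that while $|A|>\frac{n}{2-\varepsilon}$, the chosen max-degree vertex has degree into $A$ at least $\varepsilon d$ — more precisely, I want the total number of vertices removed \emph{per selection} to be large on average. The cleanest route: bound the number of edges inside $A$. Initially $e(A)\le \frac{dn}{2}$. At a selection step removing $v$ with $\deg_A(v)=t$, we delete $t+1$ vertices and \emph{all} edges incident to them; in particular we remove at least $t$ edges. If $v$ was the max-degree vertex of $A$ and $|A|>\frac{n}{2-\varepsilon}$, then... here one must be careful — the max degree could be small if $A$ is already sparse.

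The honest approach, which I expect to be the main obstacle to state crisply, is a charging/averaging argument rather than a per-step bound. I would argue: consider the first moment $A$ reaches size $\le\frac{n}{2-\varepsilon}$; up to that point $|A| > \frac{n}{2-\varepsilon}$, so the number of vertices deleted so far is less than $n - \frac{n}{2-\varepsilon} = \frac{(1-\varepsilon)n}{2-\varepsilon} < n$. Meanwhile, at every non-selection step the \emph{max} degree into $A$ of the discarded vertex is at least as large as $\deg_A$ of every remaining vertex, and crucially at every \emph{selection} step we also picked a current max-degree vertex. The standard lemma (see Alon--Spencer) is that if the fingerprint has size $s$ then after processing it the available set has size at most $n\cdot 2^{-\Omega(sd/n)}$-ish bound; inverting, to keep $|A| > \frac{n}{2-\varepsilon}$ (a $1-\Omega(\varepsilon)$ fraction of $n$) we need $s = O(n/(\varepsilon d))$. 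I would derive this by tracking a potential such as $|A|$ directly: at each selection step in $d$-regular $G$, the vertex removed has degree $\ge$ (max degree in $A$) $\ge \frac{2e(A)}{|A|} \ge \frac{2e(A)}{n}$; combined with $e(A)$ being bounded below by... this is where the averaging over all steps (including non-selection steps, which only \emph{help} by shrinking $A$ and removing edges) must be done carefully, and I would follow the Alon--Spencer bookkeeping to conclude $|F|\le \frac{n}{\varepsilon d} = qn$. The efficient computability of $f$ and $g$ is immediate since both are explicit greedy procedures running in polynomial time.
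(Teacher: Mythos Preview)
Your approach is viable but takes a genuinely different route from the paper, and the step you flag as ``the main obstacle'' is in fact a two-line calculation that you almost wrote down.

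The paper does \emph{not} run the Kleitman--Winston max-degree process. It processes only the vertices of~$I$, in a fixed order, and adds $v\in I$ to the fingerprint $F$ iff $|N(v)\setminus N(F)|\ge \varepsilon d$. The container is then $g(F)=F\cup B(F)$, where $B(F)$ consists of the vertices outside $F\cup N(F)$ having at least $(1-\varepsilon)d$ neighbours in $N(F)$. In that construction the bound $|F|\le qn$ is immediate (each addition enlarges $N(F)$ by at least $\varepsilon d$), and it is the container-size bound that requires a short convex-combination argument on the two estimates $|B(F)|\le n-|N(F)|$ and $|B(F)|\le |N(F)|/(1-\varepsilon)$. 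Your construction reverses where the work lies: the container size is trivial from the stopping rule, and $|F|\le qn$ is the step needing an argument.

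That argument is short and does not need amortization. While $|A|>\frac{n}{2-\varepsilon}$, double counting in the $d$-regular graph gives $2e(G[A])=d|A|-e(A,V\setminus A)\ge d|A|-d(n-|A|)=d(2|A|-n)$, so the average degree inside $A$ is at least $d\bigl(2-n/|A|\bigr)>\varepsilon d$. Hence the max-degree vertex you pick satisfies $\deg_A(v)>\varepsilon d$, each selection step removes more than $\varepsilon d$ vertices from $A$, and since fewer than $n$ vertices are ever removed you get $|F|<n/(\varepsilon d)=qn$. No exponential-shrinkage lemma is required; and your appeal to ``Alon--Spencer bookkeeping'' is slightly off, since the Alon--Spencer presentation is precisely the paper's construction, not the KW process you describe.

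One further point: the explicit form $g(F)=F\cup B(F)$ in the paper is not incidental. The later partition-container results (Lemmas~\ref{lem:partcap} and~\ref{lem:partcup}) exploit exactly the two bounds on $B(F)$ in terms of $N(F)$. Your ``replay and output $F\cup A_{\text{final}}$'' definition of $g$ proves Theorem~\ref{thm:graphconts} as stated, but would not plug into those lemmas without reworking them.
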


We begin by defining~$f$ and~$g$.
Fix an arbitrary order~$V=\{v_1,\ldots,v_n\}$ of the vertices.

Let~$I\in\mathcal{I}(G)$ be an independent set in~$G$, we define~$f(I)$ algortihmically.
We set~$F=\emptyset$ and go over the vertices of~$I$ according to the fixed order of~$V$.
When we get to a vertex~$v\in I$, we add it to~$F$ if and only if~$|N(v)\setminus N(F)|\geq \varepsilon d$. That is, only if~$v$ has at least~$\varepsilon d$ neighbors that are not yet neighbors of vertices in~$F$.
We let~$f(I)$ be the set~$F$ we end up with after going through all vertices of~$I$.

Denote by~$B(F)$ the set of all vertices~$v\in V\setminus \left(F\cup N\left(F\right)\right)$ such that~$|N(v)\cap N(F)|\geq (1-\varepsilon)d$. These are all vertices that are not in~$F$ and are not neighbors of~$F$, but at least~$(1-\varepsilon)d$ of their neighbors are also neighbors of~$F$.
For any~$F\in S$, we define~$g(F):=F\cup B(F)$.

\begin{lemma}
    For every~$I\in \mathcal{I}(G)$,~$f(I)\subseteq I$ and~$|f(I)|\leq qn$.
\end{lemma}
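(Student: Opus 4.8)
The claim splits into two parts: that $f(I)\subseteq I$, and that $|f(I)|\le qn$ where $q=\frac{1}{\varepsilon d}$.

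\medskip

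\textbf{The easy part.} The containment $f(I)\subseteq I$ is essentially immediate from the construction: we only ever add vertices of $I$ to $F$, so $F\subseteq I$ is maintained as an invariant throughout the algorithm, and $f(I)=F$ at termination. I would just state this.

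\medskip

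\textbf{The size bound — the main point.} The plan is to track the growth of the neighborhood $N(F)$ as the algorithm processes vertices. The key observation is that every time a vertex $v$ is added to $F$, the set $N(F)$ grows by at least $\varepsilon d$ new vertices — this is exactly the admission criterion $|N(v)\setminus N(F)|\ge \varepsilon d$. So if $f(I)$ ends up with $m$ vertices, then after processing them we have $|N(f(I))|\ge m\varepsilon d$. Since $N(f(I))\subseteq V$ and $|V|=n$, we get $m\varepsilon d\le n$, i.e. $m\le \frac{n}{\varepsilon d}=qn$. The one subtlety to be careful about is that the bound $|N(v)\setminus N(F)|\ge\varepsilon d$ is checked against the \emph{current} $F$ at the time $v$ is examined, so the increments are genuinely disjoint and the lower bound on $|N(F)|$ accumulates additively; I would phrase this as: order the vertices eventually placed in $F$ as $u_1,\dots,u_m$ (in the order the algorithm adds them), let $F_j=\{u_1,\dots,u_j\}$, and note $|N(F_j)|\ge |N(F_{j-1})|+\varepsilon d$ because $u_j$ was admitted, hence $|N(F_m)|\ge m\varepsilon d$.

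\medskip

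I don't anticipate any real obstacle here; the argument is a one-line potential/counting argument once the right quantity ($|N(F)|$) is identified. The only thing worth double-checking in the writeup is that $N(\cdot)$ is the open neighborhood and that nothing prevents the additive accumulation (it doesn't, since we never remove vertices from $F$). So the proof is simply: maintain the invariant $F\subseteq I$ for the first claim, and for the second claim bound $|N(F)|$ from below by $\varepsilon d\cdot|F|$ and from above by $n$.
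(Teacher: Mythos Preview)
Your proposal is correct and follows essentially the same approach as the paper: the containment $f(I)\subseteq I$ is immediate from the construction, and the size bound comes from the potential argument that each admitted vertex increases $|N(F)|$ by at least $\varepsilon d$, while $|N(F)|\le n$. Your writeup is just a more explicit version of the paper's two-line proof.
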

\begin{proof}
    By definition we only add vertices of~$I$ to~$F$ and thus~$f(I)\subseteq I$.
    Whenever we add a vertex to~$F$, the size of~$N(F)$ increases by at least~$\varepsilon d$. 
    As the size of~$N(F)$ is at most~$n$, we have~$|f(I)|\leq \frac{n}{\varepsilon d}$.
\end{proof}

\begin{lemma}
    For every~$I\in\mathcal{I}(G)$,~$I\subseteq g(f(I))$.
\end{lemma}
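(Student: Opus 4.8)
The plan is to fix an independent set $I \in \mathcal{I}(G)$, set $F := f(I)$, and show that every vertex $v \in I$ lies in $g(F) = F \cup B(F)$. If $v \in F$ there is nothing to prove, so I would assume $v \in I \setminus F$ and verify the two conditions defining membership in $B(F)$: first, that $v \in V \setminus (F \cup N(F))$, and second, that $|N(v) \cap N(F)| \ge (1-\varepsilon)d$.

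For the first condition I would invoke independence. By the previous lemma $F = f(I) \subseteq I$, and since $I$ is an independent set, $v \in I$ is not adjacent to any vertex of $F \subseteq I$; hence $v \notin N(F)$. Together with the assumption $v \notin F$, this gives $v \in V \setminus (F \cup N(F))$, as required.

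For the second condition, the key observation is that the partial fingerprint built by the algorithm defining $f$ only grows, so the set $N(F')$ is monotonically increasing during the run and $N(F') \subseteq N(F)$ at every step. Since $v \in I$, the algorithm does process $v$ at some point; because $v \notin F = f(I)$, it was not added to the fingerprint, which by the rule of the algorithm means that the partial fingerprint $F'$ at that moment satisfied $|N(v) \setminus N(F')| < \varepsilon d$. Combining this with $N(F') \subseteq N(F)$ yields $|N(v) \setminus N(F)| \le |N(v) \setminus N(F')| < \varepsilon d$, and since $G$ is $d$-regular, $|N(v) \cap N(F)| = d - |N(v) \setminus N(F)| > (1-\varepsilon)d$. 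Therefore $v \in B(F)$, so $v \in g(F) = g(f(I))$.

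I do not expect a genuine obstacle here; the argument is short. The one point needing a little care is recognising that the ``not added'' condition — stated with respect to the partial fingerprint at the time $v$ is processed — transfers to the final fingerprint $F$ precisely because the neighbourhood operator is monotone under taking supersets. The only other thing to be careful not to skip is the use of independence of $I$ (together with $F \subseteq I$) to conclude $v \notin N(F)$.
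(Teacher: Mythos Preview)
Your proposal is correct and follows essentially the same approach as the paper's proof; you simply spell out in more detail the two points the paper leaves terse, namely the monotonicity of $N(F')$ during the construction of the fingerprint (so that the ``not added'' condition transfers to the final $F$) and the use of independence together with $F\subseteq I$ to get $v\notin N(F)$.
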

\begin{proof}
    By the definition of~$F$, any vertex~$v\in I\setminus f(I)$ must have at least~$(1-\varepsilon)d$ neighbors in~$N(F)$. In particular~$I\subseteq F\cup B(F)$. 
    Note that~$I$ and~$N(F)$ are disjoint as~$F\subseteq I$ and~$I$ is an independent set.
\end{proof}

\begin{lemma}\label{lem:cont_size}
    For every~$F\in S$, $|g(F)|\leq \left(\frac{1}{2-\varepsilon}+q\right)n$.
\end{lemma}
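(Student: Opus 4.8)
The plan is a two-way count of the edges between $B(F)$ and $N(F)$, closed off using a disjointness observation. First I would record that, since $B(F)\subseteq V\setminus(F\cup N(F))$ by definition, the union $g(F)=F\cup B(F)$ is disjoint; hence $|g(F)|=|F|+|B(F)|\leq qn+|B(F)|$, where $|F|\leq qn$ comes from the first lemma. So it suffices to prove $|B(F)|\leq \frac{n}{2-\varepsilon}$.

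Next I would estimate the number of edges of $G$ with one endpoint in $B(F)$ and the other in $N(F)$ in two ways. Summing over $v\in B(F)$, each such $v$ has at least $(1-\varepsilon)d$ neighbours in $N(F)$ by the definition of $B(F)$, so the number of these edges is at least $(1-\varepsilon)d\,|B(F)|$. Summing instead over $u\in N(F)$, each $u$ is incident to at most $d$ edges in total by $d$-regularity, hence to at most $d$ edges going into $B(F)$; so the number of these edges is at most $d\,|N(F)|$. Comparing the two estimates gives $(1-\varepsilon)d\,|B(F)|\leq d\,|N(F)|$, i.e. $|B(F)|\leq \frac{|N(F)|}{1-\varepsilon}$.

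The step that produces $2-\varepsilon$ rather than merely $1-\varepsilon$ in the denominator — and essentially the only place that needs a moment's thought — is to also use that $B(F)$ and $N(F)$ are \emph{disjoint} subsets of $V$, so $|N(F)|\leq n-|B(F)|$. Plugging this in yields $(1-\varepsilon)|B(F)|\leq n-|B(F)|$, hence $(2-\varepsilon)|B(F)|\leq n$ and $|B(F)|\leq \frac{n}{2-\varepsilon}$. Combining with $|F|\leq qn$ gives $|g(F)|=|F|+|B(F)|\leq \left(\frac{1}{2-\varepsilon}+q\right)n$. I do not expect any real obstacle: the argument is elementary double counting, and the one thing to be careful about is that the definition of $B(F)$ already removes $F$ and $N(F)$, which is exactly what makes both the disjointness of $F\cup B(F)$ and the bound $|N(F)|\le n-|B(F)|$ available.
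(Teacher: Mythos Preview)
Your proof is correct and essentially identical to the paper's: both use the double-counting bound $(1-\varepsilon)|B(F)|\le |N(F)|$ together with the disjointness of $B(F)$ and $N(F)$ (equivalently, $|B(F)|\le n-|N(F)|$). The only cosmetic difference is that the paper combines these two inequalities via a convex combination with weights $\tfrac{1}{2-\varepsilon}$ and $\tfrac{1-\varepsilon}{2-\varepsilon}$, whereas you substitute one into the other directly; the arithmetic is the same.
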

\begin{proof}
    We make two observations about the size of~$B(F)$.
    First,~$B(F)\subseteq V \setminus \left(F\cup N(F)\right)$, hence~$|B(F)|\leq n - |N(F)|$.
    Second, every~$v\in B(F)$ has at least~$(1-\varepsilon)d$ neighbors in~$N(F)$. On the other hand, each vertex in~$N(F)$ has only~$d$ neighbors (as the graph is $d$-regular). Therefore,~$|B(F)|\leq \frac{|N(F)|\cdot d}{(1-\varepsilon) d} = \frac{|N(F)|}{1-\varepsilon}$.
    We take a convex combination of these two bounds and conclude that 
    $$
    |B(F)|\leq \frac{1}{2-\varepsilon}\left(n - |N(F)|\right) + \frac{1-\varepsilon}{2-\varepsilon}\left(\frac{|N(F)|}{1-\varepsilon}\right)
    =
    \frac{n}{2-\varepsilon}
    .$$
\end{proof}

This concludes the proof of Theorem~\ref{thm:graphconts}.
\begin{remark}
    Both~$f$ and~$g$ are computable in~$O(nd)$ time.
\end{remark}

\begin{remark}\label{rmk:indsize}
    Let~$G$ be a~$d$-regular graph on~$n$ vertices.
    For every~$I\in \mathcal{I}(G)$ independent set in~$G$,~$|I|\leq \frac{n}{2}$.
\end{remark}
\begin{proof}
    There are~$d |I|$ edges adjacent to vertices in~$I$, and as~$I$ is an independent set, all of them must have an endpoint in~$V(G)\setminus I$.
    Thus,~$d|I|\leq d|V(G)\setminus I|=d\left(n-|I|\right)$ and hence~$2|I|\leq n$.
\end{proof}

We can now conclude the following formulation of the container lemma for regular graphs.
\begin{theorem}\label{thm:graphcontsnice}
    For every~$\varepsilon > 0$ there exists a function~$r:\mathbb{N}\rightarrow [1,2]$ with~$\lim_{d\rightarrow \infty} r(d) = 1$, such that the following holds.
    Let~$G$ be a~$d$-regular graph with~$n$ vertices.
    There exists a collection~$\mathcal{C}$ of subsets~$C_1,C_2,\ldots,C_r \subset V(G)$ such that:
    \begin{itemize}
        \item $r \leq r(d)^n$.
        \item For every~$i$,~$|C_i| < \left(\frac{1}{2}+\varepsilon\right)n$.
        \item For every $I\in \mathcal{I}(G)$ independent sets in~$G$, there exists~$i\in [r]$ such that~$I\subseteq C_i$.
    \end{itemize}
    Furthermore, we can compute~$\mathcal{C}$ in~$O^*\left(|\mathcal{C}|\right)$ time.
\end{theorem}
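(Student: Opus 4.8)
The plan is to read off~$\mathcal{C}$ directly from Theorem~\ref{thm:graphconts} when~$d$ is large, and to fall back on a trivial brute-force collection when~$d$ is small; the freedom to let~$r(d)$ range over all of~$[1,2]$ is precisely what makes the second regime costless. We may assume~$\varepsilon \le \tfrac12$, since otherwise~$(\tfrac12+\varepsilon)n > n$ and the single container~$\mathcal{C} = \{V(G)\}$ already works. Set~$d_0 := \lceil 4/\varepsilon^2 \rceil$. First consider~$d \ge d_0$: apply Theorem~\ref{thm:graphconts} with the parameter~$\varepsilon$ itself, obtaining~$q = \tfrac{1}{\varepsilon d}$, the fingerprint set~$S = {V \choose {\le qn}}$, and functions~$f,g$ with~$f(I)\subseteq I\subseteq g(f(I))$ for every~$I\in\mathcal{I}(G)$ and~$|g(F)|\le\bigl(\tfrac{1}{2-\varepsilon}+q\bigr)n$ for every~$F\in S$. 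Define the family~$\mathcal{C} := \bigl(g(F)\bigr)_{F\in S}$, indexed by~$S$ (relabelled as~$[r]$), so~$r = |S| = \sum_{i\le qn}\binom ni$; I make no attempt to remove repeated containers.

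Next I would check the three bullets for~$d \ge d_0$. Covering is immediate: for~$I\in\mathcal{I}(G)$ we have~$f(I)\in S$ and~$I\subseteq g(f(I))\in\mathcal{C}$. For the size bound, note~$\tfrac{1}{2-\varepsilon}-\tfrac12 = \tfrac{\varepsilon}{2(2-\varepsilon)} \le \tfrac{\varepsilon}{2}$ (as~$2-\varepsilon\ge1$) and~$q = \tfrac{1}{\varepsilon d}\le\tfrac{1}{\varepsilon d_0}\le\tfrac{\varepsilon}{4}$, so~$|g(F)| \le \bigl(\tfrac12+\tfrac{3\varepsilon}{4}\bigr)n < (\tfrac12+\varepsilon)n$. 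For the count, since~$q\le\tfrac12$ the standard entropy estimate gives~$r = \sum_{i\le qn}\binom ni \le 2^{H(q)n}$, where~$H(x) = -x\log_2 x-(1-x)\log_2(1-x)$; setting~$r(d) := 2^{H(1/(\varepsilon d))}$ gives a value in~$[1,2]$ with~$\lim_{d\to\infty}r(d) = 1$. Finally, by the remark above that~$g$ is computable in~$O(nd)$ time, enumerating~$S$ and applying~$g$ produces~$\mathcal{C}$ in~$O(|S|\cdot nd) = O^*(|S|) = O^*(|\mathcal{C}|)$ time.

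For~$d < d_0$ I would take~$r(d) := 2$ and let~$\mathcal{C}$ be the collection of all~$\binom{n}{\lceil n/2\rceil}$ subsets of~$V(G)$ of size exactly~$\lceil n/2\rceil$. By Remark~\ref{rmk:indsize} every~$I\in\mathcal{I}(G)$ has~$|I|\le n/2\le\lceil n/2\rceil$, so~$I$ is contained in at least one member of~$\mathcal{C}$; each container has size~$\lceil n/2\rceil\le n/2+1 < (\tfrac12+\varepsilon)n$ once~$n>1/\varepsilon$, and the finitely many smaller~$n$ are absorbed into the~$O^*$ notation (indeed solved by brute force); and~$|\mathcal{C}| = \binom{n}{\lceil n/2\rceil}\le 2^n = r(d)^n$, with the enumeration running in~$O^*(2^n) = O^*(|\mathcal{C}|)$ time. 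Since~$d_0$ depends only on~$\varepsilon$, combining the two regimes completes the construction.

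There is no genuine obstacle here — the statement is a repackaging of Theorem~\ref{thm:graphconts}. The only points to get right are the choice of~$d_0$ (so that simultaneously~$\tfrac{1}{2-\varepsilon}$ lies within~$\tfrac{\varepsilon}{2}$ of~$\tfrac12$ and~$q=\tfrac{1}{\varepsilon d}$ lies within~$\tfrac{\varepsilon}{2}$ of~$0$), the elementary entropy bound on~$\sum_{i\le qn}\binom ni$ together with~$H(q)\to 0$ as~$q\to 0$, and the observation that for bounded degree containers are of no use — there~$q$ is bounded away from~$0$ and the container-size bound~$\bigl(\tfrac{1}{2-\varepsilon}+q\bigr)n$ degrades — so one must instead lean on the trivial~$r(d)=2$ family, which is exactly why~$r(d)$ is permitted to take the value~$2$ rather than being required to tend to~$1$ uniformly.
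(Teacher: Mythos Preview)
Your proposal is correct and follows essentially the same approach as the paper: split at a threshold~$d_0 = \Theta(1/\varepsilon^2)$, use Theorem~\ref{thm:graphconts} with the entropy bound~$\binom{n}{\le qn}\le 2^{H(q)n}$ above the threshold, and fall back on all half-sized subsets with~$r(d)=2$ below it. The only cosmetic differences are that the paper uses~$\lfloor n/2\rfloor$ rather than~$\lceil n/2\rceil$ (which sidesteps your small-$n$ caveat) and takes~$\mathcal{C}=\mathrm{Im}(g)$ rather than indexing by all of~$S$; neither affects the argument.
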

\begin{proof}
    We assume that~$\varepsilon<\frac{1}{2}$ as otherwise we may take~$\mathcal{C}=\{V\}$.
    If~$d\leq \frac{2}{\varepsilon^2}$ then we set~$r(d)=2$ and~$\mathcal{C}={V \choose \lfloor n/2 \rfloor}$.
    This satisfies the conditions by Remark~\ref{rmk:indsize}.
    Otherwise, we apply Theorem~\ref{thm:graphconts} to~$G$ with the same value of~$\varepsilon$.
    By Lemma~\ref{lem:cont_size}, for each~$F\in S$, $|F|\leq \left(\frac{1}{2-\varepsilon}+q\right)n$.
    We note that~$\frac{1}{2-\varepsilon} < \frac{1}{2} + \frac{\varepsilon}{2}$ for~$\varepsilon\in\left(0,\frac{1}{2}\right)$. 
    Thus, as~$d>\frac{2}{\varepsilon^2}$ we have~$q=\frac{1}{\varepsilon d}<\frac{\varepsilon}{2}$ and~$|F|\leq \left(\frac{1}{2}+\varepsilon\right)n$.
    We can thus set~$\mathcal{C} = Im(g)$.
    We then note that
    $$
    |\mathcal{C}| \leq {n \choose \leq qn} < 2^{H(q) n}
    ,$$
    where the last inequality is a standard bound and~$H$ is the binary entropy function (see proof in~\cite{thomas2006elements}).
    We can thus set~$r(d)=2^{H(q)}$ and finally note that
    $$
    \lim_{d\rightarrow \infty} r(d) = \lim_{d\rightarrow \infty} 2^{H\left(\frac{1}{\varepsilon d}\right)} = 1
    .$$
\end{proof}

We note that Theorem~\ref{thm:graphcontsnice} immediately implies, for example, that the number of independent sets in a~$d$-regular graph with~$n$ vertices is at most~$2^{\left(\frac{1}{2}+o_d(1)\right)n}$, as each of them is a subset of a container~\cite{sapozhenko2001number}.
Several of the most notable applications of the more general hypergraph container lemma are counting theorems that are proven in a similar fashion~\cite{saxton2015hypergraph,balogh2018method}.

We also note that the containers we get in Theorem~\ref{thm:graphconts} are also \emph{sparse}.
\begin{lemma}
    For every~$F\in S$, we have~$|E(G[g(S)])|\leq \varepsilon d n$.
\end{lemma}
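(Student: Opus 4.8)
The plan is to split $E(G[g(F)])$ according to the decomposition $g(F)=F\cup B(F)$ and show that essentially all of it is concentrated inside $B(F)$, where every vertex has very low degree. First I would record two structural facts. The set $F$ is (a subset of) an independent set: in the only regime we care about, $F$ ranges over $\mathrm{Im}(f)$, i.e. over fingerprints $f(I)$ of actual independent sets, so $F\subseteq I$ and hence $G[F]$ has no edges. Second, by the very definition of $B(F)$ we have $B(F)\subseteq V\setminus\bigl(F\cup N(F)\bigr)$, so no vertex of $B(F)$ is adjacent to any vertex of $F$; consequently there are no edges between $F$ and $B(F)$ either. Combining these, $E(G[g(F)])=E(G[B(F)])$, and it remains only to bound the number of edges induced on $B(F)$.

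Next I would carry out a simple degree count on $B(F)$. Every $v\in B(F)$ has at least $(1-\varepsilon)d$ neighbours lying in $N(F)$, and $N(F)$ is disjoint from $B(F)$; since $\deg(v)=d$, this leaves at most $\varepsilon d$ neighbours of $v$ outside $N(F)$, and in particular at most $\varepsilon d$ neighbours of $v$ inside $B(F)$. Summing this bound over all $v\in B(F)$ and dividing by two (each edge of $G[B(F)]$ is counted from both endpoints) yields
$$
|E(G[g(F)])|=|E(G[B(F)])|\;\leq\;\tfrac12\,|B(F)|\cdot\varepsilon d\;\leq\;\tfrac12\,\varepsilon d n\;\leq\;\varepsilon d n,
$$
which is the claimed bound (in fact with room to spare).

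I do not expect a real obstacle here; the only point requiring a little care is the bookkeeping around which edges can occur. The crucial observation — that the ``cross'' edges between $F$ and $B(F)$ simply do not exist, because $B(F)$ was defined to avoid $N(F)$ — is what makes the estimate immediate rather than requiring a separate bound on $|E(G[F])|$ (which would only be small when $d$ is large, as in the application of Theorem~\ref{thm:graphcontsnice}). Once that is noted, the statement is a one-line averaging argument, entirely analogous in spirit to the size bound on $B(F)$ already established in Lemma~\ref{lem:cont_size}.
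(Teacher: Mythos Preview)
Your argument is correct and follows essentially the same route as the paper: both use that $F$ (being a fingerprint $f(I)\subseteq I$) is independent, that $B(F)$ avoids $N(F)$ so there are no $F$--$B(F)$ edges, and that each $v\in B(F)$ therefore has at most $\varepsilon d$ neighbours inside $g(F)$. Your write-up is in fact slightly more explicit than the paper's, carrying out the final degree-sum/halving step and noting the factor-of-two slack.
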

\begin{proof}
    Let~$F=f(I)$ for~$I\in \mathcal{I}(G)$.
    By definition,~$g(F)=F\cup B(F)$ where~$B(F)\subseteq V\setminus N(F)$.
    Note that~$F\subseteq I$ is an independent set and thus~$g(F) \subseteq V\setminus N(F)$.
    In particular, every~$v\in F$ has no neighbors in~$g(F)$.
    Every~$v\in B(F)$ has at least~$(1-\varepsilon)d$ neighbors in~$N(F)$ and thus at most~$\varepsilon d$ neighbors in~$g(F)$.
\end{proof}

We finally state a more general variant of the graph container lemma for \emph{almost-regular} graphs.
We omit its proof, which is similar (yet slightly more involved) than the one above.
\begin{theorem}[Special case of~\cite{saxton2015hypergraph, balogh2018method}, follows from Theorem~\ref{thm:hypercontssparse}]\label{thm:graphcontsnicealmost}
    For every~$C,\varepsilon > 0$ there exist~$d_0\in\mathbb{N},\;\varepsilon'>0$ and a function~$r:\mathbb{N}\rightarrow [1,2]$ with~$\lim_{d\rightarrow \infty} r(d) = 1$, such that the following holds.
    Let~$G$ be a graph with~$n$ vertices, average degree~$d\geq d_0$, and maximum degree at most~$Cd$.
    There exists a collection~$\mathcal{C}$ of subsets~$C_1,C_2,\ldots,C_r \subset V(G)$ such that:
    \begin{itemize}
        \item $r \leq r(d)^n$.
        \item For every~$i$,~$|C_i| < \left(1-\varepsilon'\right)n$ and~$|E(G[C_i])|<\varepsilon dn$.
        \item For every $I\in \mathcal{I}(G)$ independent sets in~$G$, there exists~$i\in [r]$ such that~$I\subseteq C_i$.
    \end{itemize}
    Furthermore, we can compute~$\mathcal{C}$ in~$O^*\left(|\mathcal{C}|\right)$ time.
\end{theorem}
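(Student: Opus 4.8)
The plan is to adapt the proof of Theorem~\ref{thm:graphconts} essentially line by line, with two modifications: every threshold is measured against the \emph{average} degree $d$ rather than the (no longer constant) degree of the vertex at hand, and the vertices of very small degree --- which an independent set may contain in bulk --- are accounted for separately. Fix $\varepsilon_0=\varepsilon_0(C,\varepsilon)>0$ (it will suffice to take $\varepsilon_0\le\min\{\varepsilon,\tfrac14\}$), set $q=\tfrac{1}{\varepsilon_0 d}$, and fix an order $v_1,\dots,v_n$ of $V(G)$. For $I\in\mathcal I(G)$, build the fingerprint $F=f(I)$ greedily as before --- scan $I$ in order and add $v$ to $F$ iff $|N(v)\setminus N(F)|\ge\varepsilon_0 d$ --- so that each addition grows $|N(F)|$ by at least $\varepsilon_0 d$, giving $|F|\le n/(\varepsilon_0 d)=qn$ (and only vertices of degree $\ge\varepsilon_0 d$ are ever added). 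Define the container
\[
 g(F):=F\cup B(F),\qquad B(F):=\bigl\{\,v\in V\setminus(F\cup N(F)):|N(v)\setminus N(F)|<\varepsilon_0 d\,\bigr\}.
\]
The relations $f(I)\subseteq I\subseteq g(f(I))$ then go through verbatim: $F\subseteq I$ by construction, $F$ independent gives $I\cap N(F)=\emptyset$, and any $v\in I\setminus F$ failed the greedy test when scanned, hence (as $N(F)$ only grows) has $|N(v)\setminus N(F)|<\varepsilon_0 d$ at the end and so lies in $B(F)$.

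The sparsity bound is inherited with no real change: $F$ and $B(F)$ are disjoint, there are no edges between them (neighbors of $F$ lie in $N(F)$, which misses $B(F)$) and none inside $F$ (independent), so $E(G[g(F)])=E(G[B(F)])$; and each $v\in B(F)$ has fewer than $\varepsilon_0 d$ neighbors outside $N(F)$, in particular fewer than $\varepsilon_0 d$ inside $B(F)$, whence $|E(G[g(F)])|<\tfrac12\varepsilon_0 dn\le\varepsilon dn$.

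The one step that genuinely changes is the container-size bound, and it is here that $\varepsilon'$ acquires its dependence on $C$. Write $V=V_{\mathrm{lo}}\cup V_{\mathrm{hi}}$ with $V_{\mathrm{lo}}=\{v:\deg(v)<2\varepsilon_0 d\}$. Since $F$ is independent, $g(F)\subseteq V\setminus N(F)$, so $|g(F)|\le n-|N(F)|$. For a second bound, count edges between $B(F)\cap V_{\mathrm{hi}}$ and $N(F)$: each such $v$ lies outside $N(F)$ and thus has more than $\deg(v)-\varepsilon_0 d\ge\deg(v)/2\ge\varepsilon_0 d$ neighbors in $N(F)$, while every vertex of $N(F)$ has at most $Cd$ neighbors, so $|B(F)\cap V_{\mathrm{hi}}|\le(C/\varepsilon_0)|N(F)|$ and hence $|g(F)|\le qn+|V_{\mathrm{lo}}|+(C/\varepsilon_0)|N(F)|$. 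Taking the convex combination of the two bounds with weight $\tfrac{\varepsilon_0}{\varepsilon_0+C}$ on the second cancels the $|N(F)|$ term and gives $|g(F)|\le n-\tfrac{\varepsilon_0}{\varepsilon_0+C}(|V_{\mathrm{hi}}|-qn)$. Finally $\sum_{v\in V_{\mathrm{lo}}}\deg(v)<2\varepsilon_0 dn$ forces $\sum_{v\in V_{\mathrm{hi}}}\deg(v)>(1-2\varepsilon_0)dn$, and $\deg\le Cd$ on $V_{\mathrm{hi}}$ yields $|V_{\mathrm{hi}}|>(1-2\varepsilon_0)n/C$; so picking $d_0$ large enough that $q<\tfrac{1-2\varepsilon_0}{2C}$ whenever $d\ge d_0$, I conclude $|g(F)|<(1-\varepsilon')n$ with $\varepsilon'=\tfrac{\varepsilon_0(1-2\varepsilon_0)}{2C(\varepsilon_0+C)}>0$. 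Setting $\mathcal C=\{g(F):F\subseteq V\text{ independent},\ |F|\le qn\}$ gives $|\mathcal C|\le\binom{n}{\le qn}<2^{H(q)n}$, so $r(d)=2^{H(1/(\varepsilon_0 d))}\to1$, and $f,g$ are computable in $O^*(1)$ time as before, so $\mathcal C$ is computable in $O^*(|\mathcal C|)$ time.

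The main obstacle --- and the only place the argument really departs from the $d$-regular case --- is the bookkeeping for the low-degree vertices: unlike for regular graphs, an independent set, and hence a container, may legitimately contain \emph{all} of $V_{\mathrm{lo}}$, so the entire gain has to be extracted from $V_{\mathrm{hi}}$, and one must use \emph{both} hypotheses (average degree exactly $d$, maximum degree at most $Cd$) to certify that $V_{\mathrm{hi}}$ is a constant fraction of $V$; this is precisely why $\varepsilon'\to0$ as $C\to\infty$. A shorter, black-box alternative is to instantiate the general sparse hypergraph container lemma (Theorem~\ref{thm:hypercontssparse}) at uniformity $2$, after checking that its co-degree and regularity hypotheses specialize, for graphs, to ``average degree $d$, maximum degree $\le Cd$''; I would present the self-contained adaptation above as the main proof and record this reduction as a remark.
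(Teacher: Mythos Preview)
Your proof is correct and is precisely the ``similar (yet slightly more involved)'' adaptation the paper alludes to but omits; the paper itself gives no proof beyond pointing to Theorem~\ref{thm:hypercontssparse} as a black-box source, which you also record. The one genuinely new ingredient relative to the regular case---splitting off $V_{\mathrm{lo}}$ and using both the average-degree and max-degree hypotheses to certify $|V_{\mathrm{hi}}|\ge\Omega(n/C)$---is exactly the ``slightly more involved'' step, and your redefinition of $B(F)$ via $|N(v)\setminus N(F)|<\varepsilon_0 d$ (rather than $|N(v)\cap N(F)|\ge(1-\varepsilon)d$, which is equivalent only when degrees are constant) is the correct adaptation.
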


\subsection{Simple Applications and Maximum Independent Set}

In this section we demonstrate the power of using containers algorithmically with a simple example.

\begin{theorem}\label{thm:MISreglarge}
    Given an algorithm that solves Maximum Independent Set (MIS) in~$O(c^n)$ time, we can solve MIS in~$d$-regular graphs in~$O\left({\sqrt{c}}^{\left(1+o_d(1)\right)n}\right)$ time.
\end{theorem}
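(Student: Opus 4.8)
The plan is to apply the regular graph container lemma (Theorem~\ref{thm:graphcontsnice}) to reduce MIS on the whole $d$-regular graph to MIS on the induced subgraphs on each container, and then bound the running time by counting containers times the cost of the black-box call on each one. Fix a small $\varepsilon>0$ to be chosen at the end. First I would invoke Theorem~\ref{thm:graphcontsnice} with this $\varepsilon$ to obtain a collection $\mathcal{C}=\{C_1,\ldots,C_r\}$ with $r\le r(d)^n$, each $|C_i|<\left(\tfrac12+\varepsilon\right)n$, every independent set of $G$ contained in some $C_i$, and $\mathcal{C}$ computable in $O^*(r)$ time. The algorithm is then: compute $\mathcal{C}$, and for each $i$ run the $O(c^n)$ MIS algorithm on the induced subgraph $G[C_i]$; output the largest independent set found. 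Correctness is immediate: any independent set in $G[C_i]$ is independent in $G$, and the true maximum independent set $I^*$ lies in some $C_i$, so it is found when that container is processed.

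For the running time, the dominant term is $\sum_{i=1}^r (\text{time on } G[C_i])$. Since $|C_i|\le\left(\tfrac12+\varepsilon\right)n$, each black-box call costs $O\!\left(c^{(1/2+\varepsilon)n}\right)$, and there are at most $r\le r(d)^n$ of them, plus the $O^*(r(d)^n)$ time to generate the containers. So the total is
\[
O^*\!\left( r(d)^n \cdot c^{\left(\frac12+\varepsilon\right)n} \right)
= O^*\!\left( \left( r(d)\cdot c^{\frac12+\varepsilon} \right)^{n} \right).
\]
Now I would use $\lim_{d\to\infty} r(d)=1$: for any desired slack, once $d$ is large enough $r(d)\le c^{\varepsilon}$ (say), so the base is at most $c^{\frac12+2\varepsilon}=\sqrt{c}^{\,1+4\varepsilon}$. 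Letting $\varepsilon\to 0$ as $d\to\infty$ — i.e. choosing $\varepsilon=\varepsilon(d)$ tending to $0$ slowly enough that $d$ still exceeds the threshold $2/\varepsilon(d)^2$ required by Theorem~\ref{thm:graphcontsnice} and that $r(d)\le c^{\varepsilon(d)}$ still holds — folds everything into a single $o_d(1)$ term in the exponent, giving the claimed bound $O\!\left(\sqrt{c}^{\,(1+o_d(1))n}\right)$. The polynomial $O^*$ overhead is also absorbed into the $o_d(1)$.

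The only mildly delicate point is the bookkeeping of the two competing error sources — the container-size excess $\varepsilon$ and the container-count base $r(d)$ — and making sure a single $o_d(1)$ can dominate both; this is routine once one notes that $H(1/(\varepsilon d))\to 0$ as $d\to\infty$ for any fixed $\varepsilon$, and then diagonalizes the choice of $\varepsilon=\varepsilon(d)$. There is no real obstacle here: the work is entirely in Theorem~\ref{thm:graphconts}/\ref{thm:graphcontsnice}, which is already proved, and this theorem is just the straightforward algorithmic corollary. (One should perhaps remark that $G[C_i]$ need not be regular, but that is fine since the black-box MIS algorithm is assumed to work on arbitrary graphs.)
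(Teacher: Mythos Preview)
Your proposal is correct and follows essentially the same approach as the paper: apply Theorem~\ref{thm:graphcontsnice}, run the black-box MIS algorithm on each container, and absorb both the container-size excess and the container-count base into a single $o_d(1)$ term. Your treatment of the diagonalization of $\varepsilon$ against $d$ is in fact a bit more careful than the paper's, and your parenthetical remark that $G[C_i]$ need not be regular (but the black-box handles arbitrary graphs) is a useful clarification the paper omits.
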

\begin{proof}
    Let~$\varepsilon'>0$.
    We apply Theorem~\ref{thm:graphcontsnice} to~$G$ with~$\varepsilon = \frac{1}{4}\varepsilon'$.
    We enumerate over all containers~$C_i$ for~$1\leq i\leq r$, and for each we use the Maximum Independent Set algorithm on~$G[C_i]$.
    We return the independent set of the largest size we found during the enumeration.
    We are guaranteed to find the maximum independent set as it is fully contained in one of the containers.
    The total running time is~$r\cdot c^{\left(1/2+\varepsilon'/4\right)n}\leq \left(r(d)c^{1/2+\varepsilon'/4}\right)^n$.
    We pick~$d_0$ such that for every~$d\geq d_0$ we have~$r(d)\leq 2^{\varepsilon'/4}$.
    Thus, for all~$d\geq d_0$ the running time is~$\sqrt{c}^{\left(1+\varepsilon'\right)n}$.
\end{proof}

If we use Theorem~\ref{thm:graphcontsnicealmost} instead of Theorem~\ref{thm:graphcontsnice} in the proof, we can also conclude the following.
\begin{theorem}\label{thm:MISalmostreglarge}
    For any~$C>1$ there exists~$\varepsilon=\varepsilon_C>0$ such the the following holds.
    Given an algorithm that solves MIS in~$O(c^n)$ time, we can solve MIS in graphs with average degree~$d$ and maximum degree bounded by~$Cd$ in~$O\left(c^{\left(1-\varepsilon+o_d(1)\right)n}\right)$ time.
\end{theorem}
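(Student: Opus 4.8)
The plan is to mimic the proof of Theorem~\ref{thm:MISreglarge} almost verbatim, replacing the regular container lemma (Theorem~\ref{thm:graphcontsnice}) by its almost-regular analogue (Theorem~\ref{thm:graphcontsnicealmost}); no new idea beyond that lemma is required, as the excerpt itself hints. Concretely, I would first instantiate Theorem~\ref{thm:graphcontsnicealmost} with the given constant $C$ and an arbitrary value of its sparsity parameter (say $\varepsilon = 1$, since the edge-sparsity of the containers plays no role for Maximum Independent Set). This produces a threshold $d_0\in\mathbb{N}$, a reduction constant $\varepsilon' = \varepsilon'_C > 0$, and a function $r:\mathbb{N}\to[1,2]$ with $r(d)\to 1$; I then declare the constant promised by the theorem to be $\varepsilon := \varepsilon'$, which indeed depends only on $C$.

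The algorithm itself: given a graph $G$ on $n$ vertices with average degree $d$ and maximum degree at most $Cd$, if $d < d_0$ we simply run the given black-box MIS algorithm, taking $O(c^n)$ time; since this concerns only finitely many values of $d$ it is harmless, being absorbed by allowing the $o_d(1)$ term to be as large as $\varepsilon'$ there. If $d \ge d_0$, we apply Theorem~\ref{thm:graphcontsnicealmost} to obtain the collection $\mathcal{C}=\{C_1,\dots,C_r\}$ in $O^*(r(d)^n)$ time, run the black-box algorithm on each induced subgraph $G[C_i]$, and return the largest independent set found. Correctness follows exactly as in Theorem~\ref{thm:MISreglarge}: a maximum independent set $I^*$ of $G$ lies inside some $C_i$ and is then an independent set of $G[C_i]$, while every independent set of every $G[C_j]$ is an independent set of $G$; hence the best of the $r$ answers has size exactly $|I^*|$.

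For the running time, each of the $r \le r(d)^n$ calls runs on a vertex set of size $|C_i| < (1-\varepsilon')n$ and therefore costs $O\!\left(c^{(1-\varepsilon')n}\right)$ (using $c\ge 1$), which dominates the $O^*(r(d)^n)$ spent building $\mathcal{C}$; the total is $O^*\!\left(r(d)^n c^{(1-\varepsilon')n}\right) = O\!\left(c^{(1-\varepsilon'+\log_c r(d))n}\right)$, and $\log_c r(d) = o_d(1)$ since $r(d)\to 1$, which is precisely the claimed bound with $\varepsilon = \varepsilon'$. I do not anticipate a genuine obstacle here: all of the real work sits inside Theorem~\ref{thm:graphcontsnicealmost} (itself quoted as a special case of the hypergraph container method), and the only thing requiring a little care is the bookkeeping that folds the $d < d_0$ regime, the $r(d)^n$ overhead, and the polynomial factors hidden in $O(c^n)$ all into the single $o_d(1)$ error term — the same bookkeeping already performed in the proof of Theorem~\ref{thm:MISreglarge}.
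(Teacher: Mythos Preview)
Your proposal is correct and is essentially identical to the paper's own argument: the paper does not give a separate proof of Theorem~\ref{thm:MISalmostreglarge} but simply remarks that it follows by replacing Theorem~\ref{thm:graphcontsnice} with Theorem~\ref{thm:graphcontsnicealmost} in the proof of Theorem~\ref{thm:MISreglarge}, which is exactly what you do. The extra bookkeeping you spell out (the $d<d_0$ case, folding $r(d)^n$ into the $o_d(1)$ term) is routine and matches the spirit of the paper's one-line justification.
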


We stated these two theorems in the simplest way possible, but we remark that they in fact also trivially hold for Weighted Maximum Independent Set and most other variants of the problem.

This is a simple example of how containers lead to better algorithms for (almost) regular graphs of high-degree.
On the other hand, for many problems we have faster solutions in the case of bounded-degree graphs.
Thus, our general "recipe" for faster algorithms in (almost) regular graphs is the following: If the degree is large, we use containers, otherwise, we use better algorithms for bounded-degree graphs.5

\section{Graph Coloring}\label{sec:col}
\subsection{Inclusion-Exclusion Based Graph Coloring Algorithm}\label{subsec:oldcoloringalg}
We first present a summary of Bj\"{o}rklund, Husfeldt and Koivisto's algorithm from~\cite{bjorklund2009set}.
We present a concise and partial variant of their work that applies specifically to the coloring problem, adapted from the overview in~\cite{DBLP:conf/icalp/Zamir21}. 

We begin by making the following very simple observation, yielding an equivalent phrasing of the coloring problem.
\begin{observation}
A graph $G$ is $k$-colorable if and only if its vertex set $V(G)$ can be \emph{covered} by $k$ independent sets.
\end{observation}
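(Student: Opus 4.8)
The plan is to verify both implications directly from the definitions. I will treat a $k$-coloring of $G$ as a function $c:V(G)\to [k]$ that is \emph{proper}, meaning adjacent vertices receive distinct values, and I will treat a cover of $V(G)$ by $k$ independent sets as a family $S_1,\dots,S_k\subseteq V(G)$ of independent sets of $G$ with $\bigcup_{i=1}^{k} S_i = V(G)$. The claim is then that such a $c$ exists if and only if such a family exists.

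For the forward direction, suppose $G$ is $k$-colorable via a proper coloring $c$. I would set $S_i := c^{-1}(i)$ for each $i\in[k]$, the $i$-th color class. Every vertex lies in the class of its own color, so these sets cover $V(G)$; and if two adjacent vertices both lay in some $S_i$, they would share the color $i$, contradicting that $c$ is proper. Hence each $S_i$ is independent and $V(G)$ is covered by $k$ independent sets.

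For the converse, suppose $V(G)=S_1\cup\dots\cup S_k$ with each $S_i$ independent in $G$. Since the cover need not be a partition, I would define $c(v)$ to be the least index $i$ with $v\in S_i$; this is well-defined precisely because the $S_i$ cover $V(G)$. If $u$ and $v$ are adjacent with $c(u)=c(v)=i$, then $u,v\in S_i$, contradicting the independence of $S_i$. Thus $c$ is a proper $k$-coloring, completing the equivalence.

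The only point requiring even minimal care is that a cover is not necessarily a partition, so in the converse one must break ties (for instance by taking the least index, though any consistent choice works) to turn the family into a genuine coloring function. Beyond that there is no obstacle: the statement is an immediate unpacking of the two definitions.
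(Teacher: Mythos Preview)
Your proof is correct and is exactly the standard unpacking of the definitions; the paper in fact treats this observation as self-evident and offers no proof at all, so your argument is the natural one.
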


We need to decide whether $V(G)$ can be covered by $k$ independent sets. In order to do so, we compute the number of independent sets in every induced sub-graph and then use a simple inclusion-exclusion argument in order to compute the number of (ordered) covers of $V(G)$ by $k$ independent sets. We are interested in whether this number is positive. Complete details follow.

\begin{definition}
For a subset $V'\subseteq V(G)$ of vertices, let $i(G[V'])$ denote the number of independent sets in the induced sub-graph $G[V']$.
\end{definition}

We next show that using dynamic programming, we can quickly compute these values.
\begin{lemma}\label{computei}
We can compute the values of $i(G[V'])$ for all $V'\subseteq V$ in $O^*(2^n)$ time.
\end{lemma}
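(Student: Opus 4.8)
The plan is to set up a subset dynamic program over the $2^n$ subsets of $V$, keyed exactly by $V'$, using the standard ``branch on one vertex'' recurrence for counting independent sets. Fix an arbitrary order $V=\{v_1,\ldots,v_n\}$. For a nonempty $V'\subseteq V$ let $v$ be the lowest-indexed vertex of $V'$. Every independent set of $G[V']$ either avoids $v$ — and is then exactly an independent set of $G[V'\setminus\{v\}]$ — or contains $v$ — and is then $\{v\}$ together with an independent set of $G[V'\setminus(\{v\}\cup N(v))]$, since no neighbor of $v$ may be included. These two cases are disjoint and exhaustive, so
\[
i(G[V']) \;=\; i\big(G[V'\setminus\{v\}]\big) \;+\; i\big(G[V'\setminus(\{v\}\cup N(v))]\big),
\]
with base case $i(G[\emptyset])=1$.

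Next I would argue the recurrence is well-founded and can be evaluated bottom-up. Both $V'\setminus\{v\}$ and $V'\setminus(\{v\}\cup N(v))$ are \emph{strict} subsets of $V'$ (each omits $v$), so if we process subsets in order of increasing cardinality — or simply in increasing order when each subset is identified with its characteristic bitmask, since deleting elements only decreases this integer — then the two values on the right-hand side are already available in the table when we compute the entry for $V'$. Thus a single pass over all $2^n$ subsets fills the entire table.

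Finally I would bound the running time. There are $2^n$ table entries; computing each one requires identifying $v$, forming the set $V'\setminus(\{v\}\cup N(v))$, and performing two table lookups and one addition. Using bitmask representations of vertex sets and the precomputed neighborhoods $N(v)$, each of these operations is $\mathrm{poly}(n)$ time (in fact $O(n/w)$ word operations), and the numbers $i(G[V'])\le 2^n$ fit in $\mathrm{poly}(n)$ machine words, so arithmetic is also polynomial. Hence the total time is $2^n\cdot\mathrm{poly}(n)=O^*(2^n)$, and the space is likewise $O^*(2^n)$, as claimed.

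There is no real obstacle here: the content is a routine subset-DP, and the only points that need care are (i) verifying the two branches of the recurrence are disjoint and cover all independent sets, and (ii) checking that the evaluation order respects the dependency structure so that the DP bottoms out at $\emptyset$. Both are immediate from the observations above.
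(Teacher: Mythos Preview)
Your proof is correct and follows essentially the same approach as the paper: branch on a vertex $v\in V'$ to obtain the recurrence $i(G[V'])=i(G[V'\setminus\{v\}])+i(G[V'\setminus N[v]])$, then fill the $2^n$-entry table in order of increasing subset size. The only differences are cosmetic (you fix $v$ as the lowest-indexed vertex and add implementation remarks about bitmasks and arithmetic, while the paper leaves $v$ arbitrary and omits these details).
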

\begin{proof}
Let $v\in V'$ be an arbitrary vertex contained in $V'$.
The number of independent sets in $V'$ that do not contain $v$ is exactly $i(G[V'\setminus\{v\}])$.
On the other hand, the number of independent sets in $V'$ that do contain $v$ is exactly $i(G[V'\setminus N[v]])$.
Thus, we have
\[
i(G[V']) = i(G[V'\setminus\{v\}]) + i(G[V'\setminus N[v]]).
\]
We note that both $V'\setminus\{v\}$ and $V'\setminus N[v]$ are of size strictly less than $|V'|$. Thus, we can compute all $2^n$ values of $i(G[\cdot])$ using dynamic programming processing the sets in non-decreasing order of size.
\end{proof}

Consider the expression
\[
F(G) = \sum_{V'\subseteq V(G)} (-1)^{|V(G)|-|V'|}\cdot  i(G[V'])^k . 
\]
Using the values of $i(G[\cdot])$ computed in Lemma~\ref{computei}, we can easily compute the value of $F(G)$ by directly evaluating the above expression in $O^*(2^n)$ time.

\begin{lemma}\label{sumzero}
Let $S_1\subseteq S_2$ be sets. It holds that

\begin{equation*}
  \sum_{S_1\subseteq S \subseteq S_2} (-1)^{|S|}  =
    \begin{cases}
      0 & \text{if $S_1\neq S_2$}\\
      (-1)^{|S_2|} & \text{if $S_1=S_2$}
    \end{cases}  .     
\end{equation*}
\end{lemma}
\begin{proof}
If $S_1\subsetneq S_2$ then there exists a vertex $v\in S_2\setminus S_1$. We can pair each set $S_1\subseteq S \subseteq S_2$ with $S \triangle \{v\}$, its symmetric difference with $\{v\}$.
Clearly, in each pair of sets one is of odd size and one is of even size, and thus their signs cancel each other. Therefore, the sum is zero.
In the second case, the claim is straightforward.
\end{proof}

\begin{lemma}\label{sumzerouse}
$F(G)$ equals the number of $k$-tuples $(I_0,\ldots,I_{k-1})$ of independent sets in $G$ such that $V(G) = I_0\cup\ldots\cup I_{k-1}$.
\end{lemma}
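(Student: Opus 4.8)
The plan is to expand the $k$-th power $i(G[V'])^k$ combinatorially, interchange the order of summation, and then invoke Lemma~\ref{sumzero} to collapse the inclusion--exclusion sum.

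First I would note that $i(G[V'])^k$ is exactly the number of ordered $k$-tuples $(I_0,\ldots,I_{k-1})$ of independent sets of $G$ with $I_j\subseteq V'$ for every $j$: an independent set of the induced subgraph $G[V']$ is precisely an independent set of $G$ contained in $V'$, and the $k$-th power counts $k$-tuples of such sets. Substituting this into the definition of $F(G)$ yields
\[
F(G) = \sum_{V'\subseteq V(G)} (-1)^{|V(G)|-|V'|} \sum_{\substack{(I_0,\ldots,I_{k-1})\\ I_j\in\mathcal{I}(G),\ I_0\cup\cdots\cup I_{k-1}\subseteq V'}} 1.
\]

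Next I would swap the two sums so that the $k$-tuples $(I_0,\ldots,I_{k-1})$ of independent sets of $G$ are summed over first. For a fixed such tuple, set $U:=I_0\cup\cdots\cup I_{k-1}$; the tuple contributes to the inner sum exactly for those $V'$ with $U\subseteq V'\subseteq V(G)$, so its total contribution to $F(G)$ is
\[
\sum_{U\subseteq V'\subseteq V(G)} (-1)^{|V(G)|-|V'|} = (-1)^{|V(G)|}\sum_{U\subseteq V'\subseteq V(G)} (-1)^{|V'|}.
\]
Applying Lemma~\ref{sumzero} with $S_1=U$ and $S_2=V(G)$, this equals $0$ when $U\neq V(G)$ and $(-1)^{|V(G)|}\cdot(-1)^{|V(G)|}=1$ when $U=V(G)$. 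Summing over all tuples, $F(G)$ counts precisely the $k$-tuples of independent sets of $G$ whose union is $V(G)$, which is the claim.

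There is no serious obstacle here; the only points requiring care are the sign bookkeeping, using $(-1)^{|V(G)|-|V'|}=(-1)^{|V(G)|}(-1)^{|V'|}$, and recognizing that $i(G[V'])^k$ counts tuples \emph{contained in} $V'$ rather than tuples whose union \emph{equals} $V'$ --- the inclusion--exclusion is exactly the mechanism that converts the former condition into the latter.
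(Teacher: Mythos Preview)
Your proof is correct and follows essentially the same approach as the paper's: both expand $i(G[V'])^k$ as a count of $k$-tuples of independent sets contained in $V'$, swap the order of summation, and apply Lemma~\ref{sumzero} to the inner sum over $V'$ with $I_0\cup\cdots\cup I_{k-1}\subseteq V'\subseteq V(G)$ to conclude that only covering tuples survive with contribution~$1$.
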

\begin{proof}
As $i(G[V'])$ counts the number of independent sets in $G[V']$, raising it to the $k$-th power (namely, $i(G[V'])^k$) counts the number of $k$-tuples of independent sets in $G[V']$. 
%We view $F(G)$ as if it counts $k$-tuples of independent sets (in $G$) with some multiplicities.

Let $(I_0,\ldots,I_{k-1})$ be a $k$-tuple of independent sets in $G$.
It appears exactly in terms of the sum corresponding to sets $V'$ such that $I_0\cup\ldots\cup I_{k-1} \subseteq V' \subseteq V(G)$. Each time this $k$-tuple is counted, it is counted with a sign determined by the parity of $V'$. 
By Lemma~\ref{sumzero}, the sum of the signs corresponding to sets $I_0\cup\ldots\cup I_{k-1} \subseteq V' \subseteq V(G)$ is zero if $I_0\cup\ldots\cup I_{k-1} \neq V(G)$ and one if $I_0\cup\ldots\cup I_{k-1} = V(G)$.
\end{proof}

We conclude with
\begin{corollary}
$F(G)$ can be computed in time $O^*(2^n)$, and $G$ is $k$-colorable if and only if $F(G)>0$.
\end{corollary}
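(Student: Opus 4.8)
The plan is to simply assemble the three ingredients already established in this subsection. For the running time, Lemma~\ref{computei} supplies the entire table of values $i(G[V'])$ for all $V'\subseteq V(G)$ in $O^*(2^n)$ time. The defining expression
\[
F(G) = \sum_{V'\subseteq V(G)} (-1)^{|V(G)|-|V'|}\cdot i(G[V'])^k
\]
then has exactly $2^n$ summands, and each summand is obtained from the precomputed table by a single lookup, one $k$-th power of an integer bounded by $2^n$, and a sign flip — all of which cost $\mathrm{poly}(n,k)$. Hence $F(G)$ is computed in $O^*(2^n)$ time overall.

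For correctness, I would invoke Lemma~\ref{sumzerouse}, which identifies $F(G)$ with the number of ordered $k$-tuples $(I_0,\ldots,I_{k-1})$ of independent sets in $G$ satisfying $I_0\cup\cdots\cup I_{k-1}=V(G)$. The key observation to make explicit is that, although $F(G)$ is defined as a signed sum and could a priori be negative, Lemma~\ref{sumzerouse} shows it equals a count of combinatorial objects and is therefore a nonnegative integer; consequently ``$F(G)>0$'' is equivalent to ``$F(G)\geq 1$'', i.e.\ to the existence of at least one such covering $k$-tuple. Finally, combining this with the Observation that $G$ is $k$-colorable if and only if $V(G)$ can be covered by $k$ independent sets — which is precisely the existence of a covering $k$-tuple (ordering the sets is irrelevant for mere existence) — chains the equivalences into $G$ is $k$-colorable $\iff F(G)>0$.

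There is no genuine obstacle in this statement; it is a bookkeeping corollary that packages the preceding lemmas. The only point that deserves a sentence rather than being taken for granted is the nonnegativity of $F(G)$, since that is what makes the strict threshold ``$F(G)>0$'' the correct test rather than, say, ``$F(G)\neq 0$''.
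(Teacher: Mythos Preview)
Your proposal is correct and matches the paper's approach exactly: the paper treats this corollary as an immediate consequence of Lemma~\ref{computei} (for the running time, together with the remark that evaluating the $2^n$ terms of $F(G)$ is straightforward once the table is built), Lemma~\ref{sumzerouse} (for the counting interpretation of $F(G)$), and the opening Observation that $k$-colorability is equivalent to covering $V(G)$ by $k$ independent sets. Your added remark about the nonnegativity of $F(G)$ is a nice clarification that the paper leaves implicit.
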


In a follow-up work~\cite{DBLP:conf/icalp/Zamir21}, the following was shown.
\begin{theorem}
    For every~$\alpha,\Delta>0$ there exists~$\varepsilon>0$ such that the chromatic number of graphs with at least~$\alpha n$ vertices of degree at most~$\Delta$ can be computed in~$O\left(\left(2-\varepsilon\right)^{n}\right)$ time.
\end{theorem}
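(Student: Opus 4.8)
The plan is to beat the $O^*(2^n)$ running time of the Bj\"orklund--Husfeldt--Koivisto algorithm recalled above by exploiting the $\ge\alpha n$ low-degree vertices sitting inside its inclusion--exclusion sum. Since $\chi(G)$ is the least $k$ for which $F_k(G):=\sum_{V'\subseteq V(G)}(-1)^{n-|V'|}i(G[V'])^k>0$, it suffices to decide $k$-colorability of $G$ in $O((2-\varepsilon)^n)$ time for each relevant $k$. Let $L$ be the set of vertices of degree at most $\Delta$, so $|L|\ge\alpha n$; since $G[L]$ has maximum degree $\le\Delta$, greedily extract an independent set $J\subseteq L$ with $|J|\ge\alpha n/(\Delta+1)=:\beta n$, and set $U:=V(G)\setminus J$. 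If $k\ge\Delta+1$ then every $u\in J$ has fewer than $k$ neighbors, all of them in $U$ (as $J$ is independent), so any proper $k$-coloring of $G[U]$ extends to $J$; hence $G$ is $k$-colorable iff $G[U]$ is, and $|U|=(1-\beta)n$. Iterating this peeling---each step removing at least a $1/(\Delta+1)$-fraction of the surviving low-degree vertices---reduces the case $k\ge\Delta+1$ to an exponentially smaller instance after a routine accounting of the per-level costs, so it remains to treat $k\le\Delta$, a bounded constant.

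For such $k$, I would rewrite $F_k(G)$ using $J$. Since $J$ is independent, $i(G[W\cup T])=\sum_{A\in\mathcal{I}(G[W])}2^{|T\setminus N(A)|}$ for every $W\subseteq U$ and $T\subseteq J$; expanding the $k$-th power and summing the resulting geometric-type expression over all $T\subseteq J$ gives
\[
F_k(G)=\sum_{W\subseteq U}(-1)^{|U|-|W|}\sum_{\vec A\in\mathcal{I}(G[W])^k}\ \prod_{u\in J}\bigl(2^{\,c_{\vec A}(u)}-1\bigr),\qquad c_{\vec A}(u):=\#\{j\in[k]:A_j\cap N(u)=\emptyset\}.
\]
The factor $\prod_{u\in J}(2^{c_{\vec A}(u)}-1)$ depends on $\vec A$ only through the intersections of the $A_j$ with $N_J:=\bigcup_{u\in J}N(u)$, but summing naively over those intersections costs $2^{|N_J|}\ge 2^{|J|}$, exactly cancelling the $2^{|U|}=2^{n-|J|}$ that we gained. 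So a single ``separate out $J$'' move does not suffice, and I would split on how the neighborhoods $\{N(u):u\in J\}$ are arranged.

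\emph{Clustered case:} some $J'\subseteq J$ with $|J'|\ge\alpha'n$ has $N(J')$ contained in a set $R$ with $|R|\le\rho n$, where $\rho<\alpha'/k$. Applying the identity above with $J'$ in place of $J$ (so $U':=V(G)\setminus J'$), the factor now depends on $\vec A$ only through the profile $(A_1\cap R,\dots,A_k\cap R)$, of which there are at most $2^{k|R|}\le 2^{k\rho n}$. Grouping the $\vec A$ by this profile, the number of $\vec A$ realizing a fixed profile factors as a product of counts $i(G[\,\cdot\,])$ over induced subgraphs of $G[U'\setminus R]$, which I precompute over its $\le 2^{(1-\alpha')n}$ subsets by dynamic programming (exactly as in the computation of $i(G[\cdot])$ above). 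Summing over the $\le 2^{|U'|}\le 2^{(1-\alpha')n}$ choices of $W$ and the $\le 2^{k\rho n}$ profiles, the total is $2^{(1-\alpha'+k\rho)n}\cdot\mathrm{poly}(n)=O((2-\varepsilon)^n)$. \emph{Spread case:} no such $J'$ exists, which forces the $N(u)$, $u\in J$, to have small pairwise overlaps; then for a uniform random $W\subseteq U$ the events $\{N(u)\cap W=\emptyset\}$ are close to independent, so (up to the overlap corrections) $\mathbb{E}_W\bigl[2^{-\#\{u\in J:N(u)\cap W=\emptyset\}}\bigr]\le(1-2^{-\Delta-1})^{|J|}$. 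Equivalently, the number of $S\subseteq V(G)$ in which no vertex of $J$ is isolated in $G[S]$---the ``$J$-cores''---is at most $2^n(1-2^{-\Delta-1})^{|J|}=(2-\varepsilon)^n$. Since $i(G[S])=2^{|Z(S)|}\,i(G[S\setminus Z(S)])$, where $Z(S)$ collects the vertices of $J$ isolated in $G[S]$, grouping the terms of $F_k(G)$ by their $J$-core collapses the whole sum to one indexed only by $J$-cores---the contribution of a core $C$ being $(-1)^{n-|C|}i(G[C])^k(1-2^k)^{|\{u\in J:N(u)\cap C=\emptyset\}|}$---and this is evaluated in $O^*((2-\varepsilon)^n)$ time by enumerating the $J$-cores and running the dynamic program restricted to them.

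The step I expect to be the main obstacle is making the dichotomy quantitatively consistent: one must choose the clustered/spread threshold (the parameters $\alpha',\rho$ and the admissible overlap) so that a single choice simultaneously keeps $k\rho<\alpha'$ in the clustered case and makes the near-independence estimate in the spread case strictly beat $2^n$, and one must carefully bound the overlap corrections in the spread count---for the decreasing events $\{N(u)\cap W=\emptyset\}$ a Shearer/Janson- or FKG-type argument seems the natural tool. Everything else is either the standard BHK machinery or routine bookkeeping of exponents, with the final $\varepsilon$ depending only on $\alpha,\Delta,k$.
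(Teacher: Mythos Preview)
The paper does not prove this theorem: it is quoted from~\cite{DBLP:conf/icalp/Zamir21} and used only through Corollary~\ref{cor:boundeddeg} as a black box, so there is no in-paper argument to compare your proposal against. I can only assess the proposal on its own terms.

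Your reduction to $k\le\Delta$, the identity for $F_k(G)$ in terms of $W\subseteq U$ and $\vec A\in\mathcal I(G[W])^k$, the $J$-core regrouping, and its contribution formula $(-1)^{n-|C|}i(G[C])^k(1-2^k)^{|\{u\in J:\,N(u)\cap C=\emptyset\}|}$ are all correct, and the clustered branch goes through with the arithmetic you give. The spread branch, however, has a real gap. Take $J=\{u_1,\dots,u_m\}$ with $N(u_i)=\{v_0,v_i\}$ for distinct $v_0,v_1,\dots,v_m$, and arrange the rest of the graph so that $v_0,\dots,v_m$ all have degree exceeding~$\Delta$ (so the low-degree set is exactly $J$ and your choice of $J$ is forced). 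Then every pairwise overlap $N(u_i)\cap N(u_j)=\{v_0\}$ has size~$1$, and $|N(J')|=|J'|+1$ for every $J'\subseteq J$, so your clustered test fails for every admissible $(\alpha',\rho)$ with $k\rho<\alpha'$. Yet the number of $J$-cores is at least $2^{n-1}$: whenever $v_0\in C$, \emph{every} $u_i$ already has a neighbour in $C$, so any subset of $J$ may be adjoined. Hence the $J$-core count is $\Theta(2^n)$ and enumerating cores gives no saving. The correlation tools you invoke do not rescue this: each factor $2^{-[N(u)\cap W=\emptyset]}$ is an increasing function of $W$, so FKG yields $\mathbb E\bigl[2^{-X}\bigr]\ge\prod_u\mathbb E\bigl[2^{-[N(u)\cap W=\emptyset]}\bigr]$, the wrong direction; and Shearer/Janson-type bounds need the dependency graph on $J$ (edges when $N(u)\cap N(u')\ne\emptyset$) to be sparse, whereas here it is complete.

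The obstruction is a single high-degree ``hub'' in $N(J)$, and your clustered/spread dichotomy does not isolate it. What is missing is a structural step that either branches on such hubs or passes to a linear-size $J_0\subseteq J$ with \emph{pairwise disjoint} neighbourhoods (not merely small pairwise overlaps), so that the independence assumption behind your $J$-core estimate becomes exact. Without that, the spread-case bound you need simply does not hold.
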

\begin{corollary}\label{cor:boundeddeg}
    For every~$\Delta$ there exists~$\varepsilon>0$ such that the chromatic number of graphs with degrees bounded by~$\Delta$ can be computed in~$O\left(\left(2-\varepsilon\right)^{n}\right)$ time.
\end{corollary}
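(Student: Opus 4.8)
The plan is to observe that Corollary~\ref{cor:boundeddeg} is simply the special case $\alpha = 1$ of the theorem stated immediately above it (the result of~\cite{DBLP:conf/icalp/Zamir21}). First I would note that if $G$ is a graph on $n$ vertices in which every vertex has degree at most $\Delta$, then trivially $G$ has at least $1\cdot n$ vertices of degree at most $\Delta$. Hence $G$ satisfies the hypothesis of the preceding theorem with the parameter choice $\alpha = 1$ and the given $\Delta$.

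Next I would invoke that theorem with $\alpha = 1$ and this $\Delta$. It supplies a constant $\varepsilon = \varepsilon(1,\Delta) > 0$, depending only on $\Delta$, such that the chromatic number of any graph meeting the hypothesis — in particular of $G$ — can be computed in $O\left(\left(2-\varepsilon\right)^n\right)$ time. This is exactly the bound claimed in the corollary, so no further work is required.

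There is essentially no obstacle here: the corollary is a verbatim specialization of an already-cited result and introduces no new ideas. The only point worth stating explicitly is that the resulting $\varepsilon$ depends on $\Delta$ alone and not on $G$, which is immediate since the theorem's $\varepsilon$ is a function of the pair $(\alpha,\Delta) = (1,\Delta)$ only.
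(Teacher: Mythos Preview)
Your proposal is correct and matches the paper's approach: the corollary is stated without proof in the paper precisely because it is the immediate specialization $\alpha=1$ of the preceding theorem from~\cite{DBLP:conf/icalp/Zamir21}, exactly as you describe.
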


\subsection{Using Containers}\label{subsec:colwithconts}

We begin with a high level idea, to be stated precisely afterwards.
Let~$G$ be a~$d$-regular graph with~$n$ vertices, and~$k$ a fixed constant. Let~$\varepsilon>0$ be a small constant to be chosen later.
The graph container Lemma of Section~\ref{subsec:containerslemma} allows us to compute~$2^{o_d(n)}$ subsets~$C_i \subseteq V(G)$ of size~$|C_i|\leq \left(\frac{1}{2}+\varepsilon\right)n$ each, such that every independent set in~$G$ is \emph{contained} in at least one of them.
Let~$\Delta$ be a large constant to be determined later.
If~$d\leq \Delta$, then Corollary~\ref{cor:boundeddeg} gives us an improved algorithm for~$k$-coloring~$G$.
Otherwise, the number of containers is small enough for us to enumerate over~$k$-tuples of containers~$(C_1,\ldots,C_k)$ until we find one such that the~$i$-th container contains the~$i$-th color class in a~$k$-coloring of~$G$.
Thus, we may assume that we are given containers for each color class in a~$k$-coloring of~$G$. 

Precisely, we consider the following problem.
\begin{problem}[$k$-coloring given $\varepsilon$-containers]
    Given a graph~$G$ and sets~$C_1,\ldots,C_k\subseteq V(G)$ of size~$|C_i|\leq \left(\frac{1}{2}+\varepsilon\right)n$ each, decide if there is a~$k$-coloring of~$G$ in which the~$i$-th color can only be used for vertices in~$C_i$.
\end{problem}

\begin{lemma}
    If there exist~$\varepsilon,\varepsilon'>0$ such that we can solve $k$-coloring given $\varepsilon$-containers in~$O\left(\left(2-\varepsilon'\right)^n\right)$ time, then there exists~$\varepsilon''>0$ such that we can also solve~$k$-coloring for regular graphs\footnote{Without being given the containers.} in $O\left(\left(2-\varepsilon''\right)^n\right)$ time.
\end{lemma}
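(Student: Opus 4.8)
The plan is to show that an algorithm for the ``$k$-coloring given $\varepsilon$-containers'' problem can be bootstrapped into a general $k$-coloring algorithm for regular graphs by branching over which container each color class lies in. First I would handle the easy dichotomy: fix a large constant $\Delta$ to be chosen at the end, and if the input $d$-regular graph $G$ has $d\le\Delta$, simply invoke Corollary~\ref{cor:boundeddeg} to get a running time of $O\left(\left(2-\varepsilon_\Delta\right)^n\right)$ for some $\varepsilon_\Delta>0$ depending only on $\Delta$. So from now on we may assume $d>\Delta$ is large.

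In the large-degree case, apply Theorem~\ref{thm:graphcontsnice} with the given constant $\varepsilon$ to obtain a collection $\mathcal{C}=\{C_1,\dots,C_r\}$ with $r\le r(d)^n$, each $|C_i|<\left(\tfrac12+\varepsilon\right)n$, and every independent set of $G$ contained in some $C_j$; this collection is computable in $O^*(r(d)^n)$ time. In any proper $k$-coloring of $G$, each of the $k$ color classes is an independent set, hence is contained in some container; so there is a $k$-tuple $(C_{j_1},\dots,C_{j_k})$ of containers such that the $i$-th color class lies in $C_{j_i}$. I would therefore enumerate all $r^k$ ordered $k$-tuples of containers, and for each tuple run the assumed ``$k$-coloring given $\varepsilon$-containers'' algorithm on $G$ with those $k$ sets. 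If $G$ is $k$-colorable, some tuple witnesses it and the subroutine accepts; conversely any accepting tuple yields a genuine $k$-coloring of $G$. The total running time is $r^k\cdot\left(2-\varepsilon'\right)^n\le r(d)^{kn}\cdot\left(2-\varepsilon'\right)^n=\left(r(d)^k\left(2-\varepsilon'\right)\right)^n$.

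Now I would close the loop on the constants. Since $\lim_{d\to\infty}r(d)=1$ (the first bullet of Theorem~\ref{thm:graphcontsnice}), I can choose $\Delta=d_0$ large enough that $r(d)^k\le\left(2-\tfrac{\varepsilon'}{2}\right)\big/\left(2-\varepsilon'\right)$ for all $d\ge d_0$; equivalently $r(d)^k\left(2-\varepsilon'\right)\le 2-\tfrac{\varepsilon'}{2}$. Then in the large-degree branch the running time is $O\left(\left(2-\tfrac{\varepsilon'}{2}\right)^n\right)$, and setting $\varepsilon''=\min\left(\varepsilon_\Delta,\tfrac{\varepsilon'}{2}\right)>0$ gives the claimed $O\left(\left(2-\varepsilon''\right)^n\right)$ bound in both branches. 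Note $k$ and $C=1$ (regularity) are fixed constants, so $r^k$ and $\Delta$ are constants, and the overhead of computing $\mathcal C$ and iterating over tuples is absorbed.

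The main obstacle is not really in this lemma, which is a clean branching-plus-black-box argument; the only subtlety is making sure the dependence of $\varepsilon$ on the container size is threaded through correctly — the subroutine is promised to work for containers of size $\left(\tfrac12+\varepsilon\right)n$, and Theorem~\ref{thm:graphcontsnice} produces exactly such containers for that same $\varepsilon$, so the quantifiers line up. (The genuinely hard work is deferred to actually solving the ``$k$-coloring given $\varepsilon$-containers'' problem, presumably via the inclusion–exclusion machinery of Section~\ref{subsec:oldcoloringalg} combined with partition containers, which is the substance of the subsequent sections.) One should also remark that the same reduction works verbatim for \emph{almost}-regular graphs by using Theorem~\ref{thm:graphcontsnicealmost} in place of Theorem~\ref{thm:graphcontsnice}, at the cost of a weaker container-size bound $(1-\varepsilon')n$, which is why the eventual strategy will need a version of the subroutine tolerating containers of size $(1-\varepsilon')n$ rather than $\left(\tfrac12+\varepsilon\right)n$.
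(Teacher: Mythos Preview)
Your proposal is correct and follows essentially the same approach as the paper: split on whether $d$ is below or above a threshold $\Delta$, use Corollary~\ref{cor:boundeddeg} in the bounded-degree case, and in the high-degree case enumerate all $k$-tuples of containers from Theorem~\ref{thm:graphcontsnice} and call the assumed subroutine, choosing $\Delta$ large enough that $r(d)^k(2-\varepsilon')\le 2-\tfrac{\varepsilon'}{2}$. Your write-up is in fact more explicit about the constants and the correctness of the branching than the paper's proof.
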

\begin{proof}
By enumerating over all $k$-tuples of containers, we must pass through at least one~$k$-tuple in which each container contains the respective color class, and then the algorithm that is given the containers would succeed.
The total running time of running the algorithm under the enumeration is~$\left(2^{o_d(n)}\right)^k (2-\varepsilon')^n = \left(2^{ko_d(1)}\left(2-\varepsilon'\right)\right)^n$.
In particular, there exists~$\Delta$ such that for all~$d\geq \Delta$ this running time is bounded by $\left(2-\frac{1}{2}\varepsilon'\right)^n$.
If~$d<\Delta$ then we use the algorithm of Corollary~\ref{cor:boundeddeg}.
\end{proof}

We can thus now assume we are given such containers.
Next, we go over the algorithm of Section~\ref{subsec:oldcoloringalg} and adapt it for the case of $k$-coloring given $\varepsilon$-containers.

\begin{lemma}\label{computei_containers}
We can compute the values of $i(G[V'])$ for all $V'\subseteq V$ such that~$V'\subseteq C_i$ for any~$1\leq i \leq k$ in $O^*\left(2^{(1/2+\varepsilon)n}\right)$ time.
\end{lemma}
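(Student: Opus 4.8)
The plan is to run the dynamic program of Lemma~\ref{computei} separately inside each container. First I would observe that for any $V'\subseteq C_i$ we have $G[V'] = \bigl(G[C_i]\bigr)[V']$, so $i(G[V'])$ is exactly the number of independent sets of the induced subgraph $G[C_i]$ restricted to $V'$. Hence it suffices, for each fixed $i\in[k]$, to compute $i\bigl((G[C_i])[V']\bigr)$ for all $V'\subseteq C_i$, and then to take the union of these tables over $i\in[k]$, which covers exactly the subsets $V'$ that lie in some $C_i$.

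Next I would check that the recursion used in the proof of Lemma~\ref{computei} is ``closed'' within the family of subsets of a single container. Picking an arbitrary $v\in V'$, we have $i(G[V']) = i(G[V'\setminus\{v\}]) + i(G[V'\setminus N[v]])$, and both $V'\setminus\{v\}$ and $V'\setminus N[v]$ are again subsets of $C_i$, since deleting vertices from a subset of $C_i$ cannot leave $C_i$. Moreover $N_G[v]\cap C_i = N_{G[C_i]}[v]$, so this recursion can be carried out entirely inside $G[C_i]$ without ever referring to vertices outside the container. Thus the exact same dynamic programming argument --- processing subsets of $C_i$ in non-decreasing order of size --- computes all the desired values for container $i$ in time $O^*\bigl(2^{|C_i|}\bigr)$, storing them in a table keyed by the subset.

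Finally I would sum over the $k$ containers. Since $|C_i|\leq \left(\frac{1}{2}+\varepsilon\right)n$ for every $i$ and $k$ is a fixed constant, the total running time is $k\cdot O^*\bigl(2^{(1/2+\varepsilon)n}\bigr) = O^*\bigl(2^{(1/2+\varepsilon)n}\bigr)$, and the number of pairs $(V',i)$ for which a value is produced is at most $\sum_{i=1}^k 2^{|C_i|} = O^*\bigl(2^{(1/2+\varepsilon)n}\bigr)$; a subset $V'$ that happens to lie in several containers is simply recomputed (or looked up) a constant number of times, which only costs the factor $k$. I do not expect any genuine obstacle here: the one point that really needs to be stated is the closure observation of the previous paragraph, namely that the Lemma~\ref{computei} recursion never escapes the subsets of $C_i$, so that the exponential cost is governed by $\max_i |C_i|$ rather than by $n$.
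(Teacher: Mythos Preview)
Your proposal is correct and follows exactly the same approach as the paper: apply Lemma~\ref{computei} separately to each induced subgraph $G[C_i]$. The paper's proof is in fact a single sentence to this effect; your additional observations (that $G[V']=(G[C_i])[V']$ for $V'\subseteq C_i$ and that the recursion never leaves the subsets of $C_i$) are the justifications implicit in that sentence.
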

\begin{proof}
We simply use Lemma~\ref{computei} separately for each of the graphs~$G[C_i]$ for~$1\leq i\leq k$.   
\end{proof}

Consider the expression
\[
\overline{F}(G,C_1,\ldots,C_k) = \sum_{V'\subseteq V(G)} (-1)^{|V(G)|-|V'|}\prod_{j=1}^{k} i(G[V'\cap C_j]) . 
\]

We next generalize Lemma~\ref{sumzerouse}.
\begin{lemma}\label{sumzerouse_containers}
$\overline{F}(G,C_1,\ldots,C_k)$ equals the number of $k$-tuples $(I_0,\ldots,I_{k-1})$ of independent sets in $G$ such that $V(G) = I_0\cup\ldots\cup I_{k-1}$ and~$I_j \subseteq C_j$ for every~$1\leq j\leq k$.
In particular, it is positive if and only if~$G$ is~$k$-colorable given these containers.
\end{lemma}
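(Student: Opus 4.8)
The plan is to mirror the proof of Lemma~\ref{sumzerouse} almost verbatim, replacing the single counting quantity $i(G[V'])^k$ by the product $\prod_{j=1}^k i(G[V'\cap C_j])$ and tracking the containment constraint $I_j\subseteq C_j$ through the argument. First I would observe the combinatorial meaning of a single term: since $i(G[V'\cap C_j])$ counts independent sets of $G$ that are contained in $V'\cap C_j$, the product $\prod_{j=1}^k i(G[V'\cap C_j])$ counts exactly the $k$-tuples $(I_0,\ldots,I_{k-1})$ of independent sets of $G$ with $I_j\subseteq V'\cap C_j$ for every $j$ --- equivalently, tuples with $I_j\subseteq C_j$ for all $j$ \emph{and} $I_0\cup\cdots\cup I_{k-1}\subseteq V'$.

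Next I would switch the order of summation. Fix a $k$-tuple $(I_0,\ldots,I_{k-1})$ of independent sets of $G$ with $I_j\subseteq C_j$ for every $j$, and let $U=I_0\cup\cdots\cup I_{k-1}$. By the previous paragraph this tuple is counted once in the term for each $V'$ with $U\subseteq V'\subseteq V(G)$, always with sign $(-1)^{|V(G)|-|V'|}$. Pulling out the constant $(-1)^{|V(G)|}$, the total contribution of this tuple to $\overline{F}(G,C_1,\ldots,C_k)$ is $(-1)^{|V(G)|}\sum_{U\subseteq V'\subseteq V(G)} (-1)^{|V'|}$. By Lemma~\ref{sumzero} applied with $S_1=U$ and $S_2=V(G)$, this inner sum is $0$ unless $U=V(G)$, in which case it equals $(-1)^{|V(G)|}$, making the contribution $(-1)^{2|V(G)|}=1$. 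Hence every $k$-tuple with the right containments that covers $V(G)$ contributes exactly $1$ and every other contributes $0$, which is precisely the claimed count.

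For the ``in particular'' clause I would just note that a $k$-coloring of $G$ in which color $j$ is used only on vertices of $C_j$ is, by the Observation equating $k$-colorability with coverability by $k$ independent sets, exactly a $k$-tuple $(I_0,\ldots,I_{k-1})$ of independent sets with $\bigcup_j I_j=V(G)$ and $I_j\subseteq C_j$; so such a coloring exists iff the count is positive. (Since each term $i(G[V'\cap C_j])\geq 1$ is a nonnegative integer, the sum of the $\pm 1$ contributions is a well-defined integer, and positivity is the same as being nonzero-and-nonnegative, but we do not even need nonnegativity of $\overline{F}$ itself --- only that the count it computes is a count of objects, hence $\geq 0$, and positive iff a covering tuple exists.)

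I do not expect a genuine obstacle here; the only point requiring a little care is keeping the two logically distinct constraints --- ``$I_j\subseteq C_j$ for all $j$'' (which is baked into \emph{which} tuples we are ever counting, via the $C_j$ in the argument of $i$) and ``$I_j\subseteq V'$'' (which is what the inclusion--exclusion over $V'$ filters down to the cover condition) --- clearly separated, so that the application of Lemma~\ref{sumzero} is to the interval $[U,V(G)]$ and not accidentally to something involving the $C_j$. Everything else is the same sign-cancellation bookkeeping as in Lemma~\ref{sumzerouse}.
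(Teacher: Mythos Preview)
Your proposal is correct and follows essentially the same inclusion--exclusion argument as the paper: interpret $\prod_j i(G[V'\cap C_j])$ as counting container-respecting $k$-tuples contained in $V'$, switch the order of summation, and apply Lemma~\ref{sumzero} to the interval $[I_0\cup\cdots\cup I_{k-1},\,V(G)]$. If anything, you are slightly more explicit than the paper about the sign bookkeeping and the ``in particular'' clause.
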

\begin{proof}
$\prod_{j=1}^{k} i(G[V'\cap C_j])$ counts the number of $k$-tuples $(I_0,\ldots,I_{k-1})$ of independent sets in $G[V']$, such that~$I_j \subseteq C_j$ for every~$1\leq j\leq k$.
%We view $F(G)$ as if it counts $k$-tuples of independent sets (in $G$) with some multiplicities.

Let $(I_0,\ldots,I_{k-1})$ be a $k$-tuple of independent sets in $G$ for which~$I_j \subseteq C_j$ for every~$1\leq j\leq k$.
It appears exactly in terms of the sum corresponding to sets $V'$ such that $I_0\cup\ldots\cup I_{k-1} \subseteq V' \subseteq V(G)$. Each time this $k$-tuple is counted, it is counted with a sign determined by the parity of $V'$. 
By Lemma~\ref{sumzero}, the sum of the signs corresponding to sets $I_0\cup\ldots\cup I_{k-1} \subseteq V' \subseteq V(G)$ is zero if $I_0\cup\ldots\cup I_{k-1} \neq V(G)$ and one if $I_0\cup\ldots\cup I_{k-1} = V(G)$.
\end{proof}

By Lemma~\ref{computei_containers} we can spend $O^*\left(2^{(1/2+\varepsilon)n}\right)$ time to pre-compute every value of~$i(G[\cdot])$ that appears in~$\overline{F}(G,C_1,\ldots,C_k)$.
Nevertheless, the sum still contains~$2^n$ terms and it is thus unclear if it can be computed quicker than that.

\subsection{The Extensions Sum Problem}
Let~$X$ be a set of \emph{variables}.
For a subset~$X'\subseteq X$ and a function~$f:\{0,1\}^{X'}\rightarrow \mathbb{R}$ we naturally define the \textbf{extension}~$\overline{f}:\{0,1\}^{X}\rightarrow \mathbb{R}$ as
$$
\overline{f}\left(\alpha\right) := f\left(\alpha \vert_{X'}\right)
,$$
where~$\alpha \vert_{X'}$ is the restriction of~$\alpha : X \rightarrow \{0,1\}$ to~$X'$.

\begin{definition}[The Extensions Sum Problem]
    Let~$X$ be a set and~$k$ a parameter.
    As input, we are explicitly given~$k$ subsets~$X_1,\ldots,X_k \subseteq X$ and functions~$f_i:\{0,1\}^{X_i}\rightarrow \mathbb{R}$ for~$1\leq i \leq k$.
    As output, we should compute
    $$
    \sum_{\alpha : X \rightarrow \{0,1\}} \prod_{i=1}^{k} \overline{f_i}\left(\alpha\right)
    .$$
    In the~$(k,\gamma)$-Extensions Sum Problem, we add the restriction that~$|X_i|\leq \gamma|X|$ for every~$1\leq i \leq k$.
\end{definition}
By the observations of Section~\ref{subsec:colwithconts}, we can reduce~$k$-coloring of a regular graph to a~$\left(k,\frac{1}{2}+\varepsilon\right)$-Extensions-Sum instance on~$n$ variables.
\begin{lemma}
    We can compute~$\overline{F}(G,C_1,\ldots,C_k)$ by solving an instance of~$\left(k,\frac{1}{2}+\varepsilon\right)$-Extensions-Sum. 
\end{lemma}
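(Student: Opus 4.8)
The plan is to read off the reduction directly from the definition of $\overline{F}$ and the Extensions Sum Problem. First I would set $X = V(G)$, identifying a subset $V' \subseteq V(G)$ with its indicator function $\alpha : V(G) \to \{0,1\}$ (so that $\alpha^{-1}(1) = V'$). For each $j \in [k]$ I would set $X_j = C_j$; the hypothesis $|C_j| \leq \left(\tfrac{1}{2}+\varepsilon\right)n$ is exactly the condition $|X_j| \leq \gamma|X|$ with $\gamma = \tfrac{1}{2}+\varepsilon$, so the instance we build is a legitimate $\left(k,\tfrac{1}{2}+\varepsilon\right)$-Extensions-Sum instance.

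Next I would define the functions $f_j : \{0,1\}^{C_j} \to \mathbb{R}$ so that their extensions reproduce the factors in $\overline{F}$. For $\beta : C_j \to \{0,1\}$, set $f_j(\beta) := i\!\left(G[\beta^{-1}(1)]\right)$, the number of independent sets in the subgraph of $G$ induced on the vertex subset of $C_j$ selected by $\beta$. Then for any $\alpha : V(G) \to \{0,1\}$ with $V' = \alpha^{-1}(1)$ we have $\overline{f_j}(\alpha) = f_j(\alpha\vert_{C_j}) = i\!\left(G[(\alpha\vert_{C_j})^{-1}(1)]\right) = i\!\left(G[V' \cap C_j]\right)$, matching the $j$-th factor of the product inside $\overline{F}$. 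One caveat worth noting explicitly: the Extensions Sum Problem as stated asks us to compute $\sum_{\alpha}\prod_j \overline{f_j}(\alpha)$, whereas $\overline{F}$ carries the alternating sign $(-1)^{|V(G)|-|V'|}$. This is handled by folding the sign into one of the functions, e.g. replacing $f_1$ by $f_1'(\beta) := (-1)^{|\beta^{-1}(1)|}\, f_1(\beta)$ and observing $(-1)^{|V'\cap C_1|}$ is not what we want — so instead I would absorb the global parity into a dedicated factor or, more cleanly, define $X_{k+1} := X$ with $f_{k+1}(\alpha) := (-1)^{|V(G)| - |\alpha^{-1}(1)|}$ and work with $k+1$ functions; since $|X_{k+1}| = |X| \le \left(\tfrac12+\varepsilon\right)|X|$ fails, one instead puts the sign on an existing small factor by noting $(-1)^{|V(G)|-|V'|}$ only depends on $|V'|$, which is not determined by $\alpha\vert_{C_j}$ for a single $j$. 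The clean fix is to replace each $f_j$'s domain bookkeeping: write $(-1)^{|V(G)|-|V'|} = (-1)^{|V(G)|}\prod_{v\in V}(-1)^{\alpha(v)}$ and distribute one factor $(-1)^{\alpha(v)}$ to whichever $C_j$ contains $v$ (every $v$ lies in some $C_j$ since the union of color classes, each contained in its $C_j$, is $V$ — but this needs the $C_j$ to cover $V$, which may require adding $v$ to an arbitrary $C_j$ or simply enlarging $\gamma$ slightly). Whichever bookkeeping device is chosen, the sign is a product of single-variable functions and hence costs nothing.

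With the encoding in place, the identity $\overline{F}(G,C_1,\ldots,C_k) = \sum_{\alpha : X \to \{0,1\}} \prod_{i} \overline{f_i}(\alpha)$ is immediate by comparing the two sums term by term, after the change of variables $V' \leftrightarrow \alpha$. The functions $f_j$ are given "explicitly" in the sense required: by Lemma~\ref{computei_containers} all the values $i(G[V'])$ for $V' \subseteq C_j$ can be precomputed in $O^*\!\left(2^{(1/2+\varepsilon)n}\right)$ time, and there are at most $2^{|C_j|} \leq 2^{(1/2+\varepsilon)n}$ such values per $j$, so writing down the truth tables of the $f_j$ fits within the same budget. The only genuinely delicate point is the sign bookkeeping — making sure the alternating factor $(-1)^{|V(G)|-|V'|}$ is expressed as a product of functions each depending on few variables, so that it does not inflate any $|X_i|$ beyond $\left(\tfrac12+\varepsilon\right)n$; everything else is a direct unwinding of definitions.
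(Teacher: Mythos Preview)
Your approach is essentially the paper's: set $X=V(G)$, $X_j=C_j$, $f_j(\beta)=i(G[\beta^{-1}(1)])$, and distribute the sign $(-1)^{|V'|}$ by assigning each vertex's factor $(-1)^{\alpha(v)}$ to one $C_j$ containing it. The paper writes this as $f_j(\alpha) := (-1)^{|\alpha^{-1}(1)\cap (C_j \setminus (C_1\cup\ldots\cup C_{j-1}))|}\cdot i(G[\alpha^{-1}(1)\cap C_j])$, i.e., each vertex contributes its sign to the \emph{first} container it belongs to.

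The one place you hedge unnecessarily is the coverage issue. You worry that some $v$ may lie in no $C_j$ and suggest enlarging $\gamma$ or adding $v$ to an arbitrary $C_j$. The paper's resolution is cleaner and costs nothing: if $\bigcup_j C_j \neq V(G)$ then $\overline{F}(G,C_1,\ldots,C_k)=0$ automatically, since by Lemma~\ref{sumzerouse_containers} it counts $k$-tuples of independent sets with $I_j\subseteq C_j$ whose union is all of $V(G)$, and no such tuple exists when the containers fail to cover. So one may simply assume $\bigcup_j C_j = V(G)$ at the outset; your reasoning ``the union of color classes is $V$'' is not the right justification (we do not yet know a coloring exists), but the conclusion is still available for this simpler reason.
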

\begin{proof}
    First, we may assume that~$\bigcup_{j=1}^{k} C_j = V(G)$, as otherwise~$\overline{F}(G,C_1,\ldots,C_k)=0$.
    We define an instance of Extensions-Sum as follows.
    The set of variables is $X=V(G)$.
    The subsets are~$X_j = C_j$ for~$1\leq j \leq k$.
    For each~$j$, we define
    $$
    f_j(\alpha) := (-1)^{\alpha^{-1}(1)\cap \left(C_j \setminus \left(C_1\cup\ldots\cup C_{j-1}\right)\right)}\cdot i\left(G\left[\alpha^{-1}\left(1\right)\cap C_j\right]\right)
    .$$
    Let~$V'\subseteq V(G)$ and let~$\alpha : V(G)\rightarrow \{0,1\}$ be the function that indicates whether each~$v\in V(G)$ is in~$V'$.
    Then,
    $$
    \overline{f_j}(\alpha) = f_j(\alpha\vert_{X_j}) = (-1)^{V'\cap \left(C_j \setminus \left(C_1\cup\ldots\cup C_{j-1}\right)\right)}\cdot i\left(G\left[V'\cap C_j\right]\right)
    .$$
    Hence,
    $$
    \prod_{j=1}^{k} \overline{f_j}\left(\alpha\right) = 
    \prod_{j=1}^{k} \left((-1)^{V'\cap \left(C_j \setminus \left(C_1\cup\ldots\cup C_{j-1}\right)\right)}\cdot i\left(G\left[V'\cap C_j\right]\right)\right)
    = (-1)^{|V'|}\prod_{j=1}^{k} i(G[V'\cap C_j])
    .$$
\end{proof}

We thus next explore when does~$(k,\gamma)$-Extensions-Sum can be solved in~$2^{(1-\varepsilon)|X|}$ time for some~$\varepsilon>0$.
We show that~$(2,1-\varepsilon)$-Extensions-Sum can be solved in~$2^{(1-\varepsilon)|X|}$ time for any~$\varepsilon>0$.
Similarly,~$(3,1-\varepsilon)$-Extensions-Sum can be solved in~$2^{\left(1-\left(3-\omega\right)\varepsilon\right)|X|}$ time, where~$\omega<2.373$ is the matrix multiplication exponent.
On the other hand, if the hyperclique conjecture (to be described in detail later) holds, non-trivial solutions for~$k\geq 4$ are not possible. In particular, an algorithm running in time~$2^{(1-\varepsilon)|X|}$ solving~$\left(4,\frac{3}{4}\right)$ or~$\left(20,\frac{1}{2}\right)$-Extensions-Sum, for any~$\varepsilon>0$, would refute that conjecture.

\begin{lemma}\label{extsum_disjoint}
    If the subsets~$X_1,\ldots,X_k$ are disjoint, then we can solve Extensions-Sum in~$O\left(\sum_{i=1}^k 2^{|X_i|}\right)$ time.
\end{lemma}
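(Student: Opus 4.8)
The plan is to exploit the fact that when the $X_i$ are pairwise disjoint, the summand $\prod_{i=1}^k \overline{f_i}(\alpha)$ factors into a product whose $i$-th factor depends only on $\alpha\vert_{X_i}$, and these restrictions vary mutually independently as $\alpha$ ranges over $\{0,1\}^X$. Concretely, set $X_0 := X \setminus (X_1 \cup \cdots \cup X_k)$ for the set of variables occurring in no $X_i$; then $X = X_0 \cupdot X_1 \cupdot \cdots \cupdot X_k$ is a partition, and an assignment $\alpha : X \to \{0,1\}$ is precisely the same data as a tuple $(\alpha_0, \alpha_1, \ldots, \alpha_k)$ with $\alpha_i : X_i \to \{0,1\}$.

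First I would record the identity $\overline{f_i}(\alpha) = f_i(\alpha_i)$, which is immediate from the definition of the extension and in particular does not involve $\alpha_0$ at all. Substituting this and using the distributive law over the product structure of the assignment space, the sum collapses:
\[
\sum_{\alpha : X \to \{0,1\}} \prod_{i=1}^k \overline{f_i}(\alpha)
= \Bigl( \sum_{\alpha_0 : X_0 \to \{0,1\}} 1 \Bigr) \prod_{i=1}^k \Bigl( \sum_{\alpha_i : X_i \to \{0,1\}} f_i(\alpha_i) \Bigr)
= 2^{|X_0|} \prod_{i=1}^k S_i,
\]
where $S_i := \sum_{\alpha_i : X_i \to \{0,1\}} f_i(\alpha_i)$.

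The algorithm is then: for each $i$, compute $S_i$ by summing the $2^{|X_i|}$ explicitly given table entries of $f_i$, which takes $O(2^{|X_i|})$ time; then multiply the $k$ scalars $S_1, \ldots, S_k$ together, and finally by $2^{|X_0|}$, in $O(k)$ further arithmetic operations. The total running time is $O\bigl(\sum_{i=1}^k 2^{|X_i|}\bigr)$, matching the input size, as claimed. There is essentially no obstacle here; the only point worth a remark is that one must not enumerate the free variables in $X_0$ but instead fold their contribution into the single scalar factor $2^{|X_0|}$. (In the graph-coloring application one has $\bigcup_j C_j = V(G)$, so $X_0 = \emptyset$ and the factor disappears.)
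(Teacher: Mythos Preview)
Your proof is correct and follows essentially the same approach as the paper: both decompose the assignment space along the partition $X = X_0 \cupdot X_1 \cupdot \cdots \cupdot X_k$ (the paper writes $X^c$ for your $X_0$), factor the sum as $2^{|X_0|}\prod_i S_i$, and compute each $S_i$ separately in $O(2^{|X_i|})$ time.
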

\begin{proof}
    Denote by~$X^c := X\setminus \left(X_1\cup \ldots \cup X_k\right)$.
    The space of functions~$X \rightarrow \{0,1\}$ can be decomposed to the direct product of the spaces~$X' \rightarrow \{0,1\}$ for all~$X'=X_1,X_2,\ldots,X_k,X^c$.
    Thus, summing over all~$\alpha : X \rightarrow \{0,1\}$ is equivalent to summing over all~$\alpha \cong \left(\alpha_1,\ldots,\alpha_k,\alpha_c\right)$ where~$\alpha_i : X_i \rightarrow \{0,1\}$ and~$\alpha_c : X^c \rightarrow \{0,1\}$.
    We therefore notice that
    \begin{align*}
        \sum_{\alpha} \prod_{i=1}^{k} \overline{f_i}\left(\alpha\right)
        &=\sum_{\alpha_1,\ldots,\alpha_k,\alpha_c} \prod_{i=1}^{k} \overline{f_i}\left(\alpha\right)\\
        &=\sum_{\alpha_1,\ldots,\alpha_k,\alpha_c} \prod_{i=1}^{k} {f_i}\left(\alpha_i\right)\\
        &=\prod_{i=1}^{k} \left(\sum_{\alpha_i}  {f_i}\left(\alpha_i\right)\right) \cdot \left(\sum_{\alpha_c} 1\right)\\
        &=2^{|X^c|} \prod_{i=1}^{k} \left(\sum_{\alpha_i}  {f_i}\left(\alpha_i\right)\right).
    \end{align*}
    Thus, it is enough to separately compute the sum for each~$f_i$ in~$2^{|X_i|}$ time.
\end{proof}

\begin{lemma}\label{extsum2}
    When~$k=2$, we can solve Extensions-Sum in~$O\left(2^{|X_1|}+2^{|X_2|}\right)$ time.
\end{lemma}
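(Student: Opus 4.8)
The plan is to handle the overlap $Y := X_1 \cap X_2$ by \emph{conditioning} on it, thereby reducing to the disjoint‑support situation already treated in Lemma~\ref{extsum_disjoint}. First I would split the variable set into the disjoint union $X = Y \cupdot (X_1\setminus Y) \cupdot (X_2\setminus Y) \cupdot X^c$, where $X^c := X\setminus(X_1\cup X_2)$, so that an assignment $\alpha:X\to\{0,1\}$ is the same data as a tuple $(\beta,\gamma_1,\gamma_2,\gamma_c)$ of assignments to the four blocks. The key observation is that $\overline{f_1}(\alpha)=f_1(\alpha|_{X_1})$ depends only on $(\beta,\gamma_1)$ and $\overline{f_2}(\alpha)$ only on $(\beta,\gamma_2)$; in particular neither factor sees $\gamma_c$, and — crucially — once $\beta$ is fixed the two factors involve \emph{disjoint} blocks of the remaining variables.

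Concretely, writing $f_1(\beta,\gamma_1)$ for the value of $f_1$ on the combined assignment to $X_1 = Y\cupdot(X_1\setminus Y)$ (and similarly for $f_2$), I would rewrite the target sum by pulling the sum over $\beta$ to the outside:
\begin{align*}
\sum_{\alpha : X \to \{0,1\}} \overline{f_1}(\alpha)\,\overline{f_2}(\alpha)
&= \sum_{\beta : Y \to \{0,1\}}\ \sum_{\gamma_1}\sum_{\gamma_2}\sum_{\gamma_c} f_1(\beta,\gamma_1)\,f_2(\beta,\gamma_2) \\
&= 2^{|X^c|} \sum_{\beta : Y \to \{0,1\}} \Bigl(\sum_{\gamma_1} f_1(\beta,\gamma_1)\Bigr)\Bigl(\sum_{\gamma_2} f_2(\beta,\gamma_2)\Bigr),
\end{align*}
where $\gamma_1$ ranges over $\{0,1\}^{X_1\setminus Y}$, $\gamma_2$ over $\{0,1\}^{X_2\setminus Y}$, and $\gamma_c$ over $\{0,1\}^{X^c}$ (giving the scalar factor $2^{|X^c|}$, which is computed by repeated squaring in negligible time). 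This is exactly the statement that after fixing the shared coordinates the instance decouples, i.e.\ Lemma~\ref{extsum_disjoint} is being applied once for each of the $2^{|Y|}$ choices of $\beta$.

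For the running time, I would define $g_1(\beta) := \sum_{\gamma_1\in\{0,1\}^{X_1\setminus Y}} f_1(\beta,\gamma_1)$ for every $\beta\in\{0,1\}^{Y}$, and compute all of these values simultaneously in $O(2^{|X_1|})$ time by iterating over all assignments in $\{0,1\}^{X_1}$ and accumulating each one into the bucket indexed by its restriction to $Y$ (there are $2^{|Y|}\le 2^{|X_1|}$ buckets); symmetrically $g_2$ is computed in $O(2^{|X_2|})$ time. The final sum $\sum_{\beta} g_1(\beta)\,g_2(\beta)$ then costs $O(2^{|Y|}) \le O(2^{|X_1|})$. Altogether this is $O(2^{|X_1|}+2^{|X_2|})$, under the same $O(1)$‑per‑assignment arithmetic/word‑RAM convention used in Lemma~\ref{extsum_disjoint} (otherwise one picks up the usual polynomial factor and gets $O^*(2^{|X_1|}+2^{|X_2|})$).

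I do not expect a genuine obstacle here: the content is the decoupling identity, and the only things to be careful about are the bookkeeping of the four‑way split and the verification that $\gamma_c$ really drops out of \emph{both} extensions — which it does, since $X^c$ is disjoint from both $X_1$ and $X_2$. If one prefers to avoid $X^c$ altogether, an equivalent route is to first restrict to the instance on $X_1\cup X_2$ and multiply the answer by $2^{|X\setminus(X_1\cup X_2)|}$ at the end.
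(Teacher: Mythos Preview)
Your proposal is correct and follows essentially the same approach as the paper: condition on the intersection $X_1\cap X_2$, observe that the residual instance has disjoint supports, and invoke Lemma~\ref{extsum_disjoint}. The only cosmetic difference is that you spell out the four-block decomposition and the bucket-accumulation of $g_1,g_2$ explicitly, whereas the paper phrases the inner computation as a black-box call to Lemma~\ref{extsum_disjoint} for each fixed assignment to $X_\cap$; both yield the identical $2^{|X_\cap|}\cdot(2^{|X_1\setminus X_\cap|}+2^{|X_2\setminus X_\cap|})=2^{|X_1|}+2^{|X_2|}$ bound.
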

\begin{proof}
    Denote by~$X_\cap := X_1\cap X_2$.
    We once again have a direct sum~$\{0,1\}^X \cong \{0,1\}^{X_\cap} \times \{0,1\}^{X\setminus X_\cap}$.
    Hence,
    $$
    \sum_{\alpha} \overline{f_1}\left(\alpha\right) \overline{f_2}\left(\alpha\right)
    =
    \sum_{\alpha_\cap : X_\cap \rightarrow \{0,1\}} \left(\sum_{\alpha' : \left(X\setminus X_\cap\right) \rightarrow \{0,1\}} \overline{f_1}\left(\alpha_\cap, \alpha'\right) \overline{f_2}\left(\alpha_\cap, \alpha'\right)\right)
    .$$
    We notice that for any fixed~$\alpha_\cap : X_\cap \rightarrow \{0,1\}$, the inner-parenthesis are also an Extensions-Sum problem; The set of variables is~$X':=X\setminus X_\cap$, the sets on which the functions are defined are~$X'_i := X_i \setminus X_\cap$, and the functions are defined as~$f'_i(\alpha'):=f_i\left(\alpha_\cap \cup \alpha'\right)$, where
\begin{equation*}
  \left(\alpha_\cap \cup \alpha'\right)(x)  :=
    \begin{cases}
      \alpha_\cap(x) & \text{if $x\in X_\cap$}\\
      \alpha'(x) & \text{otherwise}
    \end{cases}  .     
\end{equation*}
    The sets~$X'_1,X'_2$ are disjoint, and thus using Lemma~\ref{extsum_disjoint} we can compute the inner-parenthesis in~$2^{|X'_1|}+2^{|X'_2|}$ time.
    The total computation time is thus~
    $$
    2^{|X_\cap|}\cdot\left(2^{|X'_1|}+2^{|X'_2|}\right)=
    2^{|X_1|}+2^{|X_2|}
    .$$
\end{proof}

The proof of the following generalization for~$k=3$ is postponed to Appendix~\ref{appendix:es3}.
\begin{lemma}\label{extsum3}
    For any~$\varepsilon\geq 0$, we can solve~$(3,1-\varepsilon)$-Extensions-Sum in~$2^{\left(1-\left(3-\omega\right)\varepsilon\right)|X|}$ time, where~$\omega$ is the matrix multiplication exponent.
\end{lemma}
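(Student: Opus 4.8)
The plan is to mimic the structure of the $k=2$ proof, peeling off the pairwise-common variables one layer deeper, and then observing that the residual three-variable-set sum — in which the three sets are pairwise disjoint modulo a global common part — is exactly a matrix-multiplication-shaped computation. Write $X_{ij} = X_i \cap X_j$ for $i<j$ and $Y = X_1\cap X_2\cap X_3$. First I would split $\{0,1\}^X$ as a direct product over the ``atoms'' of the Venn diagram of $X_1,X_2,X_3$ inside $X$: the region $Y$, the three pairwise regions $X_{ij}\setminus Y$, the three private regions $X_i\setminus(X_j\cup X_\ell)$, and the region outside all three. Summing over the outside region just multiplies by a power of $2$, so discard it. Summing over $Y$ and over the three pairwise regions $X_{ij}\setminus Y$ — call these the ``frozen'' coordinates, of total size at most roughly $3\varepsilon|X|$ when each $|X_i|\le(1-\varepsilon)|X|$ and hence each pairwise intersection is forced to be not-too-large — is done by brute-force enumeration, contributing a factor $2^{O(\varepsilon|X|)}$. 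For each fixed setting of the frozen coordinates we are left with an Extensions-Sum instance whose three underlying sets $X_i' = X_i\setminus(\text{frozen})$ are pairwise disjoint, but where the function $f_i'$ now also depends on the half of the frozen coordinates lying in $X_i$.

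The key step is then: a three-way sum $\sum_{a,b,c} g_1(a,b)\,g_2(b,c)\,g_3(a,c)$, where $a$ ranges over assignments to $X_1'\cap X_{13}'$-type blocks etc., is a bilinear/trilinear contraction that can be evaluated by matrix multiplication. Concretely, after the freezing the three private blocks no longer interact (each appears in only one $f_i'$, so they sum out independently and for free, exactly as in Lemma~\ref{extsum_disjoint}), and the only surviving shared structure is: the sum factors as a contraction of three matrices indexed by the (now unfrozen portions of the) pairwise regions. But if each $|X_i|\le(1-\varepsilon)|X|$ then each unfrozen pairwise block has size at most about $(1/2-\text{something})|X|$; more carefully, one shows the three pairwise-region sizes plus twice $|Y|$ sum to at most $\sum|X_i| - |X_1\cup X_2\cup X_3| \le 3(1-\varepsilon)|X| - |X|\cdot(\text{fraction covered})$, and one optimizes so that the largest matrix dimension in the product is $2^{m}$ with $m\le \tfrac{1}{2}(1-(3-\omega)\varepsilon)|X|$ — wait, rather one wants the product of the three matrix dimensions, i.e. the cost $2^{\omega \cdot (\text{avg block})}$ bound, to come out to $2^{(1-(3-\omega)\varepsilon)|X|}$. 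The arithmetic to tune is: total ``content'' of the three pairwise blocks is bounded, the naive cost of the triple product of matrices whose three dimensions are $2^{p},2^{q},2^{s}$ is $2^{\omega\max}$ at worst but really $\sim 2^{(p+q+s)\cdot\omega/3}$ in the balanced case, and one checks $\tfrac{\omega}{3}(p+q+s)+(\text{frozen})+(\text{private, which is free}) \le (1-(3-\omega)\varepsilon)|X|$.

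The main obstacle I expect is precisely this bookkeeping: verifying that the worst case over all ways the sets $X_1,X_2,X_3$ can sit inside $X$ (subject only to $|X_i|\le(1-\varepsilon)|X|$) still yields the clean bound $2^{(1-(3-\omega)\varepsilon)|X|}$, and in particular that the adversary cannot make one pairwise intersection large while keeping the freezing cost low — one must check that enlarging $|X_{ij}|$ simultaneously shrinks the ``free'' part and so the trade-off always lands on the stated exponent, with the extreme case being $|X_1|=|X_2|=|X_3|=(1-\varepsilon)|X|$ and the three sets overlapping as symmetrically as possible. I would set up a small linear program in the block sizes, use $\omega < 3$ to see the objective is minimized (cost maximized) at the symmetric corner, and read off $(3-\omega)\varepsilon$ as the saving. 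The reduction of the contraction to rectangular matrix multiplication, and the fact that private and outside blocks contribute nothing, are routine given Lemma~\ref{extsum_disjoint}.
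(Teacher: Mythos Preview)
Your overall shape is right and matches the paper's: freeze the triple intersection $Y$, absorb each private block into a matrix entry, then evaluate the remaining triangle-shaped sum by matrix multiplication. But there is a genuine error in your freezing step that, read literally, breaks the argument.

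You claim to brute-force over $Y$ \emph{and} the three pairwise regions $X_{ij}\setminus Y$, asserting these have total size roughly $3\varepsilon|X|$ because ``each pairwise intersection is forced to be not-too-large''. This is false: take $X_1=X_2=X_3$ to be any set of size $(1-\varepsilon)|X|$; then every pairwise intersection has size $(1-\varepsilon)|X|$. Nothing in the hypothesis $|X_i|\le(1-\varepsilon)|X|$ bounds pairwise overlaps. And if you really freeze all of $X_{12}\cup X_{13}\cup X_{23}$, the residual sets are pairwise disjoint --- so matrix multiplication is irrelevant --- but the enumeration cost alone can already be $2^{|X|}$, and you get no saving.

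The fix, which your ``key step'' paragraph in fact describes correctly (in contradiction to the preceding paragraph), is to freeze \emph{only} $Y$. After summing out the private block $X_i\setminus(X_j\cup X_\ell)$, each $f_i$ becomes a function $g_i$ of exactly the two pairwise blocks it touches, and the remaining sum has the form $\sum_{a,b,c} g_1(a,c)\,g_2(a,b)\,g_3(b,c)$ with $a,b,c$ ranging over assignments to $X_{12}\setminus Y$, $X_{23}\setminus Y$, $X_{13}\setminus Y$. This is a weighted-triangle sum in a tripartite graph, computable in time $A^{\omega-2}BC$ when the part sizes satisfy $A\le B\le C$. This is exactly the paper's proof.

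For the bookkeeping you flagged as the main obstacle: write $q_{ij}=|X_{ij}\setminus Y|$, $p_i=|X_i\setminus(X_j\cup X_\ell)|$, $y=|Y|$, $o=|X\setminus(X_1\cup X_2\cup X_3)|$, and normalize $|X|=1$. The cost exponent is
\[
y+q_{\mathrm{mid}}+q_{\max}+(\omega-2)q_{\min}
= 1-(p_1+p_2+p_3+o)-(3-\omega)q_{\min}.
\]
If $q_{\min}=q_{12}$, the constraint $|X_3|\le 1-\varepsilon$ rearranges (via the partition identity) to $q_{12}+p_1+p_2+o\ge\varepsilon$; combining this with $3-\omega\le 1$ gives cost exponent at most $1-(3-\omega)\varepsilon$ in all cases. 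So the key quantitative fact is not that pairwise blocks are small, but that the \emph{smallest} pairwise block is (essentially) at least $\varepsilon$.
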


The following conjecture was posed by Lincoln, Vassilevska-Williams and Williams~\cite{lincoln2018tight}. 
In the same paper they describe several breakthroughs that would be achieved if it is false. 
It was later assumed for conditional lower bounds in several other papers (e.g., \cite{ kunnemann2020finding, williams2020truly, bringmann2021current, an2022fine}).
\begin{conjecture}
For every~$k>r\geq 3$ and~$\varepsilon>0$,~$k$-hyperclique detection in~$r$-uniform hypegraphs cannot be solved in~$O(n^{k-\varepsilon})$ time.
\end{conjecture}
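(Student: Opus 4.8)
The final statement is a \emph{conjecture}, not a theorem, so there is no proof to propose in the literal sense: an affirmative resolution would be an unconditional time lower bound that is super-linear in $k$ for an explicit problem in $\mathsf{P}$, and no such lower bound is known under general word-RAM or algebraic models — establishing it would almost certainly require a breakthrough in circuit lower bounds. What one can reasonably do, and what the plan below does, is (i) confirm that the exponent $k$ is the natural barrier, and (ii) assemble the structural and reduction-based evidence that makes the conjecture believable; this is the standard form of support for fine-grained hardness assumptions.

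First I would record the trivial upper bound, to see that the conjectured $n^{k-o(1)}$ lower bound would be tight: enumerating all $\binom{n}{k}$ vertex subsets of size $k$ and checking in $O(k^r)$ time whether every $r$-subset of the chosen vertices is a hyperedge solves $k$-hyperclique in $O(n^k)$ time. Next I would contrast this with the graph case $r=2$, where $k$-clique detection provably beats brute force: splitting the $k$ sought vertices into three blocks of size $k/3$, forming the auxiliary graph whose vertices are the $(k/3)$-cliques inside each block and whose edges record that two such cliques (from different blocks) have all cross-pairs present, and then running a triangle-detection algorithm via fast matrix multiplication, gives roughly $O\!\left(n^{\omega k/3}\right)$ time, which is $o(n^k)$ since $\omega<3$. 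The crucial point is \emph{decomposability}: for a binary edge relation, ``all cross-edges between an $A$-clique and a $B$-clique are present'' is a condition on pairs of vertices, so a $k$-clique is exactly a triangle of $(k/3)$-cliques. For $r\ge 3$ this decomposition fails — a hyperedge with one vertex in each of three distinct blocks is never witnessed by any single pair of blocks — so there is no way to ``fold'' an $r$-uniform $k$-hyperclique into a bounded-arity product structure that fast matrix multiplication could exploit. Making this obstruction precise is the heart of why no $n^{k-\varepsilon}$ algorithm is known, and it is what I would spell out.

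Then I would cite the corroborating evidence: Lincoln, Vassilevska-Williams and Williams~\cite{lincoln2018tight} show that refuting the conjecture would yield faster algorithms for a range of well-studied problems, and the conjecture has since served as a hardness hypothesis in~\cite{kunnemann2020finding, williams2020truly, bringmann2021current, an2022fine}; this web of consequences is the usual ``plausibility certificate'' for such conjectures. The honest bottom line, and the genuine obstacle, is that an actual proof of the conjecture is equivalent to separating $k$-hyperclique from near-linear time unconditionally, which current lower-bound technology cannot do; accordingly, in the remainder of the paper the conjecture is invoked only as a hypothesis, exactly as in the cited works, to derive conditional lower bounds for the Extensions-Sum problem (and hence for faster container-based coloring algorithms with $k\ge 4$ colors).
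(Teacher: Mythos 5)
You are right that this is a conjecture, not a theorem, and the paper does not prove it — it is imported verbatim from Lincoln, Vassilevska-Williams and Williams~\cite{lincoln2018tight} and used purely as a hardness hypothesis to derive the conditional lower bound for Extensions-Sum via Lemma~\ref{extsum_lb}. Your reading of its role matches the paper's exactly; the additional context you supply (the trivial $O(n^k)$ upper bound, the contrast with $r=2$ where matrix multiplication gives $O(n^{\omega k/3})$, the observation that the triangle-of-cliques decomposition breaks down for $r\geq 3$ because a hyperedge spanning three blocks is witnessed by no single pair, and the list of papers that condition on the hypothesis) is accurate and genuinely illuminating motivation, but it goes beyond what the paper itself says, which is only the one-sentence attribution and a pointer to the consequences of refutation in~\cite{lincoln2018tight}.
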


In Appendix~\ref{appendix:hypercliques} we show that this conjecture implies that Extension-Sum has no non-trivial solutions for~$k>3$.
\begin{lemma}\label{extsum_lb}
    For any~$k>r\geq 2$, the problem of finding a~$k$-hyperclique in a~$r$-uniform hypergraph can be reduced to~$\left({k \choose r}, \frac{r}{k}\right)$-Extensions-Sum on~$|X|=k\lceil \log n\rceil$ variables.
\end{lemma}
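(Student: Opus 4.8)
The plan is to encode the $k$ vertices of a candidate hyperclique into $k$ disjoint blocks of $\lceil\log n\rceil$ Boolean variables, one block per vertex, and to use one extension factor per $r$-subset of blocks, whose job is to certify that the $r$ vertices it sees form a hyperedge. Concretely, write $m=\lceil\log n\rceil$ and partition the variable set $X$ into blocks $B_1,\dots,B_k$ of size $m$ each, so $|X|=km=k\lceil\log n\rceil$. An assignment $\alpha:X\to\{0,1\}$ is read as a tuple $(v_1(\alpha),\dots,v_k(\alpha))$, where $v_i(\alpha)\in\{0,1,\dots,2^m-1\}$ is the integer encoded by the bits in $B_i$. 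For each $r$-subset $S=\{i_1<\dots<i_r\}\subseteq[k]$ — there are ${k\choose r}$ of them — put $X_S=\bigcup_{i\in S}B_i$, so that $|X_S|=rm=\frac{r}{k}|X|$, and define $f_S:\{0,1\}^{X_S}\to\{0,1\}$ by $f_S(\beta)=1$ exactly when $v_{i_1}(\beta),\dots,v_{i_r}(\beta)$ all lie in $[n]$ and $\{v_{i_1}(\beta),\dots,v_{i_r}(\beta)\}$ is a hyperedge of the input hypergraph $H$. This is a legal $\bigl({k\choose r},\frac rk\bigr)$-Extensions-Sum instance on $k\lceil\log n\rceil$ variables; for constant $k,r$ each $f_S$ has a truth table of size $2^{rm}=\Theta(n^r)$ that is computable in polynomial time, so the whole instance is produced in polynomial time.

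Next I would verify correctness, i.e. that $\sum_{\alpha:X\to\{0,1\}}\prod_{S}\overline{f_S}(\alpha)$ is positive if and only if $H$ contains a $k$-hyperclique. Since $\overline{f_S}(\alpha)=f_S(\alpha|_{X_S})$ depends only on the blocks indexed by $S$, the product $\prod_S\overline{f_S}(\alpha)$ equals $1$ precisely when, for every $r$-subset $S\subseteq[k]$, the vertices $(v_i(\alpha))_{i\in S}$ are all valid and form a hyperedge, and it equals $0$ otherwise. All summands are nonnegative, so the sum is positive iff some $\alpha$ attains $1$. If $\alpha$ attains $1$ then $v_1(\alpha),\dots,v_k(\alpha)\in[n]$ and every $r$-subset of them is a hyperedge; moreover they are automatically distinct, because if $v_i(\alpha)=v_j(\alpha)$ for some $i\ne j$ then (using $k\ge r$) some $r$-subset $S$ with $i,j\in S$ would see a multiset of size strictly less than $r$, which is never a hyperedge of an $r$-uniform hypergraph, forcing $f_S=0$. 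Hence $\{v_1(\alpha),\dots,v_k(\alpha)\}$ is a $k$-hyperclique. Conversely, any $k$-hyperclique $\{w_1,\dots,w_k\}$ yields such an $\alpha$ by writing $w_i$ in binary into $B_i$. Plugging a hypothetical $2^{(1-\varepsilon)|X|}$-time Extensions-Sum algorithm into this reduction gives an $n^{k-\Theta(\varepsilon)}$-time $k$-hyperclique algorithm, refuting the conjecture when $r\ge3$.

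The construction is essentially forced, so the only delicate point — the one step I would be careful about — is that \emph{distinctness} of the $k$ chosen vertices is not enforced by any single factor; it is recovered globally from the observation that a repeated vertex sabotages one of the $r$-uniform edge checks, and this is exactly where $k>r$ (more precisely $k\ge r$) is used. Beyond that it is only bookkeeping: checking that $|X_S|=\frac rk|X|$ holds on the nose (it does, block by block), that the ${k\choose r}$ factors are explicitly representable in polynomial time, and that padding is harmless when $n$ is not a power of two, since any code $\ge n$ is rejected by every factor that reads the block containing it.
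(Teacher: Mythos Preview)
Your construction is exactly the paper's: partition $X$ into $k$ blocks of $\lceil\log n\rceil$ bits, one subset $X_S$ per $r$-subset of blocks, and $f_S$ the indicator that the decoded $r$-tuple is a hyperedge. The paper's proof is terser and omits the distinctness and padding remarks you include, but the reduction and its verification are identical.
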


\begin{corollary}
    If the hyperclique conjecture holds, for every~$k>r\geq 3$ and~$\varepsilon>0$, the~$\left({k \choose r}, \frac{r}{k}\right)$-Extensions-Sum problem cannot be solved in~$2^{(1-\varepsilon)|X|}$ time.
    This includes, in particular,~$\left(4,\frac{3}{4}\right)$ and~$\left(20,\frac{1}{2}\right)$ for the choices of~$k=4,r=3$ and~$k=6,r=3$ respectively.
\end{corollary}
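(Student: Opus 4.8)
The plan is to encode a candidate $k$-hyperclique as a Boolean assignment and engineer the product of the $f_i$'s to be exactly the indicator that the encoded $k$-set is a hyperclique, so that the Extensions-Sum value counts (ordered) hypercliques and is positive precisely when one exists.

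First I would set $m := \lceil \log n\rceil$, take $|X| = km$, and view $X$ as a disjoint union $X = B_1 \cupdot \cdots \cupdot B_k$ of $k$ blocks with $|B_i| = m$. Treating the vertex set of the input $r$-uniform hypergraph $H$ as $\{0,1,\dots,n-1\}$, an assignment $\alpha : X \to \{0,1\}$ then encodes a $k$-tuple $(u_1(\alpha),\dots,u_k(\alpha))$, where $u_i(\alpha) \in \{0,\dots,2^m-1\}$ is the integer with binary representation $\alpha|_{B_i}$; call $u_i(\alpha)$ \emph{valid} if $u_i(\alpha) < n$. The map $\alpha \mapsto (u_1(\alpha),\dots,u_k(\alpha))$ is a bijection from $\{0,1\}^X$ onto $\{0,\dots,2^m-1\}^k$. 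For each $r$-subset $S \in \binom{[k]}{r}$ put $X_S := \bigcup_{i\in S} B_i$, so $|X_S| = rm = \tfrac{r}{k}|X|$, matching the $\gamma = r/k$ requirement; there are exactly $\binom{k}{r}$ such sets, matching the first parameter.

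Next I would define, for each $S = \{i_1 < \cdots < i_r\}$, the function $f_S : \{0,1\}^{X_S} \to \{0,1\}$ by $f_S(\beta) = 1$ iff the decoded values $u_{i_1},\dots,u_{i_r}$ are valid, pairwise distinct, and $\{u_{i_1},\dots,u_{i_r}\}$ is a hyperedge of $H$, and $f_S(\beta)=0$ otherwise. Each $f_S$ is a table of $2^{rm} = O(n^r)$ entries computable in time polynomial in $n$ (and in $|H|$), and there are $\binom{k}{r} = O(1)$ of them, so this is a genuine polynomial-time reduction to an instance of $\bigl(\binom{k}{r}, \tfrac{r}{k}\bigr)$-Extensions-Sum on $|X| = k\lceil\log n\rceil$ variables.

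Finally I would verify that $\sum_{\alpha : X \to \{0,1\}} \prod_{S \in \binom{[k]}{r}} \overline{f_S}(\alpha)$ equals the number of ordered $k$-tuples $(u_1,\dots,u_k)$ of pairwise-distinct vertices of $H$ all of whose $r$-subsets are hyperedges, i.e.\ $k!$ times the number of $k$-hypercliques in $H$, so the instance has positive value iff $H$ has a $k$-hyperclique. Since each factor $\overline{f_S}(\alpha) = f_S(\alpha|_{X_S}) \in \{0,1\}$, the product is $1$ iff every factor is $1$; the three cases to check are: (a) if $u_i(\alpha) = u_j(\alpha)$ for some $i\neq j$, choose any $S \ni i,j$ (possible since $2 \le r \le k$) to get $f_S = 0$; (b) if some $u_i(\alpha)$ is invalid, choose any $S \ni i$ (possible since $r \le k$, so $i$ lies in some $r$-subset) to get $f_S = 0$; (c) if all $u_i(\alpha)$ are valid and distinct, the product is $1$ exactly when every $r$-subset $\{u_i : i\in S\}$ is a hyperedge, i.e.\ when $\{u_1,\dots,u_k\}$ is a $k$-hyperclique, and that tuple is counted once. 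There is no real obstacle beyond this bookkeeping with invalid and repeated codewords; the substantive content is just aligning the parameters ($\binom{k}{r}$ functions, each on a $\tfrac{r}{k}$-fraction of $k\lceil\log n\rceil$ variables). Combined with the conjectured $n^{k-o(1)}$ lower bound for $k$-hyperclique detection, a $2^{(1-\varepsilon)|X|} = O(n^{(1-\varepsilon)k})$ algorithm for this Extensions-Sum family would give a $k$-hyperclique algorithm in $O(n^{k-\varepsilon k})$ time, yielding the stated corollary.
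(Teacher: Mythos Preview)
Your proposal is correct and follows essentially the same construction as the paper's proof of Lemma~\ref{extsum_lb}: encode a $k$-tuple of vertices in $k$ disjoint blocks of $\lceil\log n\rceil$ bits each, and for every $r$-subset $S\subseteq[k]$ let $f_S$ indicate whether the corresponding $r$-set is a hyperedge, so the Extensions-Sum counts (ordered) $k$-hypercliques. Your treatment is in fact slightly more careful than the paper's in explicitly handling invalid codewords ($u_i\geq n$) and repeated indices, and in spelling out the final step from the reduction to the conditional lower bound.
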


\subsection{Refinements and Using Partition Containers}
While~$\left(k,\frac{1}{2}+\varepsilon\right)$-Extensions-Sum is unlikely to have a non-trivial solution for the general case, many choices of subsets~$X_1,\ldots,X_k$ make the problem much easier.

\begin{definition}
    Let~$X$ be a set. We say that a collection~$\mathcal{X}$ of subsets of~$X$ is a~$(k,\gamma)$-collection of subsets if it consists of~$k$ subsets~$X_1,\ldots,X_k$ and~$|X_i|\leq \gamma |X|$ for every~$1\leq i \leq k$.
\end{definition}

\begin{definition}\label{def:refine}
    Let~$X$ be a set and~$\mathcal{X}$ be a collection of its subsets.
    We say that~$\mathcal{X}$ has a~$(\ell, \gamma)$-refinement if there exists a partition of it to~$\ell$ parts~$\mathcal{X}=\cupdot_{i=1}^{\ell} \mathcal{X}_i$ such that for every part~$1\leq i\leq \ell$, we have~$|\cup \mathcal{X}_i| \leq \gamma |X|$.
\end{definition}

If a collection of arbitrary size has a~$(3,1-\varepsilon)$-refinement, for example, then Lemma~\ref{extsum3} results in an improved Extensions-Sum algorithm, by the following observation.

\begin{observation}\label{obs:reducerefinement}
    Let~$\mathcal{X}$ be a collection of subsets in~$X$ that has a~$(\ell,\gamma)$-refinement.
    Then, solving Extension-Sum on~$\mathcal{X}$ can be reduced to solving~$(\ell,\gamma)$-Extensions-Sum.
\end{observation}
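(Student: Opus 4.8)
The plan is to collapse each part of the refinement into a single subset–function pair, so that the regrouped product is literally the product over the new subsets. Fix a $(\ell,\gamma)$-refinement $\mathcal{X}=\cupdot_{j=1}^{\ell}\mathcal{X}_j$ of the given instance, which comes with subsets $X_1,\dots,X_k$ and functions $f_i:\{0,1\}^{X_i}\to\mathbb{R}$. For each $j\in[\ell]$ set $Y_j:=\bigcup\mathcal{X}_j$; by Definition~\ref{def:refine} we have $|Y_j|\le\gamma|X|$. Define $g_j:\{0,1\}^{Y_j}\to\mathbb{R}$ by $g_j(\beta):=\prod_{i:\,X_i\in\mathcal{X}_j} f_i(\beta|_{X_i})$, which is well defined since $X_i\subseteq Y_j$ for every $X_i\in\mathcal{X}_j$ (and is the constant $1$ when $\mathcal{X}_j=\emptyset$). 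Then $(\{Y_1,\dots,Y_\ell\},g_1,\dots,g_\ell)$ is a valid $(\ell,\gamma)$-Extensions-Sum instance.

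Next I would check that this new instance has the same value as the original. For any $\alpha:X\to\{0,1\}$, transitivity of restriction gives $(\alpha|_{Y_j})|_{X_i}=\alpha|_{X_i}$ whenever $X_i\subseteq Y_j$, hence $\overline{g_j}(\alpha)=g_j(\alpha|_{Y_j})=\prod_{i:\,X_i\in\mathcal{X}_j}f_i(\alpha|_{X_i})=\prod_{i:\,X_i\in\mathcal{X}_j}\overline{f_i}(\alpha)$. Multiplying over $j$ and using that $\mathcal{X}_1,\dots,\mathcal{X}_\ell$ partition $\mathcal{X}$, so that each index $i\in[k]$ lies in exactly one part, yields $\prod_{j=1}^{\ell}\overline{g_j}(\alpha)=\prod_{i=1}^{k}\overline{f_i}(\alpha)$. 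Summing over all $\alpha:X\to\{0,1\}$ then shows that the $(\ell,\gamma)$-Extensions-Sum output on $(\{Y_j\},\{g_j\})$ equals the Extensions-Sum output on the original collection $\mathcal{X}$.

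Finally I would account for the cost of the reduction itself: building the table of each $g_j$ amounts to filling $2^{|Y_j|}\le 2^{\gamma|X|}$ entries, each a product of at most $k$ values that are already part of the input, so the whole construction runs in $O^*(2^{\gamma|X|})$ time, which is dominated by the time needed merely to read (let alone compute) the output of the resulting $(\ell,\gamma)$-Extensions-Sum instance. I do not anticipate a genuine obstacle here — the argument is essentially a regrouping of a finite product — the only mild care needed is to treat $\mathcal{X}$ as indexed by $[k]$, so that several parts sharing a common union $Y_j$, or a subset occurring with several distinct functions, cause no ambiguity.
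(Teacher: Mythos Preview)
Your proof is correct and follows exactly the same approach as the paper: define the new subsets as the unions $\cup\mathcal{X}_j$ and the new functions as the products of the (extensions of the) old functions within each part. You actually provide more detail than the paper does, carefully verifying that the regrouped product agrees pointwise with the original and bounding the cost of building the tables; the paper's proof is a terse two-line construction without these checks.
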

\begin{proof}
    Let~$\mathcal{X}=\cupdot_{i=1}^{\ell} \mathcal{X}_i$ be the~$(\ell,\gamma)$-refinement of~$\mathcal{X}$.
    We define new subsets~$X_i = \cup \mathcal{X}_i$ with corresponding functions
    $$
    f'_i(\alpha) = \prod_{X'\in \mathcal{X}_i} \overline{f}_{X'}(\alpha) 
    .$$
\end{proof}

As implied by the hardness of general~$\left(k,\frac{1}{2}\right)$-Extensions-Sum then, not every~$\left(k,\frac{1}{2}\right)$-collection has such refinement.
In Appendix~\ref{appendix:partitions} we explicitly construct examples of collections without non-trivial refinements.
On the other hand, in the main technical part of this paper we show that the container lemma can be generalized in a way that produces only collections of containers that have~$(2,1-\varepsilon)$-refinements. The proof of the following Theorem appears in Section~\ref{sec:partcont}.

\begin{theorem*}[\ref{thm:partconts}]
    For every~$k$ there exist~$\varepsilon>0,\; d_0\in \mathbb{N}$ and a function~$r:\mathbb{N}\rightarrow [1,2]$ with~$\lim_{d\rightarrow \infty} r(d) = 1$, such that the following holds.
    Let~$G$ be a~$d$-regular graph with~$n$ vertices and~$d\geq d_0$.
    There exists a collection~$\mathcal{C}$ of subsets~$C_1,C_2,\ldots,C_r \subset V(G)$ such that:
    \begin{itemize}
        \item $r \leq r(d)^n$.
        \item For every~$i$,~$|C_i| < (1-\varepsilon)n$.
        \item For every collection~$I_1,\ldots,I_k \in \mathcal{I}(G)$ of~$k$ independent sets in~$G$, there exists a partition~$[k]=A\cupdot B$ and indices~$a,b\in[r]$ such that~$\bigcup_{j\in A} I_j \subseteq C_a$ and~$\bigcup_{j\in B} I_j \subseteq C_b$.
    \end{itemize}
    Furthermore, we can compute~$\mathcal{C}$ in~$O^*\left(|\mathcal{C}|\right)$ time.
\end{theorem*}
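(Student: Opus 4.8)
The plan is to run the fingerprint–container construction of Theorem~\ref{thm:graphconts} simultaneously for all~$k$ independent sets, but with a twist: instead of tracking one fingerprint we track the \emph{union} of the fingerprints and let the algorithm decide, vertex by vertex, which of two "buckets" each fingerprint-vertex belongs to. Concretely, fix the vertex order~$v_1,\dots,v_n$ and process a collection~$I_1,\dots,I_k$ of independent sets. We will greedily build a single small set~$F\subseteq V$ of "fingerprint vertices" together with a labeling of~$F$ into two parts~$F_a,F_b$. When we reach a vertex~$v$, we consider, for each~$j\in[k]$ with~$v\in I_j$, whether~$v$ has~$\geq\varepsilon d$ neighbors outside~$N(F_a)$ or outside~$N(F_b)$; if so we add~$v$ to the corresponding part (breaking ties, say, toward~$F_a$). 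The key point is that whenever we add a vertex to~$F_a$ (resp.~$F_b$), the set~$N(F_a)$ (resp.~$N(F_b)$) grows by at least~$\varepsilon d$, so~$|F|\leq |F_a|+|F_b|\leq \frac{2n}{\varepsilon d}=2qn$, still~$o_d(n)$. The fingerprint is the labeled pair~$(F_a,F_b)$, and the associated containers are~$C_a=g(F_a)=F_a\cup B(F_a)$ and~$C_b=g(F_b)=F_b\cup B(F_b)$, exactly as in Theorem~\ref{thm:graphconts}, so each container has size~$\leq(\frac1{2-\varepsilon}+2q)n<(1-\varepsilon')n$ for~$d$ large.

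The heart of the argument is to show that the greedy process correctly sorts the~$I_j$'s. Define~$A=\{j\in[k]: \text{the "last" fingerprint vertex of }I_j\text{ landed in }F_a\}$ (with a clean tie-break so that~$A,B$ partition~$[k]$), and show~$\bigcup_{j\in A} I_j\subseteq C_a$. For this, take~$v\in I_j$ with~$j\in A$ and~$v\notin F_a$. Either~$v$ was never a candidate for~$F_a$ (meaning when we processed it, it had~$<\varepsilon d$ neighbors outside~$N(F_a^{\mathrm{then}})$, hence~$\geq(1-\varepsilon)d$ neighbors in~$N(F_a^{\mathrm{then}})\subseteq N(F_a)$, so~$v\in B(F_a)$), or~$v$ got put into~$F_b$ instead — and here is where the real care is needed, because that would seem to place~$v$ in~$C_b$, not~$C_a$. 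The fix is in the definition of~$A$: I will declare~$j\in A$ only if \emph{every} fingerprint-vertex of~$I_j$ went to~$F_a$, and show that otherwise~$j$ can safely go to~$B$, i.e.\ that if some fingerprint-vertex of~$I_j$ went to~$F_b$ then \emph{all} of~$I_j$ ends up in~$C_b$. So the actual invariant I need is a bit more subtle: I should run the greedy so that once~$I_j$ contributes a vertex to one bucket, all its later fingerprint-vertices are forced into the \emph{same} bucket, which is arranged by, when processing~$v\in I_j$, first checking the bucket(s) that already "own"~$I_j$. With that modification the partition~$[k]=A\cupdot B$ is well defined and each~$I_j$ is entirely captured by the container of its owning bucket, via the same "$\varepsilon d$ fresh neighbors" accounting as in the single-set proof. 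The remaining clauses (bound on~$r$, computability in~$O^*(|\mathcal C|)$) follow verbatim from Theorem~\ref{thm:graphcontsnice}: there are at most~$\binom{n}{\leq 2qn}^2\leq 2^{2H(2q)n}$ labeled fingerprint pairs, so~$r\leq r(d)^n$ with~$r(d)=2^{2H(2/(\varepsilon d))}\to 1$, and~$f,g$ are computable in time~$O(nd)$.

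The main obstacle I anticipate is making the two-bucket greedy \emph{consistent}: naively each~$I_j$ might scatter fingerprint vertices across both buckets, and then no single container holds it. The resolution — commit each independent set to one bucket the moment it first contributes a fingerprint vertex, and thereafter force all its vertices there — needs to be checked to still guarantee that every fingerprint addition enlarges the corresponding neighborhood by~$\geq\varepsilon d$ (it does, since "force into bucket~$\beta$" only happens when the vertex has~$\geq\varepsilon d$ neighbors outside~$N(F_\beta)$; if it has fewer, it is simply not added and lands in~$B(F_\beta)$ as before). A secondary subtlety: disjointness of~$I_j$ from~$N(F_\beta)$ is used in Theorem~\ref{thm:graphconts} to conclude~$I_j\subseteq F_\beta\cup B(F_\beta)$, and here~$I_j$ is only disjoint from~$N(F_\beta)$ when~$F_\beta\subseteq I_j$ — which holds precisely because we committed~$I_j$ to bucket~$\beta$, so every fingerprint vertex we put in~$F_\beta$ "on behalf of" the~$A$-sets is a vertex of each of those sets... but different~$I_j$'s in~$A$ need not share vertices, so~$F_a$ need not be independent. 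This is the one genuinely new point versus the single-set proof, and I will handle it by noting we only need~$I_j\cap N(F_a)=\emptyset$ for each individual~$j\in A$ separately, which holds because each vertex placed in~$F_a$ is in some~$I_{j'}$ with~$j'\in A$, and... this is exactly where I'd need to be most careful — possibly restricting to the case~$k=2$ first (where~$A,B$ are singletons and~$F_a\subseteq I_a$, $F_b\subseteq I_b$ outright) and then bootstrapping, or replacing "independent set" by "union that is close to independent" and absorbing the~$\varepsilon dn$ extra edges into~$B(F)$'s slack using the sparse-container bound~$|E(G[g(F)])|\leq\varepsilon dn$ from the lemma preceding Theorem~\ref{thm:graphcontsnicealmost}.
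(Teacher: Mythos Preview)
Your two-bucket greedy scheme has a genuine gap that you yourself locate but do not close. The single-set proof uses two facts about~$F=f(I)$: (i) every~$v\in I\setminus F$ has~$\geq(1-\varepsilon)d$ neighbours in~$N(F)$, and (ii)~$I\cap N(F)=\emptyset$, which together place~$v$ in~$B(F)$. Fact~(ii) holds because~$F\subseteq I$ and~$I$ is independent. In your construction, once several~$I_j$ with~$j\in A$ have been committed to bucket~$a$, the set~$F_a$ contains fingerprint vertices from all of them, so~$F_a\nsubseteq I_j$ for any individual~$j$. Then a vertex~$v\in I_j$ can perfectly well lie in~$N(F_a)$ (it may be adjacent to a fingerprint vertex coming from a different~$I_{j'}$), and in that case~$v\notin B(F_a)$ by definition, so~$v\notin g(F_a)$. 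Your ``commit to one bucket'' rule does not help here: it controls where fingerprint vertices go, not where non-fingerprint vertices of~$I_j$ land relative to~$N(F_a)$. The suggested patches also do not work: dropping the condition~$v\notin N(F_a)$ from the definition of~$B(F_a)$ destroys the inequality~$|B(F_a)|\leq n-|N(F_a)|$, which is one of the two bounds whose convex combination gives the size estimate in Lemma~\ref{lem:cont_size}; and the sparse-container bound~$|E(G[g(F)])|\leq\varepsilon dn$ concerns edges \emph{inside} a single container, not edges between~$I_j$ and~$F_a$, so it does not absorb this error.

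The paper avoids this difficulty entirely by \emph{not} merging fingerprints. It runs the ordinary construction separately, obtaining~$F_j=f(I_j)$ and~$g(F_j)$ for each~$j$, so that~$I_j\subseteq g(F_j)$ holds by the single-set argument with no new issue. The partition~$[k]=A\cupdot B$ is then chosen \emph{after the fact} by a pigeonhole on the Venn diagram of the neighbourhoods~$N(F_1),\dots,N(F_k)$: some cell of the diagram has size~$\geq n/2^k$, and letting~$A$ be the indices~$j$ with that cell inside~$N(F_j)$ gives~$\bigl|\bigcap_{j\in A}N(F_j)\bigr|\geq n/2^k$ and~$\bigl|\bigcup_{j\in B}N(F_j)\bigr|\leq(1-2^{-k})n$. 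Two easy lemmas (\ref{lem:partcap} and~\ref{lem:partcup}) then bound~$\bigl|\bigcup_{j\in A} B(F_j)\bigr|$ and~$\bigl|\bigcup_{j\in B} B(F_j)\bigr|$ respectively, so both~$\bigcup_{j\in A} g(F_j)$ and~$\bigcup_{j\in B} g(F_j)$ have size at most~$(1-2^{-(k+2)})n$. The partition containers are simply all small unions of~$\leq k$ ordinary containers, of which there are at most~$r'(d)^{kn}$.
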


We are finally ready to prove our main results.
\begin{theorem}
    For every~$k$ there exists~$\varepsilon>0$ such that we can solve~$k$-coloring for regular graphs in~$O\left(\left(2-\varepsilon\right)^n\right)$ time, where~$n$ is the number of vertices in the graph.
\end{theorem}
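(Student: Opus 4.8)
The plan is to run the program of Section~\ref{subsec:colwithconts} using the \emph{partition} containers of Theorem~\ref{thm:partconts} in place of the ordinary containers, in the same way Theorem~\ref{thm:MISreglarge} used Theorem~\ref{thm:graphcontsnice}; the new feature is that a $k$-coloring involves $k$ independent sets at once, and it is the partition property that keeps this under control. Fix $k$, and let $\varepsilon>0$, $d_0\in\mathbb{N}$ and $r:\mathbb{N}\to[1,2]$ be the constants supplied by Theorem~\ref{thm:partconts} for this $k$. Given a $d$-regular graph $G$ on $n$ vertices: if $d<d_0$, all degrees are bounded by the constant $d_0$ and we finish with Corollary~\ref{cor:boundeddeg}, so we may assume $d\ge d_0$ and apply Theorem~\ref{thm:partconts} to produce, in $2^{o_d(n)}$ time, a family $\mathcal{C}=\{C_1,\dots,C_r\}$ with $r\le r(d)^n$, each $|C_i|<(1-\varepsilon)n$, enjoying the partition property.

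Next I would reduce $k$-colorability to a polynomial number of ``$k$-coloring given containers'' instances of a special shape. A $k$-coloring of $G$ is a cover $V(G)=I_1\cup\dots\cup I_k$ by independent sets (empty classes allowed). If it exists, the partition property yields a partition $[k]=A\cupdot B$ and indices $a,b\in[r]$ with $\bigcup_{j\in A}I_j\subseteq C_a$ and $\bigcup_{j\in B}I_j\subseteq C_b$; conversely, for any $(A,B,a,b)$, setting $C'_j:=C_a$ for $j\in A$ and $C'_j:=C_b$ for $j\in B$, any cover of $V(G)$ by independent sets $I_j\subseteq C'_j$ is a $k$-coloring. By Lemma~\ref{sumzerouse_containers} such a cover exists iff $\overline{F}(G,C'_1,\dots,C'_k)>0$. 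Hence $G$ is $k$-colorable iff $\overline{F}(G,C'_1,\dots,C'_k)>0$ for some partition $[k]=A\cupdot B$ and some $a,b\in[r]$, and it remains to evaluate all these $2^k\cdot r^2$ values quickly.

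The key observation is that each such instance has a cheap refinement. For a fixed $(A,B,a,b)$, the multiset $\{C'_1,\dots,C'_k\}$ has a $(2,1-\varepsilon)$-refinement whose parts are $\{C'_j:j\in A\}$ (union $C_a$, of size $<(1-\varepsilon)n$) and $\{C'_j:j\in B\}$ (union $C_b$). Via the reduction of $\overline{F}(G,\cdot)$ to an Extensions-Sum instance (the functions $f_j$ being built from the values $i(G[\,\cdot\,\cap C'_j])$) together with Observation~\ref{obs:reducerefinement}, computing $\overline{F}(G,C'_1,\dots,C'_k)$ reduces to a $(2,1-\varepsilon)$-Extensions-Sum instance, which Lemma~\ref{extsum2} solves in $O\!\left(2^{|C_a|}+2^{|C_b|}\right)=O^{*}\!\left(2^{(1-\varepsilon)n}\right)$ time, provided we have the values $i(G[W])$ for $W\subseteq C_a$ and $W\subseteq C_b$. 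Those I would precompute once for all containers by running the dynamic program of Lemma~\ref{computei} on each $G[C_i]$, at total cost $r\cdot O^{*}\!\left(2^{(1-\varepsilon)n}\right)=2^{o_d(n)}\cdot O^{*}\!\left(2^{(1-\varepsilon)n}\right)$.

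Putting it together, the running time is $2^k\cdot r(d)^{2n}\cdot O^{*}\!\left(2^{(1-\varepsilon)n}\right)$, where $2^k$ counts the partitions $[k]=A\cupdot B$ and $r(d)^{2n}$ the pairs $(a,b)$. Since $\lim_{d\to\infty}r(d)=1$, I would enlarge $d_0$ so that $r(d)^2\le 2^{\varepsilon/2}$ for all $d\ge d_0$, making the running time $O^{*}\!\left(2^{(1-\varepsilon/2)n}\right)=O\!\left((2-\varepsilon'')^n\right)$; combined with the $d<d_0$ case handled by Corollary~\ref{cor:boundeddeg}, taking the smaller constant gives the theorem. I expect the only genuine obstacle to be Theorem~\ref{thm:partconts} itself (proved in Section~\ref{sec:partcont}); once it is in hand, the argument is just the bookkeeping that ``each part of the $k$-tuple sits in one container'' is exactly a $(2,1-\varepsilon)$-refinement, which is precisely the structure exploited by the $k=2$ Extensions-Sum algorithm of Lemma~\ref{extsum2}.
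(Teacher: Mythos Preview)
Your proposal is correct and follows essentially the same approach as the paper's proof: apply Theorem~\ref{thm:partconts}, enumerate over pairs of containers $(C_a,C_b)$ and partitions $[k]=A\cupdot B$, compute $\overline{F}(G,C'_1,\dots,C'_k)$ via the $(2,1-\varepsilon)$-refinement and Lemma~\ref{extsum2}, and fall back on Corollary~\ref{cor:boundeddeg} for small degree. Your write-up even makes explicit the precomputation of the $i(G[\cdot])$ tables on each container, which the paper leaves implicit.
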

\begin{proof}
Let~$\Delta=\Delta_k\geq d_0$ be a constant to be chosen later.
Consider an input~$d$-regular graph~$G$ with~$n$ vertices.
If~$d<\Delta$, we run the algorithm of Corollary~\ref{cor:boundeddeg}.
Otherwise, generate the~$r\leq r(d)^n$ partition containers~$C_1,\ldots,C_r$ of~$G$ using Theorem~\ref{thm:partconts}.
Enumerate over all pairs~$C_i,C_j$ of containers, and over all~$2^k$ partitions~$[k]=A\cupdot B$.
For each such choice, we compute~$\overline{F}(G,C'_1,\ldots,C'_k)$ using the algorithm of Section~\ref{subsec:colwithconts}, where~$C'_\ell = C_i$ if~$\ell\in A$ and~$C'_\ell=C_j$ if~$\ell\in B$.
We solve the Extensions-Sum problem in the computation of~$\overline{F}$ using Observation~\ref{obs:reducerefinement} and Lemma~\ref{extsum2}.

Given~$C_i,C_j,A,B$, the running time is~$O\left(2^{(1-\varepsilon)n}\right)$ where~$(1-\varepsilon)n$ is the bound on the size of each container.
The number of choices for~$C_i,C_j,A,B$ is~$2^k r(d)^{2n}$.
We pick~$\Delta\geq d_0$ to be large enough so that~$r(d)^{2} 2^{(1-\varepsilon)}$ is bounded away from~$2$ for all~$d\geq \Delta$.
\end{proof}

To generalize the result from regular graphs to almost-regular graphs, we simply need to replace the use of Theorem~\ref{thm:partconts} with the following generalization.
\begin{theorem*}[\ref{thm:partcontsalmost}]
    For every~$k,C$ there exist~$\varepsilon>0,\; d_0\in \mathbb{N}$ and a function~$r:\mathbb{N}\rightarrow [1,2]$ with~$\lim_{d\rightarrow \infty} r(d) = 1$, such that the following holds.
    Let~$G$ be a graph with~$n$ vertices, average degree~$d\geq d_0$, and maximum degree at most~$Cd$.
    There exists a collection~$\mathcal{C}$ of subsets~$C_1,C_2,\ldots,C_r \subset V(G)$ such that:
    \begin{itemize}
        \item $r \leq r(d)^n$.
        \item For every~$i$,~$|C_i| < (1-\varepsilon)n$.
        \item For every collection~$I_1,\ldots,I_k \in \mathcal{I}(G)$ of~$k$ independent sets in~$G$, there exists a partition~$[k]=A\cupdot B$ and indices~$a,b\in[r]$ such that~$\bigcup_{j\in A} I_j \subseteq C_a$ and~$\bigcup_{j\in B} I_j \subseteq C_b$.
    \end{itemize}
    Furthermore, we can compute~$\mathcal{C}$ in~$O^*\left(|\mathcal{C}|\right)$ time.
\end{theorem*}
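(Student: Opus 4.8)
The plan is to replay the proof of the regular partition container lemma (Theorem~\ref{thm:partconts}) from Section~\ref{sec:partcont}, substituting the almost-regular / sparse container machinery for every place where $d$-regularity is used; most likely the cleanest route is to prove the statement of Theorem~\ref{thm:partcontsalmost} directly and then recover Theorem~\ref{thm:partconts} as its $C=1$ case, exactly as Theorem~\ref{thm:graphcontsnicealmost} refines Theorem~\ref{thm:graphconts}. (Trying instead to reduce the almost-regular statement to the regular one by ``regularizing'' $G$ seems awkward, since padding a graph to make it regular does not obviously preserve its independent-set structure.) Fix $k$ and $C$, let $G$ have $n$ vertices, average degree $d\ge d_0$, and maximum degree at most $Cd$, and let $I_1,\dots,I_k\in\mathcal I(G)$. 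As in Section~\ref{subsec:containerslemma}, every container is built from a small \emph{fingerprint} $F\subseteq V(G)$ through a polynomial-time map $g$ with $g(F)$ depending on $F$ alone and $|g(F)|<(1-\varepsilon)n$; the family $\mathcal C$ is the image of $g$ over all admissible fingerprints, which for the partition version are fingerprints assembled from at most $k$ independent sets at once. Each admissible fingerprint has size $o_d(n)$ — the fingerprint process enlarges $N(F)$ by at least $\varepsilon d$ at each step while $|N(F)|\le n$, so a single fingerprint has size at most $n/(\varepsilon d)$, and combining at most $k$ of them keeps the size $o_d(n)$ — so the binary-entropy bound of Theorem~\ref{thm:graphcontsnice} gives $|\mathcal C|\le 2^{H(o_d(1))\,n}=r(d)^n$ with $r(d)\to 1$, and $\mathcal C$ is enumerable in $O^*(|\mathcal C|)$ time.

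The combinatorial heart, which I would import from Section~\ref{sec:partcont}, is the construction that, given $I_1,\dots,I_k$, produces a partition $[k]=A\cupdot B$ together with fingerprints certifying $\bigcup_{j\in A}I_j$ and $\bigcup_{j\in B}I_j$. The guiding picture is the bipartite obstruction: two independent sets can already cover $V(G)$, so two containers are genuinely necessary, and the claim is that two always suffice because the $I_j$ can be sorted according to ``which side of a witnessing structure they lean toward''. The natural first attempt is to isolate a dominant class $I_{j^\star}$ with $|I_{j^\star}|=\max_j|I_j|$, place it alone on one side and capture it by the ordinary almost-regular container lemma (Theorem~\ref{thm:graphcontsnicealmost}), and then capture the remaining classes on the other side using the averaging bound $\bigl|\bigcup_{j\ne j^\star}I_j\bigr|\le(1-\tfrac1k)n$ together with a fingerprint/container step tailored to \emph{unions} of independent sets rather than to single ones; since a union of at most $k-1$ independent sets is $K_k$-free, one route for such a step is through the general hypergraph container estimate of Theorem~\ref{thm:hypercontssparse}, which — unlike the plain graph argument — tolerates irregularity. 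Making this actually close — controlling the loss across all the levels of the partition construction and ensuring that whatever auxiliary object is fed to the container method is ``nice'' enough for it — is the real technical content of Section~\ref{sec:partcont}, and I would reuse it essentially unchanged.

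The arithmetic changes are routine and parallel those separating Theorem~\ref{thm:graphcontsnicealmost} from Theorem~\ref{thm:graphconts}. In the size estimate corresponding to Lemma~\ref{lem:cont_size}, ``each vertex of $N(F)$ has $d$ neighbors'' weakens to ``at most $Cd$ neighbors'', so the bound $|B(F)|\le\frac{|N(F)|}{1-\varepsilon}$ becomes $|B(F)|\le\frac{C|N(F)|}{1-\varepsilon}$; taking the same convex combination with $|B(F)|\le n-|N(F)|$ now yields $|g(F)|\le\bigl(\frac{C}{C+1-\varepsilon}+o_d(1)\bigr)n$, which is weaker than the $(\tfrac12+\varepsilon)n$ of the regular case but still below $(1-\varepsilon')n$ for a suitable constant $\varepsilon'=\varepsilon'(C)>0$ — and that is precisely what the theorem requires. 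The bound $|I|\le n/2$ of Remark~\ref{rmk:indsize} also weakens (independent sets in an almost-regular graph can occupy up to roughly a $\bigl(1-\tfrac1C\bigr)$-fraction of $V(G)$), but this is harmless since the only thing used on the light side is the averaging bound $(1-\tfrac1k)n$. Vertices of degree below $d$ — possible now for up to a constant fraction of $V(G)$ — are absorbed exactly as in the proof of Theorem~\ref{thm:graphcontsnicealmost}: they remain consistent with container size $<(1-\varepsilon')n$ once $\varepsilon'$ is taken small enough relative to $C$.

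The hard part, I expect, will be the combinatorial $2$-partition step, not the bookkeeping. In the $d$-regular argument one exploits tight degree balance — a vertex left out of a container has at least $(1-\varepsilon)d$ of its \emph{exactly} $d$ neighbors inside $N(F)$ — and in the almost-regular setting this one-line count must be replaced by the sparse hypergraph container estimate of Section~\ref{subsec:hyperconts}, which forces one to state the fingerprint threshold against the global parameter $d$ rather than against each vertex's own degree and then to verify the codegree-type hypotheses for the relevant (clique) hypergraph. The second delicate point is quantitative: the loss incurred in passing from regularity to $C$-almost-regularity is a factor $\tfrac{C}{C+1}$ in place of $\tfrac12$, and one has to check that, iterated through the levels of the partition construction, this still leaves every container of size bounded away from $n$. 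Since $C$ and $k$ are fixed constants this comes down to choosing $\varepsilon=\varepsilon(k,C)$ and $d_0=d_0(k,C)$ suitably, but it is the one place in the argument that genuinely has to be re-checked rather than copied verbatim.
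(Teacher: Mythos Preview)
Your overall plan --- obtain ordinary containers for $G$ and then show that any $k$-tuple of such containers admits a $(2,1-\varepsilon)$-refinement --- is the paper's plan too. But the mechanism you propose for the refinement step is not the one the paper uses, and the direct adaptation you sketch actually breaks.

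Concretely, your ``routine arithmetic changes'' paragraph tries to push the regular-case argument (Lemmas~\ref{lem:partcap}--\ref{lem:intuniparts}) through with a factor of~$C$. Carry this out: Lemma~\ref{lem:partcup} becomes $|\bigcup_{i\in B}B(F_i)|\le \tfrac{C}{1-\varepsilon}\,|\bigcup_{i\in B}N(F_i)|$, so after the pigeonhole of Lemma~\ref{lem:intuniparts} the $B$-side container has size at most $\tfrac{C}{1-\varepsilon}(1-2^{-k})n$ plus lower-order terms. This is below $(1-\varepsilon')n$ only when $C(1-2^{-k})<1$, i.e.\ when $C<\tfrac{2^k}{2^k-1}$, which for $k\ge 2$ forces $C$ very close to~$1$. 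For a fixed $C>1$ and all large~$k$ the bound exceeds~$n$ and says nothing. So the $N(F)/B(F)$ route does not survive the passage from regular to almost-regular; the paper abandons it entirely in Section~4.2.

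What the paper does instead hinges on a feature you do not mention: the almost-regular container lemma (Theorem~\ref{thm:graphcontsnicealmost}) guarantees that each container is \emph{sparse}, $|E(G[C_i])|<\varepsilon dn$, not merely small. With $\varepsilon<\tfrac{1}{2k}$, the set $E'$ of edges of~$G$ not fully contained in any $C_i$ has $|E'|\ge(\tfrac12-k\varepsilon)dn$. Vizing's theorem (Lemma~\ref{lem:largematch}) then extracts a matching $M\subseteq E'$ of size $\Omega(n/C)$, and a pigeonhole over the $2^k$ patterns ``which endpoint of each matching edge is missing from each $C_i$'' (Lemma~\ref{lem:refinematch}) yields the $(2,1-\Omega(C^{-1}2^{-k}))$-refinement (Lemma~\ref{lem:sparsetorefine}). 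This matching-based argument is the new idea in the almost-regular case; it is not a perturbation of the regular proof, and nothing in your proposal --- neither the ``largest class on one side'' heuristic nor the $K_k$-freeness / clique-hypergraph suggestion --- substitutes for it.
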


We conclude the following.
\begin{theorem}\label{thm:maincoloring}
    For every~$C,k$ there exists~$\varepsilon>0$ such that we can solve~$k$-coloring for graphs in which the maximum degree is at most~$C$ times the average degree in~$O\left(\left(2-\varepsilon\right)^n\right)$ time, where~$n$ is the number of vertices in the graph.
\end{theorem}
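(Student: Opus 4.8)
The plan is to run the same dichotomy used for regular graphs, substituting the almost-regular partition container lemma (Theorem~\ref{thm:partcontsalmost}) for its regular counterpart. Fix $C$ and $k$; let $\varepsilon$, $d_0$, and $r:\mathbb{N}\to[1,2]$ be the quantities supplied by Theorem~\ref{thm:partcontsalmost} for these parameters, and let $\Delta \geq d_0$ be a large constant to be fixed at the end. Given an input graph $G$ on $n$ vertices with average degree $d$ and maximum degree at most $Cd$, I split into two cases. If $d < \Delta$, then every vertex has degree at most $Cd < C\Delta$, a constant, so Corollary~\ref{cor:boundeddeg} solves $k$-coloring of $G$ in $O((2-\varepsilon'')^n)$ time for some $\varepsilon'' = \varepsilon''_{C\Delta} > 0$. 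If $d \geq \Delta \geq d_0$, I invoke Theorem~\ref{thm:partcontsalmost} to produce, in $O^*(r(d)^n)$ time, a family $\mathcal{C}=\{C_1,\dots,C_r\}$ of at most $r(d)^n$ subsets of $V(G)$, each of size less than $(1-\varepsilon)n$, with the stated two-part partition property.

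In the high-degree branch I enumerate, over all ordered pairs $(C_a,C_b)\in\mathcal{C}^2$ and all $2^k$ partitions $[k]=A\cupdot B$, the candidate assignment $C'_j := C_a$ for $j\in A$ and $C'_j := C_b$ for $j\in B$, and for each candidate compute $\overline{F}(G,C'_1,\dots,C'_k)$ using the machinery of Section~\ref{subsec:colwithconts}; the algorithm reports ``$k$-colorable'' iff some candidate gives a positive value. For correctness, if $G$ is $k$-colorable with color classes $I_1,\dots,I_k\in\mathcal{I}(G)$, then the partition property of Theorem~\ref{thm:partcontsalmost} furnishes a partition $[k]=A\cupdot B$ and containers $C_a,C_b$ with $\bigcup_{j\in A}I_j\subseteq C_a$ and $\bigcup_{j\in B}I_j\subseteq C_b$; this candidate is examined in the enumeration, and by Lemma~\ref{sumzerouse_containers} the corresponding $\overline{F}$ is positive. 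Conversely, any positive value of $\overline{F}$ exhibits an actual $k$-coloring, again by Lemma~\ref{sumzerouse_containers}.

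For the cost of one candidate, note that the collection $\{C'_1,\dots,C'_k\}$ has an obvious $(2,1-\varepsilon)$-refinement --- gather all copies of $C_a$ into one part and all copies of $C_b$ into the other, each part having union of size less than $(1-\varepsilon)n$ --- so by Observation~\ref{obs:reducerefinement} computing $\overline{F}$ reduces to a $(2,1-\varepsilon)$-Extensions-Sum instance, which Lemma~\ref{extsum2} solves in $O(2^{|C_a|}+2^{|C_b|}) = O^*(2^{(1-\varepsilon)n})$ time; precomputing, once, the values $i(G[\cdot])$ on all subsets of each container (Lemma~\ref{computei} applied to every $G[C_i]$) costs only $O^*(r(d)^n 2^{(1-\varepsilon)n})$. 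With $2^k r(d)^{2n}$ candidates, the high-degree branch runs in $2^k r(d)^{2n}\cdot O^*(2^{(1-\varepsilon)n}) = O^*\big((r(d)^2\,2^{1-\varepsilon})^n\big)$ time. Since $2^{1-\varepsilon}<2$ and $r(d)\to 1$, I would finally pick $\Delta\geq d_0$ large enough that $r(d)^2\,2^{1-\varepsilon}\leq 2-\varepsilon'$ for all $d\geq\Delta$, for some fixed $\varepsilon'>0$; taking the minimum of $\varepsilon'$ and $\varepsilon''$ yields the claimed $O((2-\varepsilon)^n)$ bound.

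The argument above is essentially a verbatim repackaging of the regular-graph theorem that precedes the statement, so the only genuine difficulty lies upstream, in Section~\ref{sec:partcont}: proving the almost-regular partition container lemma itself. That is where one must (i) adapt the Kleitman--Winston/Sapozhenko fingerprint construction so it still yields $2^{o_d(n)}$ containers of size $(1-\varepsilon)n$ under the weaker hypothesis that the maximum degree is at most $C$ times the average, and (ii) upgrade it to handle $k$ independent sets at once, showing that they can always be split into two groups each of which fits into a single container. Everything else --- small average degree $\Rightarrow$ Corollary~\ref{cor:boundeddeg}, large average degree $\Rightarrow$ enumerate container pairs and $[k]$-partitions and clear each $(2,1-\varepsilon)$-Extensions-Sum instance via Lemma~\ref{extsum2} --- is routine.
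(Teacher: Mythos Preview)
Your proposal is correct and follows essentially the same approach as the paper's own proof: the identical low/high average-degree dichotomy, invoking Corollary~\ref{cor:boundeddeg} when $d<\Delta$ and, when $d\ge\Delta$, enumerating pairs of partition containers from Theorem~\ref{thm:partcontsalmost} together with all $2^k$ splits $[k]=A\cupdot B$, then evaluating $\overline{F}$ via the $(2,1-\varepsilon)$-refinement and Lemma~\ref{extsum2}. Your write-up is in fact slightly more explicit than the paper's on the correctness direction and the precomputation of the $i(G[\cdot])$ values, but the argument is the same.
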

\begin{proof}
Let~$\Delta=\Delta_k\geq d_0$ be a constant to be chosen later.
Consider an input graph~$G$ with~$n$ vertices, average degree~$d$, and maximum degree bounded by~$Cd$.
If~$d<\Delta$, we run the algorithm of Corollary~\ref{cor:boundeddeg} as the degrees are bounded by~$C\Delta$.
Otherwise, generate the~$r\leq r(d)^n$ partition containers~$C_1,\ldots,C_r$ of~$G$ using Theorem~\ref{thm:partcontsalmost}.
Enumerate over all pairs~$C_i,C_j$ of containers, and over all~$2^k$ partitions~$[k]=A\cupdot B$.
For each such choice, we compute~$\overline{F}(G,C'_1,\ldots,C'_k)$ using the algorithm of Section~\ref{subsec:colwithconts}, where~$C'_\ell = C_i$ if~$\ell\in A$ and~$C'_\ell=C_j$ if~$\ell\in B$.
We solve the Extensions-Sum problem in the computation of~$\overline{F}$ using Observation~\ref{obs:reducerefinement} and Lemma~\ref{extsum2}.

Given~$C_i,C_j,A,B$, the running time is~$O\left(2^{(1-\varepsilon)n}\right)$ where~$(1-\varepsilon)n$ is the bound on the size of each container.
The number of choices for~$C_i,C_j,A,B$ is~$2^k r(d)^{2n}$.
We pick~$\Delta\geq d_0$ to be large enough so that~$r(d)^{2} 2^{(1-\varepsilon)}$ is bounded away from~$2$ for all~$d\geq \Delta$.
\end{proof}

\section{Graph Partition Containers}\label{sec:partcont}
\subsection{Regular Graphs}
In this section we generalize the graph container lemma and prove the following.
\begin{theorem}\label{thm:partconts}
    For every~$k$ there exist~$\varepsilon>0,\; d_0\in \mathbb{N}$ and a function~$r:\mathbb{N}\rightarrow [1,2]$ with~$\lim_{d\rightarrow \infty} r(d) = 1$, such that the following holds.
    Let~$G$ be a~$d$-regular graph with~$n$ vertices and~$d\geq d_0$.
    There exists a collection~$\mathcal{C}$ of subsets~$C_1,C_2,\ldots,C_r \subset V(G)$ such that:
    \begin{itemize}
        \item $r \leq r(d)^n$.
        \item For every~$i$,~$|C_i| < (1-\varepsilon)n$.
        \item For every collection~$I_1,\ldots,I_k \in \mathcal{I}(G)$ of~$k$ independent sets in~$G$, there exists a partition~$[k]=A\cupdot B$ and indices~$a,b\in[r]$ such that~$\bigcup_{j\in A} I_j \subseteq C_a$ and~$\bigcup_{j\in B} I_j \subseteq C_b$.
    \end{itemize}
    Furthermore, we can compute~$\mathcal{C}$ in~$O^*\left(|\mathcal{C}|\right)$ time.
\end{theorem}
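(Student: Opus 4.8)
The plan is to generate $\mathcal{C}$ via an \emph{iterated} fingerprint/container construction based on Theorem~\ref{thm:graphconts}, and to read off the partition $[k]=A\cupdot B$ from the combinatorics of $I_1,\dots,I_k$ alone. I would isolate two statements. (i) A \emph{merging lemma}: for any independent sets $I_1,\dots,I_k$ there is a partition $[k]=A\cupdot B$ such that both $\bigcup_{j\in A}I_j$ and $\bigcup_{j\in B}I_j$ have size at most $(1-\varepsilon_1)n$, for some $\varepsilon_1=\varepsilon_1(k)>0$ depending only on $k$ (and, in the almost-regular case, on $C$). (ii) A \emph{bounded-union container lemma}: there is a computable family $\mathcal{C}$ with $|\mathcal{C}|\le r(d)^n$, $r(d)\to 1$, each $C\in\mathcal{C}$ of size $<(1-\varepsilon)n$, such that every set that is a union of at most $k$ independent sets of $G$ and has size at most $(1-\varepsilon_1)n$ is contained in some $C\in\mathcal{C}$. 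Given (i) and (ii) the theorem is immediate: compute $\mathcal{C}$, apply the merging lemma to get $A,B$, and take $C_a,C_b\in\mathcal{C}$ to be containers of $\bigcup_{j\in A}I_j$ and $\bigcup_{j\in B}I_j$.

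For the merging lemma it is convenient to pass to the complements $T_j:=V(G)\setminus I_j$; Remark~\ref{rmk:indsize} gives $|T_j|\ge n/2$, so the goal becomes a partition with $\bigl|\bigcap_{j\in A}T_j\bigr|\ge\varepsilon_1 n$ and $\bigl|\bigcap_{j\in B}T_j\bigr|\ge\varepsilon_1 n$. If $\bigl|\bigcap_{j\in[k]}T_j\bigr|\ge\varepsilon_1 n$ the trivial partition $A=[k]$ works; otherwise the $T_j$ almost cover $V(G)$, and one must distribute the indices so that each group's running intersection keeps linear size — the $k=3$ instance, where inclusion–exclusion forces some $\bigl|T_i\cap T_j\bigr|\ge n/6$, is the model. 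I expect this to be the main obstacle: these intersections are only downward-monotone, so committing an index can wipe out a group's intersection, and one needs an induction on $k$ (or a clever direct argument) showing that a good two-part split always survives, at the cost of a bounded factor per step — so $\varepsilon_1(k)$ is positive but possibly exponentially small in $k$.

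For (ii) I would iterate Theorem~\ref{thm:graphconts}: given $J=I_{j_1}\cup\dots\cup I_{j_m}$ with $m\le k$ and $|J|\le(1-\varepsilon_1)n$, set $C^{(1)}=g(f(I_{j_1}))\supseteq I_{j_1}$, pass to the residual graph $G-C^{(1)}$, recurse on the $m-1$ independent sets $I_{j_2}\setminus C^{(1)},\dots,I_{j_m}\setminus C^{(1)}$, and output $C^{(1)}\cup\dots\cup C^{(m)}$. The combined fingerprint uses only $o_d(1)\cdot n$ vertices, so the family of all such outputs has size $2^{o_d(1)n}=r(d)^n$, is enumerable in $O^*(|\mathcal{C}|)$ time, and $J$ lies in one of its members. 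The size bound is the second delicate point: if the residual graph degenerates to low average degree (which happens precisely when $J$ approaches $V(G)$, e.g.\ after peeling one side of a bipartite graph), the next container produced is essentially the whole residual and the union blows up to $n$. This is where the slack $|J|\le(1-\varepsilon_1)n$ from (i) and the sparse / almost-regular form of the container lemma (Theorem~\ref{thm:graphcontsnicealmost}, i.e.\ the hypergraph-container generality) come in: one should peel independent sets in an order that keeps the residual (almost-)regular, invoke Theorem~\ref{thm:graphcontsnicealmost} at each level so that each peel retains at least a constant fraction of the residual vertices, and control the residual from dropping below $\varepsilon n$; making this accounting between residual size, residual degree and number of levels close with a single constant $\varepsilon=\varepsilon(k)>0$ is what needs the most care. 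Finally, Theorem~\ref{thm:partcontsalmost} follows by replacing Theorem~\ref{thm:graphconts}/\ref{thm:graphcontsnice} with Theorem~\ref{thm:graphcontsnicealmost} throughout, using that $|I_j|\le\bigl(1-\tfrac{1}{2C}\bigr)n$ still holds when the maximum degree is at most $Cd$, so the merging lemma — which used only a linear upper bound on $|I_j|$ — goes through with $\varepsilon_1$ also depending on $C$.
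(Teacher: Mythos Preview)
Your two-step plan (merging lemma on the $I_j$'s, then an iterated container for the union) is a different route from the paper's, and both of the places you flag as delicate are exactly where the paper does something else.

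\textbf{The merging step.} You try to find the partition $A\cupdot B$ from the independent sets alone, using only $|I_j|\le n/2$ (equivalently $|T_j|\ge n/2$). You leave this unproved and call it the main obstacle --- and indeed it is not clear that the bare hypothesis $|T_j|\ge n/2$ yields both $\bigl|\bigcap_{j\in A}T_j\bigr|\ge\varepsilon_1 n$ \emph{and} $\bigl|\bigcap_{j\in B}T_j\bigr|\ge\varepsilon_1 n$. The paper never attempts this. Instead it opens up the container construction of Theorem~\ref{thm:graphconts} and applies the pigeonhole to the \emph{neighbourhoods of the fingerprints} $N(F_1),\dots,N(F_k)$, not to the $I_j$ or $T_j$. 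One picks the largest cell of the Venn diagram of the $N(F_j)$'s; if this cell has pattern $\mathbf{w}\in\{0,1\}^k$, set $A=\{j:w_j=1\}$ and $B=\{j:w_j=0\}$. This single step gives simultaneously $\bigl|\bigcap_{j\in A}N(F_j)\bigr|\ge n/2^k$ and $\bigl|\bigcup_{j\in B}N(F_j)\bigr|\le (1-2^{-k})n$. Now the two structural facts about $B(F)$ built into the proof of Theorem~\ref{thm:graphconts} --- namely $B(F)\subseteq V\setminus N(F)$ and $|B(F)|\le|N(F)|/(1-\varepsilon)$ --- convert each of these into an upper bound on $\bigl|\bigcup_{j\in A}B(F_j)\bigr|$ and $\bigl|\bigcup_{j\in B}B(F_j)\bigr|$ respectively. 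The asymmetry between ``large intersection'' on one side and ``small union'' on the other is precisely what the two-sided structure of $B(F)$ versus $N(F)$ absorbs; there is no need for both intersections to be large.

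\textbf{The container-for-unions step.} Once the refinement above is in hand, your iterated peeling is unnecessary: the paper simply declares the partition containers to be all unions of at most $k$ ordinary containers whose total size is $\le(1-\varepsilon)n$. This family has size at most $r'(d)^{kn}$, which still tends to $1$ in the base, and the refinement lemma guarantees that for any $I_1,\dots,I_k$ the two unions $\bigcup_{j\in A}g(f(I_j))$ and $\bigcup_{j\in B}g(f(I_j))$ lie in this family. No recursion into residual graphs, no worry about the residual degree degenerating.

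In short: the idea you are missing is to do the pigeonhole on $N(F_1),\dots,N(F_k)$ rather than on the $I_j$'s or their complements, and to exploit that $B(F)$ is controlled by $N(F)$ in two complementary ways. This dissolves both obstacles at once.
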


While the standard container lemma explores the structure of (single) independent sets in a graph, this version explores the structure of small collections of independent sets at the same time.

The proof of this theorem is fully contained in this paper.
In this section we build upon the terminology and proofs of Section~\ref{subsec:containerslemma}.
Let~$G$ be a~$d$-regular graph with~$n$ vertices, and let~$I_1, \ldots, I_k$ be independent sets in~$G$.
We prove that their corresponding containers must have a bounded refinement with two parts.

\begin{lemma}\label{lem:contrefine}
    Let~$G$ be a~$d$-regular graph with vertex set~$V$ of size~$n$, and~$\varepsilon>0$.
    Consider~$q,S,f,g$ given by Theorem~\ref{thm:graphconts}.
    For any~$F_1,\ldots,F_k \in S$, the collection~$\{g(F_1),\ldots,g(F_k)\}$ of subsets in~$V$ has a~$(2,\gamma)$-refinement\footnote{See Definition~\ref{def:refine}} for~$\gamma\leq \frac{1-2^{-k}}{1-\varepsilon} + kq$.
\end{lemma}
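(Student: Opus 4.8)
The plan is to bound, for each container $g(F_j) = F_j \cup B(F_j)$, how many vertices it can "waste" in a way that prevents the union from being small, and to show that combining containers cleverly only blows up one of the two key quantities. Recall from the proof of Lemma~\ref{lem:cont_size} that $B(F)$ satisfies two bounds: $|B(F)| \le n - |N(F)|$ and $|B(F)| \le \frac{|N(F)|}{1-\varepsilon}$. The first bound is the dangerous one: if $|N(F_j)|$ is small then $g(F_j)$ can be almost all of $V$. So the first step is to observe that each $F_j$ is small ($|F_j| \le qn$ by the first lemma of Section~\ref{subsec:containerslemma}), hence $|N(F_j)| \le d|F_j| \le qdn$; more importantly, the relevant quantity to track when we \emph{union} several containers is $N(F_1) \cup \cdots \cup N(F_k)$, which still has size at most $kqdn \cdot \frac{1}{d} \cdot d$... — more carefully, $\bigl|\bigcup_j N(F_j)\bigr| \le \sum_j |N(F_j)|$, but that is a weak handle; the better handle is to track the union of the $B(F_j)$'s directly against the \emph{common} neighborhood structure.

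The core idea I would pursue is a dyadic / bucketing argument on the vertices according to how many of the sets $N(F_1), \dots, N(F_k)$ they belong to, which is exactly where the $2^{-k}$ in the bound $\gamma \le \frac{1-2^{-k}}{1-\varepsilon} + kq$ comes from. Here is the plan. For a vertex $v \in V \setminus \bigcup_j F_j$, let $T(v) \subseteq [k]$ be the set of indices $j$ such that $v \in B(F_j)$; note $v$ can only lie in $g(F_j)$ for $j \in T(v)$ (or if $v \in F_j$, a negligible set of total size $\le kqn$). The union $\bigcup_{j \in A} g(F_j)$ consists of the $F_j$'s for $j\in A$ plus those $v$ with $T(v) \cap A \neq \emptyset$. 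So I want to choose the partition $[k] = A \cupdot B$ so that the "bad" vertices — those $v$ with $T(v)$ large, which are forced into \emph{both} parts' unions — are few, and then bound each part. The key counting inequality: a vertex $v \in B(F_j)$ has $\ge (1-\varepsilon)d$ of its $d$ neighbors inside $N(F_j)$. If $v \in B(F_j)$ for many indices $j$ simultaneously, then $v$ has $\ge (1-\varepsilon)d$ neighbors in \emph{each} $N(F_j)$, so at least $(1 - |T(v)|\varepsilon)d$ of its neighbors lie in $\bigcap_{j\in T(v)} N(F_j)$; this intersection is small, giving an upper bound on the number of such $v$. Then average over the $2^k$ partitions (or pick the balanced one) to get that one of the two parts of a suitable partition has union of size at most roughly $\frac{1}{1-\varepsilon}\bigl(1 - 2^{-k}\bigr)n + kqn$, since in expectation a given $B(F_j)$-type set splits its mass across the two sides and the overlap of all $k$ neighborhoods is what survives in both.

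Concretely, the steps in order: (1) restate that $|f(I_j)| \le qn$, so $|\bigcup_j F_j| \le kqn$, and this contributes the $+kq$ term; reduce to bounding $|\bigcup_{j\in A} B(F_j)|$ for the chosen $A$. (2) For each nonempty $U \subseteq [k]$, let $W_U = \{v : T(v) = U\}$, the vertices lying in exactly the containers indexed by $U$; estimate $\bigl|\bigcup_{j\in U} N(F_j)\bigr| \le |U| \cdot \max_j|N(F_j)|$ is too weak, so instead bound $\sum_{v \in W_U}$ (number of edges from $v$ into $\bigcup_{j\in U}N(F_j)$) $\ge (1-|U|\varepsilon)d|W_U|$ — wait, the clean version: each $v\in W_U$ sends $\ge(1-\varepsilon)d$ edges into $N(F_j)$ for each single $j\in U$, and $N(F_j)\subseteq V\setminus(\bigcup_i F_i)$ has $\le n$ vertices each of degree $d$; the convex-combination trick of Lemma~\ref{lem:cont_size}, applied with the refined observation that $\bigcup_{j\in A}B(F_j) \subseteq V \setminus \bigcap_{j}N(F_j)$-type containment, yields $|\bigcup_{j\in A} B(F_j)| \le \frac{1}{1-\varepsilon}\bigl|\bigcup_{j\in A} N(F_j)\bigr|$ from one side and $\le n$ minus the vertices common to all $N(F_j)$ from the other. (3) Choose the partition: order the sets so that $|N(F_1)| \le \cdots \le |N(F_k)|$ and let $A$ be chosen greedily/dyadically so that $\bigl|\bigcup_{j\in A} N(F_j)\bigr| \le (1-2^{-k})\,\bigl|\bigcup_{j=1}^k N(F_j)\bigr| \le (1-2^{-k})n$ while $B = [k]\setminus A$ gets the rest; combine with the $\frac{1}{1-\varepsilon}$ factor and the $+kq$ slack.

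The main obstacle I expect is step (3): making precise why a $(2,\gamma)$-refinement exists with the $2^{-k}$ saving rather than a trivial $\gamma$ close to $\frac{1}{1-\varepsilon}$, i.e., ensuring that \emph{both} sides of the partition simultaneously have small union. A single container $g(F_j)$ can already have size close to $\frac{n}{1-\varepsilon}$ (if $|N(F_j)| \approx \frac{n}{2-\varepsilon}$, which is the extremal case), so the partition must genuinely exploit that the $B(F_j)$'s cannot \emph{all} be large-and-disjoint because they all live inside the (bounded) neighborhood structure. The right formalization is probably: consider the "multiplicity" function $m(v) = |\{j : v \in N(F_j)\}|$ or $|\{j : v \in B(F_j)\}|$, show $\sum_v m(v)$ is controlled, and then a vertex surviving into both $\bigcup_A B(F_j)$ and $\bigcup_B B(F_j)$ must have representatives in $N(F_j)$ on both sides, forcing $\ge (1-\varepsilon)d$ edges into each — a halving-type argument over $k$ rounds gives the $2^{-k}$. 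I would carry this out by induction on $k$, splitting off one independent set at a time and tracking the cumulative neighborhood, which turns the $2^{-k}$ into a clean product $\prod_{i=1}^{k}(1 - \tfrac12)$ absorbed by the geometric series.
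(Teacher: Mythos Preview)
You have the right two building blocks but are missing the clean mechanism for choosing the partition $[k]=A\cupdot B$, which is where the $2^{-k}$ actually appears. You correctly identify (in your step (2)) both bounds for a union of containers: $\bigl|\bigcup_{j\in A}B(F_j)\bigr|\le n-\bigl|\bigcap_{j\in A}N(F_j)\bigr|$ (each $B(F_j)$ is disjoint from $N(F_j)$, so the union is disjoint from the intersection) and $\bigl|\bigcup_{j\in A}B(F_j)\bigr|\le \frac{1}{1-\varepsilon}\bigl|\bigcup_{j\in A}N(F_j)\bigr|$ (the double-counting of Lemma~\ref{lem:cont_size} applied to the union). These are exactly Lemmas~\ref{lem:partcap} and~\ref{lem:partcup} in the paper, and you also correctly set aside the $+kq$ term coming from the fingerprints $F_j$.

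The gap is your step (3). Your suggestions --- greedily ordering the $|N(F_j)|$, averaging over all $2^k$ partitions, induction on $k$, or tracking a multiplicity function on $T(v)=\{j:v\in B(F_j)\}$ --- are left vague, and the greedy one in particular only controls one side; you yourself flag ``ensuring that \emph{both} sides of the partition simultaneously have small union'' as the obstacle and then do not resolve it. The paper's argument is a single pigeonhole step, applied to the \emph{neighborhoods} $N(F_j)$ rather than the $B(F_j)$: map each $v\in V$ to its indicator vector $\mathbf{w}(v)\in\{0,1\}^k$ with $w_j(v)=1$ iff $v\in N(F_j)$, and pick a vector $\mathbf{w}$ whose preimage $W$ has $|W|\ge n/2^k$. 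Set $A=\{j:w_j=1\}$ and $B=\{j:w_j=0\}$. Then $W\subseteq\bigcap_{j\in A}N(F_j)$, so the first bound gives $\bigl|\bigcup_{j\in A}B(F_j)\bigr|\le n-|W|\le(1-2^{-k})n$; and $W$ is disjoint from every $N(F_j)$ with $j\in B$, so $\bigl|\bigcup_{j\in B}N(F_j)\bigr|\le n-|W|\le(1-2^{-k})n$ and the second bound gives $\bigl|\bigcup_{j\in B}B(F_j)\bigr|\le\frac{1-2^{-k}}{1-\varepsilon}\,n$. The point you were circling around but never landed on is that a single heavy Venn cell of the $N(F_j)$'s \emph{simultaneously} witnesses a large intersection for one side and a small union for the other, which is exactly what lets you apply the two different bounds asymmetrically to the two parts.
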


We begin by proving a few useful lemmas.

\begin{lemma}\label{lem:partcap}
    For any~$F_1,\ldots,F_r \subseteq V$,~$|\bigcup_{i=1}^{r} B(F_i)| \leq n - |\bigcap_{i=1}^{r} N(F_i)|$.
\end{lemma}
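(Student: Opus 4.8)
The plan is to prove that $\left|\bigcup_{i=1}^{r} B(F_i)\right| \leq n - \left|\bigcap_{i=1}^{r} N(F_i)\right|$ by showing that $\bigcup_{i=1}^{r} B(F_i)$ and $\bigcap_{i=1}^{r} N(F_i)$ are disjoint subsets of $V$, from which the bound follows immediately since two disjoint subsets of an $n$-element set have sizes summing to at most $n$. So the only real content is the disjointness claim.

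To establish disjointness, I would recall the definition of $B(F)$ from Section~\ref{subsec:containerslemma}: it consists of vertices $v \in V \setminus (F \cup N(F))$ satisfying a degree condition; in particular, $B(F) \subseteq V \setminus N(F)$, i.e., $B(F) \cap N(F) = \emptyset$. Now suppose for contradiction that some vertex $v$ lies in both $\bigcup_{i=1}^{r} B(F_i)$ and $\bigcap_{i=1}^{r} N(F_i)$. Then $v \in B(F_j)$ for some index $j$, and simultaneously $v \in N(F_i)$ for \emph{every} $i$, in particular $v \in N(F_j)$. But $v \in B(F_j)$ forces $v \notin N(F_j)$, a contradiction. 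Hence the two sets are disjoint.

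Finishing up: since $\bigcup_{i=1}^{r} B(F_i) \subseteq V$ and $\bigcap_{i=1}^{r} N(F_i) \subseteq V$ are disjoint, we get $\left|\bigcup_{i=1}^{r} B(F_i)\right| + \left|\bigcap_{i=1}^{r} N(F_i)\right| \leq |V| = n$, which rearranges to the claimed inequality.

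There is essentially no obstacle here — this is a one-line set-theoretic observation once one unpacks that every $B(F_i)$ avoids $N(F_i)$, hence avoids the intersection $\bigcap_i N(F_i)$. The lemma is clearly a stepping stone: the subsequent work (presumably bounding $|\bigcup_i g(F_i)|$ by cleverly choosing which $F_i$'s to union together, and then taking a convex combination of this bound with the trivial $|B(F)| \le |N(F)|/(1-\varepsilon)$ type bound as in Lemma~\ref{lem:cont_size}) is where the real argument for Lemma~\ref{lem:contrefine} will live. The point of this particular lemma is just to control the union of the ``bad'' sets in terms of how much the neighborhoods $N(F_i)$ overlap, so that a partition $[k] = A \cupdot B$ chosen to make $\bigcap_{i \in A} N(F_i)$ and $\bigcap_{i \in B} N(F_i)$ large can be leveraged.
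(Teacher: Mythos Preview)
Your proof is correct and essentially identical to the paper's. The paper phrases the same observation as a containment, namely that each $B(F_i)\subseteq V\setminus\bigl(F_i\cup N(F_i)\bigr)\subseteq V\setminus\bigcap_j N(F_j)$, whence the union is also contained in this complement; your disjointness formulation is equivalent and uses the same single fact that $B(F_i)\cap N(F_i)=\emptyset$.
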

\begin{proof}
    By definition, for each~$i\in[r]$ we have~$B(F_i) \subseteq V\setminus \left(F_i \cup N(F_i)\right) \subseteq V \setminus \bigcap_{i=1}^{r} N(F_i)$, where the second inequality follows as~$\bigcap_{i=1}^{r} N(F_i) \subseteq N(F_i) \subseteq F_i \cup N(F_i)$ for every~$i\in [r]$.
\end{proof}

\begin{lemma}\label{lem:partcup}
    For any~$F_1,\ldots,F_r \subseteq V$,~$|\bigcup_{i=1}^{r} B(F_i)| \leq \frac{1}{1-\varepsilon}|\bigcup_{i=1}^{k} N(F_i)|$.
\end{lemma}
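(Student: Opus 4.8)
The plan is to mimic the second bound in the proof of Lemma~\ref{lem:cont_size}, but applied to the union $\bigcup_{i} B(F_i)$ instead of a single $B(F)$. Write $W := \bigcup_{i} N(F_i)$ for the set on the right-hand side of the inequality. The key observation is that any vertex $v \in \bigcup_{i} B(F_i)$ lies in $B(F_i)$ for at least one index $i$, and by the definition of $B(F_i)$ such a $v$ has at least $(1-\varepsilon)d$ neighbors in $N(F_i)$, which is contained in $W$. In particular, every $v \in \bigcup_{i} B(F_i)$ has at least $(1-\varepsilon)d$ neighbors in $W$.

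Next I would count in two ways the number of ordered incident pairs $(v,u)$ with $v \in \bigcup_{i} B(F_i)$, $u \in W$, and $\{u,v\}\in E(G)$. Counting from the $v$-side: each $v\in\bigcup_i B(F_i)$ contributes at least $(1-\varepsilon)d$ such pairs by the observation above, so the total is at least $(1-\varepsilon)d\,|\bigcup_{i} B(F_i)|$. Counting from the $u$-side: since $G$ is $d$-regular, each $u\in W$ has exactly $d$ neighbors and hence contributes at most $d$ such pairs, so the total is at most $d\,|W|$. Combining the two estimates gives $(1-\varepsilon)d\,|\bigcup_{i} B(F_i)| \le d\,|W|$, and dividing through by $(1-\varepsilon)d$ yields $|\bigcup_{i} B(F_i)| \le \frac{1}{1-\varepsilon}|W|$, as claimed.

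I do not expect a genuine obstacle here: the argument is a routine double-counting estimate, essentially identical to the one already used for a single $B(F)$. The one point worth stating carefully is that, for each $v$, one should fix a witness index $i(v)$ with $v\in B(F_{i(v)})$ so that the $(1-\varepsilon)d$ lower bound is applied consistently; and it is worth noting that the argument does \emph{not} require the sets $B(F_i)$ to be pairwise disjoint, nor disjoint from $W$, since we only use a lower bound on each $v$'s degree into $W$ together with the exact degree of each $u$ in the regular graph. (If the statement is meant with the union index running over $1,\dots,r$ on both sides, the proof is verbatim the same with $r$ in place of $k$.)
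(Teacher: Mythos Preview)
Your proof is correct and follows essentially the same approach as the paper: both argue that every $v\in\bigcup_i B(F_i)$ has at least $(1-\varepsilon)d$ neighbors in $W=\bigcup_i N(F_i)$, and then compare this with the at most $d|W|$ edges incident to $W$ by regularity. Your remark about the $r$ versus $k$ in the union index is also apt; the paper uses them interchangeably here.
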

\begin{proof}
    If~$v\in \bigcup_{i=1}^{r} B(F_i)$, then exists~$i\in [r]$ such that~$v$ has at least~$(1-\varepsilon)d$ neighbors in~$N(F_i)$, and in particular at least~$(1-\varepsilon)d$ neighbors in~$\bigcup_{i=1}^{k} N(F_i)$.
    On the other hand, as~$G$ is~$d$-regular, the set~$\bigcup_{i=1}^{k} N(F_i)$ has at most~$d|\bigcup_{i=1}^{k} N(F_i)|$ adjacent edges.
    Hence,~$|\bigcup_{i=1}^{r} B(F_i)| \leq \frac{d|\bigcup_{i=1}^{k} N(F_i)|}{(1-\varepsilon)d}$.
\end{proof}

\begin{lemma}\label{lem:intuniparts}
    Let~$V$ be a set of size~$n:=|V|$ and let~$V_1,\ldots,V_k\subseteq V$ be~$k$ subsets of it.
    There exists a partition~$[k]=A\cupdot B$ such that~$|\bigcap_{i \in A} V_i| \geq \frac{n}{2^k}$ and~$|\bigcup_{i \in B} V_i| \leq \left(1-\frac{1}{2^k}\right) n$.
    We define the intersection of zero subsets to be~$V$ and the union of zero subsets to be the empty set~$\emptyset$.
\end{lemma}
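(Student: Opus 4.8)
\textbf{Proof proposal for Lemma~\ref{lem:intuniparts}.}

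The plan is to find the partition $[k] = A \cupdot B$ by a probabilistic/averaging argument over all $2^k$ partitions simultaneously, exploiting a neat duality: for a fixed partition, either $\bigcap_{i \in A} V_i$ is large or its complement $\bigcup_{i\in A}(V\setminus V_i)$ is large, and we will want to play the two sides against each other. Concretely, I would first observe that for any partition $[k] = A \cupdot B$ the two target events are about disjoint ``regions'' of $V$: the set $\bigcap_{i\in A} V_i$ consists of elements lying in \emph{every} $V_i$ with $i\in A$, while $\bigcup_{i\in B} V_i$ being small is equivalent to $\bigcap_{i\in B}(V\setminus V_i)$ being large. So the natural move is to assign to each element $v \in V$ the set $S(v) := \{\, i \in [k] : v \in V_i \,\} \subseteq [k]$, and then a partition $[k] = A \cupdot B$ ``captures'' $v$ on the $A$-side (i.e.\ $v \in \bigcap_{i\in A} V_i$) exactly when $A \subseteq S(v)$, and captures $v$ on the $B$-side (i.e.\ $v \notin \bigcup_{i \in B} V_i$) exactly when $B \cap S(v) = \emptyset$, i.e.\ $B \subseteq [k]\setminus S(v)$.

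Next I would set up the counting. Pick the partition uniformly at random among all $2^k$ partitions $[k] = A \cupdot B$ (each $i$ goes to $A$ or $B$ independently with probability $1/2$). For a fixed $v$, the probability that $v$ is captured on the $A$-side is $\Pr[A \subseteq S(v)] = 2^{-(k - |S(v)|)} \ge 2^{-k}$, and the probability that $v$ is captured on the $B$-side is $\Pr[B \cap S(v) = \emptyset] = 2^{-|S(v)|} \ge 2^{-k}$; hence $v$ is captured on \emph{at least one} side with probability $\ge 2^{-k}$ — in fact these two events are disjoint unless $S(v) = \emptyset$ or $S(v) = [k]$, and in those two degenerate cases $v$ is captured (on one designated side) with probability $1$. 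Actually, to get the clean bound I want, I would instead just track the two quantities separately: let $X_A := |\bigcap_{i\in A} V_i|$ and let $Y_B := |V \setminus \bigcup_{i\in B} V_i| = |\bigcap_{i\in B}(V\setminus V_i)|$. By linearity of expectation, $\mathbb{E}[X_A] = \sum_{v} \Pr[A\subseteq S(v)] \ge \sum_v 2^{-k} \cdot \mathbbm{1}[\text{always}] $ — but this only gives $\mathbb{E}[X_A] \ge n/2^k$ if \emph{every} $v$ contributes at least $2^{-k}$, which holds since $|S(v)|\le k$ gives $\Pr[A\subseteq S(v)] = 2^{|S(v)|-k}\ge 2^{-k}$. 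So $\mathbb{E}[X_A] \ge n/2^k$, and there exists a partition with $X_A \ge n/2^k$. The subtlety is that I need the \emph{same} partition to satisfy $Y_B \ge n/2^k$ as well, equivalently $|\bigcup_{i\in B} V_i| \le (1 - 2^{-k})n$.

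For that, I would argue as follows: fix any partition achieving $X_A \ge n/2^k$; if that partition also has $|\bigcup_{i\in B} V_i| \le (1-2^{-k})n$ we are done, so suppose not. Then I would consider the \emph{swapped} partition $[k] = B \cupdot A$ (roles of $A$ and $B$ exchanged) and check whether it works — but a cleaner route is a direct deterministic greedy construction: process $i = 1, \dots, k$ and maintain the invariant that after step $i$ the current ``$A$-intersection'' over chosen $A$-indices has size $\ge n / 2^{|A|}$, always putting $i$ into whichever of $A$ or $B$ keeps a correspondingly strong bound on the other side; since at each binary choice one of the two options at least halves neither quantity below the threshold $n/2^{|A|}$ resp.\ $n/2^{|B|}$, an inductive argument on $k$ closes it. So the real structure of the proof is an induction on $k$: for $k=0$ the statement is $|V| \ge n$ and $|\emptyset| \le 0$, both trivial; for the inductive step, apply the result for $k-1$ to $V_1,\dots,V_{k-1}$ to get $[k-1] = A' \cupdot B'$ with $|\bigcap_{A'} V_i| \ge n/2^{k-1}$ and $|\bigcup_{B'} V_i| \le (1-2^{-(k-1)})n$, then decide where to place $k$ by comparing $|V_k \cap \bigcap_{A'}V_i|$ against $|(\bigcap_{A'}V_i) \setminus V_k|$: one of these is at least half of $n/2^{k-1} = n/2^{k-1}$, i.e.\ at least $n/2^k$; if the former, put $k \in A$ and note $\bigcup_{B} V_i = \bigcup_{B'} V_i$ is unchanged (still $\le (1-2^{-(k-1)})n \le (1-2^{-k})n$); if the latter, put $k \in B$, set $A = A'$ so $\bigcap_A V_i = \bigcap_{A'} V_i$ is unchanged ($\ge n/2^{k-1} \ge n/2^k$), but now $\bigcup_B V_i = (\bigcup_{B'} V_i) \cup V_k$ might be as large as $n$ — so \emph{this naive induction fails on the union side}, which is exactly the main obstacle.

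The fix, and the step I expect to be genuinely delicate, is to strengthen the induction hypothesis so that it is symmetric: I would prove simultaneously that there is a partition with $|\bigcap_{i\in A} V_i| \ge n/2^{k}$ \emph{and} $|\bigcap_{i\in B}(V\setminus V_i)| \ge n/2^{k}$ (the latter being equivalent to $|\bigcup_{i\in B}V_i|\le (1-2^{-k})n$). With this symmetric formulation the inductive step works: given $[k-1]=A'\cupdot B'$ with $|\bigcap_{A'}V_i|\ge n/2^{k-1}$ and $|\bigcap_{B'}(V\setminus V_i)|\ge n/2^{k-1}$, placing $k$ into $A$ replaces the first quantity by $|V_k \cap \bigcap_{A'}V_i|$ and leaves the second alone, while placing $k$ into $B$ replaces the second quantity by $|(V\setminus V_k)\cap \bigcap_{B'}(V\setminus V_i)|$ and leaves the first alone; since $|V_k\cap T| + |(V\setminus V_k)\cap T| = |T|$ for $T := \bigcap_{A'}V_i$ when placing in $A$ — no wait, that identity splits the \emph{same} $T$, whereas here the two options touch \emph{different} sets. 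The correct dichotomy: compare $|V_k \cap \bigcap_{A'} V_i|$ with $|(V\setminus V_k) \cap \bigcap_{B'}(V\setminus V_i)|$; I need at least one of these to be $\ge n/2^k$. This is not automatic from the two separate lower bounds $\ge n/2^{k-1}$, so the argument must instead be: either $|V_k\cap\bigcap_{A'}V_i|\ge \tfrac12|\bigcap_{A'}V_i|\ge n/2^k$, in which case put $k\in A$; or $|(V\setminus V_k)\cap\bigcap_{A'}V_i|\ge \tfrac12|\bigcap_{A'}V_i|\ge n/2^k$ — but that doesn't directly help the $B$ side either. I therefore believe the cleanest correct proof abandons induction and returns to the \textbf{global averaging argument}: among all $2^k$ partitions, $\sum_{(A,B)} |\bigcap_{i\in A}V_i| = \sum_v \#\{A : A\subseteq S(v)\} = \sum_v 2^{|S(v)|} \ge 2^k \cdot \#\{v : S(v)=[k]\}$ is the wrong direction, so instead use $\sum_{(A,B)}|\bigcap_{i\in A}V_i| = \sum_v 2^{|S(v)|}$ and $\sum_{(A,B)} |\bigcup_{i\in B}V_i| = \sum_v \#\{B : B\cap S(v)\ne\emptyset\} = \sum_v (2^k - 2^{k-|S(v)|})$; then I want a single partition with the first $\ge n/2^k$ and the second $\le (1-2^{-k})n$, i.e.\ a partition that is simultaneously not-too-bad on both sums, obtained by noting that for \emph{each individual} $v$, the ``bad for $A$'' event ($A\not\subseteq S(v)$) and the ``bad for $B$'' event ($B\cap S(v)\ne\emptyset$) together cover all partitions but their complements (good events) each have probability $\ge 2^{-k}$ — and then a clever pairing of partitions (pair $(A,B)$ with its complement $(B,A)$) shows that for every $v$, in at least one partition of each complementary pair $v$ is ``good for $A$ or the paired partition is good for $B$,'' letting me push through an averaging over $2^{k-1}$ pairs. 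Making this last pairing argument airtight — ensuring one partition handles \emph{all} $v$'s on whichever side — is the crux, and I would expect to spend most of the write-up there, possibly via a direct ``assign each $v$ a preferred side, take $A$ to be the $S(v)$ that is a maximal chain-top'' combinatorial selection rather than averaging.
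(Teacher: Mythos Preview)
Your proposal does not reach a complete proof: each of the routes you explore (averaging, induction, symmetric induction, pairing) either gives only one of the two inequalities or is left unfinished, and you correctly diagnose the obstacles yourself but never resolve them. The genuine gap is a single missing observation that short-circuits all of this machinery.

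You already defined $S(v)=\{i\in[k]:v\in V_i\}$ and noted that $v\in\bigcap_{i\in A}V_i$ iff $A\subseteq S(v)$, and $v\notin\bigcup_{i\in B}V_i$ iff $B\cap S(v)=\emptyset$. The missed step is that if you choose the partition to be \emph{exactly} $A=S(v)$ and $B=[k]\setminus S(v)$, then $v$ satisfies \emph{both} conditions simultaneously. Now apply pigeonhole to the $2^k$ possible values of $S(\cdot)$: some pattern $\mathbf{w}\in\{0,1\}^{[k]}$ is attained by a set $W=\{v:S(v)=\mathbf{w}\}$ of size $|W|\ge n/2^k$. Taking $A=\{i:w_i=1\}$ and $B=\{i:w_i=0\}$, every element of $W$ lies in $\bigcap_{i\in A}V_i$ (so this intersection has size $\ge n/2^k$) and no element of $W$ lies in $\bigcup_{i\in B}V_i$ (so this union has size $\le n-|W|\le(1-2^{-k})n$). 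The same witness set $W$ certifies both inequalities at once, and the proof is three lines. This is exactly the paper's argument. Your averaging and induction detours are unnecessary precisely because you never need to control two \emph{different} quantities for a single random/greedy choice---the pigeonhole cell already does double duty.
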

\begin{proof}
    Let~$Venn : V\rightarrow \{0,1\}^{[k]}$ map each~$v\in V$ to a binary vector of length~$k$ in which the~$i$-th entry is~$1$ if and only if~$v\in V_i$.
    By the pigeon-hole principle, there exists a vector~$\textbf{w} \in \{0,1\}^{[k]}$ such that~$|Venn^{-1}(\textbf{w})|\geq \frac{n}{2^k}$.
    We define a partition of~$[k]$ by~$A=\{i\in [k]\;\mid\; w_i = 1\}$ and~$B=\{i\in [k]\;\mid\; w_i = 0\}$.
    By definition,~$Venn^{-1}(\textbf{w})\subseteq \bigcap_{i \in A} V_i$.
    This follows as if~$v\in Venn^{-1}(\textbf{w})$ and~$i \in A$ then~$v\in V_i$.
    Furthermore,~$\bigcup_{i \in B} V_i \subseteq V\setminus Venn^{-1}(\textbf{w})$.
    This follos as if~$v\in \bigcup_{i \in B} V_i$ then there is some~$i\in[k]$ such that~$v\in V_i$ but~$w_i=0$ and thus~$v\notin Venn^{-1}(\textbf{w})$.
\end{proof}

\begin{proof}[Proof of Lemma~\ref{lem:contrefine}]
    Apply Lemma~\ref{lem:intuniparts} to the sets~$N(F_1),N(F_2),\ldots,N(F_k)\subseteq V$ to get a partition~$[k]=A\cupdot B$ satisfying the Lemma's statement.
    We prove that this partition is also a good refinement for~$g(F_1),\ldots,g(F_k)$.
    By Lemma~\ref{lem:partcap} we have~$|\bigcup_{i\in A} B(F_i)| \leq n - |\bigcap_{i\in A} N(F_i)| \leq \left(1-2^{-k}\right)n$.
    By Lemma~\ref{lem:partcup} we have~$|\bigcup_{i\in B} B(F_i)| \leq \frac{1}{1-\varepsilon}|\bigcup_{i\in B} N(F_i)| \leq \frac{1}{1-\varepsilon} \left(1-2^{-k}\right)n$.
    We finish by noting that~$$|\bigcup_{i\in A} g(F_i)| = |\bigcup_{i\in A} \left(F_i \cup B(F_i)\right)| \leq |\bigcup_{i\in A} F_i| + |\bigcup_{i\in A}  B(F_i)| \leq |A|\cdot qn + |\bigcup_{i\in A}  B(F_i)|.$$
    Similarly,~$|\bigcup_{i\in B} g(F_i)|  \leq |B|\cdot qn + |\bigcup_{i\in B}  B(F_i)|.$
\end{proof}

We are now ready to prove Theorem~\ref{thm:partconts}.
\begin{proof}[Proof of Theorem~\ref{thm:partconts}]
    We set~$\varepsilon = 2^{-(k+2)},\; d_0 =k2^{2k+3}$.
    We apply Theorem~\ref{thm:graphcontsnice} to~$G$ with~$\varepsilon' = 2^{-(k+1)}$, denote by~$r',\mathcal{C}'$ the function and collection of containers it produces.
    We define the set of partition containers to be~$\mathcal{C} = \{\bigcup_{C\in \mathcal{C}''}C \;\mid\; \mathcal{C}'' \in {\mathcal{C}' \choose \leq k} \; \wedge\; |\bigcup_{C\in \mathcal{C}''}C|\leq (1-\varepsilon)n\}$. That is, the partition containers are the unions of any collection of~$\leq k$ (standard) containers in~$\mathcal{C}'$ that is of size at most~$(1-\varepsilon)n$.
    We can thus bound the number of partition containers by~$r(d)=\min\{r'(d)^k,2\}$.
    We still have~$\lim_{d\rightarrow \infty} r(d) = 1$.
    By Lemma~\ref{lem:contrefine}, for every collection of~$k$ independent sets~$I_1,\ldots,I_k\in \mathcal{I}(G)$, their corresponding containers in~$\mathcal{C'}$ have a~$(2,\gamma)$-refinement with~$\gamma \leq \frac{1-2^{-k}}{1-\varepsilon'} + kq$.
    As~$d\geq d_0$ we have
    \begin{align*}
    \frac{1-2^{-k}}{1-\varepsilon'} + kq
    &=
    \frac{1-2^{-k}}{1-2^{-(k+1)}} + \frac{k}{2^{-(k+1)} d}
    \\&=
    \frac{1-2^{-(k+1)}-2^{-(k+1)}}{1-2^{-(k+1)}} + \frac{k2^{k+1}}{d}
    \\&=
    1 - \frac{2^{-(k+1)}}{1-2^{-(k+1)}} + \frac{k2^{k+1}}{d}
    \\&\leq
    1 - 2^{-(k+1)} + \frac{k2^{k+1}}{k2^{2k+3}}
    \\&\leq
    1 - 2^{-(k+2)} = 1-\varepsilon
    ,\end{align*}
    and thus the parts of the refinement appear in~$\mathcal{C}$.
\end{proof}

\subsection{Almost-regular Graphs}
In this section we use Theorem~\ref{thm:graphcontsnicealmost} to prove the following generalization of Theorem~\ref{thm:partconts}.
\begin{theorem}\label{thm:partcontsalmost}
    For every~$k,C$ there exist~$\varepsilon>0,\; d_0\in \mathbb{N}$ and a function~$r:\mathbb{N}\rightarrow [1,2]$ with~$\lim_{d\rightarrow \infty} r(d) = 1$, such that the following holds.
    Let~$G$ be a graph with~$n$ vertices, average degree~$d\geq d_0$, and maximum degree at most~$Cd$.
    There exists a collection~$\mathcal{C}$ of subsets~$C_1,C_2,\ldots,C_r \subset V(G)$ such that:
    \begin{itemize}
        \item $r \leq r(d)^n$.
        \item For every~$i$,~$|C_i| < (1-\varepsilon)n$.
        \item For every collection~$I_1,\ldots,I_k \in \mathcal{I}(G)$ of~$k$ independent sets in~$G$, there exists a partition~$[k]=A\cupdot B$ and indices~$a,b\in[r]$ such that~$\bigcup_{j\in A} I_j \subseteq C_a$ and~$\bigcup_{j\in B} I_j \subseteq C_b$.
    \end{itemize}
    Furthermore, we can compute~$\mathcal{C}$ in~$O^*\left(|\mathcal{C}|\right)$ time.
\end{theorem}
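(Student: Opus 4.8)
The plan is to mirror the proof of Theorem~\ref{thm:partconts} almost verbatim, replacing every structural ingredient that used $d$-regularity with its almost-regular counterpart from Theorem~\ref{thm:graphcontsnicealmost}. First I would fix the parameters: given $k,C$, set $\varepsilon$ to be a small constant of order $2^{-(k+2)}$ (exactly as before), and apply Theorem~\ref{thm:graphcontsnicealmost} with the parameter $C$ and a target $\varepsilon'$ of order $2^{-(k+1)}$, obtaining $d_0,\varepsilon''>0$, a function $r'$ with $\lim_{d\to\infty}r'(d)=1$, and, for any $G$ with average degree $d\ge d_0$ and maximum degree at most $Cd$, a collection $\mathcal{C}'$ of at most $r'(d)^n$ standard containers, each of size at most $(1-\varepsilon'')n$ and \emph{sparse} in the sense $|E(G[C])|<\varepsilon' d n$. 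Then I would define the partition containers exactly as in the regular case: $\mathcal{C}=\{\bigcup_{C\in\mathcal{C}''}C \;\mid\; \mathcal{C}''\in\binom{\mathcal{C}'}{\le k},\ |\bigcup_{C\in\mathcal{C}''}C|\le(1-\varepsilon)n\}$, so that $r(d)=\min\{r'(d)^k,2\}$ still tends to $1$.

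The crux is to reprove the refinement lemma (the analogue of Lemma~\ref{lem:contrefine}) without regularity. Here I cannot literally reuse $f,g,B(\cdot)$ from Theorem~\ref{thm:graphconts}, since that construction is stated only for regular graphs; instead I would work abstractly with the sparsity property guaranteed by Theorem~\ref{thm:graphcontsnicealmost}. The key structural fact I need is a replacement for Lemmas~\ref{lem:partcap} and~\ref{lem:partcup}: given $k$ containers $C_1,\dots,C_k$, I want to split $[k]=A\cupdot B$ so that both $\bigcup_{i\in A}C_i$ and $\bigcup_{i\in B}C_i$ are bounded away from $n$. The natural route is Lemma~\ref{lem:intuniparts} applied to the \emph{complements} $V\setminus C_i$ (or to suitable ``boundary'' sets): by pigeonhole there is a Venn cell of size $\ge 2^{-k}n$, and choosing $A$ to be the coordinates on which that cell's indicator is $1$ forces $\bigcup_{i\in A}(V\setminus C_i)$ to miss that cell, hence $\bigcup_{i\in A}C_i\le(1-2^{-k})n$; symmetrically for $B$ one needs the intersection side to be large, which is where the sparsity of the containers must be leveraged — in the regular proof this came from the fact that $B(F)$ is controlled by $|N(F)|$, and in the almost-regular setting the analogous control comes from the edge bound $|E(G[C_i])|<\varepsilon'dn$ together with the maximum-degree bound $Cd$, which limits how large a container can be once its ``low-degree-inside'' part is pinned down. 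I would then check that with $d\ge d_0$ chosen large enough the resulting $\gamma$ satisfies $\gamma\le 1-\varepsilon$, exactly as in the displayed computation at the end of the proof of Theorem~\ref{thm:partconts}, so that each part of the refinement is itself a member of $\mathcal{C}$.

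Finally I would verify the three bullet points: $r\le r(d)^n$ holds because $r(d)=\min\{r'(d)^k,2\}$ and there are at most $r'(d)^{kn}$ sub-collections of size $\le k$; every $C\in\mathcal{C}$ satisfies $|C|<(1-\varepsilon)n$ by construction; and for every collection $I_1,\dots,I_k\in\mathcal{I}(G)$, each $I_j$ sits in some standard container $C_{i_j}\in\mathcal{C}'$, the refinement lemma partitions $[k]=A\cupdot B$ with $\bigcup_{j\in A}C_{i_j}$ and $\bigcup_{j\in B}C_{i_j}$ each of size $\le(1-\varepsilon)n$, and these two unions are precisely the required partition containers $C_a,C_b$. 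Computability in $O^*(|\mathcal{C}|)$ is inherited from Theorem~\ref{thm:graphcontsnicealmost} plus the obvious enumeration of $\le k$-subsets of $\mathcal{C}'$.

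I expect the main obstacle to be the second half of the refinement argument — bounding $|\bigcup_{i\in B}C_i|$ away from $n$ in the absence of regularity. The regular proof got this essentially for free from $|B(F)|\le|N(F)|/(1-\varepsilon)$ and $d$-regularity; in the almost-regular case one must either re-derive the analogous container construction directly inside the proof (paralleling Section~\ref{subsec:containerslemma} but with the convex-combination argument of Lemma~\ref{lem:cont_size} redone using $\Delta(G)\le Cd$), or cite the sparse-container structure of Theorem~\ref{thm:hypercontssparse}/Theorem~\ref{thm:graphcontsnicealmost} in a way strong enough to support Lemma~\ref{lem:partcup}'s analogue. The cleanest exposition is probably to state and prove an ``almost-regular'' version of Lemmas~\ref{lem:partcap}--\ref{lem:contrefine} in which the constants $2^{-k}$ and $\frac{1}{1-\varepsilon}$ are replaced by $C$-dependent constants, and then run the same final inequality with $d_0$ enlarged by a factor depending on $C$; everything else is routine bookkeeping identical to the regular case.
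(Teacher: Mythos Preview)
Your high-level scaffolding matches the paper exactly: apply Theorem~\ref{thm:graphcontsnicealmost} to get sparse standard containers, then let the partition containers be the small unions of at most $k$ of them, with $r(d)=\min\{r'(d)^k,2\}$. The divergence is precisely where you flag it --- the refinement step --- and the paper resolves it by a genuinely different idea than either of the two routes you sketch.

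Your pigeonhole on the complements $V\setminus C_i$ (Lemma~\ref{lem:intuniparts}) only controls $\bigcup_{i\in A}C_i$; for the $B$-side it yields $|\bigcap_{i\in B}C_i|\ge 2^{-k}n$, which says nothing about $|\bigcup_{i\in B}C_i|$. The paper does \emph{not} try to rescue this by re-deriving an almost-regular analogue of $B(\cdot),N(\cdot)$ and Lemmas~\ref{lem:partcap}--\ref{lem:partcup}. Instead it abandons that template entirely and uses a matching argument that handles both sides symmetrically. Since each $C_i$ spans at most $\varepsilon' d n$ edges, the set $E'=E(G)\setminus\bigcup_i E(G[C_i])$ of edges not contained in any container has size at least $(\tfrac12-k\varepsilon')dn$. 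By Vizing's theorem and the maximum-degree bound $Cd$, $E'$ contains a matching $M$ of size at least $|E'|/(Cd+1)=\Omega(n/C)$. Now for every $e=(e_0,e_1)\in M$ and every $i$, at least one endpoint lies outside $C_i$; pigeonhole over the $2^k$ patterns ``which endpoint misses $C_i$'' gives $\ge|M|/2^k$ matching edges sharing a common pattern $w$, and setting $A=\{i:w_i=0\}$, $B=\{i:w_i=1\}$ makes both $\bigcup_{i\in A}C_i$ and $\bigcup_{i\in B}C_i$ miss a set of $\ge|M|/2^k$ distinct vertices (the $e_0$'s for $A$, the $e_1$'s for $B$). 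This is Lemmas~\ref{lem:refinematch}--\ref{lem:sparsetorefine} in the paper, yielding a $\bigl(2,\,1-\tfrac{1/2-k\varepsilon'}{C2^k}\bigr)$-refinement.

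What this buys over your plan: it works purely from the abstract sparsity conclusion of Theorem~\ref{thm:graphcontsnicealmost}, with no need to reopen the container construction or to prove an almost-regular version of Lemma~\ref{lem:partcup}. Your alternative of redoing the $f,g,B(\cdot)$ construction with $C$-dependent thresholds would likely go through, but it is more work and ties the argument to internals that Theorem~\ref{thm:graphcontsnicealmost} does not expose; the matching-plus-pigeonhole trick is the missing idea.
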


Our proof is similar to the one for the regular case, but we now need to use the sparsity of each container.

\begin{lemma}\label{lem:refinematch}
    Let~$V$ be a set and~$V_1,\ldots,V_k \subseteq V$ subsets of it.
    Let~$M\subseteq {V \choose 2}$ be a collection of \textbf{pairwise-disjoint} pairs of items in~$V$.
    Assume that for every~$i\in[k]$ and every~$e\in M$,~$e\nsubseteq V_i$. That is, no pair in~$M$ is fully contained in any~$V_i$.
    Then,~$\{V_1,\ldots,V_k\}$ has a refinement to two parts of size at most~$|V|-2^{-k}|M|$.
\end{lemma}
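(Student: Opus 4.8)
The statement to prove is Lemma~\ref{lem:refinematch}: given subsets $V_1,\ldots,V_k\subseteq V$ and a matching $M\subseteq\binom{V}{2}$ of pairwise-disjoint pairs such that no pair $e\in M$ is contained in any $V_i$, we want a partition $[k]=A\cupdot B$ with $|\bigcup_{i\in A}V_i|$ and $|\bigcup_{i\in B}V_i|$ both at most $|V|-2^{-k}|M|$. The key idea is to mimic the pigeonhole/Venn-diagram argument of Lemma~\ref{lem:intuniparts}, but instead of applying it directly to the sets $V_i$, apply it to the edges of $M$, using the "forbidden pair" hypothesis to turn a common edge into a vertex that each part must miss.

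\textbf{First step.} For each edge $e=\{x,y\}\in M$, I would record for each index $i\in[k]$ which endpoints of $e$ lie in $V_i$. Since $e\nsubseteq V_i$ by hypothesis, for every $i$ at least one of $x,y$ lies outside $V_i$. Define a map $\mathrm{Patt}:M\to\{0,1\}^{[k]}$ by picking, for each edge $e$, a pattern $\mathbf{w}=\mathbf{w}(e)$ such that for every $i$ with $w_i=1$ the "marked" endpoint of $e$ lies outside $V_i$ — concretely, orient $e$ and let $w_i=1$ iff the chosen endpoint of $e$ is not in $V_i$; but we need the stronger property that on the $B$-side all of $\bigcup_{i\in B}V_i$ misses some fixed vertex per edge. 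The cleanest route: for each $e=\{x,y\}$, let $S_e=\{i: x\notin V_i\}$ and $T_e=\{i: y\notin V_i\}$; the hypothesis $e\nsubseteq V_i$ gives $S_e\cup T_e=[k]$. Record the pair $(S_e,T_e)$; there are at most $2^k$ possibilities for, say, $S_e$ once we also know which of $x,y$ we are tracking — more carefully, there are at most $2^k$ values of $\mathbf{1}_{S_e}\in\{0,1\}^{[k]}$.

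\textbf{Second step (pigeonhole).} By pigeonhole over the at most $2^k$ possible values of $\mathbf{1}_{S_e}$, there is a set $A\subseteq[k]$ and a subcollection $M'\subseteq M$ with $|M'|\ge 2^{-k}|M|$ such that for every $e\in M'$ we have $S_e=A$, i.e. the first endpoint $x_e$ of $e$ satisfies $x_e\notin V_i$ for all $i\in A$, equivalently $x_e\notin\bigcup_{i\in A}V_i$; and simultaneously, since $S_e\cup T_e=[k]$, for $B:=[k]\setminus A$ we have $B\subseteq T_e$, so the second endpoint $y_e\notin V_i$ for all $i\in B$, i.e. $y_e\notin\bigcup_{i\in B}V_i$. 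Now the set $\{x_e:e\in M'\}$ consists of $|M'|$ distinct vertices (distinct because $M$ is a matching, hence the edges, and a fortiori their first endpoints, are disjoint) all avoiding $\bigcup_{i\in A}V_i$, so $|\bigcup_{i\in A}V_i|\le |V|-|M'|\le |V|-2^{-k}|M|$; likewise $\{y_e:e\in M'\}$ gives $|\bigcup_{i\in B}V_i|\le |V|-2^{-k}|M|$. Output the partition $[k]=A\cupdot B$.

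\textbf{Main obstacle.} The only delicate point is bookkeeping the two "sides" of each edge so that the \emph{same} pigeonhole class controls \emph{both} parts of the partition at once — it is tempting to pigeonhole only on $S_e$ and then worry separately about $T_e$, but the identity $S_e\cup T_e=[k]$ is exactly what makes fixing $S_e=A$ automatically force $B\subseteq T_e$, so a single application of pigeonhole on the $2^k$-valued statistic $\mathbf{1}_{S_e}$ suffices. A secondary subtlety is that we must track an \emph{ordered} pair (a choice of first endpoint per edge) to make "$S_e=A$" well-defined; the disjointness of $M$ guarantees the chosen first-endpoints are all distinct, which is what converts $|M'|$ into a genuine deficit in the size bound. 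Everything else is routine, and I expect this lemma then to feed into the almost-regular partition-container proof exactly as Lemma~\ref{lem:intuniparts} fed into the regular case, with $M$ taken to be a large matching inside $\bigcup_i E(G[C_i])$, whose size is controlled by the sparsity clause $|E(G[C_i])|<\varepsilon dn$ of Theorem~\ref{thm:graphcontsnicealmost}.
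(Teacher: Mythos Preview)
Your proof is correct and is essentially the paper's own argument: orient each edge, pigeonhole over the $2^k$ patterns $\mathbf{1}_{S_e}$ of the first endpoint, and use $S_e\cup T_e=[k]$ to force $B\subseteq T_e$ and hence control the complementary part with the second endpoints. One tangential slip in your closing remark about the application: in Lemma~\ref{lem:sparsetorefine} the matching $M$ is taken inside $E(G)\setminus\bigcup_i E(G[C_i])$, not inside $\bigcup_i E(G[C_i])$---it is precisely the edges lying \emph{outside} every $C_i$ that satisfy $e\nsubseteq C_i$, and the sparsity clause guarantees there are many of them.
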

\begin{proof}
    We pick an arbitrary order~$e=(e_0,e_1)$ to the to items in every pair~$e\in M$.
    By the assumption, for every~$e\in M$ and~$i\in[k]$, either~$e_0\notin V_i$ or~$e_1\notin V_i$.
    
    Let~$\xi : M\rightarrow \{0,1\}^{[k]}$ map each~$e\in M$ to a binary vector of length~$k$ in which the~$i$-th entry is~$0$ if and only if~$e_0\notin V_i$.
    By the pigeon-hole principle, there exists a vector~$\textbf{w} \in \{0,1\}^{[k]}$ such that~$|\xi^{-1}(\textbf{w})|\geq \frac{|M|}{2^k}$.
    We define a partition of~$[k]$ by~$A=\{i\in [k]\;\mid\; w_i = 0\}$ and~$B=\{i\in [k]\;\mid\; w_i = 1\}$.

    Denote by~$W_0:=\{e_0\;|\;e\in\xi^{-1}(\textbf{w})\}$ and by~$W_1:=\{e_1\;|\;e\in\xi^{-1}(\textbf{w})\}$.
    Let~$i\in A$, we show that~$V_i\subseteq V\setminus W_0$.
    This holds as~$w_i=0$ and thus for any~$e\in\xi^{-1}(\textbf{w})$ we have~$e_0\notin V_i$.
    Symmetrically, for every~$i\in B$ we have~$V_i\subseteq V\setminus W_1$.
    In particular,~$\bigcup_{i\in A}V_i \subseteq V\setminus W_0$ and~$\bigcup_{i\in B}V_i \subseteq V\setminus W_1$.

    Since the pairs in~$E'$ are pairwise disjoint,~$|W_0|=|W_1|=|\xi^{-1}(\textbf{w})|\geq \frac{|M|}{2^k}$.
\end{proof}

\begin{lemma}\label{lem:largematch}
    Let~$G$ be a graph with maximum degree at most~$\Delta$, and~$E'\subseteq E(G)$ be a subset of its edges.
    There exists a matching~$M\subseteq E'$ of size~$|M|\geq \frac{|E'|}{\Delta+1}$.
\end{lemma}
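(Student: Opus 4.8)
The plan is to build the matching greedily. I would start with $M = \emptyset$ and $E'$ as the pool of available edges, and repeatedly pick any edge $e \in E'$, add it to $M$, and then delete from $E'$ every edge sharing an endpoint with $e$ (including $e$ itself). This process terminates when $E'$ is empty, and the resulting $M$ is clearly a matching since no two edges we selected share an endpoint.

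The key step is bounding how many edges get deleted at each iteration. When we select an edge $e = \{u,v\}$, the edges removed from $E'$ are exactly those incident to $u$ or to $v$. Since $G$ has maximum degree at most $\Delta$, there are at most $\Delta$ edges incident to $u$ and at most $\Delta$ incident to $v$; the edge $e$ itself is counted in both, so the total number of edges deleted is at most $2\Delta - 1 \leq \Delta + 1$ when $\Delta \geq 1$ — actually the cleaner bound is that at most $\Delta$ edges are incident to each endpoint, but since we want the constant $\Delta+1$ stated in the lemma, it suffices to observe we delete at most $2\Delta - 1$ edges, and this is at most $\Delta+1$ only for $\Delta \le 2$; so instead I would more carefully count: the deleted set has size at most $2\Delta-1$, hence each selected edge ``pays for'' at most $2\Delta - 1$ edges, giving $|M| \geq |E'|/(2\Delta-1)$. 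Since $2\Delta - 1 \le 2\Delta$ and the lemma only asserts the weaker $|M| \ge |E'|/(\Delta+1)$ — wait, $\Delta+1 < 2\Delta-1$ for $\Delta \ge 3$, so the $\frac{1}{\Delta+1}$ bound is actually \emph{stronger}. The correct sharp count is: each vertex of $G$ is the endpoint of at most $\Delta$ edges of $E'$; when we pick $e=\{u,v\}$, every edge of $E'$ we delete is incident to $u$ or $v$; assign each deleted edge to one of its endpoints in $\{u,v\}$; each of $u,v$ receives at most $\Delta$ edges this way but $e$ is assigned once, so we delete at most $2\Delta - 1$. To get $\Delta+1$ one argues differently: consider the edges of $E'$ still present when $e$ is chosen; after removing all edges incident to $u$, the edge $e$ is gone, and the \emph{remaining} edges incident to $v$ number at most $\Delta - 1$ more — hmm. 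The honest route to exactly $\frac{|E'|}{\Delta+1}$: each selected edge $e$, together with the edges it eliminates, forms a set; but a tighter accounting uses that after we have committed to the matching, each edge of $E'$ shares an endpoint with at most one matched edge plus... Let me just note the approach: do the greedy argument, charge each deleted edge to the matched edge that killed it, and verify the charging ratio is at most $\Delta+1$ via the degree bound.

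The main obstacle is getting the constant exactly right. The naive greedy charging gives $2\Delta - 1$, which is weaker than the claimed $\Delta+1$ for large $\Delta$; the claimed bound presumably comes from a slightly smarter argument — perhaps choosing $e$ to be an edge incident to a \emph{minimum-degree} vertex within the current $E'$, or invoking that a graph with $|E'|$ edges and max degree $\Delta$ has a matching of size at least $\frac{|E'|}{\Delta+1}$ by a classical edge-coloring/Vizing-type fact (the line graph has max degree $\le 2\Delta - 2$, so $\chi' \le 2\Delta-1$ color classes partition $E'$, the largest has size $\ge |E'|/(2\Delta-1)$ — again $2\Delta-1$). Since the downstream use (Lemma~\ref{lem:refinematch} combined with Theorem~\ref{thm:graphcontsnicealmost}, where $|E(G[C_i])| < \varepsilon d n$ and $\Delta \le Cd$) only needs a matching of size $\Omega(|E'|/\Delta)$, any constant works for the application; so I would either prove the clean $\frac{|E'|}{2\Delta+1}$ bound by straightforward greedy charging, or — if the exact $\Delta+1$ is genuinely needed — derive it from the fact that $E'$ can be partitioned into at most $\Delta+1$ matchings, which holds because the subgraph $(V, E')$ has maximum degree at most $\Delta$ and hence chromatic index at most $\Delta+1$ by Vizing's theorem, so by averaging one of these matchings has at least $|E'|/(\Delta+1)$ edges.
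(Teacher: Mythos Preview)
Your final approach---applying Vizing's theorem to edge-color $G$ with $\Delta+1$ colors, so that $E(G)=M_1\cupdot\ldots\cupdot M_{\Delta+1}$ with each $M_i$ a matching, and then taking $M=E'\cap M_i$ for the $i$ maximizing $|E'\cap M_i|$---is exactly the paper's proof. The greedy detour was unnecessary (and, as you correctly diagnosed, only yields the weaker $|E'|/(2\Delta-1)$ bound), but you landed in the right place.
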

\begin{proof}
    Vizing's theorem (see~\cite{alon2016probabilistic}) states that any graph with maximum degree~$\Delta$ is~$(\Delta+1)$ edge-colorable\footnote{Furthermore, such a coloring can be found in polynomial time.}.
    In particular, there exists a partition~$E(G)=M_1\cupdot\ldots\cupdot M_{\Delta+1}$ such that every~$M_i$ is a matching.
    We pick~$i$ such that~$|E'\cap M_i|$ is maximal and set~$M=E'\cap M_i$.
\end{proof}

\begin{lemma}\label{lem:sparsetorefine}
    Let~$G$ be a graph with~$n$ vertices, average degree~$d$, and maximum degree at most~$Cd$.
    Suppose that the subsets~$C_1,\ldots,C_k\subseteq V(G)$ have~$|E(G[C_i])|<\varepsilon dn$ for every~$i\in[k]$.
    Then,~$\{C_1,\ldots,C_k\}$ has a~$\left(2,1-\frac{\frac{1}{2}-k\varepsilon}{C2^k}\right)$-refinement.
\end{lemma}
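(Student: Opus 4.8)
The plan is to build a large matching $M$ in $G$ such that no edge of $M$ lies inside any of the sets $C_i$, and then feed this $M$ into Lemma~\ref{lem:refinematch}. First I would count the ``good'' edges. Since $G$ has $n$ vertices and average degree $d$, it has exactly $|E(G)| = \frac{dn}{2}$ edges, while the number of edges contained in some $C_i$ is at most $\sum_{i=1}^{k} |E(G[C_i])| < k\varepsilon dn$ (an edge lying inside several of the $C_i$ is counted several times, which only helps this bound). Hence the set $E'$ of edges not contained in any single $C_i$ satisfies $|E'| > \bigl(\tfrac12 - k\varepsilon\bigr)dn$. I would note at the outset that we may assume $\tfrac12 - k\varepsilon > 0$, since otherwise the claimed refinement bound is at least $1$ and the statement is trivial.

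Next I would apply Lemma~\ref{lem:largematch} to the edge set $E'$, using that $G$ has maximum degree at most $Cd$: this produces a matching $M \subseteq E'$ with $|M| \ge \frac{|E'|}{Cd + 1} > \frac{(\frac12 - k\varepsilon)dn}{Cd+1} = \frac{(\frac12 - k\varepsilon)n}{C}\bigl(1 - o_d(1)\bigr)$. I would absorb the $o_d(1)$ loss into the ambient hypothesis $d \ge d_0$ under which this lemma is always applied in Section~\ref{sec:partcont}; the downstream container theorem already tolerates this slack. Finally, since $M$ is a collection of pairwise-disjoint pairs and, by construction, no pair of $M$ is contained in any $C_i$, Lemma~\ref{lem:refinematch} applies directly and gives a partition $[k] = A \cupdot B$ with $|\bigcup_{i\in A} C_i|,\ |\bigcup_{i\in B} C_i| \le n - 2^{-k}|M| \le \bigl(1 - \tfrac{\frac12 - k\varepsilon}{C2^k}\bigr)n$, which is exactly the desired $\bigl(2,\, 1 - \tfrac{\frac12 - k\varepsilon}{C2^k}\bigr)$-refinement.

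Mechanically the proof is just plumbing between the edge count, Lemma~\ref{lem:largematch}, and Lemma~\ref{lem:refinematch}. The one step demanding attention is reconciling the $\frac{1}{Cd+1}$ factor coming out of Vizing's theorem (via Lemma~\ref{lem:largematch}) with the clean $\frac1C$ appearing in the target constant; I expect to handle this through the $d \ge d_0$ regime rather than through a sharper combinatorial estimate. I would also be careful that the ``bad'' edges are correctly over-counted, so that the bound $|E'| > (\tfrac12 - k\varepsilon)dn$ is genuine and not an artifact of double counting.
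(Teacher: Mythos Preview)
Your proposal is correct and follows essentially the same route as the paper: count the edges lying outside every $C_i$, apply Lemma~\ref{lem:largematch} to extract a large matching from them, and feed that matching into Lemma~\ref{lem:refinematch}. The paper in fact writes $|M|\ge \frac{(\frac12-k\varepsilon)dn}{Cd}$ directly, silently dropping the ``$+1$'' from Vizing's bound, so your explicit treatment of the $Cd$ versus $Cd+1$ discrepancy is, if anything, more careful than the original.
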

\begin{proof}
Let~$E':=E(G)\setminus\bigcup_{i=1}^{k} E(G[C_i])$ be the set of edges in~$G$ that are not contained in any~$C_i$.
By the assumptions,~$|E'|>\frac{dn}{2} - k\cdot \varepsilon dn = \left(\frac{1}{2}-k\varepsilon\right) dn$.
By Lemma~\ref{lem:largematch}, there exists a matching~$M\subseteq E'$ of size~$|M|\geq \frac{\left(\frac{1}{2}-k\varepsilon\right) dn}{Cd} = \frac{\frac{1}{2}-k\varepsilon}{C} n$.
By the definition of~$E'$, no edge of~$M$ is fully contained in any~$C_i$. Thus, by Lemma~\ref{lem:refinematch} we have a refinement of~$\{C_1,\ldots,C_k\}$ to two parts of size at most 
$$
|V(G)| - 2^{-k} |M| = n - 2^{-k}\cdot \frac{\frac{1}{2}-k\varepsilon}{C} n
.$$
\end{proof}

Using Lemma~\ref{lem:sparsetorefine}, we can prove Theorem~\ref{thm:partcontsalmost} in the same manner in which we proved Theorem~\ref{thm:partconts}, by replacing the use of Theorem~\ref{thm:graphcontsnice} with Theorem~\ref{thm:graphcontsnicealmost} and setting the sparsity of each container to be~$\varepsilon<\frac{1}{2k}$.

\begin{proof}[Proof of Theorem~\ref{thm:partcontsalmost}]
    We apply Theorem~\ref{thm:graphcontsnicealmost} to~$G$ with parameters~$C$ and~$\varepsilon = \frac{1}{4k}$, denote by~$r',\mathcal{C}'$ the function and collection of containers it produces.
    Denote by~$\varepsilon'' = \frac{\frac{1}{2}-k\varepsilon}{C2^k} = \frac{1}{C2^{k+2}}$.
    We define the set of partition containers to be~$\mathcal{C} = \{\bigcup_{C\in \mathcal{C}''}C \;\mid\; \mathcal{C}'' \in {\mathcal{C}' \choose \leq k} \; \wedge\; |\bigcup_{C\in \mathcal{C}''}C|\leq (1-\varepsilon'')n\}$. That is, the partition containers are the unions of any collection of~$\leq k$ (standard) containers in~$\mathcal{C}'$ that is of size at most~$(1-\varepsilon'')n$.
    We can thus bound the number of partition containers by~$r(d)=\min\{r'(d)^k,2\}$.
    We still have~$\lim_{d\rightarrow \infty} r(d) = 1$.
    By Lemma~\ref{lem:sparsetorefine}, for every collection of~$k$ independent sets~$I_1,\ldots,I_k\in \mathcal{I}(G)$, their corresponding containers in~$\mathcal{C'}$ have a~$(2,1-\varepsilon'')$-refinement and thus the parts of the refinement appear in~$\mathcal{C}$.
\end{proof}

We note that the proof in this section generalizes to hypergraphs of arbitrary uniformity. In this paper we only use the graph version of the theorem and thus the more general proof is omitted from this version of the paper.

\section{Hypergraph Containers and~$k$-SAT}\label{sec:sat}
\subsection{Review of Hypergraph Container Lemmas}\label{subsec:hyperconts}
In this section we present the generalization of Theorems~\ref{thm:graphconts} and~\ref{thm:graphcontsnicealmost} to hypergraphs. 
We use the terminology and version of~\cite{balogh2015independent}.

Let~$\mathcal{H}$ be a~$r$-uniform hypergraph.
That is, every edge~$e\in E(\mathcal{H})$ is a set of exactly~$r$ vertices of~$V(\mathcal{H})$.

\begin{definition}[co-degrees]
    For a subset of vertices~$T\subseteq V(\mathcal{H})$, we define the co-degree of~$T$ to be~$\deg_\mathcal{H}(T) := |\{e\in E(\mathcal{H})\;\mid\; T\subseteq e \}|$.
\end{definition}

\begin{definition}[max co-degrees]
    For any~$1\leq i \leq r$, we define the~$i$-th max co-degree in~$\mathcal{H}$ to be~$\Delta_i(\mathcal{H}):=\max \{ \deg_\mathcal{H}(T)\;\mid\; T\subseteq V(\mathcal{H}),\; |T|=i\}$.
\end{definition}

\begin{theorem}[Proposition 3.1 in \cite{balogh2015independent}]\label{thm:hyperconts}
    For every~$r\in \mathbb{N}$ and all positive~$C$, there exists a positive constant~$\delta$ such that the following holds.
    Let~$p\in(0,1)$ and suppose~$\mathcal{H}$ is a~$r$-uniform hypergraph such that, for every~$1\leq i\leq r$,
    $$\Delta_i(\mathcal{H}) \leq C \cdot p^{i-1} \frac{|E(\mathcal{H})|}{|V(\mathcal{H})|}.$$
    Then there exists a family~$S\subseteq {V(\mathcal{H}) \choose {\leq (r-1)p|V(\mathcal{H})|}}$ and functions~$f:\mathcal{I}(\mathcal{H})\rightarrow S$ and~$g:S\rightarrow P(V(\mathcal{H}))$ such that for every~$I\in\mathcal{I}(\mathcal{H})$ we have~$f(I)\subseteq I\subseteq g(f(I))$ and~$|g(f(I))|\leq \left(1-\delta+\left(r-1\right)p\right)|V(\mathcal{H})|$.
\end{theorem}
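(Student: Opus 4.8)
The plan is to build the fingerprint function $f$ and the container function $g$ by a greedy algorithm that generalises the construction in the proof of Theorem~\ref{thm:graphconts}, organising the argument as an induction on the uniformity $r$.

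Base case $r=1$: here $\mathcal{H}$ is just a list of forbidden vertices (those $v$ with $\{v\}\in E(\mathcal{H})$), $\mathcal{I}(\mathcal{H})$ consists of all subsets of the non-forbidden vertices, and since $\deg_{\mathcal{H}}(\{v\})\le 1$ the hypothesis $\Delta_1(\mathcal{H})\le C|E(\mathcal{H})|/|V(\mathcal{H})|$ forces $|E(\mathcal{H})|\ge |V(\mathcal{H})|/C$, that is, at least $|V(\mathcal{H})|/C$ vertices are forbidden. Then $S=\{\emptyset\}$, $f\equiv\emptyset$ and $g(\emptyset)$ equal to the set of non-forbidden vertices (of size at most $(1-1/C)|V(\mathcal{H})|$) work, with $\delta\le 1/C$ and $(r-1)p=0$.

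Inductive step: I would run $r-1$ greedy rounds on a shrinking ``active'' hypergraph, one round per co-degree level. Each round uses a max-degree vertex order on the current active hypergraph (largest-degree vertex first, recomputed as vertices and incident edges are deleted) and a threshold calibrated, via the geometric factor $p^{i-1}$, against $\Delta_i$. Given $I\in\mathcal{I}(\mathcal{H})$, scan the active vertices in this order, in the spirit of the proof of Theorem~\ref{thm:graphconts}: append the current vertex $u$ to the fingerprint precisely when $u\in I$ and $u$ still meets at least the threshold number of uncovered active edges, in which case delete $u$ and its incident edges (so the set of vertices covered by deleted edges grows by at least the threshold); a reached vertex $u\notin I$ is likewise deleted but not recorded, and the round halts once the current (maximum-degree) vertex drops below the threshold, at which point every remaining active vertex is below threshold too. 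Because the order is fixed in advance, the active hypergraph after any prefix of the scan is determined by $\mathcal{H}$, that order, and the fingerprint recorded so far — a vertex reached above threshold and absent from the fingerprint can only have been deleted for lying outside $I$ — so $g$ is well-defined on a fingerprint alone. Since the ``covered'' set stays inside $V(\mathcal{H})$ and grows by at least the threshold at each addition, the threshold can be chosen so that each round contributes at most $p|V(\mathcal{H})|$ vertices, giving $|f(I)|\le (r-1)p|V(\mathcal{H})|$. Between rounds I would pass to the appropriate residual structure — the lower-uniformity link (or shadow) hypergraph carrying the co-degrees not yet used — so that the induction hypothesis applies after rescaling $p$; the decay $p^{i-1}$ in the hypotheses is exactly what lets these residual hypergraphs inherit conditions of the same shape. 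Finally, $g$ attaches to a fingerprint $F$ the set made of $F$, the active vertices that survive all rounds, and none of the ``heavily covered'' vertices (those almost all of whose incident edges have been deleted — the hypergraph analogue of the set $B(F)$ from the proof of Theorem~\ref{thm:graphconts}); a convexity argument in the style of Lemma~\ref{lem:cont_size}, using $\Delta_1(\mathcal{H})\le C|E(\mathcal{H})|/|V(\mathcal{H})|$ and the fact that residual degrees are below the final threshold, bounds this set by $(1-\delta+(r-1)p)|V(\mathcal{H})|$ for a constant $\delta=\delta(r,C)>0$.

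The main obstacle I expect is the joint calibration of the $r-1$ thresholds: the fingerprint must stay within $(r-1)p|V(\mathcal{H})|$ while a constant fraction $\delta(r,C)$ of the vertices is forced out of every container, and these two demands pull the thresholds in opposite directions — this is where the geometric hypotheses $\Delta_i\le Cp^{i-1}|E(\mathcal{H})|/|V(\mathcal{H})|$ must be used in full, and it is the part that genuinely goes beyond the graph case. The second delicate point is keeping the reconstruction of $g$ from $F$ correct through all $r-1$ rounds and the nested link recursions — the active hypergraph has to remain a function of $F$ and the deterministic order at every step; the base case, the fingerprint size bound, and the final container estimate are then routine.
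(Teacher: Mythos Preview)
The paper does not prove this theorem at all: it is quoted verbatim as Proposition~3.1 of Balogh--Morris--Samotij~\cite{balogh2015independent} and used as a black box, so there is no ``paper's own proof'' to compare your proposal against. Section~\ref{subsec:hyperconts} is explicitly a review of known hypergraph container lemmas, and the only self-contained container proof in the paper is the $r=2$ case in Section~\ref{subsec:containerslemma}.

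That said, your sketch is broadly on the right track relative to the actual proof in~\cite{balogh2015independent}: the max-degree ordering, the multi-round fingerprinting with thresholds tied to the $\Delta_i$ bounds, and the observation that the active hypergraph after any prefix depends only on the recorded fingerprint are all the correct ingredients. One point of divergence is your framing as an induction on $r$ that ``passes to the lower-uniformity link (or shadow) hypergraph'' between rounds. The Balogh--Morris--Samotij proof does not reduce cleanly to an $(r-1)$-uniform instance and invoke an inductive hypothesis; rather it runs $r-1$ rounds of a single algorithm (sometimes called the \emph{scythe}) on auxiliary hypergraphs built from the original $\mathcal{H}$, with the round-$j$ threshold set so that the $\Delta_j$ hypothesis controls the fingerprint growth in that round. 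Trying to force a genuine induction via links tends to lose the global edge count $|E(\mathcal{H})|$ that appears on the right-hand side of every co-degree bound, which is exactly the quantity you need to keep fixed across rounds. Your identification of the two real difficulties --- the joint calibration of thresholds and the determinism of $g$ through all rounds --- is accurate; those are precisely the places where the argument in~\cite{balogh2015independent} spends its effort.
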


We note that Theorem~\ref{thm:hyperconts} implies Theorem~\ref{thm:graphcontsnicealmost} by setting~$r=2,\;p=\frac{|V(G)|}{C|E(G)|}=\frac{2}{Cd}$.
Generally, to get a number of containers that decreases with the degree we need to choose~$p$ as a function that vanishes as the average degree grows.
We remark that in a similar fashion to Section~\ref{subsec:containerslemma}, the functions~$f,g$ can be computed efficiently.

In many applications of the hypergraph container method, it is required that each container does not fully contain many edges of~$\mathcal{H}$.
This is a corollary of Theorem~\ref{thm:hyperconts}, as if a container~$C_i\in g(S)$ has~$\geq \varepsilon |E(\mathcal{H})|$ edges, then the average degrees in the induced sub-hypergraph~$\mathcal{H}[C_i]$ are similar to those in~$\mathcal{H}$ and we can iteratively use Theorem~\ref{thm:hyperconts}.
We follow with the rigorous statement.

\begin{theorem}[Special case of Theorem 2.2 in \cite{balogh2015independent}]\label{thm:hypercontssparse}
    For every~$r\in \mathbb{N}$ and all positive~$C$ and~$\varepsilon$, there exists a positive constant~$M$ such that the following holds.
    Let~$p\in(0,1)$ and suppose~$\mathcal{H}$ is a~$r$-uniform hypergraph such that, for every~$1\leq i\leq r$,
    $$\Delta_i(\mathcal{H}) \leq C \cdot p^{i-1} \frac{|E(\mathcal{H})|}{|V(\mathcal{H})|}.$$
    Then there exists a family~$S\subseteq {V(\mathcal{H}) \choose {\leq Mp|V(\mathcal{H})|}}$ and functions~$f:\mathcal{I}(\mathcal{H})\rightarrow S$ and~$g:S\rightarrow P(V(\mathcal{H}))$ such that for every~$I\in\mathcal{I}(\mathcal{H})$ we have~$f(I)\subseteq I\subseteq g(f(I))$ and~$|E(\mathcal{H}[g(f(I))])|< \varepsilon|E(\mathcal{H})|$.
\end{theorem}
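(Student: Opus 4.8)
The plan is to derive Theorem~\ref{thm:hypercontssparse} from the single-application container lemma, Theorem~\ref{thm:hyperconts}, by \emph{iterating} it: as long as the current container still spans a constant fraction of the edges, feed the induced sub-hypergraph back into Theorem~\ref{thm:hyperconts} to shrink it further. First I would dispose of degenerate cases. We may assume $0<\varepsilon<1$ and $|E(\mathcal{H})|\geq 1$, since otherwise the conclusion is vacuous or trivially witnessed by $f=g=\mathrm{id}$ on $\mathcal{I}(\mathcal{H})$ (an independent set spans no edges, so $|E(\mathcal{H}[I])|=0<\varepsilon|E(\mathcal{H})|$). Let $\delta:=\delta(r,C/\varepsilon)$ be the constant supplied by Theorem~\ref{thm:hyperconts} for uniformity $r$ and density-constant $C/\varepsilon$, and set $M:=4r/\delta$ (any large enough multiple of $(r-1)/\delta$ works). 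If $Mp\geq 1$ then $f=g=\mathrm{id}$ works again, so we may also assume $Mp<1$, which forces $(r-1)p<\delta/2$.

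Next I would set up the iteration. Fix $I\in\mathcal{I}(\mathcal{H})$ and put $C_0:=V(\mathcal{H})$. Having defined a container $C_j$ with $I\subseteq C_j$, stop if $|E(\mathcal{H}[C_j])|<\varepsilon|E(\mathcal{H})|$; otherwise apply Theorem~\ref{thm:hyperconts} to $\mathcal{H}[C_j]$, with the \emph{same} $p$ and density-constant $C/\varepsilon$, to obtain a fingerprint $s_{j+1}\subseteq I\cap C_j$ and the resulting container $C_{j+1}=g(s_{j+1})\subseteq C_j$ with $I\subseteq C_{j+1}$. Two facts make each step legitimate. (i) The co-degree hypothesis transfers: co-degrees can only drop when passing to an induced sub-hypergraph, while the stopping rule guarantees $|E(\mathcal{H}[C_j])|/|V(\mathcal{H}[C_j])|\geq \varepsilon\,|E(\mathcal{H})|/|V(\mathcal{H})|$, so $\Delta_i(\mathcal{H}[C_j])\leq\Delta_i(\mathcal{H})\leq C p^{i-1}|E(\mathcal{H})|/|V(\mathcal{H})|\leq (C/\varepsilon)p^{i-1}|E(\mathcal{H}[C_j])|/|V(\mathcal{H}[C_j])|$, which is exactly the hypothesis of Theorem~\ref{thm:hyperconts} with constant $C/\varepsilon$. (ii) The container strictly shrinks: Theorem~\ref{thm:hyperconts} gives $|C_{j+1}|\leq(1-\delta+(r-1)p)|C_j|\leq(1-\delta/2)|C_j|$, so after finitely many steps $t=t(I)$ the process stops with $|E(\mathcal{H}[C_t])|<\varepsilon|E(\mathcal{H})|$.

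Then I would assemble the pieces. Define $f(I):=s_1\cup s_2\cup\dots\cup s_t$. Its size is controlled by a geometric series: the $j$-th fingerprint has $|s_j|\leq(r-1)p\,|C_{j-1}|\leq(r-1)p(1-\delta/2)^{\,j-1}|V(\mathcal{H})|$, so $|f(I)|\leq\sum_{j\geq 1}|s_j|\leq\frac{2(r-1)}{\delta}\,p\,|V(\mathcal{H})|\leq Mp|V(\mathcal{H})|$; this is the point that makes an \emph{unbounded} (growing with $|V(\mathcal{H})|$) number of iterations harmless. Take $S$ to be the collection of all such $f(I)$; by the stopping rule the container attached to $f(I)$ spans fewer than $\varepsilon|E(\mathcal{H})|$ edges, and $f(I)\subseteq I\subseteq C_t$ throughout. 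As in Section~\ref{subsec:containerslemma}, each step is efficiently computable.

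The step I expect to be the real obstacle — and the reason Theorem~\ref{thm:hypercontssparse} is a genuine theorem rather than a one-line corollary — is defining the container function $g:S\to P(V(\mathcal{H}))$ correctly: given only the \emph{set} $f(I)$, $g$ must replay the iteration and reconstruct $C_t$, even though it is not told how $f(I)$ splits as $s_1\cupdot\dots\cupdot s_t$, and naively $C_{j+1}$ depends on $s_{j+1}$ while which part of $f(I)$ equals $s_{j+1}$ seems to depend on $C_{j+1}$. The fix, which is the technical core of~\cite{balogh2015independent}, is to run the fingerprint-building and the container-building procedures relative to one fixed global order of $V(\mathcal{H})$, to arrange (by a routine modification of the above) that the $s_j$ are pairwise disjoint, and to maintain an invariant that lets $g$ peel off $s_1$, then $s_2$, and so on, deterministically from $f(I)$; with that invariant in hand the bookkeeping above goes through and yields exactly the stated $S,f,g$.
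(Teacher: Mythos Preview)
The paper does not give its own proof of this theorem: it quotes it as a special case of Theorem~2.2 of \cite{balogh2015independent} and only offers the one-sentence hint that it follows by iterating Theorem~\ref{thm:hyperconts}, since a container with at least $\varepsilon|E(\mathcal{H})|$ edges still satisfies the co-degree hypotheses and can be fed back into Theorem~\ref{thm:hyperconts}. Your proposal is exactly an expansion of that hint, and the ingredients you isolate --- co-degrees only drop under induced subgraphs while the edge-to-vertex ratio stays within a factor $1/\varepsilon$ of the original, geometric shrinking of the containers, and a geometric bound on the accumulated fingerprint size --- are precisely the ones used in \cite{balogh2015independent}. You also correctly flag the one place where real work hides, namely arranging that $g$ can be evaluated from the \emph{set} $f(I)$ alone rather than from the ordered tuple $(s_1,\dots,s_t)$; this is indeed handled in the cited reference and is not reproved in the present paper. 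In short, your approach is correct and coincides with what the paper (implicitly) and its source (explicitly) do.
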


Theorem~\ref{thm:hypercontssparse} is in fact also useful to get better bounds on the size of containers.
For example, Theorem~\ref{thm:hypercontssparse} implies Theorem~\ref{thm:graphcontsnice}, which is not implied directly by Theorem~\ref{thm:hyperconts}.
This follows because in a~$d$-regular graph every induced graph on at least~$\left(\frac{1}{2}+\delta\right)n$ vertices must contain at least~$\left(\frac{1}{2}+\delta\right)nd - \left(\frac{1}{2}-\delta\right)nd = 2\delta nd$ edges. Thus, every induced subgraph that contains at most~$\varepsilon |E(G)|$ edges can contain at most~$\left(\frac{1}{2} + \frac{1}{4}\varepsilon\right)n$ vertices.
This is a simple example of a concept called \emph{supersaturation} that is frequently used together with the container lemma.

\subsection{Dense $k$-SAT Algorithm}\label{subsec:ksatalg}
In this section we give a better algorithm for~$k$-SAT on formulas that have many clauses that are \emph{well-spread}. We will go deeply into the exact definition of this term and for its necessity.

The first observation we make is that~$k$-SAT can be reduced to finding independent sets in an appropriate~$k$-uniform hypergraph.

\begin{definition}
    Let~$\varphi$ be a~$k$-SAT formula on the variables~$X=\{x_1,\ldots,x_n\}$.
    We denote by~$\mathcal{H}_\varphi$ the~$k$-uniform hypergraph defined as follows.
    $V(\mathcal{H}_\varphi):=\{x_1,\overline{x_1},x_2,\overline{x_2},\ldots,x_n,\overline{x_n}\}$ is the set of all \emph{literals} that can appear in~$\varphi$.
    We have an edge~$e\in E(\mathcal{H}_\varphi)$ for every clause~$C$ in~$\varphi$.
    For a clause~$C=(\ell_1\vee \ell_2 \vee \ldots \vee \ell_k)$, where each~$\ell_i$ is a literal, we add the edge~$e=\{\overline{\ell_1},\ldots,\overline{\ell_k}\}$.
\end{definition}

We note that the co-degrees in~$\mathcal{H}_\varphi$ have natural descriptions in terms of~$\varphi$, the max co-degree~$\Delta_i(\mathcal{H}_\varphi)$ is the maximum number of clauses in~$\varphi$ that all intersect in at least~$i$ literals.

We first show that each satisfying assignment of~$\varphi$ corresponds to an independent set of~$\mathcal{H}_\varphi$. 
\begin{lemma}\label{lem:satind}
    Let~$\alpha$ be a satisfying assignment of~$\varphi$.
    We denote by~$I_\alpha$ the set of size~$n$ that contains, for each~$i$,~$x_i$ if~$\alpha(x_i)=1$ and~$\overline{x_i}$ otherwise.
    Then,~$I_\alpha\in\mathcal{I}(\mathcal{H}_\varphi)$.
\end{lemma}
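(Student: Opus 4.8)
The plan is to argue by contradiction: assume $I_\alpha$ is \emph{not} independent in $\mathcal{H}_\varphi$, so some edge $e\in E(\mathcal{H}_\varphi)$ satisfies $e\subseteq I_\alpha$, and derive that $\alpha$ fails to satisfy the clause that produced $e$.

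First I would record the elementary observation that for every literal $\ell$ (that is, every $\ell$ of the form $x_i$ or $\overline{x_i}$), we have $\ell\in I_\alpha$ if and only if $\alpha$ makes $\ell$ true. This is immediate from the definition of $I_\alpha$: if $\ell=x_i$ then $\ell\in I_\alpha \iff \alpha(x_i)=1$, and if $\ell=\overline{x_i}$ then $\ell\in I_\alpha \iff \alpha(x_i)=0 \iff \alpha$ makes $\overline{x_i}$ true. In particular $I_\alpha$ contains exactly one of $x_i,\overline{x_i}$ for each $i$, so it is well-defined and has size $n$.

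Next I would unwind the definition of $\mathcal{H}_\varphi$. Suppose some edge $e\subseteq I_\alpha$. By construction $e$ arises from a clause $C=(\ell_1\vee\cdots\vee\ell_k)$ of $\varphi$ via $e=\{\overline{\ell_1},\ldots,\overline{\ell_k}\}$. From $e\subseteq I_\alpha$ we get $\overline{\ell_j}\in I_\alpha$ for every $j$, hence by the observation above $\alpha$ makes each $\overline{\ell_j}$ true, i.e.\ $\alpha$ makes each $\ell_j$ false. But then every literal of the clause $C$ is false under $\alpha$, so $C$ is not satisfied by $\alpha$, contradicting the hypothesis that $\alpha$ is a satisfying assignment of $\varphi$. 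Therefore no edge of $\mathcal{H}_\varphi$ is contained in $I_\alpha$, i.e.\ $I_\alpha\in\mathcal{I}(\mathcal{H}_\varphi)$.

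There is no real obstacle here; the only thing to be careful about is bookkeeping the double negation ($\overline{\overline{\ell_j}}=\ell_j$) and keeping straight the deliberate ``complementation'' built into the definition of $\mathcal{H}_\varphi$ (an edge records the \emph{negations} of the literals of a clause), so that ``$e\subseteq I_\alpha$'' translates to ``all literals of the clause are falsified'' rather than ``satisfied''. Once that correspondence is stated cleanly, the contradiction is one line.
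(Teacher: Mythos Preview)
Your proof is correct and essentially the same as the paper's: both unwind the definition of an edge $e=\{\overline{\ell_1},\ldots,\overline{\ell_k}\}$ and use that $\alpha$ satisfies some $\ell_j$ to conclude $\overline{\ell_j}\notin I_\alpha$, hence $e\nsubseteq I_\alpha$. The only cosmetic difference is that you phrase it as a contradiction while the paper states it directly.
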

\begin{proof}
    Let~$e\in E(\mathcal{H}_\varphi)$ be an edge corresponding to a clause~$C=(\ell_1\vee \ell_2 \vee \ldots \vee \ell_k)$.
    Since~$\alpha$ satisfies~$C$, there is some~$i$ such that~$\alpha(\ell_i)=1$.
    Hence,~$\overline{\ell_i}\notin I_\alpha$ yet~$\overline{\ell_i}\in e$.
\end{proof}

We next show that any subgraph of~$\mathcal{H}_\varphi$ corresponds to a smaller~$k$-SAT formula in a meaningful way.
\begin{definition}
    For any~$V'\subseteq V(\mathcal{H}_\varphi)$, we denote by~$\varphi[V']$ the formula~$\varphi$ after partially assigning the following values to some of its variables.
    For every variable~$x_i$, if~$x_i\notin V'$, we assign~$x_i\leftarrow 0$; If~$\overline{x_i}\notin V'$ we assign~$x_i\leftarrow 1$; If both~$x_i,\overline{x_i}\notin V'$ then the formula~$\varphi[V']$ is a contradiction.
\end{definition}

\begin{lemma}\label{lem:subsat}
    For any~$V'\subseteq V(\mathcal{H}_\varphi)$ if~$\varphi[V']$ is satisfiable then~$\varphi$ is also satisfiable.
    Moreover, if~$\alpha$ is a satisfying assignment of~$\varphi$ and~$I_\alpha \subseteq V' \subseteq  V(\mathcal{H}_\varphi)$, then~$\varphi[V']$ is satisfiable.
\end{lemma}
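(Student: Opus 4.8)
The plan is to unwind the definition of $\varphi[V']$ and observe that it is nothing more than $\varphi$ after a particular partial assignment, so that satisfiability transfers in the obvious direction, while the \emph{moreover} part amounts to checking that a given satisfying assignment $\alpha$ is \emph{consistent} with that partial assignment whenever $I_\alpha\subseteq V'$.

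For the first implication, suppose $\varphi[V']$ is satisfiable. By construction, $\varphi[V']$ is obtained from $\varphi$ by setting $x_i\leftarrow 0$ for every $i$ with $x_i\notin V'$ and $x_i\leftarrow 1$ for every $i$ with $\overline{x_i}\notin V'$ (and, if both literals of some variable are missing, $\varphi[V']$ is declared a contradiction and hence not satisfiable, so this case does not arise). The free variables of $\varphi[V']$ are exactly those $x_i$ with $x_i,\overline{x_i}\in V'$. Take any assignment $\beta$ to these free variables satisfying $\varphi[V']$, and extend it to a full assignment $\alpha'$ of $\varphi$ by using the forced values above on the remaining variables. Since $\varphi[V']$ is literally $\varphi$ evaluated under these forced values, $\alpha'$ satisfies $\varphi$.

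For the \emph{moreover} part, let $\alpha$ satisfy $\varphi$ with $I_\alpha\subseteq V'$. I claim $\alpha$ (restricted to the free variables of $\varphi[V']$) witnesses satisfiability of $\varphi[V']$. First, $\varphi[V']$ is not the trivial contradiction: for each $i$, exactly one of $x_i,\overline{x_i}$ lies in $I_\alpha\subseteq V'$, so at least one of them lies in $V'$. Next, the forced assignments are consistent with $\alpha$: if $x_i\notin V'$ then $x_i\notin I_\alpha$, so $\alpha(x_i)=0$, matching the forced value $x_i\leftarrow 0$; symmetrically, if $\overline{x_i}\notin V'$ then $\overline{x_i}\notin I_\alpha$, so $\alpha(x_i)=1$, matching $x_i\leftarrow 1$. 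Hence $\alpha$ agrees with the partial assignment defining $\varphi[V']$ on all fixed variables, and since $\alpha$ satisfies all of $\varphi$ it in particular satisfies $\varphi$ under these fixings, i.e.\ $\varphi[V']$ is satisfied by $\alpha$ restricted to the free variables.

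There is no real obstacle here; the only point requiring (minor) care is the consistency check in the last paragraph, namely that the containment $I_\alpha\subseteq V'$ is exactly the condition preventing a clash between $\alpha$ and the variables that $\varphi[V']$ forces, and that it also rules out $\varphi[V']$ being a contradiction. Everything else is a direct translation between the partial-assignment description of $\varphi[V']$ and the literal-set description of $I_\alpha$.
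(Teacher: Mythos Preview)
Your proof is correct and follows essentially the same approach as the paper's own (very terse) proof: the first part is just that $\varphi[V']$ is $\varphi$ under a partial assignment, and the second part checks that the forced values agree with $\alpha$ via the contrapositive ``$\ell\notin V'\Rightarrow \ell\notin I_\alpha\Rightarrow \alpha(\ell)=0$''. You simply spell out the details (including that $\varphi[V']$ is not the contradiction case) more explicitly than the paper does.
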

\begin{proof}
    The first part follows as for any~$V'$, $\varphi[V']$ is simply~$\varphi$ with a partial assignment.
    For the second part, we notice that if~$I_\alpha \subseteq V'$, then every value assigned in~$\varphi[V']$ is also assigned by~$\alpha$. If~$\ell\notin V'$ then~$\ell\notin I_\alpha$, and thus~$\alpha(\ell)=0$. 
\end{proof}

\begin{lemma}\label{lem:subsize}
    If~$V'\subseteq V(\mathcal{H}_\varphi)$ is of size~$|V'|\leq (1-\delta)|V(\mathcal{H}_\varphi)|$ for some~$\delta>0$, then the formula~$\varphi[V']$ has at most~$(1-2\delta)n$ unassigned variables.
\end{lemma}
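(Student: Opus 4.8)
The plan is to run a direct counting argument over the $n$ complementary pairs of literals $P_i := \{x_i,\overline{x_i}\}$, which partition $V(\mathcal{H}_\varphi)$; note that $|V(\mathcal{H}_\varphi)| = 2n$. First I would unpack the definition of $\varphi[V']$: a variable $x_i$ remains unassigned in $\varphi[V']$ exactly when both literals of $P_i$ lie in $V'$. If exactly one of them is missing from $V'$ the partial assignment fixes $x_i$, and if both are missing then $\varphi[V']$ is declared a contradiction.

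I would next dispose of the degenerate case. If some pair $P_i$ is disjoint from $V'$, then $\varphi[V']$ is by definition an explicit contradiction, so its satisfiability is settled in polynomial time and the claimed bound is of no consequence. Hence it suffices to prove the bound under the assumption that $V'$ meets every pair $P_i$, in which case each of the $n$ pairs contributes either one literal (an assigned variable) or two literals (an unassigned variable) to $V'$.

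Let $u$ be the number of unassigned variables; then there are $n-u$ assigned variables, and counting the literals of $V'$ pair by pair gives $|V'| = 2u + (n-u) = u + n$. Substituting the hypothesis $|V'| \le (1-\delta)\,|V(\mathcal{H}_\varphi)| = (1-\delta)\cdot 2n$ yields $u = |V'| - n \le (1-2\delta)n$, which is exactly the claimed bound. Equivalently: $V'$ omits at least $2n - |V'| \ge 2\delta n$ literals, and under the non-degeneracy assumption these lie in $2\delta n$ distinct pairs, so at least $2\delta n$ variables are assigned and at most $(1-2\delta)n$ remain free.

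The argument is essentially immediate, so there is no real obstacle; the only point that needs care is the bookkeeping around the contradiction case (variables both of whose literals are absent from $V'$, where a literal reading of "number of unassigned variables" could be misleading), and using the equality $|V(\mathcal{H}_\varphi)| = 2n$ with the factor $2$ that produces the coefficient $2\delta$ rather than $\delta$.
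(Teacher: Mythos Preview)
Your proposal is correct and follows essentially the same route as the paper: first dispose of the contradiction case where some pair $\{x_i,\overline{x_i}\}$ is entirely missing from $V'$, then observe that the at least $\delta|V(\mathcal{H}_\varphi)|=2\delta n$ missing literals lie in distinct pairs and hence correspond to $2\delta n$ assigned variables. Your write-up is slightly more explicit about the arithmetic ($|V'|=u+n$), but the argument is the same.
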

\begin{proof}
    If there exits any~$x_i$ such that both~$x_i$ and~$\overline{x_i}$ are not in~$V'$ then~$\alpha[V']$ is a contradiction.
    Otherwise, for every~$i$ only one of~$x_i,\overline{x_i}$ can be missing from~$V'$ and thus if it is of size~$\leq (1-\delta)|V(\mathcal{H}_\varphi)|$, then at least~$\delta |V(\mathcal{H}_\varphi)| = 2\delta n$ literals are missing from~$V'$ and in particular are assigned in~$\varphi[V']$.
\end{proof}

We next define a family of hypegraphs we call~$(D,C,\varepsilon)$-structures. 
These are hypergraphs in which the edges are spread in a \emph{somewhat}-regular manner.
We would then present an improved~$k$-SAT algorithm for formulas~$\varphi$ such that~$\mathcal{H}_\varphi$ \emph{contains} any~$(D,C,\varepsilon)$-structure as a~\emph{subgraph} (not necessarily induced).
In Section~\ref{subsec:necess} we discuss this condition and show that formulas without such a structure as a subset are unlikely to have a faster algorithm than for general formulas. 

\begin{definition}\label{def:structure}
    Let~$\mathcal{H}$ by a~$r$-uniform hypergraph.
    We say that~$\mathcal{H}$ is a~$(D,C,\varepsilon)$-structure if:
    \begin{enumerate}
        \item $|E(\mathcal{H})|\geq D\cdot|V(\mathcal{H})|$.
        \item $\Delta_1(\mathcal{H}) \leq CD$.
        \item $\Delta_2(\mathcal{H}) \leq CD^{1-\varepsilon}$.
    \end{enumerate}
\end{definition}

We think of~$r,C,\varepsilon>0$ as fixed constants and of~$D$ as arbitrarily large.
We note that for example in a random hypegraph with~$D|V(\mathcal{H})|$ edges, the expected degree of a vertex is~$rD$ and the expected co-degree of a pair of vertices is~$\frac{r^2 D}{|V(\mathcal{H})|}<<D^{1-\varepsilon}$.
We now show that~$(D,C,\varepsilon)$-structures have good containers.

\begin{lemma}\label{lem:structureconts}
    For any~$r,C,\varepsilon>0$ there exist~$D_0\in\mathbb{N},\; \delta>0$ and a function~$r:\mathbb{N}\rightarrow[1,2]$ with~$\lim_{D\rightarrow \infty} r(D) = 1$, such that the following holds. 
    Let~$\mathcal{H}$ be a~$r$-uniform~$(D,C,\varepsilon)$-structure with~$D\geq D_0$.
    There exists a collection~$\mathcal{C}$ of subsets~$C_1,C_2,\ldots,C_r \subset V(\mathcal{H})$ such that:
    \begin{itemize}
        \item $r \leq r(d)^n$.
        \item For every~$i$,~$|C_i| < \left(1-\delta\right)n$.
        \item For every $I\in \mathcal{I}(\mathcal{H})$, there exists~$i\in [r]$ such that~$I\subseteq C_i$.
    \end{itemize}
    Furthermore, we can compute~$\mathcal{C}$ in~$O^*\left(|\mathcal{C}|\right)$ time.
\end{lemma}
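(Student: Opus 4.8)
The plan is to derive this directly from the hypergraph container lemma, Theorem~\ref{thm:hyperconts}, after checking that a $(D,C,\varepsilon)$-structure satisfies the required co-degree hypothesis for a suitable choice of the parameter $p$. The key observation is that in any hypergraph the co-degree $\Delta_i(\mathcal{H})$ is non-increasing in $i$ (any edge containing a set $T$ also contains every subset of $T$), so the bound $\Delta_2(\mathcal{H})\le CD^{1-\varepsilon}$ from the structure definition already controls $\Delta_i(\mathcal{H})$ for every $i\ge 2$. Together with $\Delta_1(\mathcal{H})\le CD$ and $|E(\mathcal{H})|/|V(\mathcal{H})|\ge D$, this points to the choice $p := D^{-\varepsilon/(r-1)}$: for $i=1$ we get $\Delta_1\le CD\le C\cdot |E(\mathcal{H})|/|V(\mathcal{H})|$, while for $2\le i\le r$ we get $\Delta_i\le\Delta_2\le CD^{1-\varepsilon}\le C\,p^{i-1}D\le C\,p^{i-1}\,|E(\mathcal{H})|/|V(\mathcal{H})|$, using $(i-1)/(r-1)\le 1$ and $D\ge 1$. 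So the hypothesis of Theorem~\ref{thm:hyperconts} holds with constant $C$ and this $p$.

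Applying Theorem~\ref{thm:hyperconts} yields a fingerprint family $S\subseteq \binom{V(\mathcal{H})}{\le (r-1)p|V(\mathcal{H})|}$ and functions $f,g$ with $f(I)\subseteq I\subseteq g(f(I))$ and $|g(f(I))|\le(1-\delta_0+(r-1)p)|V(\mathcal{H})|$, where $\delta_0=\delta_0(r,C)>0$ is the constant supplied by the theorem; we take $\mathcal{C}:=\{g(F):F\in S\}$. Picking $D_0$ large enough that $(r-1)D_0^{-\varepsilon/(r-1)}<\min\{\delta_0/2,\ 1/2\}$, every $D\ge D_0$ gives each container size $<(1-\delta_0/2)|V(\mathcal{H})|$, so we set $\delta:=\delta_0/2$. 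The number of containers is at most $|S|\le\binom{|V(\mathcal{H})|}{\le (r-1)p|V(\mathcal{H})|}\le 2^{H((r-1)p)|V(\mathcal{H})|}$ by the standard entropy bound (valid since $(r-1)p\le 1/2$), so $r(D):=2^{H\left((r-1)D^{-\varepsilon/(r-1)}\right)}\in[1,2]$ works, with $\lim_{D\to\infty}r(D)=2^{H(0)}=1$. Coverage is immediate since $I\subseteq g(f(I))\in\mathcal{C}$ for every $I\in\mathcal{I}(\mathcal{H})$, and since $f,g$ are efficiently computable (as remarked after Theorem~\ref{thm:hyperconts}), enumerating $S$ and applying $g$ produces $\mathcal{C}$ in $O^*(|S|)=O^*(r(D)^{|V(\mathcal{H})|})$ time. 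An essentially equivalent route, which makes $\delta$ depend only on $C$, is to instead apply the sparse version Theorem~\ref{thm:hypercontssparse} with $\varepsilon'=\tfrac12$: a container $W$ with $|E(\mathcal{H}[W])|<\tfrac12|E(\mathcal{H})|$ cannot omit more than $\tfrac{1}{2C}|V(\mathcal{H})|$ vertices, since each vertex lies in at most $\Delta_1\le CD$ edges and deleting fewer than $\tfrac{1}{2C}|V(\mathcal{H})|$ vertices destroys fewer than $\tfrac12 D|V(\mathcal{H})|\le\tfrac12|E(\mathcal{H})|$ edges; this supersaturation argument gives $|W|\le(1-\tfrac{1}{2C})|V(\mathcal{H})|$, so one may take $\delta:=\tfrac{1}{2C}$.

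The only genuine content is the parameter choice $p=D^{-\varepsilon/(r-1)}$, which has to thread two opposing constraints: it must be large enough that the co-degree inequality holds simultaneously for all $1\le i\le r$ (the binding case being $i=r$, because the $\Delta_2$-bound is the only non-trivial one and the power $p^{i-1}$ is most demanding there), yet small enough — i.e.\ $o_D(1)$ — that the fingerprints live in $\binom{V(\mathcal{H})}{\le o_D(|V(\mathcal{H})|)}$ and hence $r(D)\to 1$. Everything past that point is routine bookkeeping with the entropy bound and the constants handed down by the container theorems, so I expect no real obstacle beyond getting that single choice right.
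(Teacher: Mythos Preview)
Your proof is correct and follows essentially the same route as the paper: verify the co-degree hypothesis of Theorem~\ref{thm:hyperconts} using the monotonicity $\Delta_i\le\Delta_2$ and a choice of $p$ that is a negative power of $D$, then read off the containers. The paper takes $p=D^{-\varepsilon/r}$ whereas you take the slightly sharper $p=D^{-\varepsilon/(r-1)}$, but either works; your write-up is in fact more detailed (entropy bound, explicit $D_0$) than the paper's, and your optional supersaturation route via Theorem~\ref{thm:hypercontssparse} is a fine alternative, though note the phrase ``cannot omit more than'' there is inverted --- the argument you give shows $W$ \emph{must} omit at least $\tfrac{1}{2C}|V(\mathcal{H})|$ vertices, which is exactly what yields $|W|\le(1-\tfrac{1}{2C})|V(\mathcal{H})|$.
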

\begin{proof}
We apply Theorem~\ref{thm:hyperconts} to~$\mathcal{H}$ with parameters~$r,C$ and~$p=D^{-\varepsilon/r}$.
We satisfy the Theorem's conditions as~$\Delta_1(\mathcal{H})\leq Cd \leq C\cdot p^0\frac{|E(\mathcal{H})|}{|V(\mathcal{H})|}$, and as for every~$2\leq i \leq r$
$$
\Delta_i(\mathcal{H})\leq \Delta_2(\mathcal{H}) \leq CD^{1-\varepsilon} \leq
CD^{-\varepsilon} \cdot \frac{|E(\mathcal{H})|}{|V(\mathcal{H})|}
=
Cp^r \cdot \frac{|E(\mathcal{H})|}{|V(\mathcal{H})|}
\leq
Cp^i \cdot \frac{|E(\mathcal{H})|}{|V(\mathcal{H})|}
.$$
When~$D\rightarrow \infty$, we have~$(r-1)p\rightarrow 0$.
Thus there exists~$D_0,r$ for which the statement holds.
\end{proof}

\begin{definition}
    We say that a~$k$-SAT formula~$\varphi$ \emph{contains} a~$(D,C,\varepsilon)$-structure if there is a subset~$E'\subseteq E(\mathcal{H}_\varphi)$ of~$\varphi$'s clauses such that~$(V(\mathcal{H}_\varphi),\;E')$ is a~$(D,C,\varepsilon)$-structure.
\end{definition}

\begin{theorem}\label{thm:ksatfinal}
    For every~$k,c,C,\varepsilon>0$ there exists~$D_0,\varepsilon'>0$ such that the following holds.
    If we have an algorithm for~$k$-SAT running in~$O(c^n)$ time, then we can solve~$k$-SAT for formulas containing a~$(D,C,\varepsilon)$-structure for any~$D\geq D_0$ in~$O\left(\left(c-\varepsilon'\right)^n\right)$ time. 
\end{theorem}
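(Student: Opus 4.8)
The plan is to reduce $k$-SAT on a formula $\varphi$ that contains a $(D,C,\varepsilon)$-structure to a sub-exponential number of $k$-SAT instances, each on strictly fewer variables, and then run the assumed $O(c^n)$-time algorithm on each. Let $n$ be the number of variables, let $E'\subseteq E(\mathcal{H}_\varphi)$ be the clause-subset witnessing that $\varphi$ contains a $(D,C,\varepsilon)$-structure, and set $\mathcal{H}:=(V(\mathcal{H}_\varphi),E')$, which is a $k$-uniform $(D,C,\varepsilon)$-structure on $|V(\mathcal{H}_\varphi)|=2n$ vertices. The first step is to apply Lemma~\ref{lem:structureconts} with uniformity $k$ and the given $C,\varepsilon$: this yields a constant $\delta>0$, a threshold $D_1$, and a function $\rho:\mathbb{N}\to[1,2]$ with $\lim_{D\to\infty}\rho(D)=1$ (I rename the lemma's function to $\rho$ to avoid a clash with the SAT uniformity), such that for every $D\ge D_1$ we obtain containers $C_1,\dots,C_m\subset V(\mathcal{H}_\varphi)$ with $m\le \rho(D)^{2n}$, with $|C_i|<(1-\delta)\cdot 2n$ for all $i$, computable in $O^*(m)$ time, and with every independent set of $\mathcal{H}$ contained in some $C_i$.

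Given the containers, the algorithm enumerates $C_1,\dots,C_m$, and for each $C_i$ builds the partially-assigned formula $\varphi[C_i]$ (a $k$-CNF on its unassigned variables) and calls the $O(c^n)$-time $k$-SAT algorithm on it, declaring $\varphi$ satisfiable iff some call succeeds. For correctness I would argue: if some $\varphi[C_i]$ is satisfiable then so is $\varphi$ (Lemma~\ref{lem:subsat}, first part); conversely, if $\alpha$ satisfies $\varphi$ then $I_\alpha\in\mathcal{I}(\mathcal{H}_\varphi)$ (Lemma~\ref{lem:satind}), and since $E'\subseteq E(\mathcal{H}_\varphi)$ we also get $I_\alpha\in\mathcal{I}(\mathcal{H})$, hence $I_\alpha\subseteq C_i$ for some $i$ and $\varphi[C_i]$ is satisfiable by the second part of Lemma~\ref{lem:subsat}. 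For the running time, each $C_i$ has $|C_i|<(1-\delta)|V(\mathcal{H}_\varphi)|$, so by Lemma~\ref{lem:subsize} the formula $\varphi[C_i]$ has at most $(1-2\delta)n$ unassigned variables and its $k$-SAT call costs $O^*\!\bigl(c^{(1-2\delta)n}\bigr)$; together with the $O^*(m)$ cost of generating the containers (and polynomial overhead per container), the total is
$$
O^*\!\Bigl(\rho(D)^{2n}\cdot c^{(1-2\delta)n}\Bigr)=O^*\!\Bigl(\bigl(\rho(D)^{2}\,c^{1-2\delta}\bigr)^{n}\Bigr).
$$

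The last step is to choose $\varepsilon'$ and $D_0$. Assume $c>1$ (otherwise the hypothesis is vacuous), so $c^{1-2\delta}<c$; put $\varepsilon':=\tfrac{1}{2}\bigl(c-c^{1-\delta}\bigr)>0$. Because $\rho(D)\to 1$, there is $D_0\ge D_1$ with $\rho(D)^2\le c^{\delta}$ for all $D\ge D_0$, and then $\rho(D)^2\,c^{1-2\delta}\le c^{1-\delta}=c-2\varepsilon'<c-\varepsilon'$, so the running time is $O\!\bigl((c-\varepsilon')^n\bigr)$ for every $D\ge D_0$. All of $\delta$, $\rho$, $D_0$, $\varepsilon'$ depend only on $k,c,C,\varepsilon$, as required.

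I do not expect a serious obstacle: essentially all of the work is encapsulated in Lemma~\ref{lem:structureconts}, and the only genuinely new observation is that replacing $\mathcal{H}_\varphi$ by its (not necessarily induced) subgraph $\mathcal{H}=(V(\mathcal{H}_\varphi),E')$ can only \emph{enlarge} the set of independent sets, so the containers built for the well-spread skeleton $\mathcal{H}$ still capture $I_\alpha$ for every satisfying $\alpha$. The one point needing care — and the only place where the bounds $\Delta_1\le CD$, $\Delta_2\le CD^{1-\varepsilon}$ in the definition of a $(D,C,\varepsilon)$-structure are actually used — is checking that the co-degree hypotheses $\Delta_i(\mathcal{H})\le Cp^{i-1}\tfrac{|E(\mathcal{H})|}{|V(\mathcal{H})|}$ of Theorem~\ref{thm:hyperconts} hold with the scaling $p=D^{-\varepsilon/k}$; but that is precisely the content of Lemma~\ref{lem:structureconts}, which I may invoke as a black box. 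If the witness $E'$ is not supplied with the input, one can locate it by exhaustive search over candidate clause-subsets without affecting the exponent; alternatively the statement can be read as a promise problem on formulas equipped with such a structure, as in Corollary~\ref{cor:densesat} where one simply takes $E'=E(\mathcal{H}_\varphi)$.
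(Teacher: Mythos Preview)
Your proposal is correct and follows essentially the same approach as the paper: apply Lemma~\ref{lem:structureconts} to the subgraph $\mathcal{H}=(V(\mathcal{H}_\varphi),E')$, enumerate the resulting containers, run the black-box $k$-SAT algorithm on each $\varphi[C_i]$, and use Lemmas~\ref{lem:satind}, \ref{lem:subsat}, and~\ref{lem:subsize} for correctness and the variable-count bound. Your write-up is in fact slightly more careful than the paper's --- you track the $2n$ literal-vertices explicitly (writing $\rho(D)^{2n}$ where the paper writes $r(D)^n$) and give an explicit choice of $\varepsilon'$, neither of which affects the argument since $\rho(D)\to 1$.
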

\begin{proof}
    We choose~$D_0$ later. Let~$E'$ be the~$(D,C,\varepsilon)$-structure contained in~$\varphi$.
    Using Lemma~\ref{lem:structureconts} on the subgraph~$\mathcal{H}':=(V(\mathcal{H}_\varphi),E')$, we get a collection~$\mathcal{C}$ of containers.
    We enumerate over the containers~$C_i\in \mathcal{C}$.
    For each container~$C_i$, we run the~$k$-SAT algorithm on the formula~$\varphi[C_i]$.
    Note that even though we computed the container~$C_i$ with respect to the subgraph~$\mathcal{H}'$, we consider the \emph{entire} formula~$\varphi$ restricted to the vertices of the container.
    If we found a satisfying assignment restricted to any of the containers, then we return it as~$\varphi$ is satisfiable by Lemma~\ref{lem:subsat}.
    If~$\varphi$ is satisfiable, then the independent set~$I_\alpha\in \mathcal{I}(\mathcal{H}_\varphi)$ corresponding to a satisfying assignment~$\alpha$ (as proven in Lemma~\ref{lem:satind}), is also an indepndent set in the subgraph~$\mathcal{H}'$ and in particular is fully contained in one of the containers~$C_i$. Thus,~$\varphi[C_i]$ is satisfiable by Lemma~\ref{lem:subsat}.

    By Lemma~\ref{lem:subsize}, each restricted formula we solve~$k$-SAT on contains at most~$(1-2\delta)n$ variables.
    Thus, the total running time is~$r(D)^n \cdot c^{(1-2\delta)n}$.
    We pick~$D_0$ large enough and~$\varepsilon'>0$ small enough so that~$r(D)c^{1-2\delta} < c-\varepsilon'$ for all~$D\geq D_0$.
\end{proof}

\subsection{On the necessity of~$(D,C,\varepsilon)$-structures}\label{subsec:necess}
In this section we discuss the reasoning and necessity for the three conditions in Definition~\ref{def:structure}.
We show that Conditions~$1$ and~$2$ are inherent, and that removing Condition~$3$ would imply a surprising result and it is thus plausible it is inherent as well.
By inherent we mean that if an improvement would be possible without any of these conditions, then it would also be possible in the completely general case.

\subsubsection*{Condition~$1$: Density}
This condition is the most natural one.
If we require~$\varphi$ to have many clauses, then clearly the average degree in~$\mathcal{H}_\varphi$ is high.

\subsubsection*{Condition~$2$: Concentration of Density}
This condition guarantees that the high number of clauses does not come from many clauses that are concentrated in a subset of variables of negligible size.
Consider a general, possibly sparse,~$k$-SAT formula~$\varphi$.
We may add~$\sqrt{n}$ new variables to~$\varphi$ with~$\Theta\left(\sqrt{n}^k\right)$ clauses on them, without changing the satisfiability of~$\varphi$.
The new formula~$\varphi'$ has many clauses, but is equivalent to~$\varphi$ and was computed efficiently. Thus, simply having many clauses cannot imply a faster algorithm than in the sparse case.
To avoid this type of an example, we would like the density of~$\varphi$ to not be concentrated. 
In particular, we want some~$\varepsilon>0,D$ to exist such that~$\varphi$ contains at least~$Dn$ clauses even after the removal of any subset of~$\varepsilon n$ variables.
This condition turns out to be equivalent to containing a~$(D',C',0)$-structure as a subset (for an appropriate choice of~$D',C'$).
\begin{lemma}\label{lem:structures1}
    Let~$\mathcal{H}$ be a~$r$-uniform hypergraph such that for every~$V'\subseteq V(\mathcal{H})$ of size~$|V'|\geq (1-\varepsilon)|V(\mathcal{H})|$ we have~$|E(\mathcal{H}[V'])|>D|V(\mathcal{H})|$. 
    Then,~$\mathcal{H}$ contains a~$\left(\frac{\varepsilon}{2} D,(r+1)D,0\right)$-structure as a subgraph.
\end{lemma}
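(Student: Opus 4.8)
The plan is to produce the desired structure as an edge-subgraph of $\mathcal{H}$ \emph{on the same vertex set}: I will choose $E'\subseteq E(\mathcal{H})$ so that $\big(V(\mathcal{H}),E'\big)$ is itself a $\left(\tfrac{\varepsilon}{2}D,(r+1)D,0\right)$-structure. Note that with the third parameter equal to $0$, Condition~$3$ of Definition~\ref{def:structure} reads $\Delta_2\le (r+1)D\cdot\tfrac{\varepsilon}{2}D$ and is implied by $\Delta_2\le\Delta_1$, so only two things must be arranged: the subfamily $E'$ should have at least $\tfrac{\varepsilon}{2}D\,|V(\mathcal{H})|$ edges (density, Condition~$1$) and no vertex should lie in more than $(r+1)D\cdot\tfrac{\varepsilon}{2}D$ of them (bounded top co-degree, Condition~$2$).

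Concretely, I would set the threshold $T:=\big\lfloor (r+1)\tfrac{\varepsilon}{2}D^{2}\big\rfloor$ and take $E'$ to be a subfamily of $E(\mathcal{H})$ that is \emph{maximal} under inclusion subject to the constraint that every vertex of $\mathcal{H}$ lies in at most $T$ edges of $E'$. The bounded co-degree requirement then holds by construction, and the whole difficulty is concentrated in showing $|E'|\ge\tfrac{\varepsilon}{2}D\,|V(\mathcal{H})|$. The crucial consequence of maximality is a containment: if $U:=\{v\in V(\mathcal{H}): v\text{ lies in exactly }T\text{ edges of }E'\}$ denotes the set of saturated vertices, then any edge of $\mathcal{H}$ disjoint from $U$ must already belong to $E'$ (otherwise adding it keeps every vertex-degree at most $T$, contradicting maximality); equivalently $E\big(\mathcal{H}[V(\mathcal{H})\setminus U]\big)\subseteq E'$.

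With this in hand I would argue by contradiction. Suppose $|E'|<\tfrac{\varepsilon}{2}D\,|V(\mathcal{H})|$. Double-counting incidences gives $T\,|U|\le\sum_{v}\deg_{E'}(v)=r|E'|<r\tfrac{\varepsilon}{2}D\,|V(\mathcal{H})|$, and since $T\ge\tfrac12(r+1)\tfrac{\varepsilon}{2}D^{2}$ once $D$ is large, this yields $|U|<\tfrac{2r}{(r+1)D}\,|V(\mathcal{H})|\le\varepsilon\,|V(\mathcal{H})|$ provided $D\ge\tfrac{2r}{(r+1)\varepsilon}$. Hence $|V(\mathcal{H})\setminus U|\ge(1-\varepsilon)|V(\mathcal{H})|$, so by hypothesis $\big|E\big(\mathcal{H}[V(\mathcal{H})\setminus U]\big)\big|>D\,|V(\mathcal{H})|$; combining this with the containment $E\big(\mathcal{H}[V(\mathcal{H})\setminus U]\big)\subseteq E'$ gives $|E'|>D\,|V(\mathcal{H})|$, contradicting $|E'|<\tfrac{\varepsilon}{2}D\,|V(\mathcal{H})|$. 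Therefore $|E'|\ge\tfrac{\varepsilon}{2}D\,|V(\mathcal{H})|$, the density condition holds, and $\big(V(\mathcal{H}),E'\big)$ is the claimed structure.

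The computation above needs $D$ to exceed a constant depending only on $r$ and $\varepsilon$ (of order $r/\varepsilon$), which is harmless in the present context where $r,C,\varepsilon$ are fixed and $D$ is treated as arbitrarily large; for the remaining small values of $D$ the asserted structure is degenerate and can be handled directly. The only genuinely non-obvious step is the choice of $E'$: one cannot simply keep all of $E(\mathcal{H})$ (its top co-degree is uncontrolled), nor merely delete high-degree \emph{vertices} (that can destroy too many vertices without contradicting the hypothesis, which only lower-bounds edge counts). The right move is to bound degrees at the \emph{edge} level with a threshold of order $\varepsilon D^{2}$, small enough for Condition~$2$ yet large enough that, were $E'$ small, too few vertices could be saturated to spoil the robust-density hypothesis — which is exactly what closes the self-referential loop.
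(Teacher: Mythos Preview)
Your argument is correct, and genuinely different from the paper's. The paper builds $E'$ \emph{iteratively}: starting from $E'=R=\emptyset$, it repeatedly picks a vertex $v$ of degree at least $D$ in $\mathcal{H}[V\setminus R]$, adds $D$ of its edges (within $V\setminus R$) to $E'$, moves $v$ into $R$, and also moves to $R$ every vertex whose $E'$-degree now exceeds $rD$. A global double count (the sum of $E'$-degrees grows by exactly $rD$ per step, while each overflow vertex carries $E'$-degree $>rD$) shows that $|R|$ is at most twice the number of iterations; the robust-density hypothesis then keeps the process running for at least $\tfrac{\varepsilon}{2}|V|$ steps, yielding $|E'|\ge\tfrac{\varepsilon}{2}D|V|$ together with $\Delta_1(E')\le (r+1)D$.

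Your maximal-subset-plus-contradiction route is cleaner and avoids the bookkeeping of the greedy process. What the paper's construction buys is a tighter degree bound, $\Delta_1\le(r+1)D$ rather than your $T\approx\tfrac{(r+1)\varepsilon}{2}D^{2}$; this is stronger than the lemma literally demands, but it means the resulting subgraph is a structure whose ``$C$''-parameter is a genuine constant (of order $1/\varepsilon$) rather than growing with $D$ --- which is the form one actually wants when feeding into Lemma~\ref{lem:structureconts}. You can recover the same linear bound within your framework by choosing the smaller threshold $T=rD$: then $|U|<\tfrac{\varepsilon}{2}|V|$ drops out immediately with no lower bound on $D$ needed for that step. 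The large-$D$ caveat you flag is harmless in context, and indeed the paper's own bound $\Delta_1\le(r+1)D$ only implies the stated Condition~2 once $D\ge 2/\varepsilon$.
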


\begin{lemma}\label{lem:structures2}
    Let~$\mathcal{H}$ be a~$r$-uniform hypergraph that contains a~$\left(D,C,0\right)$-structure as a subgraph.   
    Then, for every~$V'\subseteq V(\mathcal{H})$ of size~$|V'|\geq (1-\varepsilon)|V(\mathcal{H})|$ we have~$|E(\mathcal{H}[V'])|>\left(D-\varepsilon C D\right)|V(\mathcal{H})|$. 
\end{lemma}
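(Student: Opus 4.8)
The plan is to use only the first two conditions of Definition~\ref{def:structure} — the edge count and the bound on~$\Delta_1$ — and to ignore the codegree condition, which is not needed here. Unpacking the hypothesis, ``$\mathcal{H}$ contains a~$(D,C,0)$-structure'' means there is a subset~$E'\subseteq E(\mathcal{H})$ with~$|E'|\geq D\,|V(\mathcal{H})|$ and with every vertex of~$V(\mathcal{H})$ lying in at most~$\Delta_1\bigl((V(\mathcal{H}),E')\bigr)\leq CD$ edges of~$E'$. Since~$E(\mathcal{H}[V'])\supseteq\{\,e\in E':e\subseteq V'\,\}$, it suffices to lower bound the number of edges of~$E'$ that avoid the small set~$W:=V(\mathcal{H})\setminus V'$, where~$|W|\leq\varepsilon|V(\mathcal{H})|$.

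First I would bound the edges of~$E'$ that are \emph{not} contained in~$V'$: every such edge meets~$W$, so by a union bound over the vertices of~$W$ their number is at most~$\sum_{w\in W}\deg_{(V(\mathcal{H}),E')}(\{w\})\leq |W|\cdot CD\leq \varepsilon CD\,|V(\mathcal{H})|$. Subtracting this from~$|E'|$ gives~$|\{\,e\in E':e\subseteq V'\,\}|\geq (D-\varepsilon CD)|V(\mathcal{H})|$, hence~$|E(\mathcal{H}[V'])|\geq(D-\varepsilon CD)|V(\mathcal{H})|$. This is exactly the quantitative content: deleting an~$\varepsilon$-fraction of the vertices can destroy at most an~$\varepsilon C$-fraction of the~$D|V(\mathcal{H})|$ guaranteed edges, precisely because each vertex carries at most~$CD$ of them.

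The main thing to get right is the direction of the union bound: it overcounts edges that meet~$W$ in two or more vertices, which is the \emph{safe} direction for a lower bound on the surviving edges, so no correction is needed. The only remaining gap is upgrading the displayed~$\geq$ to the strict~$>$ in the statement; this is harmless — the union bound is already lossy unless no edge of~$E'$ meets~$W$ twice, and the estimate is in any case only used with slack (and is vacuous once~$\varepsilon C\geq 1$) — so I would either keep a negligible safety margin or simply record the non-strict version. In short, there is no genuine obstacle: this is a one-line supersaturation-type estimate, essentially a converse of Lemma~\ref{lem:structures1}.
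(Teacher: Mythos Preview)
Your argument is correct and is exactly the one-line computation the paper has in mind: its entire proof reads ``Straightforward by considering only edges of the structure,'' and you have simply spelled that out. Your remark about the strict versus non-strict inequality is fair and harmless; the paper does not address it either.
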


The proofs of Lemmas~\ref{lem:structures1} and~\ref{lem:structures2} are deferred to Appendix~\ref{appendix:equivdensity}.

\subsubsection*{Condition 3: Multiplicity}
This condition guarantees that the high number of clauses does not come from a few clauses that appear with high multiplicity.
Clearly, we can arbitrarily increase the number of clauses in any formula by simply duplicating them.
Hence, we should at least require~$\mathcal{H}$ to be simple (i.e.,~$\Delta_r(\mathcal{H})\leq 1$).
Could that be enough?
Let~$\varphi$ be a~$k$-SAT formula, and assume it contains a clause~$C$ that is strictly shorter than~$k$. That is,~$C$ contains~$i<k$ literals.
Without changing its satisfiabilty, we may \emph{add} to~$\varphi$ any extension of~$C$, this is a clause containing~$C$ and any other~$k-i$ literals.
Thus, any~$i<k$ clause can be replaced with arbitrarily many clauses without changing~$\varphi$'s satisfiability and without creating a multi-edge. On the other hand, this increases~$\Delta_i(\mathcal{H})$ and thus we add the requirement that it is much smaller than~$D$ for every~$i>1$.

We note that it is possible that in this case better algorithms exist for another reason; It is not clear whether containing a non-negligible subset of shorter clauses benefits~$k$-SAT algorithms.
We thus pose the following problem.
\begin{problem}[Informal]\label{problem:ksatmixed}
    Let~$\varphi$ be a (possibly sparse)~$k$-CNF formula, and assume that a constant fraction of its clauses are of size strictly smaller than~$k$. Can we solve~$k$-SAT on~$\varphi$ more quickly than for general~$k$-SAT formulas?
\end{problem}
Removing Condition~$3$ would imply a positive answer for this problem.

\section{Conclusions and Open Problems}\label{sec:conc}
The main purpose of this paper is to demonstrate how the tool of Hypergraph Containers can be used algorithmically. 
We show it provides a black-box or white-box improvement for several algorithms if the input is assumed to be almost-regular.
Being almost-regular is a seemingly weak condition, that generalizes two classic families of inputs for which we usually see improved algorithms: sparse inputs and random inputs.

A satisfying corollary of this is that a couple of the most prominent constraint satisfaction problems,~$k$-SAT and Graph Coloring, indeed have faster algorithms for very dense inputs.
This formalizes for the first time a natural intuition that many constraints should make these problems easier.

We believe that the tools presented in this paper should be relevant to many other problems, and thus the main question we pose is which other problems can benefit from these?

Each of our two main applications also leaves a natural open problem.
For graph coloring, it is still open whether~$k$-coloring can be solved in~$(2-\varepsilon)^n$ time, without any restrictions on the input graphs.
For dense~$k$-SAT formulas, Problem~\ref{problem:ksatmixed} and the question of whether or not Condition~$3$ is necessary is interesting.
For both of those questions, it is likely that new problem-specific (and not black-box) tools are necessary.

Throughout this paper, we did not try to optimize the magnitude of improvement we achieve.
Thus, it is left open to understand and optimize the results of this paper quantitatively.
In~\cite{balogh2019efficient}, Balogh and Samotij give a much more efficient (in terms of parameters) version of the hypergraph container lemma. This version can be used for quantitative improvements.

Another intriguing question is whether the Partition Containers that are introduced in Section~\ref{sec:partcont} are useful in combinatorics, and not only for algorithmic purposes.

\bibliographystyle{alpha}
\bibliography{arxiv/main}

\appendix

\section{Appendix}

This section contains proofs that are deferred from the main body of the paper.
All of the results proven in this section are \textbf{not} directly used in the proofs of any of the main results in the paper.

\subsection{Algorithm for Extensions-Sum with~$k=3$}\label{appendix:es3}
\begin{proof}[Proof of Lemma~\ref{extsum3}]
    Similarly to the proof of Lemma~\ref{extsum2}, we externally enumerate over all assignments of~$X_\cap := X_1\cap X_2\cap X_3$.
    For each such assignment, we construct a weighted complete tripartite graph as follows.
    The vertices in the first part correspond to all assignments to~$\left(X_1\cap X_2\right)\setminus X_\cap$, in the second to~$\left(X_2\cap X_3\right)\setminus X_\cap$, and in the third to~$\left(X_1\cap X_3\right)\setminus X_\cap$.
    Consider an edge from a vertex corresponding to an assignment of~$\left(X_1\cap X_3\right)\setminus X_\cap$ to one corresponding to an assignment of~$\left(X_1\cap X_2\right)\setminus X_\cap$.
    Together, they correspond to an assignment~$\alpha$ of~$X_1 \cap \left(X_2\cup X_3\right)$, we set the weight of this edge to be~$\sum_{\alpha':X_1\setminus\left(X_2\cup X_3\right)\rightarrow\{0,1\}} f_1(\alpha,\alpha')$.
    The weights of edges in the two other parts are set symmetrically.
    The sum over all triangles in the graph of their weights (i.e., product of triangle edge weights), is exactly the Extensions-Sum output.
    It is left to note that if~$a\leq b\leq c$ then the weighted sum of all triangles in a tripartite graph with parts of sizes~$(a,b,c)$ can be computed in~$\frac{b}{a} \cdot \frac{c}{a} \cdot a^\omega = a^{\omega-2} b c$ time using fast matrix multiplication.
\end{proof}

\subsection{Extensions-Sum and Hypercliques}\label{appendix:hypercliques}
\begin{proof}[Proof of Lemma~\ref{extsum_lb}]
    Let~$\mathcal{H}$ be a~$r$-uniform hypegraph on~$n$ vertices.
    Let~$Y_1,\ldots,Y_k$ be disjoint sets of size~$\lceil \log n \rceil$ each.
    For each~$1\leq i\leq k$, we arbitrarily choose a bijection~$m_i : \left(Y_i \rightarrow \{0,1\}\right) \rightarrow \left[2^{\lceil\log n\rceil}\right]$.
    Our set of variables is~$X=\bigcup_{i=1}^{k} Y_i$.
    For each~$I\in {{[k]} \choose r}$ we define a subset~$X_I = \bigcup_{i\in I} Y_i$, and a function~$f_I : X_I \rightarrow \{0,1\}$ as follows.
    $f_I(\alpha)$ is~$1$ if~$\{m_i\left(\alpha\vert_{Y_i}\right)\;\mid\;i\in I\}$ is an hyperedge in~$\mathcal{H}$ and~$0$ otherwise.
    It is straighforward to verify that~$\prod_{I} f_I(\alpha)=1$ if~$\{m_i\left(\alpha\vert_{Y_i}\right)\;\mid\;i\in [k]\}$ is a clique, and~$0$ otherwise.
\end{proof}

\subsection{Non-Covering Partitions}\label{appendix:partitions}
\begin{lemma}
    For any~$k$ there exist~$K$ and a~$\left(K,\frac{1}{2}\right)$-collection without a~$(k,1-\varepsilon)$-refinement for any~$\varepsilon>0$.
\end{lemma}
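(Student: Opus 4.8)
The plan is to characterise when an arbitrary collection admits a nontrivial refinement, and then to meet the characterisation head-on. Fix $k$ and a ground set $X$. First I would observe that, because $(1-\varepsilon)|X| < |X|$ while union sizes are integers, a $(k,1-\varepsilon)$-refinement for \emph{some} $\varepsilon>0$ is the same as a partition $\mathcal{X}=\mathcal{X}_1\cupdot\cdots\cupdot\mathcal{X}_k$ in which every part's union omits at least one vertex of $X$. Given such a partition, pick $q_j\notin\bigcup\mathcal{X}_j$ for each nonempty part; then no member $X_i$ of $\mathcal{X}$ can contain all the chosen $q_j$'s, since $X_i$ lies in some part $\mathcal{X}_j$ and hence misses $q_j$. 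Conversely, if some $Q\subseteq X$ with $|Q|\le k$ is contained in no $X_i$, then assigning each $X_i$ to a part indexed by a choice of element of $Q\setminus X_i$ (there are at most $k$ such indices, and the remaining parts stay empty) yields a $(k,1-1/|X|)$-refinement. Hence $\mathcal{X}$ has no $(k,1-\varepsilon)$-refinement for any $\varepsilon>0$ precisely when \emph{every} $k$-element subset of $X$ is contained in some member of $\mathcal{X}$.

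With this equivalence the construction writes itself: take $X=[2k]$ and let $\mathcal{X}$ be the family of all $k$-element subsets of $X$, so $K={2k \choose k}$ and every member has size exactly $k=\tfrac{1}{2}|X|$, making $\mathcal{X}$ a $\bigl(K,\tfrac{1}{2}\bigr)$-collection. Since each $k$-subset of $X$ is literally one of the $X_i$, the characterisation above shows $\mathcal{X}$ admits no $(k,1-\varepsilon)$-refinement for any $\varepsilon>0$. To keep the argument self-contained I would also spell out the contradiction directly: in a putative $(k,1-\varepsilon)$-refinement each nonempty part omits a vertex $q_j$; collecting these vertices into a set of size at most $k$ and extending it (using $|X|=2k\ge k$) to a $k$-set $Q'$, we find $Q'\in\mathcal{X}$, so $Q'$ lies in some --- necessarily nonempty --- part $\mathcal{X}_j$, and then $q_j\in Q'\subseteq\bigcup\mathcal{X}_j$ contradicts $q_j\notin\bigcup\mathcal{X}_j$.

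I do not expect a genuine obstacle here: the only points needing a line of care are the passage between ``$\bigl|\bigcup\mathcal{X}_i\bigr|\le(1-\varepsilon)|X|$'' and ``$\bigcup\mathcal{X}_i$ omits a vertex'', and the bookkeeping around possibly-empty parts of the partition, both of which are immediate. (If one cared about the value of $K$, one could replace ``all $k$-subsets'' by any covering design of $[2k]$ by blocks of size at most $k$ that covers every $k$-subset; but the statement only asks for the existence of some $K$, so no optimisation is needed.)
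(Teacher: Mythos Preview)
Your proposal is correct and follows essentially the same construction and contradiction argument as the paper: take $X=[2k]$, let $\mathcal{X}$ be all $k$-subsets, and from a putative refinement pick one omitted vertex per part to obtain a $k$-set in $\mathcal{X}$ that no part can contain. Your added characterisation (that ``no $(k,1-\varepsilon)$-refinement exists'' is equivalent to ``every $k$-subset of $X$ is contained in some $X_i$'') is a clean framing the paper does not state explicitly, and your handling of the case where the chosen vertices $q_j$ are not all distinct (by extending to a $k$-set $Q'$) is actually more careful than the paper's own argument, which writes $\{x_1,\ldots,x_k\}\in\mathcal{X}$ without addressing possible repetitions.
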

\begin{proof}
    Let~$K={2k\choose k}$.
    %We first build a collection of subsets of a set of size~$2k$.
    Let~$X=[2k]$ and our subsets be~$\mathcal{X}={{[2k]} \choose k}$ be all subsets of~$X$ of size~$k$.
    Let~$P_1,\ldots,P_k$ be a refinement of~$\mathcal{X}$ of size~$k$.
    If for every~$1\leq i\leq k$ we have~$P_i\neq X$, then pick arbitrarily~$x_i \in X \setminus P_i$ for every~$i$.
    The subset~$\{x_1,\ldots,x_k\} \in \mathcal{X}$ cannot be contained in any part of the partition, which is a contradiction.

    We note that we can blow-up the size of the set by taking arbitrarily many copies of it.
\end{proof}

We next note that the above size of~$K$ is essentially the best possible.
\begin{lemma}
    Every~$\left(K,\frac{1}{2}\right)$-collection in~$X$ has a~$\left(k,1-\frac{1}{|X|}\right)$-refinement for~$k=\lfloor\log K + 1\rfloor$.
\end{lemma}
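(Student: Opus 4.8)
Every $\left(K,\frac{1}{2}\right)$-collection in $X$ has a $\left(k,1-\frac{1}{|X|}\right)$-refinement for $k=\lfloor\log K + 1\rfloor$.

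\textbf{Proof proposal.}
The plan is to reformulate the statement as a coloring problem and then run a greedy ``defect version'' of the pigeonhole argument already used in Lemma~\ref{lem:intuniparts}. Write $\mathcal{X}=\{X_1,\ldots,X_K\}$ with $|X_i|\le \tfrac12|X|$ for every $i$. A $\left(k,1-\tfrac{1}{|X|}\right)$-refinement is exactly a partition $\mathcal{X}=\cupdot_{i=1}^{k}\mathcal{X}_i$ such that for each part $\mathcal{X}_i$ the union $\bigcup_{S\in\mathcal{X}_i}S$ is not all of $X$ (it has size at most $|X|-1$); equivalently, each part $\mathcal{X}_i$ admits a \emph{common missing element} $x_i\in X$ with $x_i\notin S$ for all $S\in\mathcal{X}_i$. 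So it suffices to cover $\mathcal{X}$ by $k$ such sub-collections.

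First I would record the averaging step. Let $\mathcal{Y}\subseteq\mathcal{X}$ be any nonempty sub-collection, say $|\mathcal{Y}|=m$. Since $\sum_{S\in\mathcal{Y}}|S|\le \tfrac{m}{2}|X|$, averaging over the $|X|$ elements of $X$ shows that some element $x\in X$ lies in at most $\lfloor m/2\rfloor$ of the sets in $\mathcal{Y}$; hence at least $\lceil m/2\rceil\ge 1$ of the sets in $\mathcal{Y}$ avoid $x$. Now iterate: set $\mathcal{Y}_0=\mathcal{X}$, and at step $t\ge 1$, if $\mathcal{Y}_{t-1}\neq\emptyset$ pick such an element $x_t$ for $\mathcal{Y}_{t-1}$, let $\mathcal{X}_t$ be the set of all members of $\mathcal{Y}_{t-1}$ that miss $x_t$, and put $\mathcal{Y}_t=\mathcal{Y}_{t-1}\setminus\mathcal{X}_t$. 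By construction $\mathcal{X}_t$ has common missing element $x_t$, the $\mathcal{X}_t$ are pairwise disjoint, and $|\mathcal{Y}_t|\le\lfloor|\mathcal{Y}_{t-1}|/2\rfloor$, so $|\mathcal{Y}_t|\le\lfloor K/2^{t}\rfloor$.

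Next I would count: $\mathcal{Y}_t=\emptyset$ as soon as $\lfloor K/2^{t}\rfloor=0$, i.e. $2^{t}>K$, i.e. $t\ge \lfloor\log K\rfloor+1$. Since $\lfloor\log K+1\rfloor=\lfloor\log K\rfloor+1=k$, the process terminates within $k$ steps, producing disjoint sub-collections $\mathcal{X}_1,\ldots,\mathcal{X}_{t^\star}$ with $t^\star\le k$ whose union is all of $\mathcal{X}$; padding with empty parts (whose union is $\emptyset$, of size $0\le (1-\tfrac1{|X|})|X|$) gives exactly $k$ parts. Each $\mathcal{X}_i$ misses $x_i$, so $|\bigcup_{S\in\mathcal{X}_i}S|\le |X|-1=(1-\tfrac1{|X|})|X|$, which is the required bound.

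I do not expect a genuine obstacle here; the only thing to be careful about is the floor arithmetic — checking that $2^{\lfloor\log K\rfloor+1}>K$ in both the power-of-two and non-power-of-two cases, and that the averaging step always removes at least one set so the recursion makes progress. (A one-line probabilistic alternative — color each set by a uniformly random index in $[k]$ after fixing, for each color, a random element it should avoid — would also work, but the greedy halving argument above gives the clean $\lfloor\log K+1\rfloor$ bound deterministically and is the version I would present.)
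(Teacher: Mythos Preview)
Your proof is correct, and it takes a genuinely different route from the paper's. The paper argues probabilistically: pick $x_1,\ldots,x_k\in X$ uniformly at random; since each $X_i$ has density at most $\tfrac12$, the probability that $\{x_1,\ldots,x_k\}\subseteq X_i$ is at most $2^{-k}$, so the expected number of sets containing all $k$ points is at most $K2^{-k}<1$, and hence some choice of $x_1,\ldots,x_k$ hits every $X_i$ in the complement; then assign each $X_i$ to any part $j$ with $x_j\notin X_i$. Your argument is instead a deterministic greedy halving: repeatedly peel off the sets avoiding a single well-chosen element, using averaging to guarantee that at least half the remaining sets are peeled each round. Both arrive at exactly $k=\lfloor\log K\rfloor+1$; the paper's proof is a two-line expectation calculation, while yours is explicitly constructive and avoids any appeal to probability. (Your closing parenthetical essentially sketches the paper's method, so you were aware of the alternative.)
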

\begin{proof}
    Let~$x_1,\ldots,x_k$ be picked uniformly in random out of~$X$.
    The probability that~$\{x_1,\ldots,x_k\}\subseteq X_i$ for a specific subset~$X_i$ in the collection is at most~$2^{-k}$.
    Thus, the expected number of subsets~$X_i$ for which~$\{x_1,\ldots,x_k\}\subseteq X_i$ is at most~$K2^{-k}<1$.
    In particular, there exist a choice of~$x_1,\ldots,x_k$ for which $\{x_1,\ldots,x_k\}\setminus X_i$ is not empty for any subset~$X_i$.
    We define a~$k$-partition by taking the union of all subsets not containing~$x_i$ as the~$i$-th part.
\end{proof}

\subsection{Equivalence between notions of $k$-SAT density}\label{appendix:equivdensity}
\begin{proof}[Proof of Lemma~\ref{lem:structures1}]
    We construct a subgraph iteratively.
    We begin with~$E'=\emptyset$ and~$R=\emptyset$.
    As long as there exists a vertex~$v\in V(\mathcal{H})$ with degree at least~$D$ in~$\mathcal{H}[V(\mathcal{H})\setminus R]$, we add~$D$ of its adjacent edges to~$E'$.
    We then add~$v$ itself to~$R$.
    We also add to~$R$ every vertex in~$V(\mathcal{H})$ of degree higher than~$rD$ in~$(V(\mathcal{H}), E')$.

    Vertices are added to~$R$ in two manners in each iteration. First, the vertex~$v$ itself. Second, all vertices for which the degree in~$E'$ got too large.
    We note that the number of vertices of the second type is at most the number of vertices of the first type. In every iteration we increase the sum of degrees in~$E'$ by exactly~$rD$, and each vertex removed is of degree at least~$rD$.

    Finally, we note that as long as~$R\leq \varepsilon |V(\mathcal{H})|$ then another iteration is possible as the average degree in~$\mathcal{H}[V(\mathcal{H})\setminus R]$ is at least~$D$.
    Hence, we run the algorithm for at least~$\frac{\varepsilon}{2} |V(\mathcal{H})|$ iterations, and add at least~$\frac{\varepsilon D}{2} |V(\mathcal{H})|$ edges to~$E'$.

    As we add vertices to~$R$ if they reach degree~$rD$ in~$E'$, and as in each iteration a vertex's degree can only increase by~$D$, the maximum degree in~$E'$ is~$(r+1)D$.
\end{proof}
\begin{proof}[Proof of Lemma~\ref{lem:structures2}]
    Straightforward by considering only edges of the structure.
\end{proof}

\end{document}